\theoremstyle{plain}
\newtheorem{proposition}{Proposition}
\newtheorem{theorem}[proposition]{Theorem}
\newtheorem{example}[proposition]{Example}
\newtheorem{corollary}[proposition]{Corollary}
\newtheorem{lemma}[proposition]{Lemma}
\newtheorem{remark}[proposition]{Remark}
\newtheorem{lclaim}{{\sc Claim}}[proposition]
\newtheorem{question}{\sc Question}
\newcommand{\mf}[1]{{\sf #1}}
\newcommand{\kmodels}{\models}  
\newcommand{\varel}[1]{\hat{#1}}
\newcommand{\varset}{\textsf{var}}
\newcommand{\idmap}{\textit{id}}
\newcommand{\newrel}{S}
\newcommand{\OWL}{\textsl{OWL\,2}}
\newcommand{\OWLEL}{\textsl{OWL\,2\,EL}}
\newcommand{\avec}[1]{\boldsymbol{#1}}
\newcommand{\A}{\ensuremath{\mathfrak A}}
\newcommand{\Aa}{\ensuremath{\mathfrak A}}
\newcommand{\vala}{\ensuremath{\mathfrak a}}
\newcommand{\Bb}{\ensuremath{\mathfrak B}}
\newcommand{\V}{\ensuremath{\mathfrak v}}
\newcommand{\T}{\ensuremath{\mathfrak{T}}}
\newcommand{\F}{\ensuremath{\mathfrak F}}
\newcommand{\Ff}{\ensuremath{\mathfrak F}}
\newcommand{\G}{\ensuremath{\mathfrak{G}}}
\newcommand{\Hh}{\ensuremath{\mathfrak{H}}}
\newcommand{\M}{\ensuremath{\mathfrak{M}}}
\newcommand{\N}{\ensuremath{\mathfrak{N}}}
\newcommand{\J}{\ensuremath{\mathcal{J}}}
\newcommand{\R}{\ensuremath{\mathcal{R}}}
\newcommand{\Fc}{\ensuremath{\mathcal{F}}}
\newcommand{\Cc}{\ensuremath{\mathcal{C}}}
\newcommand{\Fslo}{\F^\star}
\newcommand{\Gslo}{\G^\star}
\newcommand{\imp}{\to}  
\renewcommand{\d}{\Diamond}
\newcommand{\dR}{\d_{\!R}\,}
\newcommand{\dRo}{\d_{\!S}\,}
\newcommand{\dS}{\d_{\!S}\,}
\newcommand{\dZ}{\d_{\!Z}}
\newcommand{\dX}{\d_{\!X}}
\newcommand{\dRca}{\d_{\!R}^+}
\newcommand{\dnext}{\d_{\!{\sf next}}\,}
\newcommand{\dstep}{\d_{\!{\sf step}}\,}
\newcommand{\dright}{\d_{\!{\sf right}}}
\newcommand{\dhead}{\d_{{\sf head}}}
\newcommand{\dleft}{\d_{{\sf left}}}
\newcommand{\dq}{\d_{\!q}}
\newcommand{\dqo}{\d_{\!q_0}}
\newcommand{\dqh}{\d_{\!q_h}}
\newcommand{\dqp}{\d_{\!q'}}
\newcommand{\da}{\d_{\!a}}
\newcommand{\dap}{\d_{\!a'}}
\newcommand{\EL}{\mathcal{EL}}
\newcommand{\SP}{spi}
\newcommand{\SPa}{sp}
\newcommand{\SPcap}{Spi}
\newcommand{\SPb}{sp${}^\bot$}
\newcommand{\SPbi}{spi${}^\bot$}
\newcommand{\SLOa}{{\rm SLO}}
\newcommand{\SLOba}{{\rm SLO}^\bot}
\newcommand{\DLOa}{{\rm DLO}}
\newcommand{\BAOa}{{\rm BAO}}
\newcommand{\CAa}{\ensuremath{\mathsf{CA}}}
\newcommand{\SLO}{\ensuremath{\mathsf{SLO}}}
\newcommand{\SLOb}{\ensuremath{\mathsf{SLO}}^\bot}
\newcommand{\BAO}{\ensuremath{\mathsf{BAO}}}
\newcommand{\CA}{\ensuremath{\mathsf{Kr}}}
\newcommand{\FA}{\Fc(\Aa)}
\newcommand{\PFA}{\Fc_{\textit{p}}(\Aa)}
\newcommand{\SPL}{L}
\newcommand{\Ec}{\varSigma}
\newcommand{\Ll}{\varLambda}
\newcommand{\Qc}{\ensuremath{r\!L}}
\newcommand{\q}{\boldsymbol{\rho}}
\newcommand{\e}{\boldsymbol{\iota}}
\newcommand{\ew}{\e_{\textit{width}}}
\newcommand{\eden}{\e_{\textit{dense}}}
\newcommand{\erefl}{\e_{\textit{refl}}}
\newcommand{\etrans}{\e_{\textit{trans}}}
\newcommand{\esym}{\e_{\textit{sym}}}
\newcommand{\efun}{\e_{\textit{fun}}}
\newcommand{\eeuc}{\e_{\textit{eucl}}}
\newcommand{\ewc}{\e_{\textit{wcon}}}
\newcommand{\edep}{\e_{\textit{depth}}}
\newcommand{\Elin}{\mathsf{SPi}_{\textit{lin}}}
\newcommand{\Axlin}{\Ec_{\textit{lin}}}
\newcommand{\Esfour}{\mathsf{SPi}_{\textit{qo}}}
\newcommand{\Axsfour}{\Ec_{\textit{qo}}}
\newcommand{\Esfive}{\mathsf{SPi}_{\textit{equiv}}}
\newcommand{\Axsfive}{\Ec_{\textit{equiv}}}
\newcommand{\Efunn}{\mathsf{SPi}^n_{\textit{equiv}}}
\newcommand{\Axfunn}{\Ec^n_{\textit{equiv}}}
\newcommand{\Efunone}{\mathsf{SPi}^1_{\textit{equiv}}}
\newcommand{\trm}{{\sf for}}
\newcommand{\spterm}{sp-formula}
\newcommand{\spterms}{sp-formulas}
\newcommand{\sptermscap}{Sp-formulas}
\newcommand{\spequation}{sp-im\-pli\-ca\-tion}
\newcommand{\spequations}{sp-im\-pli\-ca\-tions}
\newcommand{\spequationscap}{Sp-im\-pli\-ca\-tions}
\newcommand{\SPi}{\mathsf{SPi}}
\newcommand{\sptheory}{spi-logic}
\newcommand{\sptheories}{spi-logics}
\newcommand{\hornequation}{Horn-implication}
\newcommand{\hornequations}{Horn-im\-pli\-ca\-tions}
\newcommand{\leapfrogequation}{leapfrog implication}
\newcommand{\leapfrogequations}{leapfrog implications}
\newcommand{\spquasiequation}{spi-rule}
\newcommand{\spquasiequations}{spi-rules}
\newcommand{\spquasiequationscap}{Spi-rules}
\newcommand{\qsptheory}{spi-rule logic}
\newcommand{\qsptheories}{spi-rule logics}
\newcommand{\sptype}{sp-type}
\newcommand{\embed}{embeddable}
\newcommand{\qeqcons}{rule-cons\-er\-va\-tive}
\newcommand{\eqcons}{conservative}
\newcommand{\complex}{complex}
\newcommand{\qeqcomp}{globally complete}
\newcommand{\eqcomp}{complete}
\newcommand{\tprof}{tree-profile}
\newcommand{\nterm}{normal form}
\newcommand{\method}{syntactic proxies}
\newcommand{\methodcapital}{Syntactic proxies}
\newcommand{\Le}{\ensuremath{\mathsf{left}}}
\renewcommand{\Re}{\ensuremath{\mathsf{right}}}
\newcommand{\AxM}{\Ec_M}
\newcommand{\TMl}{\mathsf{L}}
\newcommand{\TMr}{\mathsf{R}}
\newcommand{\TMeq}{\e_M}
\title[Completeness of  Strictly Positive Modal Logics]{Kripke Completeness of  Strictly Positive Modal Logics over Meet-semilattices with Operators}
\author[S.~Kikot]{Stanislav~Kikot}\revauthor{Kikot, Stanislav}
\author[A.~Kurucz]{Agi~Kurucz}\revauthor{Kurucz, Agi}
\author[Y.~Tanaka]{Yoshihito~Tanaka}\revauthor{Tanaka, Yoshihito}
\author[F.~Wolter]{Frank~Wolter}\revauthor{Wolter, Frank}
\author[M.~Zakharyaschev]{Michael~Zakharyaschev}\revauthor{Zakharyaschev, Michael}
\address{Department of Computer Science\\
University of Oxford\\
Wolfson Building, Parks Road, Oxford OX1 3QD, U.K.  
}
\address{
Department of Computer Science and Information Systems\\
Birkbeck, University of London\\
Malet Street, London WC1E 7HX, U.K. 
}
\address{
Institute for Information Transmission Problems,\\ 
19 Bolshoy Karetny pereulok, Moscow 127051, Russia 
}
\address{
Moscow Institute of Physics and Technology,\\ 
9 Institutskiy pereulok, Dolgoprudny, 
Moscow Region 141701, Russia
}
\email{staskikotx@gmail.com}
\address{Department of Informatics\\
King's College London\\
Strand Campus, Bush House, 30 Aldwych, London  WC2B 4BG, U.K.}
\email{agi.kurucz@kcl.ac.uk}
\address{Faculty of Economics\\
Kyushu Sangyo University\\
2-3-1 Matsukadai, Higashi-ku, 
Fukuoka 813-8503, Japan}
\email{ytanaka@ip.kyusan-u.ac.jp}
\address{Department of Computer Science\\
University of Liverpool\\
Ashton Building,
Ashton Street, Liverpool L69 3BX, U.K.}
\email{wolter@liverpool.ac.uk}
\address{Department of Computer Science and Information Systems\\
Birkbeck, University of London\\
Malet Street, London WC1E 7HX, U.K.}
\email{michael@dcs.bbk.ac.uk}
\date{ }
\begin{document}
\maketitle

\begin{abstract}
Our concern is the completeness problem for \sptheories, that is, sets of implications between strictly positive 
formulas built from propositional variables, conjunction and modal diamond operators. Originated in logic, algebra and computer science, \mbox{\sptheories{}} have two natural semantics: meet-semilattices with monotone operators providing Birkhoff-style calculi, and first-order relational structures (aka Kripke frames) often used as the intended structures in applications. Here we lay foundations for a completeness theory that aims to answer the question whether the two semantics define the same consequence relations for a given \sptheory.
\end{abstract}


In this paper, we investigate connections between various consequence relations for the fragment of propositional multi-modal logic that comprises implications $\sigma \rightarrow \tau$, where $\sigma$ and $\tau$ are \emph{strictly positive modal formulas}~\cite{Beklemishev14} constructed from propositional variables using conjunction $\land$, unary diamond operators $\d_i$, and the constant `truth' $\top$. We call such formulas $\sigma$ and $\tau$ \emph{sp-formulas} and implications between them \emph{sp-implications\/}.

\section{Background}\label{bg}
Consequence relations for \spequations{} have been studied in knowledge representation, universal algebra, and modal provability logic.


\subsection{Description logic $\EL$}\label{el}
In knowledge representation, ontologies are used to define vocabularies for domains of interest together with logical relationships between the vocabulary terms \cite{DBLP:conf/dlog/2003handbook,DBLP:conf/wollic/LutzW09,newTextBook}. The description logic 
$\EL$~\cite{BaaderKuesters+-IJCAI-1999,BaaderBrandtLutz-IJCAI-05} is a widely used ontology language, in which such relationships are given by means of 
 (notational variants of) \spequations. A typical example of an $\EL$ 
ontology is SNOMED CT \cite{snomed} that provides a standardised medical vocabulary for the healthcare systems of more than twenty countries. SNOMED CT consists of about 300,000 \spequations{} covering most aspects of medicine and healthcare. For example, the \spequation{}
$$
\textit{Viral\_pneumonia} \rightarrow 
\Diamond_{\textit{causative\_agent}}\textit{Virus} \wedge \Diamond_{\textit{finding\_site}} \textit{Lung}
$$
says that viral pneumonia is caused by a virus and found in lungs.  
$\EL$ is the logical underpinning of the profile \OWLEL{} of the Web Ontology Language
 \OWL{} \cite{owl} designed by W3C for writing up ontologies. Under the $\EL$ semantics, \spequations{} are interpreted in relational structures known as Kripke frames in modal logic. Important
 reasoning problems are whether an \spequation{} is valid under this semantics and, more generally,
whether it follows from a finite set of \spequations{}. The former is called the subsumption problem, its generalisation is
the subsumption problem relative to a TBox. In modal logic, they correspond to the local and, respectively, global consequence relation
(restricted to \spequations). 
The computational complexity of these problems has been extensively studied.
Both were shown to be {\sc PTime}-complete in general~\cite{BaaderKuesters+-IJCAI-1999,BaaderBrandtLutz-IJCAI-05} as well as under additional relational constraints and extensions to the language~\cite{BaaderBrandtLutz-IJCAI-05,stokkermans2008}, for example, over transitive Kripke frames and, more generally, frames satisfying implications of the form $R_{1}\circ \dots \circ R_{n} \subseteq R$, for binary relations $R_1,\dots,R_n,R$.
{\sc PTime}/{\sc coNP} dichotomy results for the subsumption problem under
some universally first-order definable relational constraints were obtained  in~\cite{islands10}, while 
\cite{Baader03} gave an example of a constraint under which subsumption becomes undecidable.


\subsection{Semilattices with monotone operators}

Following the algebraic approach to giving  semantics to propositional logics~\cite{Rasiowa&Sikorski63}, we can regard strictly positive modal formulas as terms of the algebraic language with a binary function $\land$, unary functions $\d_i$ and constant $\top$. If $\land$ is a semilattice operation, then an \spequation{} $\sigma\imp\tau$ becomes an `inequality' of the form $\sigma\le\tau$, which is equationally expressible as $\sigma\land\tau\approx\sigma$. Conversely, any algebraic equation $\sigma\approx\tau$ between strictly positive `terms' is equivalent to the pair $\sigma\imp\tau$ and $\tau\imp\sigma$ of \spequations. Thus, semilattices with additional operators provide another natural semantics for  
\spequations.

Semilattices with operators have been studied in universal algebra. An important example
is their use in McKenzie's undecidability proof for Tarski's finite basis problem 
\cite{doi:10.1142/S0218196796000040}.
There has been extensive research on generalising
natural dualities for algebras with various kinds of (semi)lattice reducts to algebras with   operators 
\cite{Priestley70,Urquhart78,Goldblatt89,Allwein-Dunn93,gehrke-jonsson1994,Hartonas-Dunn97,ghilardi-meloni1997,gehrke-jonsson2000,Sofronie-Stokkermans00a,gehrke-harding2001,gehrke-jonsson2004,Davey2007}.

The relational semantics for the description logic $\EL$ mentioned above has been
connected to the uniform word problem (aka quasiequational theory) of varieties of \emph{semilattices with monotone}%
\footnote{A unary operator $\d_i$ in an algebra $\A$ is called \emph{monotone} if $\A$ validates
$\d_i(p\land q)\le\d_i q$. This is the same as to say that $a\le b$ implies 
$\d_i a\le \d_i b$, for any $a,b$ in $\A$.}
\emph{unary operators\/} (\SLOa s, for short) in \cite{Sofronie-Stokkermans01,stokkermans2008}.
Varieties of closure semilattices, that is, \SLOa s with a single operator $\d$ validating $p\le\d p$ and 
$\d\d p\le\d p$, have been investigated in \cite{jackson2004}.  They are also connected
to the closure algebras of McKinsey and Tarski \cite{McKinsey&Tarski44}.


\subsection{Sub-propositional modal logics and Reflection Calculus $\textsf{RC}$}

Sp-impli\-cations have also been investigated in the context of provability logic~ \cite{Beklemishev12,dashkov2012positive,Beklemishev14,Beklemishev15b,Beklemishev2017}.
The main motivation for considering them was the observation that, while syntactical modal reasoning in Japaridze's multi-modal provability logic $\mf{GLP}$~\cite{jap,boolos1995logic} cannot be characterised by any class of Kripke frames, its restriction $\textsf{RC}$ 
to \spequations{} does have such a characterisation~\cite{dashkov2012positive}. In particular, \spequations{} are regarded in $\textsf{RC}$ as sequents connecting two strictly positive formulas, and the developed syntactic calculus mimics the algebraic \SLOa-axioms and the axioms and rules of Birkhoff's equational
calculus \cite{Birkhoff35} (see \S\ref{algcalc} below).
Note also that $\textsf{RC}$ allows more general arithmetic interpretations than $\mf{GLP}$ \cite{Beklemishev14} and, similarly to the subsumption problem in $\EL$, reasoning in $\textsf{RC}$ is {\sc PTime}-complete~\cite{dashkov2012positive} (whereas $\mf{GLP}$ is {\sc PSpace}-complete~\cite{DBLP:conf/aiml/Shapirovsky08}).

Other sub-propositional fragments of full modal logic that contain \spterms{} 
have also been 
considered in the literature, both in the modal and description logic setting and under various relational constraints. For example, results on the computational complexity of the fragment with formulas built from literals using $\land$ and both diamond and box modalities can be found in~\cite{sss91,dhlnws92,HemaspaandraS01}.
The above mentioned dualities have also been investigated
from the modal logic perspective 
in order to find extensions of Kripke semantics that match the corresponding 
algebraic semantics; see~\cite{Dunn95,Celani-Jansana97,Celani-Jansana99,Sofronie-Stokkermans00b} for the negation-free fragment and~\cite{gehrke-nagahashi-venema2005} for its extension with $\land/\lor$-swapping operators.


In this paper, our concern is somewhat `orthogonal' to duality theory: instead of modifying/extending the relational semantics to `match' it with the algebraic one, we aim to understand the relationship between the (often intended) relational and (syntactic) algebraic consequence relations for \spequations.

\section{Research problems and results}\label{rproblem}
Following the modal logic tradition, we define the \emph{\sptheory{} axiomatised by a set $\Ec$ of spi-implications}
as the closure of $\Ec$ under the axioms and rules of a syntactic calculus capturing the algebraic semantics of \spequations. 
We denote this logic by $L = \SPi + \Ec$, indicating that $\SPi$ comprises the \spequations{} that are valid in all SLOs.
%
%
%
%

Our primary concern is the (Kripke) completeness problem for spi-logics. More precisely, we would like to
\begin{description}
\item[(\emph{completeness})]
identify \sptheories{} $\mathsf{SPi} + \Ec$ that are \emph{complete} in the sense that the 
two consequence relations $\Ec\models_{\CA}$ and $\Ec\models_{\SLO}$ coincide, where
for any \spequation{} $\e$,
\begin{align*}
\Ec\models_{\CA}\e &\quad\mbox{iff}\quad\mbox{$\e$ is valid in every Kripke frame validating $\Ec$;}\\
\Ec\models_{\SLO}\e &\quad\mbox{iff}\quad\mbox{$\e$ is valid in every \SLOa\ validating $\Ec$.}
\end{align*}
\end{description}
\spequationscap{} are modal Sahlqvist formulas~\cite{Sahlqvist75}. So, by the completeness part 
of Sahlqvist's theorem, 
the full Boolean normal modal logic $\mf{K} \oplus \Ec$ axiomatised (using the standard   calculus of normal modal logic%
\footnote{It has the modal axioms $\Box_i(\varphi\to\psi)\to(\Box_i\varphi\to\Box_i\psi)$ 
and the rules of substitution, modus ponens
and necessitation $\varphi/\Box_i\varphi$, for each modal operator $\Box_i$.}%
) by the \spequations{} in $\Ec$ is Kripke complete, that is, for every modal formula $\varphi$,
\begin{equation}\label{rq}
\Ec\models_{\CA}\varphi
 \ \ \mbox{iff}\ \ \varphi\in \mf{K} \oplus \Ec
\ \ \mbox{iff}\ \ \mbox{$\varphi\approx\top$ is valid in every \BAOa\ validating $\Ec$,}
\end{equation}
where \BAOa\ stands for \emph{Boolean algebra with normal and $\vee$-additive unary operators}%
\footnote{A \BAOa\ is an
algebra of the form $\A =(A,\land,\lor,-,\bot,\top,\d_i)_{i\in I}$, where $(A,\land,\lor,-,\bot,\top)$
is a Boolean algebra, $\d_i\bot=\bot$ and $\d_i(a\lor b)=\d_i a\lor \d_i b$, for all $a,b\in A$ and $i\in I$.}%
~\cite{Jonsson&Tarski51}.
Note that, by \eqref{rq}, the completeness problem is equivalent to 
%
\begin{description}
\item[(\emph{\SP-axiomatisability})]
the problem whether $\Ec$ \emph{\SP-axiomatises} the \SP-fragment
of the modal logic $\mathsf{K} \oplus \Ec$, that is,
 $\e\in  \SPi + \Ec$ iff $\e\in  \mf{K} \oplus \Ec$, for any \spequation{} $\e$
(in other words, the problem whether the \sptheory{} $\SPi+\Ec$ has a \emph{modal companion}  \cite{Beklemishev15b}); and also to

\item[(\emph{conservativity})]
the purely algebraic problem of whether the consequence relation 
$\Ec\models_{\BAO}$ is \emph{conservative} over $\Ec\models_{\SLO}$ with respect to algebraic equations between \spterms,
that is, $\Ec\models_{\SLO} \sigma\approx\tau$ iff $\Ec\models_{\BAO} \sigma\approx\tau$, for any 
\spterms{} $\sigma$ and $\tau$. 
\end{description}
In Boolean modal logic, the completeness problem has been actively and thoroughly investigated since the invention of the Kripke semantics in the 1950--60s. Nearly all standard modal logics were proved to be Kripke complete by showing that they either are canonical or have the finite model property, and it took a while to \emph{construct} first examples of incomplete logics~\cite{Fine74,Thomason74}. 
In contrast, incomplete \sptheories{} are easy to find, with two simplest ones being $\SPi + \{\Diamond p \imp p\}$ and $\SPi + \{\d p\imp\d q\}$ (Examples~\ref{e:simple} and \ref{e:simple2}). It is readily seen that both of them  have the finite model (but not  finite frame) property. By Sahlqvist's theorem, all Boolean modal logics with \spequation al axioms are canonical. Thus, the classical completeness theory appears to be of little help in understanding completeness of \sptheories. New tools and techniques are required to investigate this phenomenon. 

In this paper, we develop and apply two general methods for establishing completeness of \sptheories.  

The first
one is based on the fact that an \sptheory{} $\SPL$ 
 is complete whenever every \SLOa\ validating $\SPL$ can be embedded into the (\SLOa-reduct of the) full complex algebra of some Kripke frame for $\SPL$. Following the terminology of Goldblatt \cite{Goldblatt89}, we call such \sptheories{} $\SPL$ \emph{\complex\/}. 
Proving that $\SPL$ is \complex\ can be regarded as a generalisation of the canonical model technique from modal logic: for every \BAOa\ $\A$ validating an \sptheory{} $\SPL$, its ultrafilter-frame $\A_+$ validates $\SPL$ as well.
Unfortunately, no such `canonical' Kripke frame construction is available for \SLOa s. 
Instead, we suggest two `templates' that provide a range of embeddings of \SLOa s into the \SLOa-reducts of complex algebras of appropriate frames, one generalising the embedding of~\cite{jackson2004}, and another one using filters in \SLOa s (see \S\ref{embed}). We employ these templates to obtain two general sufficient conditions for complexity 
(and so completeness) of \sptheories{} (Theorems~\ref{Th:6} and \ref{t:exist}), 
and also show complexity of numerous concrete \sptheories{} defining familiar classes of Kripke frames. 
Our conditions cover earlier results of
Sofronie-Stokkermans~\cite{Sofronie-Stokkermans01,stokkermans2008} who proved that \spequations{} of the form \mbox{$\d_1\dots\d_n p\imp\d_0 p$} axiomatise \complex{}   
\sptheories, 
and those of Jackson ~\cite{jackson2004} who showed that the \sptheory{} $\Esfour=\SPi+\{p\imp\d p,\d\d p\imp\d p\}$ (whose axioms 
$\Axsfour=\{p\imp\d p,\d\d p\imp\d p\}$ 
define the class of all quasiorders---frames of the modal logic $\mf{S4}$) is \complex. 
We delimit the scope of the method by providing many examples of incomplete \sptheories, in particular, pairs of complete and incomplete \sptheories{} sharing the same Kripke frames, and develop a general technique for constructing incomplete \sptheories{} (Theorem~\ref{thm:genericinc}).

As mentioned above, Boolean modal logics with \spequation al axioms are always complex. In contrast, we show a few natural and simple \spequations{} that axiomatise complete but not complex \sptheories, for example, those 
expressing $n$-functionality, for $n\geq 2$, and linearity (Theorems~\ref{t:altncomplex} and \ref{t:notdlocons}).
For such \sptheories, we develop another general technique, called the method of \emph{\method}, that mimics Kripke frame reasoning with the help of the syntactic  
Birkhoff-type calculus for \SLOa s (see \S\ref{sec:interp}).
We use this method to prove one more general sufficient condition for completeness (Theorem~\ref{thm:interpolant}) and apply it to a number of concrete \sptheories{} that are not \complex\ (Theorems~\ref{t:altn}, \ref{t:s5alt}, \ref{t:lincomp}).
Syntactic proxies can also be used to establish completeness of all but two
proper extensions of the \sptheory{} $\Esfive=\SPi + \{p\imp\d p,\d\d p\imp\d p,q\land\d p\imp\d(p\land\d q)\}$\ (whose axioms define the class of all equivalence relations---frames of the modal logic $\mf{S5}$), the two exceptions being in fact incomplete. 
%
Jackson~\cite{jackson2004} fully described the lattice of extensions of $\Esfive$; it follows from his proofs that most of them are $\models_{\BAO}$--to--$\models_{\SLO}$ conservative.



One feature that \sptheories{} do share with Boolean modal logics is that---apart from a few simple cases (such as extensions of $\Esfive$ and $\mf{S5}$)---complete and effective classifications of logics according to their non-trivial properties are hardly possible. 
In \S\ref{sec:undec}, we prove by reduction of the halting problem for Turing machines 
that, given a finite set $\Ec$ of \spequations, no algorithm can recognise \eqcomp ness 
or complexity of the \sptheory{} $\SPi+\Ec$. 
The proof is more direct compared to the known constructions from modal logic~\cite{Thomason82,DBLP:journals/jsyml/ChagrovZ93,DBLP:conf/aiml/ChagrovC06} because  very simple incomplete \sptheories{} are available.

Having laid foundations for a completeness theory in the strictly positive context, we are naturally 
interested in the byproducts it may have for two related problems, viz., the computational complexity (in particular, decidability) of \sptheories{} and the definability problem. Recall that tractability of reasoning was one of the main motivations for considering \sptheories. 

As far as \textbf{\emph{computational complexity}} is concerned, we observe that \sptheories{} with universally definable classes of Kripke frames have the polynomial finite frame property%
\footnote{An \sptheory{} $\SPL$ has the \emph{polynomial} \emph{finite frame property} if every 
\spequation{} $\e$ 
that fails in some frame for $\SPL$ also
fails in a frame for $\SPL$ of polynomial size in $\e$.}
%
and are decidable in {\sc coNP} if finitely axiomatisable
and \eqcomp{} (Theorem~\ref{t:subframedec}); moreover, those complete ones whose frames are definable by equality-free universal Horn sentences are actually tractable (Theorem~\ref{thm:horncomplete}). The latter applies to the \sptheories{} in the scope of completeness Theorems~\ref{Th:6}, \ref{thm:interpolant} and \ref{t:comp}.
(Note that Boolean modal logics axiomatised by the same \spequations{} can be computationally very complex, even undecidable~\cite{KikotSZ14}). 
We also show tractability of several finitely axiomatisable complete \sptheories{} defining universal \emph{non}-Horn frame conditions 
such as the \sptheory{} $\Efunn$ 
whose frames 
are equivalence relations with classes of size $\leq n$, for $n\geq 2$ (Theorem~\ref{t:s5altP}), and 
the \SP-fragment $\Elin$ of the modal logic $\mf{S4.3}$ (Theorem~\ref{t:linP}). 
On the other hand, we observe that the completeness criterion of Theorem~\ref{t:exist} has 
the \SP-fragments of all modal grammar logics~\cite{farinasdelCerroPenttonen88} in its scope, and so there exist
finitely axiomatisable and undecidable 
complete \sptheories~\cite{Tseitin58,Shehtman82,Chagrov-Shehtman95,Baader03,Beklemishev15b}.

A class $\Cc$ of Kripke frames is called \textbf{\emph{\SP-definable}} 
if $\Cc=\{\F \mid \F \models \Ec\}$ for some set $\Ec$ of \spequations.
%
The correspondence part of Sahlqvist's theorem~\cite{Sahlqvist75} says that \SP-definability (unlike modal definability) always implies definability by first-order $\forall\exists$-sentences. Many standard properties of frames turn out to be \SP-definable (see Table~\ref{t:SP-logics}). On the other hand, such well-known logics as ${\sf K4.1}$, ${\sf K4.2}$ and ${\sf K4.3}$ are typical examples of Kripke 
complete modal logics whose frames are not \SP-definable (see Table~\ref{t:SP-logicsundef}). To obtain such non-\SP-definability results, we give a general necessary condition for \SP-definability (in  \S\ref{eldef}), and also show that \SP-definable properties of quasiorders must be universal.


The remainder of the article is organised as follows. Having defined in \S\ref{pr:comp} the required basic notions, in \S\ref{sec:tools} we introduce the two general methods for establishing  \eqcomp ness, which are applied in 
\S\S\ref{Sahlqvist}--\ref{sec:disj} and complemented by multiple examples of incomplete \sptheories.
We systematise our completeness results for \sptheories{} according to the form of the first-order correspondents of their axioms: \spequations{} with universal Horn, existential and disjunctive correspondents are discussed in 
\S\ref{Sahlqvist}, \S\ref{sec:exist} and \S\ref{sec:disj}, respectively. 
In \S\ref{sec:undec} we prove that it is undecidable whether a given finite set of \spequations{} axiomatises a \eqcomp\ or \complex{} \sptheory.
A few related problems are briefly discussed in \S\ref{sec:other}: 
in \S\ref{eldef} we deal with non-\SP-definability; 
in \S\ref{bot} we consider \SPb-implications that may also contain the constant $\bot$ standing for `falsehood'   in Kripke frames and for the $\le$-smallest element in \SLOa s; 
in \S\ref{qeqcomp} we have a brief look at \emph{\qsptheories}
(quasiequational theories in the algebraic setting). 
In particular, we characterise  \complex{} \qsptheories{} $\Qc$ as those for which $\Qc\models_{\CA}\q$ coincides with $\Qc\models_{\SLO}\q$, for all \spquasiequations{} $\q$.
Finally, in \S\ref{concl} we suggest further research directions; a few open questions are also scattered throughout the paper. 
%
%
\begin{table}[th]
\caption{\SPcap-definable first-order properties.}\label{t:SP-logics}
\centering 
\begin{tabular}{|l|l|c|}\hline
first-order property & \spequation(s) & notation\\\hline\hline
reflexivity & $p \imp \d p$ & $\erefl$\\\hline
transitivity & $\d\d p \imp \d p$ & $\etrans$\\\hline
symmetry & $q \land \d p \imp \d (p \land \d q)$ & $\esym$\\\hline
$\forall x,y,z\,\bigl(R(x,y) \land R(x,z) \to R(y,z)\bigr)$ & $ \d p \land \d q \imp \d (p \land \d q)$ & $\eeuc$\\
Euclideanness & & \\\hline
quasiorder & $\{\erefl,\etrans\}$ & $\Axsfour$\\\hline
equivalence & $\{\erefl,\etrans,\esym\}$ & $\Axsfive$\\
& $\{\erefl,\etrans,\eeuc\}$ & $\Axsfive'$\\\hline
$\forall x,y,z\,\big[R(x,y)\land R(x,z)\to$ & $\d(p\land q)\land \d(p\land r)\imp $ & $\ewc$\\[-2pt]
$\ \ \bigl(R(y,y)\land R(y,z)\bigr)\lor \bigl(R(z,z)\land R(z,y)\bigr)\big]$ & $\qquad\quad\d (p\land \d q\land \d r)$ &\\\hline
linear quasiorder\footnotemark & $\{\erefl,\etrans,\ewc\}$ & $\Axlin$ \\\hline
$\forall x,y\, \big[R(x,y) \to \exists z \, \bigl(R(x,z) \land R(z,y)\bigr)\big]$ & $\d p\imp\d\d p$ & $\eden$ \\
density & & \\\hline
$\forall x,y,z\,\bigl(R(x,y) \land R(x,z) \to (y=z)\bigr)$ & $\d p \land \d q \imp \d(p\land q)$ & $\efun$ \\
functionality & & \\\hline
\end{tabular}
\end{table}%

\begin{table}[th]
\caption{Non-\SP-definable but modally definable first-order properties.}\label{t:SP-logicsundef}
\centering 
\begin{tabular}{|l|l|c|}\hline
first-order property & modal formula(s) & notation\\\hline\hline
$\forall x,y,z\,\bigl(R(x,y)\land R(y,z)\to $ &  &  \\
\hspace*{2.3cm}$R(x,z)\lor (x=z)\bigr)$ & $\d\d p \imp p \vee \d p$ & $\varphi_{\it ptrans}$ \\
pseudo-transitivity & & \\\hline
pseudo-equivalence & $\e_{\it sym},\ \varphi_{\it ptrans}$ & $\mf{Diff}$\\\hline
weak connectedness$^5$  & $\d p \land \d q \imp \d (p \land q) \,\lor$ & $\varphi_{\it wcon}$ \\
& $\   \d (p \land \d q) \lor \d (q \land \d p)$ & \\\hline
transitivity and weak connectedness  & $\e_{\it trans},\ \varphi_{\it wcon}$ & $\mf{K4.3}$\\\hline
$\forall x,y,z\,\bigl(R(x,y)\land R(x,z)\to $ &  & \\
\hspace*{1.8cm}$\exists u\,\bigl(R(y,u)\land R(z,u)\bigr)\bigr)$ & $\d\Box p\imp\Box\d p$ & $\varphi_{\it conf}$ \\
confluence & & \\\hline
transitivity and confluence  & $\e_{\it trans},\ \varphi_{\it conf}$ & $\mf{K4.2}$\\\hline
transitivity and & $\e_{\it trans},\ \Box\d p\imp\d\Box p$ & $\mf{K4.1}$\\
$\forall x\, \exists y\bigl(R(x,y)\land \forall z\,\bigl( R(y,z)\to (y=z)\bigr)\bigr)$ &  & \\\hline
\end{tabular}
\end{table}%
\footnotetext{A reflexive and transitive relation $R$ is called a \emph{linear quasiorder\/} if
$R$ is \emph{weakly connected\/}:
$\forall x,y,z\; \bigl(R(x,y)\land R(x,z)\to R(y,z)\lor R(z,y)\lor (y=z)\bigr)$.
Linear quasiorders are the frames of the modal logic $\mf{S4.3}$.}


\section{Preliminaries}\label{pr:comp}

We begin by giving definitions of the basic notions and discussing the problems we deal with in this paper.

\subsection{\sptermscap{} and \spequations} 

Let $\R$ be 
a non-empty set called a \emph{signature}. An \emph{\spterm} (\emph{of signature\/} $\R$) 
is a multi-modal formula constructed from 
propositional variables $p$ from some countably infinite set $\varset$ and constant $\top$ using 
conjunction $\land$ and unary diamond operators $\dR$, for \mbox{$R \in \R$}.
We omit the subscript $R$ in the unimodal case $\R=\{R\}$.  

An \emph{\spequation\/} $\e$ (\emph{of signature\/} $\R$) 
 is an expression of the form $\sigma \imp \tau$, where $\sigma $ and $\tau$ are \spterms{} of signature $\R$. 
%


As argued in \S\S\ref{bg}--\ref{rproblem}, we aim to connect two types of semantics for \spequations: one
based on first-order relational structures, known as Kripke frames in modal logic, and an algebraic one, based on meet-semilattices with monotone operators. We begin with the latter.


\subsection{Algebraic semantics}\label{algsem}

A structure $\A =(A,\land,\top,\dR)_{R\in \R}$ is an \emph{\sptype{} algebra} (\emph{of signature} $\R$) if $A \ne \emptyset$, 
$\top\in A$, $\land$ is a binary and each $\dR$ a unary function (operator) on $A$. 
This way \spterms{} can be regarded as algebraic \emph{\sptype{} terms\/}. (The overloading of $\land$, $\top$ and $\dR$ should not confuse the reader as it will always be clear from context whether we deal with algebraic operations or logic connectives.) 
An \emph{\sptype{} equation} is of the form $\sigma\approx\tau$, where $\sigma$ and $\tau$ are \sptype{} terms
(that is, \spterms).
A \emph{valuation in} $\Aa$ is a function $\vala$ mapping the variables $p\in\varset$ to elements in $A$. The \emph{value} $\tau[\vala]\in A$ of an \sptype{} term $\tau$ under $\vala$ is defined inductively as usual. If the variables occurring in $\tau$ are among $p_1,\dots,p_n$ and $\vala(p_i)=a_i$, then we also write $\tau[a_1,\dots,a_n]$ in place of $\tau[\vala]$. Given an \sptype{} equation $\sigma\approx\tau$,
we set $\Aa\models(\sigma\approx\tau)[\vala]$ if $\sigma[\vala] = \tau[\vala]$, and $\Aa\models(\sigma\approx\tau)$ if $\Aa\models(\sigma\approx\tau)[\vala]$ for every valuation $\vala$ in $\Aa$, in which case we say that $\Aa$ \emph{validates} $\sigma\approx\tau$. 

A \emph{meet-semilattice with monotone operators} (\SLOa, for short) is an \sptype{} algebra 
validating the following \sptype{} equations:
\begin{align}
\label{idem}
& p\land p\approx p,\\
& p\land q\approx q\land p,\\
\label{assoc}
& p\land (q\land r)\approx (p\land q)\land r,\\
& p\land \top\approx p,\\
\label{dmon}
& \dR(p\land q)\land \dR q\approx \dR(p\land q), \quad\mbox{for $R\in\R$}.
\end{align}
In a \SLOa{} $\A$, the partial order $\le$ is defined as usual by taking $a\le b$ iff 
 $a\land b=a$, for all $a,b$ in $\A$.
 It is readily seen that $\land$ and $\dR$ are 
 \emph{monotone} with respect to $\leq$: if $a\leq b$ then $a\land c\leq b\land c$ and $\dR a\leq \dR b$, for
 all $a,b,c$ in $\A$ and $R\in \R$.
 
 By regarding any \spequation{} $\e = (\sigma\imp\tau)$ as an \sptype{} `inequality' $\sigma\leq\tau$ (which is a shorthand for the \sptype{} equation $\sigma\land\tau\approx\sigma$), we set $\Aa\models\e[\vala]$ if $\sigma[\vala] \leq \tau[\vala]$, and $\Aa\models\e$ if $\Aa\models\e[\vala]$ for every valuation $\vala$ in $\Aa$, in which case we say that $\Aa$ \emph{validates} $\e$. The set of \spequations{} 
 that are validated by all SLOs is denoted by $\SPi$. 
 
We say that a SLO $\Aa$ \emph{validates}  a set $\Ec$ of \spequations{} and write $\Aa\models\Ec$ if  
$\Aa\models\e$ for all $\e$ in $\Ec$. 
%
%
We denote by $\SLO_{\Ec}$ the class---in fact, variety---of all \SLOa s 
validating $\Ec$. 
%
In particular, \SLO{} denotes the variety of all \SLOa{}s.
We define a consequence relation $\Ec\models_{\SLO}$ by taking, for any \spequation{} $\e$,   
%
\[
\Ec\models_{\SLO}\e \quad\mbox{iff}\quad\mbox{$\A\models\e$\ \ for every  $\Aa\in\SLO_{\Ec}$.}
\]
We write $\models_{\SLO}\e$ for $\emptyset\models_{\SLO}\e$.
As a \SLOa{}  clearly validates $\sigma\approx\tau$ iff 
it validates both $\sigma\to\tau$ and $\tau\to\sigma$, we write
 $\Ec\models_{\SLO}\sigma\approx\tau$ whenever both $\Ec\models_{\SLO}\sigma\imp\tau$
 and $\Ec\models_{\SLO}\tau\imp\sigma$ hold.


\subsection{Spi-logics}\label{algcalc}

As \spequations{} are special cases of algebraic \sptype{} equations, the consequence relation $\Ec\models_{\SLO}$
can be characterised syntactically by Birkhoff's equational calculus \cite{Birkhoff35,Graetzer79}.
Using a Lindenbaum--Tarski-algebra type argument, it is readily seen  that $\Ec\models_{\SLO}$
can also be captured by a calculus using only \spequations{} in its derivations.
Namely, it is not hard to show that
\begin{equation}\label{birkhoff}
\Ec\models_{\SLO}\e  \quad\mbox{iff}\quad\Ec\vdash_{\SLO}\e,
\end{equation}
where $\Ec\vdash_{\SLO}\e$ means that there is a finite sequence $\e_0,\dots,\e_n$ of 
\spequations{} such that $\e_n=\e$ and each $\e_i$, for $i\le n$, is  either 
a substitution instance of some \spequation{} in $\Ec$ or a substitution instance of one of the axioms
\begin{equation}\label{axioms}
p\imp p,\qquad
p\imp \top,\qquad
p\land q\imp q\land p,\qquad 
p\land q\imp p,
\end{equation}
or obtained from earlier members of the sequence using one of the rules 
\begin{equation}\label{rules}
\frac{\sigma \imp \tau \ \ \tau \imp \varrho}{\sigma \imp \varrho}, \qquad 
\frac{\sigma \imp \tau \ \ \sigma \imp \varrho}{\sigma \imp \tau \land \varrho}, \qquad 
\frac{\sigma \imp \tau}{\dR\sigma \imp \dR \tau} \ \ \ (R\in\R)
\end{equation}
(see also the Reflection Calculus $\textbf{RC}$ of \cite{Beklemishev12,dashkov2012positive}).
In fact, throughout we shall only use the $\Leftarrow$ (soundness) direction of \eqref{birkhoff}.
We write $\vdash_{\SLO}\e$ for $\emptyset\vdash_{\SLO}\e$.
We write
 $\Ec\vdash_{\SLO}\sigma\approx\tau$ whenever both $\Ec\vdash_{\SLO}\sigma\imp\tau$
 and $\Ec\vdash_{\SLO}\tau\imp\sigma$ hold.

%
%
%
%

For any set $\Ec$ of \spequations,
we define the \emph{\sptheory{} $\SPi + \Ec$ axiomatised by $\Ec$} as 
%
\[
\SPi + \Ec ~=~ \{\e\mid \mbox{$\e$ is an \spequation{} and }\Ec\vdash_{\SLO}\e\}.
\]
If $\SPL=\SPi + \Ec$, for some set $\Ec$ of \spequations, then we call $\SPL$ an \emph{\sptheory}.


%
%


\subsection{Kripke semantics}

A \emph{Kripke model} (\emph{of signature\/} $\R$) 
is a pair of the form $\M = (\F,\V)$, where $\F=(W,R^\F)_{R\in \R}$ 
 is a \emph{frame} (\emph{of signature\/} $\R$) 
 with domain $W \ne\emptyset$ and binary (accessibility) relations $R^\F$, for $R \in \R$, and $\V$ is a \emph{valuation} associating a subset $\V(p) \subseteq W$ with any variable $p$. 
 The 
 truth relation $\M,w\kmodels\tau$ for $w\in W$ and an \spterm{} $\tau$ is defined by induction: 
 $\M,w\kmodels\top$,
 $\M,w\kmodels p$ iff $w\in\V(p)$,
 $\M,w\kmodels\tau' \land \tau''$ iff $\M,w\kmodels\tau'$ and $\M,w\kmodels\tau''$,
 and for each $R\in\R$,
%
\[
\M,w,\kmodels\dR \tau'\quad\mbox{iff}\quad\M,w'\models\tau'\mbox{ for some $w'$ with $(w,w')\in R^\F$.}
\]
For an \spequation{} $\e = (\sigma \imp \tau)$ and $w\in W$, we write $\M,w\models\e$ if 
$\M,w\kmodels\sigma$  implies $\M,w\kmodels\tau$. 
We say that $\e$ \emph{holds} in $\M$ (or $\M$ is a \emph{model of} $\e$) and write $\M \models \e$, if 
 $\M,w\models\e$ holds for every $w\in W$. 
 We also write $\F,w\models\e$ if $\M,w\models\e$ holds for every Kripke model $\M$ based on $\F$, and $\F\models\e$ if $\F,w\models\e$ for every $w \in W$ (equivalently, if $\M\models\e$ for every model $\M$ based on $\F$); in this case, we say that $\F$ \emph{validates} $\e$. Finally, we say that $\F$ \emph{validates} (or is a \emph{frame for}) a
 set $\Ec$ of \spequations{}
and write $\F\models\Ec$, if 
$\F\models\e$ for every $\e$ in $\Ec$. 
The class of frames for $\Ec$ is denoted by $\CA_{\Ec}$.
By the correspondence part of Sahlqvist's theorem, $\CA_{\Ec}$ 
is first-order definable in the language with binary predicate symbols $R$, for $R\in\R$, and equality. Any such first-order theory  defining $\CA_{\Ec}$ is called a \emph{correspondent of\/} $\Ec$; see, e.g., \cite{Blackburnetal01,Chagrov&Z97}. (All correspondents of $\Ec$ are equivalent.) If $\{\Psi\}$ is a correspondent of $\{\e\}$,  we say that $\Psi$ is a \emph{correspondent of\/} $\e$.

Given 
a set $\Ec$ of \spequations,
we define a consequence relation $\Ec\models_{\CA}$ by taking, for any \spequation{} $\e$, 
%
%
%
%
\[
\Ec\models_{\CA}\e \quad\mbox{iff}\quad\mbox{$\F\models\e$ for every frame $\F\in\CA_{\Ec}$.}
\]
We write $\models_{\CA}\e$ for $\emptyset\models_{\CA}\e$.


\subsection{Completeness}\label{eqcompness}

Every frame $\F=(W,R^\F)_{R\in\R}$ gives rise to a \SLOa\
\[
\Fslo=(2^{W},\cap,W,\Diamond^{+}_{R})_{R\in\R},
\]
where, for all $R\in \mathcal{R}$ and $X\subseteq W$,
\[
\Diamond_{R}^{+}X = \{ w\in W \mid (w,v)\in R^\F\mbox{ for some }v\in X\}
\]
(that is, $\Fslo$ is the \sptype{} reduct of the \emph{full complex algebra of} $\F$ \cite{Goldblatt89}).
As Kripke models over $\F$ and valuations in $\Fslo$ are the same thing, for every \spequation{}
$\e$, we have $\F\models\e$ iff $\Fslo\models\e$. Therefore, for every \sptheory{} $\SPi + \Ec$, 
\begin{equation}\label{soundness}
\Ec\models_{\SLO}\e \quad \Longrightarrow \quad \Ec\models_{\CA}\e,\quad
\mbox{for any }\e,
\end{equation}
%
and so, by \eqref{birkhoff},
\begin{equation}\label{soundness2}
\CA_{\Ec}=\CA_{\SPi+\Ec}.
\end{equation}
%
An \sptheory{} $\SPL=\SPi + \Ec$ 
is called \emph{\eqcomp} if, for every \spequation{} $\e$, 
%
\[
\Ec\models_{\CA}\e  \quad\mbox{iff}\quad\Ec\models_{\SLO}\e.
\]
%
Note that \eqcomp ness of $\SPL$ does not depend on its axioms: if $\SPL=\SPi+\Ec=\SPi+\Ec'$ then $\SLO_{\SPL}=\SLO_{\Ec}=\SLO_{\Ec'}$, and so  
$\CA_{\SPL}=\CA_{\Ec}=\CA_{\Ec'}$ by \eqref{soundness2}.


As discussed in \S\ref{rproblem}, $\SPi + \emptyset$ and $\Esfour$ are simple examples of \eqcomp\ \sptheories{}
\cite{Sofronie-Stokkermans01,jackson2004} (see also Theorem~\ref{t:valid} and its proofs in
\S\ref{embed} and \S\ref{sec:interp}, and Corollary~\ref{t:ss}). 
The following two examples show incomplete ones. 

\begin{example}\label{e:simple}\em 
Consider 
 the \spequation{} $\d p\imp p$.
On the one hand, a frame $\F = (W,R^\F)$ validates $\d p\imp p$ iff $\F \models \forall x,y \, (R(x,y) \to (x = y))$. Thus, it is easy to see that $\{\d p\imp p\}\models_{\CA}\e$, where $\e = (p\land \d \top\imp \d p)$.
%
%
On the other hand, $\{\d p\imp p\} \not\models_{\SLO} \e$ as the \SLOa\ $\Aa$ with 3 elements 
$b\le a\le \top$ such that $\d a=\d b=b$ and $\d \top=a$ validates 
$\d p\imp p$ and refutes $\e$, since $a\land \d \top=a\ne b=\d a$
(see Fig.~\ref{f:firstex}~(a)). 
So, the \sptheory{} $\SPi + \{\d p\imp p\}$ is incomplete.
\end{example}

\begin{example}\label{e:simple2}\em 
Consider the 
\spequation{} $\d p\imp \d q$.
On the one hand, 
a frame $\F = (W,R^\F)$ validates $\d p\imp \d q$ iff $R^\F=\emptyset$,
and so $\{\d p\imp \d q\}\models_{\CA}\d \top\imp p$. 
On the other hand, $\{\d p\imp \d q\} \not\models_{\SLO} \d\top\imp p$ as the \SLOa\ $\Aa$ with two elements 
$a\le \top$ such that $\d a=\d \top=\top$ validates $\d p\imp \d q$ 
and refutes $\d\top\imp p$, since $\d\top=\top\not\leq a$.
Therefore, the \sptheory{} $\SPi + \{\d p\imp \d q\}$ is incomplete.
\end{example}

\subsection{Drawing \SLOa s.}\label{draw}

In our examples, depending on the context, we depict \SLOa s in two different ways. One way is to  represent the semilattice structure by its Hasse diagram and use arrows labelled by $R$ to indicate the $\dR$ functions. In the unimodal case, we represent the elements $x$ with $\d x=x$ by hollow circles, and indicate $\d$ by unlabelled arrows otherwise; see Fig.~\ref{f:firstex}~(a). 

Another way is to draw a \SLOa{} as a subalgebra $\A$ of some suitable $\Fslo$
(which always exists by Theorem~\ref{t:sloembeddable}). 
We represent the underlying $\F = (W,R^\F)_{R\in\R}$ as a labelled directed multigraph (omitting the edge labels in the unimodal case) and indicate the non-empty subsets of $W$ that belong to $\A$.
This representation makes it easier for the `modal logic minded' reader to check whether the given \SLOa{} validates an \spequation{} $\e$: it suffices to verify that $\M\models \e$ for every 
$\A$-\emph{admissible} Kripke model $\M$ based on $\F$, in which all $\M(p)$ belong to the indicated subsets of $\F$ (cf.\ \emph{general frames} in modal logic~\cite{Goldblatt76a,Chagrov&Z97}). In Fig.~\ref{f:firstex}~(b), showing such a drawing of the \SLOa{} $\A$ from Example~\ref{e:simple}, $\M\models\d p\imp p$ for all $\A$-admissible Kripke models over the depicted $\F$ (the model $(\F,\V)$ with $\V(p)=\{2\}$ is not $\A$-admissible), while $\M',1\not\models p\land \d \top\imp \d p$ for $\M'=(\F,\V')$ with $\V'(p)=\{1\}$. 
\begin{figure}[ht]
\centering
\begin{tikzpicture}[scale=.5]
\draw [very thick] (1,1) circle [radius=0.15];
\node[left] at (0.9,1) {$b$};
\draw [fill] (1,3) circle [radius=0.1];
\node[left] at (.9,3) {$a$};
\draw [fill] (1,5) circle [radius=0.1];
\node[left] at (1,5) {$\top$};
\draw [very thin] (1,1.3) -- (1,2.8);
\draw [very thin] (1,3.2) -- (1,4.7);
\draw [very thick,->] (1.2,2.9) to [out=-45, in=45] (1.3,1);
\draw [very thick,->] (1.2,5) to [out=-45, in=45] (1.2,3);
\node at (1,-.5) {(a)};
\end{tikzpicture}
\hspace*{5cm}
\begin{tikzpicture}[scale=.5]

\draw [fill=gray] (1,1.5) circle [radius=.25];
\node at (1,1.5)  {\textcolor{white}{\bf\footnotesize 1}}; 
\draw [fill=gray] (1,4) circle [radius=.25];
\draw [fill=gray] (1,4) circle [radius=.25];
\node at (1,4) {\textcolor{white}{\bf\footnotesize 2}}; 
\draw [thick,->] (1,1.9) to [out=90, in=-90] (1,3.6);
\draw[thick,dotted,rounded corners=8] (.2,2.3) -- (1.8,2.3) -- (1.8,.7) -- (.2,.7) -- cycle;
\draw[thick,dotted,rounded corners=10] (-.2,5) -- (2.2,5) -- (2.2,.3) -- (-.2,.3) -- cycle;

\node at (1,-.5) {(b)};
\end{tikzpicture}
\caption{Two ways of depicting the \SLOa{} $\A$ of Example~\ref{e:simple}.}\label{f:firstex}
\end{figure}


\section{Tools and techniques for proving completeness}\label{sec:tools}

In this section, we introduce 
two general methods for proving completeness of \sptheories. Both methods
will be illustrated by many examples throughout the paper.

\subsection{Embedding SLOs into complex algebras of frames}\label{embed}

Adopting the terminology of Goldblatt~\cite{Goldblatt89},
we call an \sptheory{} $\SPL$
\emph{\complex} if every $\Aa$ in $\SLO_{\SPL}$  
is embeddable%
\footnote{Given \sptype{} algebras $\Aa$ and $\Bb$ of the same signature, 
a function $\eta \colon \Aa\to\Bb$ is 
an \emph{\SPa-homomorphism} 
if it preserves all the \SPa-operations. A one-to-one \SPa-homomorphism is an \emph{\SPa-embedding\/}.
 $\Aa$ is \emph{embeddable into} $\Bb$ if there exists an  \SPa-embedding $\eta \colon \Aa\to\Bb$
(that is, if $\Aa$ is isomorphic to a subalgebra of $\Bb$). 
For universal algebra basics, we refer the reader to~\cite{Graetzer79}.} 
into $\Fslo$, 
for some frame $\F$ for $\SPL$. As \spequations{} are preserved under taking subalgebras, we always have that  
\[
\text{$\SPL$ is \complex} \quad \Longrightarrow\quad \text{$\SPL$ is \eqcomp.}
\]
Theorems~\ref{t:altncomplex} and \ref{t:notdlocons} give examples where the converse implication does not hold.

It is well-known that every \BAOa\ is embeddable into the full complex algebra of its ultrafilter frame~\cite{Jonsson&Tarski51}. As shown in~\cite{Sofronie-Stokkermans01}, 
a similar result also holds for \SLOa s:
 
\begin{theorem}\label{t:sloembeddable}
Every \SLOa{} is embeddable into $\Fslo$, for some frame $\F$.
\end{theorem}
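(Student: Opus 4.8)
The plan is to build, for an arbitrary \SLOa{} $\Aa = (A,\land,\top,\dR)_{R\in\R}$, a frame $\F$ together with an \SPa-embedding $\eta\colon \Aa\to\Fslo$. The natural choice of domain $W$ is a set of filters of the semilattice reduct of $\Aa$. Recall that a \emph{filter} of $(A,\land,\top)$ is a non-empty subset $F\subseteq A$ that is upward closed under $\le$ and closed under $\land$; I will take $W$ to be the set of all such filters (or, in a slightly more economical variant, the set of all \emph{prime} filters, but since there is no $\lor$ in the language, taking all filters is the safe and simple choice). The embedding map is the standard one: $\eta(a) = \{F \in W \mid a \in F\}$. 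Injectivity is immediate from the fact that for $a\not\le b$ the principal filter ${\uparrow}a$ separates them, and preservation of $\land$ and $\top$ is routine, since $a\land b\in F$ iff $a\in F$ and $b\in F$ for a filter $F$, and $\top$ lies in every filter so $\eta(\top)=W$.

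The crux is to define the accessibility relations $R^\F$ on $W$ so that $\eta$ commutes with the diamonds, i.e. so that $\eta(\dR a) = \Diamond^+_R\,\eta(a)$ for all $a\in A$ and $R\in\R$. The right-to-left inclusion $\Diamond^+_R\,\eta(a)\subseteq\eta(\dR a)$ should be guaranteed by the \emph{definition} of $R^\F$, namely by declaring $(F,G)\in R^\F$ only when $a\in G$ forces $\dR a\in F$; the natural candidate is
\[
(F,G)\in R^\F \quad\text{iff}\quad \dR a \in F \text{ for every } a \in G,
\]
or, equivalently, $\{\,\dR a \mid a\in G\,\}\subseteq F$. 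With this definition, if $(F,G)\in R^\F$ and $a\in G$, then $\dR a\in F$, which gives $\Diamond^+_R\,\eta(a)\subseteq\eta(\dR a)$ for free. The content of the argument is the converse inclusion: given $\dR a\in F$, one must produce a filter $G$ with $a\in G$ and $\{\dR b \mid b\in G\}\subseteq F$. The candidate is to let $G_0 = \{\,b\in A \mid \dR b\le \dR a \ \text{does not fail}\,\}$ — more precisely, one wants the \emph{largest} filter $G$ containing $a$ whose diamond-image stays inside $F$; here monotonicity of $\dR$ and the \SLOa-axiom \eqref{dmon} (which is exactly what makes $\dR(b\land c)\le \dR c$, hence lets the diamond-image of a $\land$-closed set behave correctly) are the tools that keep this set a filter and keep its $\dR$-image below $\dR a$, hence inside the upward-closed $F$.

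The main obstacle, and the step I would spend the most care on, is precisely this existence-of-witnessing-filter step: verifying that the set $G$ one writes down is genuinely a filter (closed under $\land$ — this is where \eqref{dmon} and monotonicity are used, since from $\dR b, \dR c$ controlled by $F$ one needs $\dR(b\land c)$ controlled), contains $a$, and has $\{\dR b\mid b\in G\}\subseteq F$. A convenient way to organise it is to take $G = {\uparrow}\{a\}$ — i.e. just the principal filter generated by $a$ — and check directly that $\dR b\in F$ for every $b\ge a$: indeed $b\ge a$ gives $\dR b\ge \dR a$ by monotonicity, and $\dR a\in F$ with $F$ upward closed yields $\dR b\in F$. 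This makes the existence step almost trivial and sidesteps any Zorn's-lemma maximality argument. (This mirrors the construction in \cite{Sofronie-Stokkermans01}.) After that, the three things to collect — $\eta$ injective, $\eta$ a semilattice homomorphism preserving $\top$, $\eta$ preserving each $\dR$ — are all in hand, and $\F=(W,R^\F)_{R\in\R}$ is the desired frame, completing the proof.
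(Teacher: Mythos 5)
Your proposal is correct and takes essentially the same route as the paper's filter-based proof: the same map $a \mapsto \{F \mid a \in F\}$ on the set of all filters, the same relation $(F,G)\in R^\F$ iff $\{\dR b \mid b \in G\}\subseteq F$, and the principal filter ${\uparrow}a$ as the witness needed when $\dR a \in F$ (your detour through a ``largest filter'' is unnecessary, as you yourself conclude). The paper additionally records a second, even simpler embedding whose domain is $A$ itself, with $\eta(a)=\{b \mid b\le a\}$ and $(a,b)\in R^\F$ iff $a\le \dR b$; also note that Sofronie-Stokkermans' original argument goes via distributive lattices of downsets and Priestley duality, so your construction matches the paper's filter embedding rather than that reference.
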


As an immediate consequence, we obtain:

\begin{theorem}\label{t:valid} 
The \sptheory{} $\SPi + \emptyset$ 
is \complex, and so \eqcomp\textup{.} 
\end{theorem}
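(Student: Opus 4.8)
The plan is to derive Theorem~\ref{t:valid} directly from Theorem~\ref{t:sloembeddable} together with the soundness observations recorded in \eqref{soundness} and \eqref{birkhoff}. First I would recall what needs to be shown: that $\SPi+\emptyset$ is \complex, i.e.\ that every $\Aa\in\SLO_{\emptyset}=\SLO$ embeds into $\Fslo$ for some frame $\F$ validating $\emptyset$; and that complexity implies completeness, which was already noted in \S\ref{embed}. Since $\emptyset$ imposes no constraints, \emph{every} frame $\F$ is a frame for $\SPi+\emptyset$, so the only content is the embedding claim. But that is exactly Theorem~\ref{t:sloembeddable}: every \SLOa{} embeds into $\Fslo$ for some $\F$. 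Hence $\SPi+\emptyset$ is \complex, and therefore \eqcomp.

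More explicitly, I would spell out the completeness half for the reader even though it is the generic implication: suppose $\emptyset\models_{\CA}\e$ for some \spequation{} $\e=(\sigma\imp\tau)$; I must show $\emptyset\models_{\SLO}\e$. Take any $\Aa\in\SLO$ and any valuation $\vala$ in $\Aa$. By Theorem~\ref{t:sloembeddable} there is a frame $\F$ and an \SPa-embedding $\eta\colon\Aa\to\Fslo$. Since $\eta$ preserves $\land$, $\top$ and all $\dR$, and since the partial order $\le$ in a \SLOa{} is term-definable ($a\le b$ iff $a\land b=a$), $\eta$ also preserves $\le$; thus the composed valuation $\eta\circ\vala$ in $\Fslo$ satisfies $\tau'[\eta\circ\vala]=\eta(\tau'[\vala])$ for every \spterm{} $\tau'$, by a trivial induction. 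Because $\F$ is a frame for $\emptyset$ and $\emptyset\models_{\CA}\e$, we have $\F\models\e$, hence $\Fslo\models\e$ (as noted in \S\ref{eqcompness}), so $\sigma[\eta\circ\vala]\le\tau[\eta\circ\vala]$ in $\Fslo$, i.e.\ $\eta(\sigma[\vala])\le\eta(\tau[\vala])$. Since $\eta$ is injective and order-preserving in both directions on its image, this gives $\sigma[\vala]\le\tau[\vala]$ in $\Aa$. As $\Aa$ and $\vala$ were arbitrary, $\emptyset\models_{\SLO}\e$. The converse direction $\emptyset\models_{\SLO}\e\Rightarrow\emptyset\models_{\CA}\e$ is immediate from \eqref{soundness}.

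There is essentially no obstacle here: the theorem is a corollary, and the only point requiring a line of care is the observation that \SPa-embeddings are automatically order-embeddings for \SLOa s, which follows because $\le$ is defined by the equation $a\land b\approx a$ and both $\land$ and the relevant equational consequences are preserved by \SPa-homomorphisms; injectivity then upgrades order-preservation to an order-embedding. Everything else is bookkeeping already set up in \S\ref{eqcompness} and \S\ref{embed}.
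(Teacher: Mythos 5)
Your proposal is correct and follows the paper's own route: Theorem~\ref{t:valid} is obtained as an immediate consequence of Theorem~\ref{t:sloembeddable}, since every frame validates $\emptyset$, together with the general fact (noted in \S\ref{embed}) that complexity implies completeness because \spequations{} are preserved under subalgebras. Your extra spelling-out of the completeness half (that an \SPa-embedding preserves and, by injectivity, reflects $\le$, so validity pulls back along $\eta$) is just an explicit version of that generic implication and is fine.
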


The simple proposition below provides us with infinitely many \complex\ \sptheories.
Call an \spequation{} $\sigma\imp\tau$ \emph{variable-free} if both $\sigma$ and $\tau$ are built up from 
$\top$ using $\land$ and the $\dR$.

\begin{proposition}\label{p:varfree}
If $\SPi + \Ec$ is a \complex\ \sptheory{} and $\Ec_0$ a set of variable-free \spequations{}, 
then $\SPi + (\Ec\cup\Ec_0)$ is \complex.
\end{proposition}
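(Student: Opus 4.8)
The plan is to keep the very frame supplied by complexity of $\SPi+\Ec$ and to observe that it must already validate $\Ec_0$, because variable-free \spterms{} have a valuation-independent meaning that is preserved by \SPa-embeddings.

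Concretely, let $\Aa\in\SLO_{\Ec\cup\Ec_0}$. Since $\SLO_{\Ec\cup\Ec_0}\subseteq\SLO_{\Ec}$ and $\SPi+\Ec$ is \complex, there are a frame $\F=(W,R^\F)_{R\in\R}$ with $\F\models\Ec$ and an \SPa-embedding $\eta\colon\Aa\to\Fslo$. I claim $\F\models\Ec_0$, which suffices: then $\F$ is a frame for $\SPi+(\Ec\cup\Ec_0)$ into whose complex algebra $\Aa$ embeds via $\eta$, and as $\Aa$ was arbitrary, $\SPi+(\Ec\cup\Ec_0)$ is \complex. For the claim, note first that for any variable-free \spterm{} $\varrho$ the value $\varrho[\vala]$ in $\Aa$ is the same element for every valuation $\vala$ in $\Aa$; denote it $\varrho^\Aa$, and similarly let $\varrho^\F\subseteq W$ be the (also valuation-independent) value of $\varrho$ in $\Fslo$. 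A straightforward induction on the construction of $\varrho$ from $\top$ using $\land$ and the $\dR$, using that $\eta$ preserves $\top$, $\land$ and every $\Diamond^{+}_{R}$, gives $\eta(\varrho^\Aa)=\varrho^\F$.

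Now take any $(\sigma\imp\tau)\in\Ec_0$. From $\Aa\models(\sigma\imp\tau)$ we get $\sigma^\Aa\le\tau^\Aa$ in $\Aa$, i.e.\ $\sigma^\Aa\land\tau^\Aa=\sigma^\Aa$. Applying the \SPa-homomorphism $\eta$ and using the previous paragraph yields $\sigma^\F\cap\tau^\F=\sigma^\F$, that is, $\sigma^\F\subseteq\tau^\F$, which (since variable-free terms are valuation-independent) is exactly $\F\models(\sigma\imp\tau)$. Hence $\F\models\Ec_0$, and we are done. There is no real obstacle here: the only points to check are that variable-free \spterms{} are valuation-independent and that an \SPa-homomorphism respects the defining equation $a\land b=a$ of $\le$, both of which are immediate inductions; the content of the proposition is simply that a complexity witness $\F$ for $\Ec$ automatically validates every variable-free \spequation{} holding in the embedded \SLOa.
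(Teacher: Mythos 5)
Your proof is correct and takes essentially the same route as the paper: embed $\Aa$ into $\Fslo$ for some $\F\models\Ec$ via complexity of $\SPi+\Ec$, then transfer $\Ec_0$ from $\Aa$ to $\F$ using the fact that variable-free \spterms{} have valuation-independent values (the paper phrases this step via an $\Aa$-admissible Kripke model validating $\Ec_0$, whereas you compute $\eta(\varrho^\Aa)=\varrho^\F$ directly, but the content is identical). The only cosmetic difference is that the paper opens with a remark that one may add dummy \spequations{} so that every $\dR$ of $\Ec_0$ occurs in $\Ec$, a signature-alignment technicality that your argument sidesteps by working throughout in the fixed signature $\R$.
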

\begin{proof}
By possibly adding `dummy' \spequations{} to $\Ec$,
we may assume that every $\dR$ occurring in $\Ec_0$ also occurs in $\Ec$. 
Suppose $\Aa\in\SLO_{\Ec\cup\Ec_0}$. As $\SPi + \Ec$ is \complex, $\Aa$ is (isomorphic to) a subalgebra of $\Fslo$, for some frame $\F\models\Ec$.
As $\Aa\models\Ec_0$, there is an \mbox{$\Aa$-admissible} Kripke model $\M\models\Ec_0$ based on $\F$. Since $\Ec_0$ is variable-free, we also have $\F\models\Ec_0$. 
\end{proof}

%
%
%
%
%
%

\begin{question}\em
Does Proposition~\ref{p:varfree} hold with `\eqcomp' in place of `\complex'?
\end{question}

In the remainder of \S\ref{embed}, 
we show two different ways of proving Theorem~\ref{t:sloembeddable} and discuss connections between them. 


 \subsubsection{Embeddings via elements of SLOs.}\label{Jembed}
These are variants of the embedding used by Jackson~\cite{jackson2004} for closure algebras.
We embed a \SLOa{} $\A = (A, \land, \top,\dR)_{R\in\R}$ into the \SLOa{} $\Fslo$, for some frame  $\F=(A,R^\F)_{R\in\R}$, using the map
 \[
 \eta \colon a \mapsto \{b\in A\mid b\le a\}.
 \]
Clearly, 
$\eta(\top)=A$ and $\eta(a\land b)=\eta(a)\cap \eta(b)$.
We show now that to preserve the $\dR$,  
it is enough
if $R^\F$ satisfies the following two conditions, for all $R\in\R$:
 \begin{align}
 \label{Jemb1}
& \forall a,b\ \bigl[(a, b)\in R^\F \ \  \Rightarrow\ \  a\le\dR b\bigr],\\
 \label{Jemb2}
&  \forall a,b\ \bigl[ a\le \dR b \  \ \Rightarrow\ \ \exists c\, \bigl(c\le b \ \mbox{ and }\  (a,c)\in R^\F\bigr)\bigr].
 \end{align}
First we establish $\eta(\dR a)\subseteq \dRca\eta(a)$. Let $b\le \dR a$.
 By~\eqref{Jemb2}, there is $c\in A$ with $c\le a$ and $(b,c)\in R^\F$. It follows that $c\in\eta(a)$, and so $b\in \dRca\eta(a)$.
To show $\dRca\eta(a)\subseteq \eta(\dR a)$, take any $b\in A$  such that $(b,x)\in R^\F$ for some $x\in \eta(a)$. Then $x\le a$ and, by \eqref{Jemb1}, $b\le \dR x$.
By the monotonicity of $\dR$, $\dR x\le \dR a$, and so $b\le \dR a$, that is, $b\in  \eta(\dR a)$. 
(In fact, it is easy to see that \eqref{Jemb1} and \eqref{Jemb2} are actually equivalent to
$\forall a\,\eta(\dR a)= \dRca\eta(a)$.) 
Finally, we check that $\eta$ is injective. 
 If $a,b\in A$ and $a\ne b$ then we may assume that $a\not\le b$, in which case $a\in\eta(a)$ but $a\notin\eta(b)$.

For example, an $R^\F$ satisfying \eqref{Jemb1} and \eqref{Jemb2} can be defined by taking
\begin{equation}\label{JRclassic}
(a,b)\in R^\F\  \   \Longleftrightarrow\ \ a\le \dR b.
\end{equation}
We use this definition in the proofs of Theorems~\ref{Th:6} and \ref{t:exist}.
However, the proofs of Theorems~\ref{Th:2}, \ref{t:comp}, \ref{t:simple1} and \ref{t:fun} require  different $R^\F$ satisfying \eqref{Jemb1} and \eqref{Jemb2}.
 
 
 \subsubsection{Embeddings via filters}\label{Tembed}
 
 Let $\A = (A, \land, \top,\dR)_{R\in\R}$ be a \SLOa. 
 For any $U\subseteq A$ and $R\in\R$, we set 
 \[
 \dR[U] = \{\dR a \mid a\in U\}.
 \]
We remind the reader that a nonempty subset $U\subseteq A$ is  a
\emph{filter} (\emph{of} $\Aa$) if it is \emph{up-closed} (in the sense that $a\in U$ and $a\le b$ imply $b\in U$) and
$\land$-\emph{closed} (that is, $a\land b\in U$ for any $a,b\in U$). 
We denote by $\FA$ the set of all filters of $\Aa$.

We embed $\A$ into  $\Gslo$, for some frame $\G=(\FA,R^\G)_{R\in\R}$,  using the map
 \[
 f \colon a \mapsto \{U \in \FA \mid a \in U\}.
 \]
Clearly, 
$f(\top)=\FA$ and $f(a\land b)=f(a)\cap f(b)$  for all $a,b \in A$.
 Also, it is readily seen that to ensure $f\bigl(\dR a\bigr)=\dRca f(a)$ for all $a$, we can equivalently require that the following two conditions hold for all $U\in\FA$ and $R\in\R$:
 \begin{align}
 \label{emb1}
&  \forall V\ \bigl((U,V)\in R^\G\  \   \Rightarrow\ \ \dR [V]\subseteq U\bigr),\\
 \label{emb2}
&  \forall a\ \bigl[ \dR a\in U \  \ \Rightarrow\ \ \exists V \bigl(a\in V \ \mbox{ and }\  (U,V)\in R^\G\bigr)\bigr].
 \end{align}
To check that $f$ is injective, let $a\ne b$. We may assume that $a\not\le b$ and take the filter $\{a\}\!\!\uparrow~=\{b \mid a\le b\}$ 
(the \emph{principal filter} generated by $a$).  Then $\{a\}\!\!\uparrow\,\in f(a)$ but $\{a\}\!\!\uparrow\,\notin f(b)$.

For example, one can define $R^\G$ by taking
 \begin{equation}\label{Rclassic}
  	(U,V)\in R^\G\  \   \Longleftrightarrow\ \ \dR [V]\subseteq U.
 \end{equation}
Again, in general, there can be different $R^\G$ satisfying \eqref{emb1} and \eqref{emb2}; see, e.g., the proofs of Theorems~\ref{t:comp0} and \ref{t:fun}~$(i)$.


\subsubsection{Connection between the two embeddings}

For an arbitrary \SLOa\ $\Aa$, with the `classical' definitions
of $R^\F$ and $R^{\G}$ via \eqref{JRclassic} and  \eqref{Rclassic}, respectively, we have the following: 
\begin{proposition}
The frame $(A,R^\F)_{R\in\R}$ is isomorphic to a \textup{(}not necessarily generated\textup{)}
subframe of $(\FA,R^\G)_{R\in\R}$.
For finite $\Aa$, these frames are isomorphic.
 \end{proposition}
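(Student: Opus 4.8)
The plan is to use the map $a \mapsto \{a\}\!\!\uparrow$ sending each element of $A$ to the principal filter $\{a\}\!\!\uparrow = \{b \in A \mid a \le b\}$ it generates. This is exactly the map that appeared in \S\ref{Tembed} when checking that $f$ is injective. First I would record that it is an injection of $A$ into $\FA$: each $\{a\}\!\!\uparrow$ is a filter (it is nonempty since $a \le a$, up-closed by transitivity of $\le$, and $\land$-closed since $a\le b$ and $a\le c$ give $a\le b\land c$), and $\{a\}\!\!\uparrow = \{b\}\!\!\uparrow$ forces $a\le b$ and $b\le a$, hence $a=b$ by antisymmetry of $\le$.

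Next I would verify that this map is a frame embedding, i.e.\ that for every $R\in\R$ we have $(a,b)\in R^\F$ iff $(\{a\}\!\!\uparrow,\{b\}\!\!\uparrow)\in R^\G$. Unfolding \eqref{Rclassic}, the right-hand side says $\dR[\{b\}\!\!\uparrow]\subseteq\{a\}\!\!\uparrow$, that is, $a\le\dR c$ for every $c\ge b$. If $(a,b)\in R^\F$, i.e.\ $a\le\dR b$ by \eqref{JRclassic}, then for any $c\ge b$ monotonicity of $\dR$ gives $\dR b\le\dR c$, hence $a\le\dR c$; this is the only inclusion requiring an argument. Conversely, putting $c=b$ in $\dR[\{b\}\!\!\uparrow]\subseteq\{a\}\!\!\uparrow$ yields $\dR b\in\{a\}\!\!\uparrow$, i.e.\ $a\le\dR b$, i.e.\ $(a,b)\in R^\F$. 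Thus $(A,R^\F)_{R\in\R}$ is isomorphic to the subframe of $(\FA,R^\G)_{R\in\R}$ induced by the set of principal filters of $\Aa$; I would make no claim that this subframe is generated, since a principal filter may well be $R^\G$-related to non-principal ones (e.g.\ when some $\dR$ is constant $\top$).

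Finally, for finite $\Aa$ I would show that the map is onto $\FA$, hence a frame isomorphism, by proving that every filter of a finite \SLOa{} is principal. Given $U\in\FA$, an easy induction from $\land$-closedness shows every finite meet of members of $U$ lies in $U$; since $A$ is finite and $U$ is nonempty, $m := \bigwedge U\in U$. As $U$ is up-closed, $\{m\}\!\!\uparrow\subseteq U$, and since $m\le a$ for all $a\in U$ we also have $U\subseteq\{m\}\!\!\uparrow$, so $U=\{m\}\!\!\uparrow$. There is no real obstacle in any of this; the only place a small argument is needed is the observation, in the finite case, that every filter is principal, and the monotonicity step in the embedding verification.
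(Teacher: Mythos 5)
Your proposal is correct and follows essentially the same route as the paper: the map $a\mapsto\{a\}\!\!\uparrow$ onto principal filters, the equivalence $(a,b)\in R^\F$ iff $a\le\dR c$ for all $c\ge b$ iff $\bigl(\{a\}\!\!\uparrow,\{b\}\!\!\uparrow\bigr)\in R^\G$ (monotonicity one way, instantiating $c=b$ the other), and the observation that in a finite \SLOa{} every filter is principal. You merely spell out the injectivity and principality steps that the paper leaves implicit.
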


\begin{proof}
For all $a,b\in A$, we have 
$(a,b)\in R^\F$ iff $a\le\d_R b$ iff $a\le \d_R c$ for all $c\geq b$ iff  
 $\bigl(\{a\}\!\!\uparrow,\{b\}\!\!\uparrow\bigr)\in R^\G$. 
 If $\Aa$ is finite, then all filters of $\Aa$ are principal.
\end{proof}


\subsection{Completeness by \method}\label{sec:interp} 
To introduce our second method for proving completeness, we establish some connections between \spterms{} and Kripke models.

Given Kripke models $\M_i = (\F_i,\V_i)$ based on frames $\F_i=(W_i,R_i)_{R\in \R}$, for $i=1,2$, a map $h \colon W_1 \to  W_2$ is called an $\F_1\to\F_2$ \emph{homomorphism} 
if $(x,y)\in R_1$ implies $\bigl(h(x),h(y)\bigr)\in R_2$, for any $x,y\in W_1$ and $R\in\R$.
If in addition $x\in\V_1(p)$ implies $h(x)\in\V_2(p)$, for any $x\in W_1$ and variable $p$, then 
$h$ is called an $\M_1\to\M_2$ \emph{homomorphism\/}. 
Clearly, \spterms{} are preserved under homomorphisms in the sense that 
$\M_1,x\kmodels \varrho$ implies $\M_2,h(x)\kmodels \varrho$,
for any $x\in W_1$ and \spterm{} $\varrho$.

\subsubsection{Kripke models from \spterms}\label{sec:treemodel}
We say that a frame $\F=(W,R^\F)_{R\in \R}$ is \emph{tree-shaped} (or simply a \emph{tree}) \emph{with root} $r$ if $(W,\bigcup_{R\in \R} R^\F)$ is a finite directed tree with root $r$ such that $R_1^\F\cap R_2^{\F}=\emptyset$ for all $R_1 \ne R_2$. (In particular, $(W,R^\F)_{R\in \R}$ is irreflexive and intransitive.) 

We use the following notions and notation throughout the paper.
Given an \spterm{} $\varrho$,  
 we define by induction  a Kripke model 
 \[
 \mbox{$\M_\varrho = (\T_\varrho,\V_\varrho)$ based on a finite tree 
 $\T_\varrho=(W_\varrho,R_\varrho)_{R\in \R}$ with root $r_\varrho$.}
 \]
 For $\varrho = \top$, $\T_\varrho$ consists of  a single irreflexive point $r_\varrho$ with $\V_\varrho(p)=\emptyset$ for all variables $p$. For $\varrho = p$, $\T_\varrho$ consists of a single irreflexive point $r_\varrho$, $\V_\varrho(p)=\{r_\varrho\}$, and $\V_\varrho(q)=\emptyset$ for $q \ne p$. For $\varrho = \varrho_1 \land \varrho_2$, we first construct disjoint $\M_{\varrho_1}$ and $\M_{\varrho_2}$, and then merge their roots $r_{\varrho_1}$ and $r_{\varrho_2}$ into $r_\varrho$ such that $r_\varrho\in \V_{\varrho}( q)$ iff 
$r_{\varrho_i}\in \V_{\varrho_i}( q)$, for some $i=1,2$.  Finally, for  $\varrho = \dR\varrho'$, we 
add a fresh point $r_\varrho$ to $W_{\varrho'}$, 
and set $R_\varrho=R_{\varrho'}\cup\{(r_\varrho,r_{\varrho'})\}$
and $\V_{\varrho}(p)=\V_{\varrho'}(p)$ for all variables $p$. We refer to $\M_\varrho$ as the \emph{$\varrho$-tree model\/}.
Note that $\M_\varrho$ and $\varrho$ are of the same size as the points in $W_\varrho$ are in one-to-one correspondence with the subformulas of $\varrho$.

\begin{proposition}\label{p:hom}
For any \spterm{} $\varrho$, Kripke model $\M$ and point $w$ in $\M$, we have 
$\M,w\kmodels\varrho$
iff there is a homomorphism \mbox{$h \colon \M_\varrho \to \M$} with $h(r_\varrho) = w$.
\end{proposition}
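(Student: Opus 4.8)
The plan is to induct on the structure of the sp-term $\varrho$, following its inductive definition together with that of $\M_\varrho$, and to prove both directions of the equivalence simultaneously. The $(\Leftarrow)$ direction will in each case be immediate from the observation recorded just above the statement — that sp-terms are preserved under homomorphisms — so the real work lies in the $(\Rightarrow)$ direction, where from $\M,w\kmodels\varrho$ one must build a homomorphism $h\colon\M_\varrho\to\M$ with $h(r_\varrho)=w$. The two base cases are trivial: for $\varrho=\top$ the model $\M_\varrho$ is a single irreflexive point carrying the empty valuation, so the constant map onto $w$ is a homomorphism and $\M,w\kmodels\top$ holds unconditionally; for $\varrho=p$ the point $r_\varrho$ lies in $\V_\varrho(p)$ and in no other $\V_\varrho(q)$, so a homomorphism sending $r_\varrho$ to $w$ exists iff $w\in\V(p)$, that is, iff $\M,w\kmodels p$.

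For the step $\varrho=\dR\varrho'$: if $\M,w\kmodels\dR\varrho'$, choose $w'$ with $(w,w')$ an $R$-edge of (the frame underlying) $\M$ and $\M,w'\kmodels\varrho'$; the induction hypothesis gives $h'\colon\M_{\varrho'}\to\M$ with $h'(r_{\varrho'})=w'$, and extending it by $h(r_\varrho)=w$ yields a homomorphism on $\M_\varrho$, obtained from $\M_{\varrho'}$ by adjoining the new root $r_\varrho$: the only edge not already handled by $h'$ is $(r_\varrho,r_{\varrho'})\in R_\varrho$, which is sent to the $R$-edge $(w,w')$, and $r_\varrho$ belongs to no $\V_\varrho(q)$. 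Conversely, a homomorphism $h$ with $h(r_\varrho)=w$ sends $(r_\varrho,r_{\varrho'})$ to an $R$-edge $(w,h(r_{\varrho'}))$ of $\M$ and restricts to a homomorphism $\M_{\varrho'}\to\M$, so by the induction hypothesis $\M,h(r_{\varrho'})\kmodels\varrho'$, whence $\M,w\kmodels\dR\varrho'$.

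For the step $\varrho=\varrho_1\land\varrho_2$: by construction $\M_\varrho$ is the disjoint union of $\M_{\varrho_1}$ and $\M_{\varrho_2}$ with the roots $r_{\varrho_1}$ and $r_{\varrho_2}$ identified as $r_\varrho$, the valuation on each piece being inherited and the root valuation being the union of the two root valuations; in particular each $\M_{\varrho_i}$ sits inside $\M_\varrho$ as a submodel with $r_{\varrho_i}$ at $r_\varrho$. If $\M,w\kmodels\varrho_1\land\varrho_2$ then $\M,w\kmodels\varrho_i$ for $i=1,2$, and the induction hypothesis supplies homomorphisms $h_i\colon\M_{\varrho_i}\to\M$ with $h_i(r_{\varrho_i})=w$; since the non-root parts of $\M_{\varrho_1}$ and $\M_{\varrho_2}$ are disjoint and $h_1,h_2$ agree on the shared root (both sending it to $w$), they combine into a single map $h$ on $\M_\varrho$, and edge- and valuation-preservation are checked componentwise, using that every edge of $\M_\varrho$ lies inside one of the two pieces and that the valuation of $\M_\varrho$ restricts suitably on each piece. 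Conversely, restricting a homomorphism $h$ with $h(r_\varrho)=w$ to each $\M_{\varrho_i}$ gives homomorphisms witnessing $\M,w\kmodels\varrho_i$ via the induction hypothesis, so $\M,w\kmodels\varrho_1\land\varrho_2$.

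The only step that calls for a little care — and the one I would flag as the main (minor) obstacle — is the conjunction case: one must make precise that identifying the roots really does realize each $\M_{\varrho_i}$ as a submodel of $\M_\varrho$, so that homomorphisms restrict along the inclusion, and that two homomorphisms agreeing on that common root glue to a well-defined homomorphism of the amalgam. Once this small gluing lemma is stated cleanly, every verification in the proof reduces to unwinding the definitions of $\M_1\to\M_2$ homomorphism and of $\M_\varrho$, and nothing further is needed.
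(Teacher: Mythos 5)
Your proof is correct and is exactly the argument the paper has in mind: the paper dismisses Proposition~\ref{p:hom} with ``by a straightforward induction on the construction of $\varrho$'', and your case analysis (including the gluing of the two homomorphisms at the merged root in the conjunction case, and handling the converse direction within the induction rather than only via the preservation remark) is a faithful spelling-out of that induction. No gap.
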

\begin{proof}
By a straightforward induction on the construction of $\varrho$.
\end{proof}

The connection between the validity of \spequations{} and homomorphisms between models proved below was first observed in~\cite{BaaderKuesters+-IJCAI-1999}.

\begin{corollary}\label{c:krchar} 
$(i)$
For any \spequation{} $\e = (\sigma \imp \tau)$, Kripke model $\M$ and point $w$ in $\M$, the following conditions are equivalent\textup{:}
\begin{itemize}
\item[--] $\M,w \models \e$;

\item[--] for every homomorphism $h_\sigma\colon\M_\sigma\to\M$ with  $h_\sigma(r_\sigma)=w$, there is a homomorphism  $h_\tau\colon\M_\tau\to\M$ with $h_\tau(r_\tau)=w$.
\end{itemize}

$(ii)$
For any  
\spterms{} $\sigma$ and $\tau$, we have $\models_{\CA}\sigma\imp\tau$ iff 
$\M_\sigma,r_\sigma\kmodels\tau$.
\end{corollary}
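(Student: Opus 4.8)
\textbf{Proof proposal for Corollary~\ref{c:krchar}.}

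The plan is to derive both parts directly from Proposition~\ref{p:hom}, which already pins down the semantics of an \spterm{} at a point in terms of homomorphisms from the corresponding tree model. For part $(i)$, fix a Kripke model $\M$ and a point $w$. First I would spell out $\M,w\models\e$ as: if $\M,w\kmodels\sigma$ then $\M,w\kmodels\tau$. Using Proposition~\ref{p:hom} on the left applied to $\sigma$, the hypothesis $\M,w\kmodels\sigma$ is equivalent to the existence of a homomorphism $h_\sigma\colon\M_\sigma\to\M$ with $h_\sigma(r_\sigma)=w$; using it again on the right applied to $\tau$, the conclusion $\M,w\kmodels\tau$ is equivalent to the existence of a homomorphism $h_\tau\colon\M_\tau\to\M$ with $h_\tau(r_\tau)=w$. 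Rewriting the implication $\M,w\kmodels\sigma\Rightarrow\M,w\kmodels\tau$ with these two equivalences substituted in yields exactly the second bullet, so the two conditions are equivalent. The only mild subtlety is that ``$\M,w\models\e$'' quantifies over nothing extra — it is a single implication at the fixed $w$ — so no universal closure over points or valuations is involved here; the universal quantifier in the second bullet ranges only over the homomorphisms $h_\sigma$, matching the single antecedent.

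For part $(ii)$ I would argue as follows. The direction from left to right: assume $\models_{\CA}\sigma\imp\tau$; in particular $\M_\sigma\models\sigma\imp\tau$, since $\M_\sigma$ is (a model based on) a frame, and $\models_{\CA}$ means validity in all frames hence all models. Now $\M_\sigma,r_\sigma\kmodels\sigma$ holds because the identity map $\idmap\colon\M_\sigma\to\M_\sigma$ is a homomorphism with $\idmap(r_\sigma)=r_\sigma$, so Proposition~\ref{p:hom} gives $\M_\sigma,r_\sigma\kmodels\sigma$. Since $\M_\sigma\models\sigma\imp\tau$, we get $\M_\sigma,r_\sigma\kmodels\tau$. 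For the converse, assume $\M_\sigma,r_\sigma\kmodels\tau$ and take an arbitrary Kripke model $\M$ and point $w$ with $\M,w\kmodels\sigma$; I must show $\M,w\kmodels\tau$. By Proposition~\ref{p:hom} there is a homomorphism $h\colon\M_\sigma\to\M$ with $h(r_\sigma)=w$. Since \spterms{} are preserved under homomorphisms (the ``Clearly'' remark just before \S\ref{sec:treemodel}), $\M_\sigma,r_\sigma\kmodels\tau$ transports along $h$ to $\M,h(r_\sigma)\kmodels\tau$, i.e.\ $\M,w\kmodels\tau$. As $\M$, $w$ were arbitrary, $\models_{\CA}\sigma\imp\tau$. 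Alternatively, part $(ii)$ can be read off part $(i)$ by instantiating $\M:=\M_\sigma$, $w:=r_\sigma$ and using that $h_\sigma:=\idmap$ witnesses the antecedent; then the homomorphism $h_\tau$ it produces, composed with any $h\colon\M_\sigma\to\M$, gives a homomorphism witnessing $\M,w\kmodels\tau$ via Proposition~\ref{p:hom}.

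I do not expect a genuine obstacle here: the corollary is a bookkeeping consequence of Proposition~\ref{p:hom} together with homomorphism-preservation of \spterms. The one place to be careful is keeping straight which ``direction'' of Proposition~\ref{p:hom} is used where — existence of a homomorphism \emph{into} $\M$ characterises truth \emph{in} $\M$, and the tree model $\M_\sigma$ plays the double role of being both the canonical witness for $\sigma$ (via $\idmap$) and, simultaneously, the universal test model for $\sigma\imp\tau$ in part $(ii)$. A short induction is not even needed, as it has already been done inside Proposition~\ref{p:hom}; the whole argument is a two- or three-line unfolding of definitions.
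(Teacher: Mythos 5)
Your proposal is correct and follows essentially the same route as the paper: part $(i)$ is read off directly from Proposition~\ref{p:hom}, and part $(ii)$ uses the identity homomorphism on $\M_\sigma$ for the forward direction and, for the converse, a homomorphism $h\colon\M_\sigma\to\M$ supplied by Proposition~\ref{p:hom} together with preservation of \spterms{} under homomorphisms. No gaps to report.
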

\begin{proof}
Claim $(i)$ is an immediate consequence of Proposition~\ref{p:hom}.

$(ii)$
($\Rightarrow$)
As the identity map on $\M_\sigma$ is a homomorphism, 
$\M_\sigma,r_\sigma\kmodels\sigma$
by Proposition~\ref{p:hom},
and so 
$\M_\sigma,r_\sigma\kmodels\tau$
 by the assumption.

($\Leftarrow$)
Suppose 
$\M,w\kmodels\sigma$,
for some Kripke model $\M$ based on a frame $\F$.
By Proposition~\ref{p:hom}, 
there is a homomorphism $h \colon \M_\sigma \to \M$ with $h(r_\sigma) = w$. Thus, 
$\M,w\kmodels\tau$ follows from $\M_\sigma,r_\sigma\kmodels\tau$. 
\end{proof}

 
\subsubsection{\sptermscap{} from Kripke models}\label{mterms} 
Suppose $\N = (\F,\V)$ is a Kripke model such that 
$\V(p) \ne \emptyset$ for finitely many variables $p$ only, and $\F = (W, R^\F)_{R\in\R}$ is a finite frame with root $r$ that contains no directed cycles.
We inductively associate with $\N$ an \spterm{} $\trm(\N)=\trm_r^\N$ by setting, for every $w\in W$, 
\[
\trm_w^\N = \bigwedge_{w\in\V(p)}p \ \land\bigwedge_{(w,v)\in R^\F}\dR\trm_v^\N.
\]
Clearly, 
$\N,w\kmodels\trm_w^\N$.
Observe that if $\F$ is a directed tree then 
$\trm_w^\N$ is the unique (modulo \SLOa-axioms \eqref{idem}--\eqref{assoc}) \spterm{} $\varrho$ such
that the $\varrho$-tree model $\M_\varrho$ is the submodel of $\N$ generated by $w$.
Thus, in this case $\M_{\trm(\N)}$ is the same as $\N$. In particular, 
$\trm(\M_\sigma)=\sigma$, for any \spterm{} $\sigma$.
In general, the $\trm(\N)$-tree model $\M_{\trm(\N)}$ is what is known in modal logic as the $r$-\emph{unravelling of\/} $\N$, and so:

\begin{proposition}\label{p:unravel}
For every \spterm{} $\tau$, 
$\N,r\kmodels\tau$ iff $\M_{\trm(\N)},r\kmodels\tau$.
\end{proposition}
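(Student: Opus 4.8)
The plan is to derive Proposition~\ref{p:unravel} from Proposition~\ref{p:hom} together with the preservation of \spterms{} under homomorphisms, making use of the fact recorded just above that $\M_{\trm(\N)}$ is the $r$-unravelling of $\N$. Concretely, I would exhibit homomorphisms in both directions: a projection $\M_{\trm(\N)}\to\N$ for the ``if'' part, and a lift $\M_\tau\to\M_{\trm(\N)}$ for the ``only if'' part. Throughout I write $r$ also for the root $r_{\trm(\N)}$ of $\M_{\trm(\N)}$, as in the statement.

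For the implication from right to left: since $\N,r\kmodels\trm(\N)$, Proposition~\ref{p:hom} supplies a homomorphism $h\colon\M_{\trm(\N)}\to\N$ with $h(r)=r$ --- this $h$ is simply the unravelling projection, sending each point of $\M_{\trm(\N)}$ (i.e.\ each labelled directed path in $\F$ issuing from $r$) to its last vertex. As \spterms{} are preserved under homomorphisms, $\M_{\trm(\N)},r\kmodels\tau$ yields $\N,r\kmodels\tau$ at once.

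For the converse, suppose $\N,r\kmodels\tau$. By Proposition~\ref{p:hom} there is a homomorphism $g\colon\M_\tau\to\N$ with $g(r_\tau)=r$. The crucial step is to \emph{lift} $g$ to a homomorphism $\hat g\colon\M_\tau\to\M_{\trm(\N)}$ with $\hat g(r_\tau)=r$: since $\M_\tau$ is a tree, each point $x$ of $\M_\tau$ is reached from $r_\tau$ by a unique path $r_\tau=x_0,x_1,\dots,x_k=x$ with edges labelled, say, $R_1,\dots,R_k$, and one lets $\hat g(x)$ be the point of the unravelling $\M_{\trm(\N)}$ corresponding to the path $r=g(x_0),g(x_1),\dots,g(x_k)$ in $\F$ --- a legitimate path since $g$ is a homomorphism. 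Verifying that $\hat g$ respects the accessibility relations and the valuation is then a routine unwinding of the inductive definitions of $\M_\tau$, $\trm(\N)$ and $\M_{\trm(\N)}$. Since $\M_\tau,r_\tau\kmodels\tau$ (Proposition~\ref{p:hom} via the identity homomorphism on $\M_\tau$), applying preservation of \spterms{} along $\hat g$ gives $\M_{\trm(\N)},r\kmodels\tau$, as required.

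The only step that is not essentially mechanical is the construction of $\hat g$; it is precisely the path-lifting property of unravellings, and it cannot be avoided by looking for a homomorphism $\N\to\M_{\trm(\N)}$, since in general no such homomorphism exists (when $\N$ contains a ``diamond'', the unravelling splits it into two incomparable branches that no homomorphism can collapse back). An alternative that bypasses the explicit lift is a direct induction on the structure of $\tau$, proving the slightly stronger claim that $\N,w\kmodels\tau$ iff $\M_{\trm_w^\N},r_{\trm_w^\N}\kmodels\tau$ for \emph{every} $w\in W$: the cases $\tau\in\{\top,p\}$ and $\tau=\tau_1\land\tau_2$ are immediate, and the case $\tau=\dR\tau'$ uses that the $R$-successors of the root of $\M_{\trm_w^\N}$ correspond bijectively to the $R^\F$-successors $v$ of $w$, the submodel of $\M_{\trm_w^\N}$ generated by the $v$-successor being isomorphic to $\M_{\trm_v^\N}$. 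I would present the homomorphism argument, being the shorter of the two.
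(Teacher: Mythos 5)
Your proof is correct; note that the paper itself offers no explicit proof of Proposition~\ref{p:unravel} — it just observes that $\M_{\trm(\N)}$ is the $r$-unravelling of $\N$ and appeals to the standard fact that a model and its unravelling verify the same modal (hence strictly positive) formulas at the root. Your argument fills in exactly those details while staying inside the paper's toolkit: the right-to-left direction is the unravelling projection, obtained from $\N,r\kmodels\trm(\N)$ via Proposition~\ref{p:hom} together with preservation of \spterms{} under homomorphisms (and it does not matter whether the homomorphism Proposition~\ref{p:hom} hands you is literally the last-vertex projection — any homomorphism does); and your path-lifting of $g\colon\M_\tau\to\N$ to $\hat g\colon\M_\tau\to\M_{\trm(\N)}$ is precisely the ``back'' condition witnessing that the projection is a p-morphism, specialised to the tree $\M_\tau$, which is all that positive existential formulas require. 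The lift is well defined and a homomorphism for the reasons you give, and your alternative direct induction on $\tau$ (proving the statement for every $\trm_w^\N$) is equally sound. One small quibble with a side remark: your parenthetical claim that a ``diamond'' in $\N$ blocks any homomorphism $\N\to\M_{\trm(\N)}$ is only accurate once the valuation separates the two middle points (as a pure frame map, a diamond can be collapsed into a single branch; a triangle $r\to a$, $a\to b$, $r\to b$ is a cleaner frame-level counterexample). Since that remark only motivates why you lift $g$ rather than compose it with a map $\N\to\M_{\trm(\N)}$, it does not affect the correctness of the proof.
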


We also note the following important fact:  

\begin{proposition}\label{p:htot}
If $h\colon\N_1\to\N_2$ is a homomorphism, then, for every $w$ in $\N_1$, we have 
$\vdash_{\SLO}\trm_{h(w)}^{\N_2}\imp\trm_w^{\N_1}$. 
\end{proposition}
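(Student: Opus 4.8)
The plan is to argue by induction along the reachability order of $\F_1$. First I would observe that, since the frame $\F_1=(W_1,R_1)_{R\in\R}$ underlying $\N_1=(\F_1,\V_1)$ is finite and contains no directed cycles, the transitive closure of $\bigcup_{R\in\R}R_1$ is a well-founded strict partial order on $W_1$, and I prove the claim by induction on $w\in W_1$ with respect to this order. In this way, for every $R$-successor $v$ of $w$ in $\N_1$ the induction hypothesis $\vdash_{\SLO}\trm_{h(v)}^{\N_2}\imp\trm_v^{\N_1}$ is available. (The frames need not be trees, so one cannot simply induct on the depth of $w$; arranging a well-founded induction is the only point that needs care.)

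For the inductive step at $w$, recall that, writing $\N_2=(\F_2,\V_2)$ with $\F_2=(W_2,R_2)_{R\in\R}$,
\[
\trm_w^{\N_1} \ = \ \bigwedge_{w\in\V_1(p)}p \ \land \bigwedge_{(w,v)\in R_1}\dR\trm_v^{\N_1}
\qquad\mbox{and}\qquad
\trm_{h(w)}^{\N_2} \ = \ \bigwedge_{h(w)\in\V_2(p)}p \ \land \bigwedge_{(h(w),v')\in R_2}\dR\trm_{v'}^{\N_2}.
\]
I would show that $\trm_{h(w)}^{\N_2}$ derivably implies every individual conjunct of $\trm_w^{\N_1}$; combining these derivations by repeated use of the $\land$-introduction rule of \eqref{rules} (and, in the degenerate case where $\trm_w^{\N_1}$ has no conjuncts and hence equals $\top$, the axiom $p\imp\top$ of \eqref{axioms}) then yields $\vdash_{\SLO}\trm_{h(w)}^{\N_2}\imp\trm_w^{\N_1}$.

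Two kinds of conjunct occur. For a propositional conjunct $p$ of $\trm_w^{\N_1}$ we have $w\in\V_1(p)$, so $h(w)\in\V_2(p)$ because $h$ preserves the valuation; hence $p$ is itself a conjunct of $\trm_{h(w)}^{\N_2}$ and $\vdash_{\SLO}\trm_{h(w)}^{\N_2}\imp p$ follows from the conjunction axioms of \eqref{axioms} together with transitivity. For a modal conjunct $\dR\trm_v^{\N_1}$ of $\trm_w^{\N_1}$ we have $(w,v)\in R_1$, so $(h(w),h(v))\in R_2$ because $h$ preserves the relations; hence $\dR\trm_{h(v)}^{\N_2}$ is a conjunct of $\trm_{h(w)}^{\N_2}$, giving $\vdash_{\SLO}\trm_{h(w)}^{\N_2}\imp\dR\trm_{h(v)}^{\N_2}$. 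By the induction hypothesis $\vdash_{\SLO}\trm_{h(v)}^{\N_2}\imp\trm_v^{\N_1}$, whence the $\dR$-monotonicity rule of \eqref{rules} yields $\vdash_{\SLO}\dR\trm_{h(v)}^{\N_2}\imp\dR\trm_v^{\N_1}$, and transitivity closes this case.

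Apart from setting up the well-founded induction, the argument is pure bookkeeping: its heart is that the two homomorphism conditions on $h$ — preservation of the accessibility relations and preservation of the valuation — are precisely what is needed to match each conjunct of the target term $\trm_w^{\N_1}$ with a conjunct of the source term $\trm_{h(w)}^{\N_2}$, after which the Birkhoff-style calculus finishes the job. (This is the syntactic counterpart of the observation that \spterms{} are preserved under homomorphisms.)
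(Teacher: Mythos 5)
Your proof is correct. The paper in fact states Proposition~\ref{p:htot} without proof (it is offered as an observed fact after the definition of $\trm$ in \S\ref{mterms}), and your argument — a well-founded induction over the finite, cycle-free frame of $\N_1$, matching each conjunct of $\trm_w^{\N_1}$ with a conjunct of $\trm_{h(w)}^{\N_2}$ via the two homomorphism conditions and then closing with the conjunction axioms, the $\dR$-monotonicity rule and transitivity from \eqref{axioms}--\eqref{rules} — is exactly the routine derivation the authors leave implicit.
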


%


\subsubsection{\methodcapital}

The above observations give another completeness proof for the \sptheory{} $\SPi + \emptyset$
(cf.\ Theorem~\ref{t:valid}).
Indeed, suppose $\models_{\CA}\sigma\imp\tau$. 
Then, by Corollary~\ref{c:krchar}~$(ii)$, we have 
$\M_\sigma,r_\sigma\kmodels\tau$,
and so by Proposition~\ref{p:hom}, there is a homomorphism $h \colon \M_\tau \to \M_\sigma$ with 
$h(r_\tau) = r_\sigma$. Thus, $\vdash_{\SLO}\sigma\imp\tau$ follows by Proposition~\ref{p:htot}, and so
$\models_{\SLO}\sigma\imp\tau$ by \eqref{birkhoff}.

This proof is a special case of the following general method of establishing  completeness of \sptheories, which we call the \emph{method of \method\/}.
In order to prove that an \sptheory{} $\SPi + \Ec$ is \eqcomp\ (without knowing whether it is \complex{} or  not), 
we do the following, for any given \spequation{} $\sigma\imp\tau$:
\begin{itemize}
\item[$(i)$]
transform one of the \spterms{} $\sigma$ or $\tau$ into some $\Ec\vdash_{\SLO}$-equivalent normal form 
resulting in an \spequation{} $\alpha\imp\beta$, called 
a $\Ec$-\emph{proxy} for $\sigma \imp \tau$;

\item[$(ii)$]
show that $\Ec\models_{\CA}\alpha\imp\beta$ is reducible to $\Ec^-\models_{\CA}\alpha\imp\beta$, for some 
subset $\Ec^-$ of $\Ec$ such that $\SPi + \Ec^-$ is \eqcomp\ and has the finite frame property.
\end{itemize}
The concrete $\Ec$-normal form used in this method depends on $\Ec$ and reflects the structure of its frames. Say, for $\Ec=\{\dR p\imp\dS p\}$ that defines the property
$\Phi=\forall x, y\, (R(x,y)\to S(x,y))$, we transform $\sigma$ into a $\Ec\vdash_{\SLO}$-equivalent \spterm{} describing the
$\Phi$-closure of the finite $\sigma$-tree model $\M_\sigma$, and take $\Ec^- = \emptyset$ (see Theorem~\ref{thm:interpolant}). 
For $\Axlin$ defining linear quasiorders, we transform $\tau$ into a conjunction of \spterms, each of which describes a linearly ordered full branch
of the finite $\tau$-tree model $\M_\tau$, and take $\Ec^-=\Axsfour$ (see Theorem~\ref{t:lincomp}).


We use the method of \method{} to obtain a number of \eqcomp ness results: 
Theorem~\ref{thm:interpolant}, which is a general \eqcomp ness criterion
(where we do not know whether all the covered \sptheories{} are \complex),
and Theorems~\ref{t:altn}, \ref{t:s5alt} and \ref{t:lincomp} (where the \sptheories{} in question are \emph{not} 
\complex).

In the next three sections, we apply the tools and techniques developed above to investigate completeness properties of \sptheories, systematising our results according to the form 
of the first-order correspondents of their axioms. 


\section{Completeness of \sptheories{} with universal Horn correspondents}\label{Sahlqvist}
We begin by recalling that, by the correspondence part of Sahlqvist's theorem~\cite{Sahlqvist75,Blackburnetal01}, 
 a first-order correspondent $\Psi_{\e}$ of any  \spequation{} $\e = (\sigma \imp \tau)$ can be constructed as follows, using the tree models $\M_\sigma$ and $\M_\tau$ from \S\ref{sec:treemodel}. 
Suppose $W_\sigma=\{v_0,v_1,\dots,v_{n_\sigma}\}$ with $v_0=r_\sigma$, and 
$W_\tau=\{u_0,u_1,\dots,u_{n_\tau}\}$ with $u_0=r_\tau$. With each point $w$ in $W_\sigma\cup W_\tau$, we associate a variable $\varel{w}$, and set
\begin{multline}\label{eq:corr1}
\Psi'_{\e}(\varel{v}_0) ~=~ \forall \varel{v}_1,\dots,\varel{v}_{n_\sigma} \, \Big( \hspace*{-2mm} \bigwedge_{\stackrel{i,j\leq n_{\sigma},\,R\in\R}{(v_i,v_j)\in R_\sigma}} \hspace*{-3mm} R(\varel{v}_i,\varel{v}_j)  \to \\
\exists \varel{u}_0,\dots,\varel{u}_{n_\tau} \, \big( (\varel{v}_0 = \varel{u}_0) \land \hspace*{-3mm} \bigwedge_{\stackrel{i,j\leq n_\tau,\,R\in\R}{(u_i,u_j)\in R_\tau}} \hspace*{-3mm} R(\varel{u}_i,\varel{u}_j) 
\land \hspace*{-3mm} \bigwedge_{\stackrel{i\leq n_\tau,\,p\in\varset}{u_i \in\V_\tau(p)}} \bigvee_{\stackrel{j\leq n_\tau}{v_j \in\V_\sigma(p)}}\!\!(\varel{u}_i=\varel{v}_j) \big)\Big).
\end{multline}
Then (as actually follows from Corollary~\ref{c:krchar}~$(i)$), for any frame $\F$ and any point $w$ in it,   
$\F,w \models \e$ iff $\F \models \Psi'_{\e}(\varel{v}_0)[w]$.
 The formula $\Psi'_{\e}(\varel{v}_0)$ with one free variable $\varel{v}_0$ is called a \emph{local  correspondent of} $\e$. The sentence $\Psi_{\e} = \forall \varel{v}_0\, \Psi'_{\e}(\varel{v}_0)$ is then a (\emph{global}) \emph{correspondent of} $\e$, that is, for every frame $\F$,
\begin{equation}\label{globalcorr}
\F\models \e \quad \Longleftrightarrow\quad \F \models \Psi_{\e}.
\end{equation}
The left-hand side of the implication in $\Psi_{\e}$ is just the diagram of the tree-shaped frame $\T_\sigma$ constructed from the atoms $R(\varel{v}_i,\varel{v}_j)$ with $(v_i,v_j)\in R_\sigma$. The right-hand side has a more complex structure that involves equality, disjunction and existential quantifiers. In some cases, 
$\Psi_{\e}$ is equivalent to a first-order sentence without some of these.
For example, reflexivity, transitivity or symmetry can clearly be defined without using any of
$=$, $\lor$ and $\exists$ on the right-hand side. 
On the other hand, $=$ is required to define functionality, $\lor$ is needed for linearity, and $\exists$ for density.
Note that if $\Psi_{\e}$ is equivalent to a universal sentence, then every subframe of a frame in $\CA_{\{\e\}}$ is also in  $\CA_{\{\e\}}$.
We call an \sptheory{} $\SPL$ a \emph{subframe logic} if $\CA_{\SPL}$ is closed under taking subframes.

\begin{theorem}\label{t:subframedec}
Every subframe \sptheory{} $\SPL$ has the polynomial finite frame property, and
is decidable in {\sc coNP} if \eqcomp{} and finitely axiomatisable.  
\end{theorem}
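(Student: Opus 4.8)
The plan is to prove both parts by pushing a counterexample down to the image of a tree model, using Proposition~\ref{p:hom} and the fact that \spterms{} are preserved under homomorphisms in both relevant directions.

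For the \textbf{polynomial finite frame property}, suppose an \spequation{} $\e=(\sigma\imp\tau)$ fails in some frame $\F$ for $\SPL$; I would fix a Kripke model $\M=(\F,\V)$ and a point $w$ with $\M,w\kmodels\sigma$ and $\M,w\not\kmodels\tau$. First I would apply Proposition~\ref{p:hom} to get a homomorphism $h\colon\M_\sigma\to\M$ with $h(r_\sigma)=w$, and let $\F'$ be the subframe of $\F$ induced on the finite set $h(W_\sigma)$, with $\M'=(\F',\V')$ where $\V'(p)=\V(p)\cap h(W_\sigma)$. Since $\SPL$ is a subframe logic and $\F\in\CA_{\SPL}$, we get $\F'\in\CA_{\SPL}$, and $|\F'|\le|W_\sigma|$, which is linear in $|\e|$ because $\M_\sigma$ and $\sigma$ have the same size. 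It then remains to see that $\F'\not\models\e$. Here $h$ is still a homomorphism $\M_\sigma\to\M'$ (the valuation condition survives intersecting with $h(W_\sigma)$), so $\M',w\kmodels\sigma$ by Proposition~\ref{p:hom}; meanwhile the set inclusion $h(W_\sigma)\hookrightarrow W$ is a homomorphism $\M'\to\M$, so if we had $\M',w\kmodels\tau$ then $\M,w\kmodels\tau$ by preservation of \spterms, contradicting the choice of $w$. Hence $\M',w\not\models\e$, and so $\F'\not\models\e$.

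For the \textbf{{\sc coNP} bound}, write $\SPL=\SPi+\Ec$ with $\Ec$ finite and $\SPL$ \eqcomp\ by assumption. For an input \spequation{} $\e$, completeness together with \eqref{birkhoff} gives $\e\notin\SPL$ iff $\Ec\not\models_{\CA}\e$ iff $\e$ fails in some frame for $\SPL$, which by the first part is the same as $\e$ failing in some frame for $\SPL$ of size polynomial in $|\e|$. This yields an {\sc NP} procedure for the complement of the word problem of $\SPL$: guess a frame $\F'$ of polynomial size, a valuation $\V'$ of the variables occurring in $\e$ by subsets of the domain of $\F'$, and a point $w$; then verify in polynomial time that $\F'\models\Ec$ and that $\M',w\kmodels\sigma$ while $\M',w\not\kmodels\tau$, where $\M'=(\F',\V')$. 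The latter two checks are routine polynomial-time model checking of \spterms{} in a finite model, and the former is polynomial because $\Ec$ is a fixed finite set, so checking $\F'\models\e'$ for $\e'\in\Ec$ amounts (via Corollary~\ref{c:krchar}~$(i)$) to inspecting the constantly many maps from a fixed-size tree model into $\F'$, equivalently to evaluating a fixed first-order correspondent on $\F'$. Thus the complement of the word problem is in {\sc NP}, hence the word problem itself is in {\sc coNP}.

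The one point that needs care is the bookkeeping of homomorphism directions in the first part: we want $\M',w\kmodels\sigma$ via $h$ regarded as a map \emph{into} $\M'$, but $\M',w\not\kmodels\tau$ via the inclusion $\M'\hookrightarrow\M$ going the other way, and one must check that restricting $\V$ to $h(W_\sigma)$ keeps both maps homomorphisms. Beyond that, and the immediate observation that frame membership $\F'\in\CA_{\Ec}$ is polynomial-time decidable because $\Ec$ is fixed, no new ingredients are required; everything rests on Proposition~\ref{p:hom} and the preservation of \spterms{} under homomorphisms.
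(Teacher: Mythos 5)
Your proof is correct and follows essentially the same route as the paper: restrict $\F$ and $\M$ to the image $h(W_\sigma)$ of the tree model under the homomorphism supplied by Proposition~\ref{p:hom}, use the subframe property to stay in $\CA_{\SPL}$, and derive the {\sc coNP} bound by guess-and-check over polynomial-size frames using completeness, finite axiomatisability and the fixed first-order correspondents of $\Ec$. You merely spell out details (the two homomorphism directions and the model-checking steps) that the paper leaves implicit, and they check out.
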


\begin{proof}
Decidability in {\sc coNP} follows from completeness and finite axiomatisability,
using the polynomial finite frame property. To show it, suppose $\e \notin L$ and $\e= (\sigma \imp \tau)$. Then
there is a Kripke model $\M\not\models \e$ based on some $\mathfrak{F}\in \CA_{\SPL}$, that is,  
$\M,w\kmodels\sigma$ and $\M,w\not\kmodels\tau$, for some point $w$.
By Proposition~\ref{p:hom}, there is a homomorphism
$h \colon \mathfrak{M}_{\sigma} \rightarrow \mathfrak{M}$ with $h(r_{\sigma})=w$.
Take the restrictions $\mathfrak{F}'$ and $\mathfrak{M}'$ of, respectively, $\mathfrak{F}$ and $\mathfrak{M}$ to $\{ h(w) \mid w\in W_{\sigma}\}$. Then $\mathfrak{M}'\not\models \e$ and $\mathfrak{F}'\in \CA_{\SPL}$ is a subframe of $\T_\sigma$, and so it is of polynomial size in $\e$.
\end{proof}


\subsection{Equality-free universal Horn correspondents.}\label{tprof}
By a \emph{profile} we mean a quadruple $\pi= (\G,\newrel,u,v)$,  
where $\G=(\Delta,R^\G)_{R\in\R}$ is a finite rooted frame with  $u,v\in\Delta$, $\newrel \in \R$ and $(u,v)\notin \newrel^\G$. Let $\Delta=\{x_0,\dots,x_n\}$. The profile $\pi$ represents the universal Horn sentence 
$$
\Phi_\pi ~=~  \forall \varel{x}_0,\dots,\varel{x}_n \, \big( \hspace*{-1mm} \bigwedge_{\stackrel{i,j\leq n,\,R\in\R}{(x_i,x_j)\in R^\G}} \hspace*{-2mm} R(\varel{x}_i,\varel{x}_j) \to \newrel(\varel{u},\varel{v}) \big). 
$$
We call $\e$ a \emph{\hornequation} if 
its correspondent $\Psi_{\e}$ is equivalent to $\Phi_\pi$ for some profile $\pi$, in which case we 
%
%
say that $\pi$ \emph{is a profile of\/} $\e$ or $\e$ \emph{has profile\/} $\pi$. 
Since $(u,v)\notin \newrel^\G$, we have $\G\not\models\Phi_\pi$, and so
$\G\not\models\Psi_{\e}$. Thus,
\begin{equation}\label{pinote}
\mbox{if $\pi= (\G,\newrel,u,v)$ is a profile of $\e$, then $\G\not\models\e$.}
\end{equation}
Given a set $\Pi$ of profiles and a frame $\F=(W,R^\F)_{R\in \R}$, 
we denote by $\Pi(\F)$ the $\Pi$-\emph{closure of\/} $\F$, that is, the smallest frame $\Hh$
extending $\F$ such that $\Hh\models\Phi_\pi$, for $\pi \in \Pi$.
If $\Pi = \{\pi\}$, we write $\pi(\F)$ instead of $\Pi(\F)$. Thus, $\pi(\F)$ contains the same points as $\F$ but possibly more $\newrel$-arrows between them.
For a Kripke model $\M = (\F,\V)$, we set $\Pi(\M) = \bigl(\Pi(\F),\V\bigr)$. 
Clearly, if both $\Pi$ and $\F=(W,R^\F)_{R\in\R}$ are finite, we can construct $\Pi(\F)$ step-by-step by defining
a finite sequence
\begin{equation}\label{stepbystep}
\F=\F^0,\dots,\F^i=(W,R^{\F^i})_{R\in\R},\dots,\F^n=\Pi(\F)
\end{equation}
of frames such that $n \le |\R| \cdot |W|^2$ and, for every $i<n$, there exist a profile $\pi^i= (\G^i,\newrel^i, u^i,v^i)$ in $\Pi$
and a homomorphism $h^i\colon\G^i\to\F^i$ with
\begin{equation}\label{stepbystep2}
R^{\F^{i+1}}=\left\{
\begin{array}{ll}
R^{\F^i}\cup\bigl\{\bigl(h^i(u^i),h^i(v^i)\bigr)\bigr\}, & \mbox{ if $R= \newrel^i$},\\
R^{\F^i}, & \mbox{ otherwise.}
\end{array}
\right.
\end{equation}
To put it another way, $\Pi(\F)$ is the result of applying the datalog program  with rules $\{\Phi_\pi \mid \pi \in \Pi\}$ to the input database $\{R^\F \mid R\in \R \}$, 
which can be done in polynomial time in $\F$ for a fixed finite $\Pi$~\cite{DBLP:journals/csur/DantsinEGV01}. 
In general, using a similar step-by-step construction for successor ordinals and taking the union for limits, one can show that, for any frames $\F,\F'$ and set $\Pi$ of profiles, 
\begin{equation}\label{homoext}
\text{any homomorphism $f\colon\F\to\F'$ is  a $\Pi(\F)\to\Pi(\F')$ homomorphism.}
\end{equation}

We have the following generalisation of Corollary~\ref{c:krchar}~$(ii)$: 

\begin{proposition}\label{p:Hclosure} 
Let $\Ec$ be a set of \hornequations{} and $\Pi_{\Ec}=\{\pi_{\e}\mid\e\in\Ec\}$ their profiles. Then $\Ec\models_{\CA}\sigma\imp\tau$ iff 
$\Pi_{\Ec}(\M_\sigma),r_\sigma\kmodels\tau$, 
for any \spterms{} $\sigma$ and $\tau$.
\end{proposition}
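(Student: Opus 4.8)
The plan is to prove both directions, reducing to the already-established characterisation of $\models_{\CA}$ on $\SPi+\emptyset$ (Corollary~\ref{c:krchar}~$(ii)$) via the $\Pi_{\Ec}$-closure construction. Throughout write $\M = \Pi_{\Ec}(\M_\sigma)$ and recall from \eqref{globalcorr} and the identification of $\Phi_{\pi_{\e}}$ with $\Psi_{\e}$ that a frame validates $\Ec$ iff it is closed under each $\Phi_{\pi_{\e}}$, i.e.\ iff it is a fixpoint of the $\Pi_{\Ec}$-closure operator.

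First I would do the ($\Leftarrow$) direction. Suppose $\M_\sigma' := \Pi_{\Ec}(\M_\sigma)$ satisfies $\M_\sigma',r_\sigma\kmodels\tau$; we must show $\Ec\models_{\CA}\sigma\imp\tau$. Take any frame $\F\in\CA_{\Ec}$, any Kripke model $\N$ on $\F$, and any point $w$ with $\N,w\kmodels\sigma$. By Proposition~\ref{p:hom} there is a homomorphism $h\colon\M_\sigma\to\N$ with $h(r_\sigma)=w$. Now apply \eqref{homoext}: $h$ lifts to a homomorphism $\Pi_{\Ec}(\M_\sigma)\to\Pi_{\Ec}(\N)$. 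But $\F\models\Ec$ means $\F$ (hence $\N$'s frame) is already $\Pi_{\Ec}$-closed, so $\Pi_{\Ec}(\N)=\N$. Thus $h$ is a homomorphism $\M_\sigma'\to\N$ with $h(r_\sigma)=w$, and since $\M_\sigma',r_\sigma\kmodels\tau$ and \spterms{} are preserved under homomorphisms, $\N,w\kmodels\tau$. As $\F,\N,w$ were arbitrary, $\Ec\models_{\CA}\sigma\imp\tau$.

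For the ($\Rightarrow$) direction, assume $\Ec\models_{\CA}\sigma\imp\tau$. The key observation is that the frame underlying $\M_\sigma' = \Pi_{\Ec}(\M_\sigma)$ is, by the very definition of $\Pi_{\Ec}$-closure and the fixpoint characterisation above, a frame for $\Ec$; moreover the identity map $\M_\sigma\to\M_\sigma'$ is a homomorphism, so by Proposition~\ref{p:hom} we get $\M_\sigma',r_\sigma\kmodels\sigma$. Since $\M_\sigma'$ validates $\Ec$ and $\Ec\models_{\CA}\sigma\imp\tau$, we conclude $\M_\sigma',r_\sigma\models\sigma\imp\tau$, whence $\M_\sigma',r_\sigma\kmodels\tau$, which is exactly the right-hand side of the proposition.

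The main obstacle — and the only point that needs genuine care rather than bookkeeping — is the ($\Leftarrow$) step's reliance on \eqref{homoext}, i.e.\ that homomorphisms extend to closures, combined with the assertion that a frame validates $\Ec$ precisely when it is $\Pi_{\Ec}$-closed. The latter is a direct unwinding of the identification $\Psi_{\e}\equiv\Phi_{\pi_{\e}}$ together with \eqref{globalcorr}, so it is safe; the former is \eqref{homoext}, already granted. One should also double-check the degenerate cases (e.g.\ $\Ec=\emptyset$, recovering Corollary~\ref{c:krchar}~$(ii)$ verbatim, and infinite $\Ec$, where the step-by-step construction of \eqref{stepbystep} is replaced by the transfinite one mentioned just before \eqref{homoext}), but no new ideas are needed there.
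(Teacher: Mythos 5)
Your proof is correct and follows essentially the same route as the paper: the ($\Rightarrow$) direction via the identity homomorphism $\M_\sigma\to\Pi_{\Ec}(\M_\sigma)$ and the fact that the closure validates every $\Phi_{\pi_{\e}}$ (hence $\Ec$), and the ($\Leftarrow$) direction via Proposition~\ref{p:hom}, \eqref{homoext}, and the observation that a frame validating $\Ec$ is a fixpoint of the $\Pi_{\Ec}$-closure. The only differences are cosmetic (order of the two directions and the explicit remark about infinite $\Ec$, which \eqref{homoext} already covers).
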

\begin{proof}
$(\Rightarrow)$ 
As $\Pi_{\Ec}(\M_\sigma)$ extends the $\sigma$-tree model $\M_\sigma$,
the identity map is an $\M_\sigma\to \Pi_{\Ec}(\M_\sigma)$ homomorphism, and so 
$\Pi_{\Ec}(\M_\sigma),r_\sigma\kmodels\sigma$ 
by Proposition~\ref{p:hom}.
As $\Pi_{\Ec}(\T_\sigma)\models\Phi_{\pi_e}$ for every $\pi_{\e}\in\Pi_{\Ec}$, we have
$\Pi_{\Ec}(\T_\sigma)\models\Psi_{\e}$ for every $\e\in \Ec$, and so $\Pi_{\Ec}(\T_\sigma)\models\Ec$.
Therefore, $\Pi_{\Ec}(\T_\sigma)\models\sigma\imp\tau$, and so 
$\Pi_{\Ec}(\M_\sigma),r_\sigma\kmodels\tau$. 

$(\Leftarrow)$ 
Suppose 
$\M,w\kmodels\sigma$
for some Kripke model $\M$ based on a frame $\F \in\CA_\Ec$.
By Proposition~\ref{p:hom}, 
there is a homomorphism $h \colon \M_\sigma \to \M$ with $h(r_\sigma) = w$. 
By \eqref{homoext},
$h$ is a homomorphism from $\Pi(\M_\sigma)$ to $\Pi(\M)$. As $\F \models \Ec$, we have 
$\F\models\Phi_{\pi_{\e}}$ for any $\pi_{\e}\in\Pi_{\Ec}$. Thus, $\Pi_{\Ec}(\M)=\M$, and so
$\M,w\kmodels\tau$ follows from $\Pi_{\Ec}(\M_\sigma),r_\sigma \kmodels\tau$,
as required.  
\end{proof}

As the Kripke model $\Pi_{\Ec}(\M_\sigma)$ has $|W_\sigma|$-many points and can be constructed in 
polynomial time in $\sigma$, we obtain the following consequence of Proposition~\ref{p:Hclosure}:

\begin{theorem}\label{thm:horncomplete}
For any finite set $\Ec$ of \hornequations, $\SPi + \Ec$ has the polynomial finite frame property,
and is decidable in {\sc PTime} if \eqcomp. 
\end{theorem}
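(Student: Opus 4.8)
The plan is to derive the statement as a direct corollary of Proposition~\ref{p:Hclosure}, which already reduces $\Ec\models_{\CA}\sigma\imp\tau$ to a single model-checking task $\Pi_{\Ec}(\M_\sigma),r_\sigma\kmodels\tau$. First I would establish the polynomial finite frame property: suppose an \spequation{} $\e=(\sigma\imp\tau)$ fails in some frame $\F$ for $\SPi+\Ec$. Then $\Ec\not\models_{\CA}\sigma\imp\tau$, so by Proposition~\ref{p:Hclosure} we have $\Pi_{\Ec}(\M_\sigma),r_\sigma\not\kmodels\tau$. The frame $\Pi_{\Ec}(\T_\sigma)$ underlying this model validates $\Ec$ (as noted in the proof of Proposition~\ref{p:Hclosure}), it refutes $\e$ at $r_\sigma$, and it has exactly $|W_\sigma|$ points, which is linear in the size of $\sigma$ and hence polynomial in $\e$. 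This gives a frame for $\SPi+\Ec$ of polynomial size refuting $\e$, which is precisely the polynomial finite frame property.

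Next I would address decidability in {\sc PTime} under the assumption that $\SPi+\Ec$ is \eqcomp. Given an input \spequation{} $\e=(\sigma\imp\tau)$, by completeness we have $\e\in\SPi+\Ec$ iff $\Ec\models_{\SLO}\e$ iff $\Ec\models_{\CA}\e$, and by Proposition~\ref{p:Hclosure} the latter holds iff $\Pi_{\Ec}(\M_\sigma),r_\sigma\kmodels\tau$. The algorithm is therefore: build the $\sigma$-tree model $\M_\sigma$ (linear in $\sigma$ by the remark in \S\ref{sec:treemodel}), compute its $\Pi_{\Ec}$-closure, and check whether $\tau$ holds at the root. Each of these three stages runs in polynomial time: $\M_\sigma$ is constructed by straightforward recursion on $\sigma$; $\Pi_{\Ec}(\M_\sigma)$ is obtained by running the datalog program with rules $\{\Phi_\pi\mid\pi\in\Pi_{\Ec}\}$ on the database $\{R^{\F}\mid R\in\R\}$ derived from $\M_\sigma$, which is polynomial for the fixed finite $\Pi_{\Ec}$ by \cite{DBLP:journals/csur/DantsinEGV01} (this uses finiteness of $\Ec$); and model-checking the \spterm{} $\tau$ at a point of a finite Kripke model is polynomial by the usual bottom-up labelling over the subformulas of $\tau$. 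Composing these yields a {\sc PTime} decision procedure.

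The bookkeeping I expect to need some care with is making the reduction to the fixed datalog program precise — namely, that $\Pi_{\Ec}$ depends only on $\Ec$ and not on the input $\e$, so that the polynomial-time bound of \cite{DBLP:journals/csur/DantsinEGV01} for data complexity genuinely applies; and confirming that model-checking for \spterms{} (conjunctions and diamonds only, no negation) is indeed polynomial, which is immediate since each connective is positive and monotone. Neither point is a real obstacle: the entire argument is essentially an unpacking of Proposition~\ref{p:Hclosure} together with standard complexity facts about datalog and about model-checking positive modal formulas. The only genuine use of the new hypotheses is that $\Ec$ is finite (bounding $\Pi_{\Ec}$) and, for the {\sc PTime} claim, that $\SPi+\Ec$ is \eqcomp{} (licensing the passage between $\models_{\SLO}$ and $\models_{\CA}$).
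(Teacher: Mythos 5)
Your proposal is correct and follows essentially the same route as the paper, which derives the theorem as an immediate consequence of Proposition~\ref{p:Hclosure} together with the observations that $\Pi_{\Ec}(\M_\sigma)$ has $|W_\sigma|$ points and is computable in polynomial time (via the datalog remark after \eqref{stepbystep2}). Your unpacking of the details — that $\Pi_{\Ec}(\T_\sigma)$ validates $\Ec$ and refutes $\sigma\imp\tau$ at $r_\sigma$, and that completeness licenses the passage from $\Ec\models_{\SLO}$ to $\Ec\models_{\CA}$ — is exactly what the paper leaves implicit.
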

%


Note that full Boolean normal multi-modal logics axiomatisable by  \hornequations{}
can be very complex. For example, it is shown in~\cite{KikotSZ14} that 
$\mf{K}\oplus\Sigma$ is undecidable for
\[
\Sigma= \{\dR\d_{\!P\,}\dR p \imp \d_{\!P\,} p , \  
\d_{\!Q\,}\dR p \imp \d_{\!Q\,} p, \ 
\d_{\!Q\,}\d_{\!P\,} p \imp \d_{\!P\,} p\},
\]
On the other hand, by Corollary~\ref{t:ss}  below, the \sptheory{} $\SPi+\Sigma$
is \eqcomp, and so decidable in {\sc PTime} by Theorem~\ref{thm:horncomplete}.
For more decidability and complexity results for  modal logics 
of Horn definable classes of frames, 
the reader is referred to~\cite{DBLP:conf/stacs/HemaspaandraS08,DBLP:conf/lics/MichaliszynO12}.

In the remainder of this section, we provide a few general  sufficient conditions for  
completeness of \sptheories{} axiomatisable by \hornequations, and also give a number of counterexamples illustrating their boundaries. 

We say that  $\pi= (\G,\newrel,u,v)$ is a \emph{\tprof} if $\G$ is a tree with root $r_\G$. 

\begin{proposition}\label{p:eandpi}
Suppose that a \hornequation{} $\e = (\sigma \imp \tau)$ has a tree-profile $(\G,\newrel,u,v)$.
Then the following hold:

$(i)$
there exist a homomorphism $f\colon \T_\sigma\to\G$ and a homomorphism $g\colon\G\to\T_\sigma$; 

$(ii)$
for any homomorphism $h \colon \T_\sigma \to \G$, we have $h(r_\sigma) = r_\G$.
\end{proposition}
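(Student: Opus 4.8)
The plan is to unpack the hypothesis that $\pi = (\G, \newrel, u, v)$ is a tree-profile of $\e = (\sigma \imp \tau)$, which by definition means $\Psi_{\e}$ is equivalent to $\Phi_\pi$ and $\G$ is a tree. Since $\G \not\models \Phi_\pi$ (because $(u,v) \notin \newrel^\G$), we have $\G \not\models \Psi_{\e}$, hence $\G \not\models \e$ by \eqref{globalcorr}; moreover, tracing through the correspondent construction, the failure of $\e$ in $\G$ must be witnessed \emph{at the root} $r_\G$. More precisely, the universal Horn sentence $\Phi_\pi$ has as its antecedent exactly the diagram of $\G$ (over variables $\varel{x}_0, \dots, \varel{x}_n$), and in the Sahlqvist correspondent $\Psi_{\e}$ the antecedent is the diagram of $\T_\sigma$. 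For $\Phi_\pi$ and $\Psi_{\e}$ to be logically equivalent, these diagrams must be ``the same'' up to the obvious renaming, which forces a tight relationship between $\T_\sigma$ and $\G$.

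For part $(i)$, the strategy is to use the equivalence $\Phi_\pi \equiv \Psi_{\e}$ together with Corollary~\ref{c:krchar}~$(i)$ (reformulated for frames via \eqref{eq:corr1}). First I would show $g \colon \G \to \T_\sigma$ exists: since $\G \not\models \e$, there is (taking the canonical valuation making $\T_\sigma$ a model of $\sigma$ at its root and pushing it to $\G$ via the identity-style argument) a point in $\G$ where $\sigma$ holds but $\tau$ fails; unravelling, $\trm$ of the relevant submodel of $\G$ is $\Ec\vdash_{\SLO}$-below $\sigma$, yielding a homomorphism $\M_\sigma \to \G$ — but I would run this more directly: because $\G \not\models \Psi_{\e}$, the antecedent of the local correspondent is satisfiable in $\G$, i.e.\ there is a homomorphism $f \colon \T_\sigma \to \G$ such that no matching $\T_\tau$-homomorphism exists. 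That gives $f$. For $g \colon \G \to \T_\sigma$, I would use that $\Phi_\pi$ fails in $\G$ but $\Phi_\pi$ is a \emph{correspondent} of $\e$, so $\T_\sigma \not\models \e$ would have to be mirrored; concretely, since $\T_\sigma \models \Psi_{\e}$ is false (the identity homomorphism $\T_\sigma \to \T_\sigma$ witnesses $\sigma$ at the root, and if $\tau$ held too then $\e$ would be valid — not necessarily the case), whereas $\T_\sigma$ trivially satisfies any $\newrel$-closure demand only if forced; the cleanest route is: $\Phi_\pi$ fails in $\G$ \emph{witnessed by the identity assignment} $\varel{x}_i \mapsto x_i$, and since $\Phi_\pi \equiv \Psi_{\e}$, the antecedent-diagram of $\Psi_{\e}$ (namely $\T_\sigma$) must map homomorphically into the structure realizing this failure, and conversely the diagram of $\G$ must map into $\T_\sigma$ (as that is the $\Psi_{\e}$-antecedent). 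I would make this precise by observing that two universal Horn sentences whose antecedents are diagrams of finite structures $\mathcal A$, $\mathcal B$ and whose (atomic) conclusions correspond are equivalent iff $\mathcal A$ and $\mathcal B$ are homomorphically equivalent with the distinguished tuples aligned — a standard chase/homomorphism argument.

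For part $(ii)$, I would argue that any homomorphism $h \colon \T_\sigma \to \G$ sends $r_\sigma$ to $r_\G$. The point is that $\G$ is a tree with root $r_\G$, and the conclusion atom $\newrel(\varel u, \varel v)$ of $\Phi_\pi$ must correspond, under the equivalence with $\Psi_{\e}$, to a demand anchored at $r_\sigma = v_0$ (the free variable $\varel v_0$ of the local correspondent, which is identified with $\varel u_0 = r_\tau$). Combined with $(i)$: from $f \colon \T_\sigma \to \G$ and $g \colon \G \to \T_\sigma$ we get an endomorphism $g \circ f$ of the finite tree $\T_\sigma$; since $\T_\sigma$ is a finite tree rooted at $r_\sigma$ and homomorphisms between trees can only decrease distance from the root, $g\circ f$ and $f \circ g$ must fix the respective roots, forcing $f(r_\sigma) = r_\G$ and $g(r_\G) = r_\sigma$; then for an \emph{arbitrary} $h$, composing with $g$ gives an endomorphism of $\T_\sigma$ which again must fix the root, and since $g^{-1}(r_\sigma)$ in the tree $\G$ is exactly $\{r_\G\}$ (depth considerations, as $g$ maps root to root and is depth-nonincreasing on a tree), we conclude $h(r_\sigma) = r_\G$.

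The main obstacle I anticipate is making rigorous the passage from ``$\Phi_\pi \equiv \Psi_{\e}$ as first-order sentences'' to ``$\T_\sigma$ and $\G$ are homomorphically equivalent with roots aligned''. This requires either invoking the chase / the characterization of logical equivalence of universal Horn sentences (``$\mathcal A \models \Phi_{\mathcal B}$ iff there is a homomorphism $\mathcal B \to \mathcal A$ respecting distinguished elements''), applied in both directions, or redoing it by hand via Corollary~\ref{c:krchar}~$(i)$: plug the canonical model over $\T_\sigma$ into $\F := \G$ and vice versa. The tree-shape of both $\T_\sigma$ (trees by construction, \S\ref{sec:treemodel}) and $\G$ (by the tree-profile hypothesis) is what rescues part $(ii)$, since on trees a homomorphism fixing nothing a priori is nonetheless forced to respect the root once we know a homomorphism exists in the other direction.
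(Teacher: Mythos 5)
Your proposal is, in substance, the paper's own argument: for $(i)$ you obtain $f\colon\T_\sigma\to\G$ from the failure of $\e$ (equivalently of $\Phi_\pi$, by \eqref{pinote}) in $\G$, and $g\colon\G\to\T_\sigma$ from the failure of $\Phi_\pi$ in $\T_\sigma$, which comes from $\M_\sigma,r_\sigma\not\kmodels\tau$ via Corollary~\ref{c:krchar}~$(ii)$; for $(ii)$ you compose the maps to get an endomorphism of a finite rooted tree and use that such an endomorphism fixes the root (the paper composes $h\circ g\colon\G\to\G$ and reads off $g(r_\G)=r_\sigma$ and $h(r_\sigma)=r_\G$; you use $g\circ h$ and $g^{-1}(r_\sigma)$ -- a harmless variant). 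Two repairs are needed. First, drop the detour through the general claim that logically equivalent universal Horn sentences must have homomorphically equivalent (root-aligned) antecedent structures: it is both delicate and unnecessary, since the direct route you sketch in parentheses is exactly right -- the identity homomorphism witnesses $\M_\sigma,r_\sigma\kmodels\sigma$, and if $\tau$ also held at $r_\sigma$ then $\e$ would be valid by Corollary~\ref{c:krchar}~$(ii)$, \emph{which is not the case} because $\G\not\models\e$; your hedge ``not necessarily the case'' should be replaced by this definite contradiction, as that is the load-bearing step. Second, your tree lemma is stated with the monotonicity reversed: for a homomorphism $h$ between finite rooted trees one has $\mathrm{depth}(h(x))=\mathrm{depth}(h(\mathrm{root}))+\mathrm{depth}(x)$, so homomorphisms can only \emph{increase} distance from the root, and it is finiteness that then forces an endomorphism to fix the root. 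As written (``depth-nonincreasing''), your step $g^{-1}(r_\sigma)=\{r_\G\}$ does not follow, since a depth-nonincreasing map could send deep points of $\G$ to $r_\sigma$; with the correct identity it does follow, because $g(r_\G)=r_\sigma$ makes $g$ depth-preserving. With these corrections the proof goes through and coincides with the paper's.
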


\begin{proof}
$(i)$ 
By \eqref{pinote} $\G \not \models \e$, 
 and so
there is a homomorphism $f\colon \T_\sigma\to\G$. Since $\not\models_{\CA}\e$,  
by Corollary~\ref{c:krchar}~$(ii)$ we obtain 
$\M_\sigma,r_\sigma\not\kmodels\tau$,
 from which $\T_\sigma \not\models \e$. Therefore, $\T_\sigma \not\models \Phi_\pi$, and so
there is a homomorphism $g\colon\G\to\T_\sigma$. 

$(ii)$ 
Suppose $h\colon \T_\sigma\to\G$ is a homomorphism. Then
the composition of $g$ and $h$ is a homomorphism from the finite tree $\G$ to itself, which gives $h\bigl(g(r_\G)\bigr)=r_\G$, and so
$g(r_\G)=r_\sigma$ must hold as well.
\end{proof}

A profile $\pi = (\G,\newrel,u,v)$ is \emph{minimal} if there is no profile $\pi'= (\G',\newrel',u',v')$ such that $|\G'| < |\G|$ and $\Phi_\pi$ is  equivalent to $\Phi_{\pi'}$.
As shown in~\cite{Kikot11}, for any minimal profile $\pi$, the class of frames validating $\Phi_\pi$ is modally definable iff $\pi$ is a \tprof. (Thus, every \hornequation{} has a correspondent $\Phi_\pi$ given by a minimal tree-profile $\pi$.) Moreover, any such modally definable class is 
in fact definable by a single \spequation{} $\e_\pi$ constructed in the following way.   

Suppose 
$y_0 R_1^\G y_1\dots  y_{\ell-1} R_\ell^\G y_{\ell}$
 is the unique path in the tree-shaped frame $\G=(\Delta,R^\G)_{R\in\R}$ from the root $y_0$
to $y_\ell=u$, 
for some $\ell<\omega$.
We introduce a propositional variable $p_x$ for each $x\in\{y_1,\dots,y_\ell,v\}$.
Let $\Delta=\{x_0, \dots, x_n\}$ be 
such that $x_0=y_0$ is the root of $\G$, and $(x_i, x_j)\in R^\G$ implies $i < j$, for all $i,j\le n$ and $R\in\R$. 
By induction on $i$ from $n$ to $0$, we set
\begin{equation}\label{sigmai}
\sigma_i =
\left\{
\begin{array}{ll}
p_{x} \land\hspace*{-.2cm}\displaystyle\bigwedge_{(x_i,x_j)\in R^\G}\hspace*{-.2cm} \dR \sigma_j, & \mbox{if $x_i=x$ for some $x\in\{y_1,\dots,y_\ell,v\}$},\\[5pt]
\top \land \hspace*{-.2cm}\displaystyle\bigwedge_{(x_i,x_j)\in R^\G}\hspace*{-.2cm} \dR \sigma_j, & \mbox{otherwise}
\end{array}
\right.
\end{equation}
and 
%
\begin{equation}\label{epi}
\e_\pi=\Bigl(
\sigma_0 \imp \d_{\!R_1\,} \bigl(p_{y_1} \land \d_{\!R_2\,} (p_{y_2} \land \dots \land \d_{\!R_\ell\,} (p_{u} \land 
\dRo p_{v}) \dots )\bigr)\Bigr).
\end{equation}
It is readily checked that $\e_\pi$ is a \hornequation{} and $\pi$ is a profile of $\e_\pi$. 
%


\subsubsection{Horn-implications with rooted \tprof s.}\label{rootedp}
We say that a tree-profile $\pi=(\G,\newrel,u,v)$ is \emph{rooted} if $u$ is the root of $\G$, in which case 
\[
\e_\pi ~=~ (\sigma_0 \imp \dRo p_v),
\]
and the only variable that occurs in $\sigma_0$ is $p_v$.
A few examples of \tprof s $\pi$ with their $\Phi_\pi$ and $\e_\pi$ 
are given in 
Table~\ref{t:default}, where the first two \tprof s (for reflexivity and transitivity) are rooted, and the last two (for symmetry and Euclideanness) are non-rooted.

\begin{table}[htp]
\begin{center}
\begin{tabular}{|c|c|l|}\hline
profile $\pi$ & $\Phi_\pi$ & $\qquad\e_\pi$\\\hline\hline
\begin{tikzpicture}[>=latex, point/.style={circle,draw=black,minimum size=1mm,inner sep=0pt},hm/.style={dashed}, xscale=1.0, yscale=1.0]\scriptsize
\node (r) at (0,0) [point,fill=black] {};
\draw[->,loop,dashed,looseness=20] (r) to node {} (r);
\end{tikzpicture} & 
{\small $\forall x \, R(x,x)$} & {\small $\erefl:\  p \imp \d p$}\\[3pt]\hline
%
\begin{tikzpicture}[>=latex, point/.style={circle,draw=black,minimum size=1mm,inner sep=0pt},hm/.style={dashed}, xscale=1.0, yscale=1.0]\scriptsize
\node (u) at (0,0) [point,fill=black] {};
\node (w) at (1,0) [point,fill=black] {};
\node (v) at (2,0) [point,fill=black] {};
\draw[->,] (u) to node {} (w);
\draw[->,] (w) to node {} (v);
\draw[->,dashed,bend left,looseness=0.9] (u) to node {} (v);
\end{tikzpicture}
& {\small $\forall x,y,z \, \bigl(R(x,y) \land R(y,z) \to R(x,z)\bigr)$} & {\small $\etrans:\ \d\d p \imp \d p$}\\[3pt]\hline
&& {\small $\esym:$}\\
%
\begin{tikzpicture}[>=latex, point/.style={circle,draw=black,minimum size=1mm,inner sep=0pt},hm/.style={dashed}, xscale=1.0, yscale=1.0]\scriptsize
\node (u) at (0,0) [point,fill=black] {};
\node (v) at (1,0) [point,fill=black] {};
\draw[->,] (u) to node {} (v);
\draw[->,dashed,bend right,looseness=0.9] (v) to node {} (u);
\end{tikzpicture}
&
{\small $\forall x,y \, \bigl(R(x,y) \to R(y,x)\bigr)$} & {\small $\ \ q \land \d p \imp \d (p \land \d q)$}\\[3pt]\hline
&& {\small $\eeuc:$}\\[-8pt]
\begin{tikzpicture}[>=latex, point/.style={circle,draw=black,minimum size=1mm,inner sep=0pt},hm/.style={dashed}, xscale=1.0, yscale=1.0]\scriptsize
\node (r) at (0,0) [point,fill=black] {};
\node (u) at (-0.5,0.5) [point,fill=black] {};
\node (v) at (0.5,0.5) [point,fill=black] {};
\draw[->,] (r) to node {} (u);
\draw[->,] (r) to node {} (v);
\draw[->,dashed] (u) to node {} (v);
\end{tikzpicture}%
& {\small $\forall x,y,z \, \bigl(R(x,y) \land R(x,z) \to R(y,z)\bigr)$} & {\small $\ \d p \land \d q \imp \d (p \land \d q)$}\\[3pt]\hline
\end{tabular}
\end{center}
\caption{Examples of rooted and non-rooted tree-profiles.}\label{t:default}
\end{table}%

\begin{theorem}\label{Th:2}
Any \sptheory{} axiomatised by \spequations{} $\e_\pi$, for some rooted \tprof{}s  $\pi$, is \complex, and so \eqcomp.
\end{theorem}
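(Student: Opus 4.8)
The plan is to prove the stronger statement that $\SPL=\SPi+\Ec$ is \complex{} whenever $\Ec=\{\e_\pi\mid\pi\in\Pi\}$ for a set $\Pi$ of rooted tree-profiles; since every \complex{} \sptheory{} is \eqcomp{} (\S\ref{embed}), this gives the theorem. So fix an arbitrary $\Aa=(A,\land,\top,\dR)_{R\in\R}$ in $\SLO_{\SPL}=\SLO_{\Ec}$ and form the element-frame $\F=(A,R^\F)_{R\in\R}$ with $R^\F$ defined by \eqref{JRclassic}, i.e.\ $(a,b)\in R^\F$ iff $a\le\dR b$. This $R^\F$ satisfies \eqref{Jemb1} and \eqref{Jemb2} (for \eqref{Jemb2}, take $c=b$), so by the computation in \S\ref{Jembed} the map $\eta\colon a\mapsto\{b\in A\mid b\le a\}$ is an \SPa-embedding of $\Aa$ into $\Fslo$, irrespective of $\Ec$. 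Hence the whole argument reduces to verifying that this particular $\F$ is a frame for $\SPL$, i.e.\ $\F\models\e_\pi$ for every $\pi\in\Pi$.

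For a fixed rooted tree-profile $\pi=(\G,\newrel,u,v)$, with $\G=(\Delta,R^\G)_{R\in\R}$ a tree rooted at $u$, I would use \eqref{globalcorr} and the fact that $\Phi_\pi$ is a correspondent of $\e_\pi$ to reduce ``$\F\models\e_\pi$'' to ``$\F\models\Phi_\pi$''. Enumerate $\Delta=\{x_0,\dots,x_n\}$ as in \eqref{sigmai}, so that $x_0=u$ and $(x_i,x_j)\in R^\G$ implies $i<j$, and let $x_k=v$. Suppose $a_0,\dots,a_n\in A$ satisfy the antecedent of $\Phi_\pi$, i.e.\ $(a_i,a_j)\in R^\F$ for all $(x_i,x_j)\in R^\G$; I must derive $(a_0,a_k)\in\newrel^\F$, that is, $a_0\le\dS a_k$. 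The key move is to evaluate $\sigma_0$ under the valuation $\vala$ with $\vala(p_v)=a_k$ --- this is well defined because, by \S\ref{rootedp}, $p_v$ is the only variable occurring in $\sigma_0$ --- and to show, by induction on $i$ running from $n$ down to $0$, that $a_i\le\sigma_i[\vala]$. In the inductive step, every child $x_j$ of $x_i$ has $j>i$, so $a_j\le\sigma_j[\vala]$ by the induction hypothesis and therefore $a_i\le\dR a_j\le\dR\sigma_j[\vala]$ by $(a_i,a_j)\in R^\F$ and monotonicity of $\dR$; combining this with $a_i\le\top$ and, in the case $x_i=v$ (so $i=k$), with $a_i=a_k=\vala(p_v)$, the clauses of \eqref{sigmai} give $a_i\le\sigma_i[\vala]$. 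For $i=0$ this yields $a_0\le\sigma_0[\vala]$; since $\Aa\models\e_\pi=(\sigma_0\imp\dS p_v)$ we obtain $\sigma_0[\vala]\le\dS\vala(p_v)=\dS a_k$, hence $a_0\le\dS a_k$, as required. Thus $\F\models\Phi_\pi$, so $\F\models\e_\pi$; as $\pi\in\Pi$ was arbitrary, $\F\models\Ec$, i.e.\ $\F$ is a frame for $\SPL$.

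I do not expect a serious obstacle: the embedding is exactly the one in \S\ref{Jembed}, and the step from $\F\models\e_\pi$ to $\F\models\Phi_\pi$ is just Sahlqvist correspondence. The only part needing genuine care is the descending induction $a_i\le\sigma_i[\vala]$ together with the choice $\vala(p_v)=a_k$, and this is precisely where \emph{rootedness} of $\pi$ enters: it guarantees simultaneously that $\sigma_0$ contains no variable besides $p_v$ (so the valuation is pinned down only at $p_v$) and that $u$ coincides with the root $x_0$, which is what lets the inequality $a_0\le\sigma_0[\vala]$ pass through $\e_\pi$ straight to the target $a_0\le\dS a_k$. For non-rooted tree-profiles this last transfer fails, so the hypothesis of the theorem cannot simply be dropped.
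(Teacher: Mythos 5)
Your argument is correct. Note, though, that the paper does not prove Theorem~\ref{Th:2} directly: it derives it as a special case of the more general Theorem~\ref{t:exist} on existential correspondents, whose proof uses exactly the ingredients you use --- the embedding of \S\ref{Jembed} with $R^\F$ given by \eqref{JRclassic}, and the downward induction $a_i\le\sigma_i[\vala]$, which is literally \eqref{lefteq} there (and \eqref{treeok} in the proof of Theorem~\ref{Th:6}). What your specialisation buys is brevity and self-containedness: because the consequent of $\e_\pi$ is just $\dS p_v$, the construction of the witnessing elements $b_0,\dots,b_m$ in the proof of Theorem~\ref{t:exist} collapses to the single choice $\vala(p_v)=a_k$, and because the antecedent is exactly $\sigma_0$ (rather than an arbitrary term with the given profile, as in Theorem~\ref{Th:6}), you can apply $\Aa\models\e_\pi$ directly and dispense with the homomorphism machinery of Claim~\ref{c:homo}. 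What the paper's route buys instead is the stronger Theorem~\ref{t:exist} itself, which covers, e.g., the grammar-logic axioms of Corollary~\ref{co:grammar}, so the extra generality is not wasted. Your closing observation about where rootedness enters is also consonant with the paper's Examples~\ref{Ex:4} and \ref{Ex:4b} and Theorem~\ref{thm:genericinc}, which show the hypothesis cannot be weakened to arbitrary \hornequations{} sharing the profile or to non-rooted profiles.
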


A generalisation of this theorem (Theorem~\ref{t:exist}) will be proved in \S\ref{sec:exist}.
%
%
%
%
%
%
%
%
%
Note that as a consequence we obtain the following:

\begin{corollary}[\cite{stokkermans2008}]\label{t:ss}
Any \sptheory{} axiomatised by \spequations{} of the form $\d_{\!R_1\,} \dots \d_{\!R_n\,} p \imp \d_{\!R_0\,} p$, for $n\ge 0$, is \complex, and so \eqcomp.
In particular, $\SPi + \{\erefl\}$, $\SPi + \{\etrans\}$, and $\Esfour$ are all \complex{} and \eqcomp.
\end{corollary}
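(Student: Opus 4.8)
The plan is to read the corollary off Theorem~\ref{Th:2}: I will show that, for each $n\ge 0$, the \spequation{} $\d_{\!R_1\,}\dots\d_{\!R_n\,}p \imp \d_{\!R_0\,}p$ coincides, up to the \SLOa-laws identifying $\sigma\land\top$ with $\sigma$, with the \spequation{} $\e_{\pi_n}$ built in \S\ref{rootedp} from an explicit rooted \tprof{} $\pi_n$. Since \complex ity (and hence \eqcomp ness) of an \sptheory{} does not depend on the choice of its axioms, this will finish the proof.

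First I would fix the profiles. For $n\ge 1$, let $\G_n$ be the frame on $n+1$ fresh points $x_0,\dots,x_n$ in which, for each $i$, the pair $(x_{i-1},x_i)$ is put into the accessibility relation of $R_i$ and nothing else is related; for $n=0$, let $\G_0$ be a single point $x_0$ with all relations empty. Set $\pi_n = (\G_n, R_0, x_0, x_n)$, where $x_0 = x_n$ if $n=0$. The underlying directed graph of $\G_n$ is a path, hence a tree with root $x_0$; since the edges of a path are pairwise distinct, the requirement that the relations of a tree-shaped frame for distinct symbols be disjoint holds trivially; and since tree-shaped frames are irreflexive, $(x_0,x_n)\notin R_0^{\G_n}$ (for $n=0$ this reads $(x_0,x_0)\notin R_0^{\G_0}=\emptyset$). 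Hence each $\pi_n$ is a genuine profile, and it is a \emph{rooted} \tprof{} because $x_0$ is at once the point $u$ and the root of $\G_n$.

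Next I would unwind $\e_{\pi_n}$. As $u=x_0$ is the root, the path from the root to $u$ in \S\ref{rootedp} is empty ($\ell=0$), so the construction \eqref{sigmai}--\eqref{epi} uses the single variable $p_v$ attached to $v=x_n$. Evaluating \eqref{sigmai} along the path from $i=n$ down to $i=0$ yields $\sigma_n=p_v$ and $\sigma_i=\top\land\d_{\!R_{i+1}\,}\sigma_{i+1}$ for $i<n$, so $\sigma_0$ equals $\d_{\!R_1\,}\dots\d_{\!R_n\,}p_v$ modulo the removable $\top$-conjuncts, and $\e_{\pi_n}=(\sigma_0\imp\d_{\!R_0\,}p_v)$. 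Thus every axiom of the form $\d_{\!R_1\,}\dots\d_{\!R_n\,}p\imp\d_{\!R_0\,}p$ is $\vdash_{\SLO}$-interderivable with $\e_{\pi_n}$, so the \sptheory{} in question is (re)axiomatised by \spequations{} $\e_\pi$ for rooted \tprof s $\pi$; Theorem~\ref{Th:2} then gives that it is \complex, hence \eqcomp{} by the implication ``\complex{} $\Rightarrow$ \eqcomp'' noted in \S\ref{embed}. For the stated instances, $\SPi+\{\erefl\}$ is the case $n=0$ (with $R_0=R$), $\SPi+\{\etrans\}$ is the case $n=2$ with $R_1=R_2=R_0=R$, and $\Esfour=\SPi+\{\erefl,\etrans\}$ is axiomatised by both. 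I expect no real obstacle: the only delicate points are the degenerate case $n=0$ (checking that a one-point irreflexive frame is a legitimate rooted \tprof{} and that \eqref{sigmai}--\eqref{epi} then return exactly $\erefl$) and the routine verification that a length-$n$ directed path with arbitrary, possibly repeated, edge labels always meets the definition of a tree-shaped frame.
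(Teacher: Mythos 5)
Your proof is correct and is exactly the route the paper intends: each axiom $\d_{R_1}\dots\d_{R_n}p\imp\d_{R_0}p$ is, modulo removable $\top$-conjuncts, $\vdash_{\SLO}$-interderivable with the \spequation{} $\e_\pi$ of a rooted chain-shaped \tprof, so Theorem~\ref{Th:2} applies, and since complexity and completeness are properties of the \sptheory{} itself (not of the chosen axioms), the corollary follows. The only microscopic slip is the degenerate case $n=1$ with $R_1=R_0$, where $(x_0,x_1)\in R_0^{\G_1}$ and so your $\pi_1$ violates the condition $(u,v)\notin \newrel^{\G}$ — but there the axiom is the tautology $\d_{R_0}p\imp\d_{R_0}p$, which can simply be dropped from the axiomatisation, so nothing essential is affected.
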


In general, there can be different \spequations{} $\e$ with the same rooted \tprof\ $\pi$.
Since for each such $\e$, $\Psi_{\e}$ is equivalent to $\Phi_\pi$, $\e$ and $\e_\pi$ are valid in the same  frames. However, we do not necessarily have $\text{\SLO}_{\{e\}}=\text{\SLO}_{\{\e_\pi\}}$, and so Theorem~\ref{Th:2} cannot be generalised to all such \spequations, as shown by the following examples.

\begin{example}\label{Ex:4}\em
Consider first the rooted \tprof\ $\pi$ for reflexivity in Table~\ref{t:default}.  It is not hard to see that the \spequation{} $\e = \bigl(p \imp \d\d (p \land \d p)\bigr)$ also has $\Phi_\pi$ as its correspondent, and so $\e_\pi=\erefl$ is valid in exactly the same frames as $\e$. On the other hand, $\{\e\} \not\models_{\text{\SLO}} \e_\pi$ because the \SLOa{} in Fig.~\ref{f:pic3}~(a) validates $\e$ but refutes 
$\e_\pi$ when $p$ is $\{1,3\}$. Therefore, $\SPi + \{\e\}$ is not \eqcomp.
\end{example}

\begin{example}\label{Ex:4b}\em
Let $\pi = (\G,R,v_1,v_4)$, where $\G$ is an $R$-chain of 
$v_1,v_2,v_3,v_4$.
It is not hard to check that  
$\Phi_\pi = \forall \avec{x} \, \bigl(  R(x_1,x_2) \land R(x_2,x_3) \land R(x_3,x_4)  \to R(x_1,x_4)\bigr)$ is a correspondent of 
the \spequation{} $\e' = (\d\d p \land \d\d\d p \imp \d p)$. The \SLOa{} in 
Fig.~\ref{f:pic3}~(b)  validates $\e'$ but refutes $\e_\pi = (\d \d \d p \imp \d p) $ when $p$ is $\{4\}$.
Therefore, $\{\e'\} \not\models_{\text{\SLO}} \e_\pi$, and so $\SPi + \{\e'\}$ is not complete.
\end{example}
\begin{figure}[ht]
\centering
\begin{tikzpicture}[scale=.5]
\tikzset{every loop/.style={thick,min distance=15mm,in=-135,out=135,looseness=10}}
\tikzset{place/.style={circle,thin,draw=white,fill=white,scale=1.2}}

\draw [fill=gray] (1,1) circle [radius=.25];
\node at (1,1)  {\textcolor{white}{\bf\footnotesize 1}}; 
\draw [fill=gray] (3.5,3.5) circle [radius=.25];
\node at (3.5,3.5)  {\textcolor{white}{\bf\footnotesize 2}}; 
\node[place] (foo) at (1,3.5) {}; 
\draw [fill=gray] (1,3.5) circle [radius=.25];
\node at (1,3.5)  {\textcolor{white}{\bf\footnotesize 3}}; 
\draw [thick,->] (1.25,1.25) to [out=45, in=-135] (3.25,3.25);
\draw [thick,->] (3.15,3.5) to [out=180, in=0] (1.3,3.5);
\path[->] (foo) edge [loop] node {} ();
\draw[thick,dotted,rounded corners=4] (-.5,4.3) -- (1.6,4.3) -- (1.6,2.7) -- (-.5,2.7) -- cycle;
\draw[thick,dotted,rounded corners=6] (-.8,4.6) -- (4,4.6) -- (4,2.4) -- (-.8,2.4) -- cycle;
\draw[thick,dotted,rounded corners=6] (-1.1,4.9) -- (1.9,4.9) -- (1.9,.5) -- (-1.1,.5) -- cycle;
\draw[thick,dotted,rounded corners=10] (-1.4,5.2) -- (4.5,5.2) -- (4.5,.2) -- (-1.4,.2) -- cycle; 

\node at (1.8,-.7) {(a)};
\end{tikzpicture}
\hspace*{1.5cm}
\begin{tikzpicture}[scale=.65]
\draw [fill=gray] (1,2) circle [radius=.25];
\node at (1,2)  {\textcolor{white}{\bf\footnotesize 1}}; 
\draw [fill=gray] (3.5,2) circle [radius=.25];
\node at (3.5,2)  {\textcolor{white}{\bf\footnotesize 2}}; 
\draw [fill=gray] (6,2) circle [radius=.25];
\node at (6,2)  {\textcolor{white}{\bf\footnotesize 3}}; 
\draw [fill=gray] (8.5,2) circle [radius=.25];
\node at (8.5,2)  {\textcolor{white}{\bf\footnotesize 4}}; 
\draw [thick,->] (1.3,2) to [out=0, in=180] (3.2,2);
\draw [thick,->] (3.8,2) to [out=0, in=180] (5.7,2);
\draw [thick,->] (6.3,2) to [out=0, in=180] (8.2,2);
\draw[thick,dotted,rounded corners=4] (.5,2.5) -- (1.5,2.5) -- (1.5,1.5) -- (.5,1.5) -- cycle;
\draw[thick,dotted,rounded corners=4] (3,2.5) -- (4,2.5) -- (4,1.5) -- (3,1.5) -- cycle;
\draw[thick,dotted,rounded corners=4] (5.5,2.5) -- (6.5,2.5) -- (6.5,1.5) -- (5.5,1.5) -- cycle;
\draw[thick,dotted,rounded corners=4] (8,2.5) -- (9,2.5) -- (9,1.5) -- (8,1.5) -- cycle;
\draw[thick,dotted,rounded corners=6] (.2,2.8) -- (4.3,2.8) -- (4.3,1.2) -- (.2,1.2) -- cycle;
\draw[thick,dotted,rounded corners=8] (-.1,3.1) -- (6.8,3.1) -- (6.8,.9) -- (-.1,.9) -- cycle;
\draw[thick,dotted,rounded corners=10] (-.4,3.4) -- (9.3,3.4) -- (9.3,.6) -- (-.4,.6) -- cycle;

\node at (4.5,-.7) {(b)};
\end{tikzpicture}
\caption{The \SLOa s of Examples~\ref{Ex:4} and \ref{Ex:4b}.}\label{f:pic3}
\end{figure}

We say that a rooted \tprof\ $\pi= (\G,\newrel,u,v)$ is \emph{leapfrog} if $(u,w)\notin \newrel^\G$ for any $w$ in $\G$; and we refer to a \hornequation{} of the form $\varrho \imp \dRo p$ having a leapfrog profile as a \emph{\leapfrogequation}.

\begin{theorem}\label{Th:6}
Any \sptheory{} axiomatised by \leapfrogequations{} is \complex, and so \eqcomp.
\end{theorem}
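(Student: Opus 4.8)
The plan is to use the element-based embedding template of \S\ref{Jembed} (embeddings via elements of SLOs), but with a more careful choice of the accessibility relation $R^\F$ than the `classical' one in \eqref{JRclassic}, taking advantage of the leapfrog condition. Let $\Ec$ consist of \leapfrogequations{} $\e = (\varrho_\e \imp \dRo p)$ with leapfrog profiles $\pi_\e = (\G_\e, S_\e, u_\e, v_\e)$, where $u_\e$ is the root of $\G_\e$ and $(u_\e, w) \notin S_\e^{\G_\e}$ for any $w$. Take an arbitrary $\Aa = (A, \land, \top, \dR)_{R\in\R} \in \SLO_\Ec$. First I would define, for each $R \in \R$, a relation $R^\F$ on $A$ that satisfies \eqref{Jemb1} and \eqref{Jemb2} — so that the map $\eta\colon a \mapsto \{b \in A \mid b \le a\}$ is automatically an $\SPa$-embedding of $\Aa$ into $\Fslo$ — and that additionally makes $\F = (A, R^\F)_{R\in\R}$ validate every $\e \in \Ec$. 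The natural candidate is to start from the classical $(a,b) \in R^{\F_0}$ iff $a \le \dR b$ and then close under the rules $\Phi_{\pi_\e}$ to obtain $\F = \Pi_\Ec(\F_0)$; the leapfrog shape of the profiles is exactly what will let me verify that closing up preserves \eqref{Jemb1} and \eqref{Jemb2}.

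The key steps, in order: $(1)$ Verify \eqref{Jemb1} for $\F_0$: immediate, since $(a,b) \in R^{\F_0}$ means $a \le \dR b$ by definition. $(2)$ Verify \eqref{Jemb2} for $\F_0$: if $a \le \dR b$ then $(a,b) \in R^{\F_0}$ with $b \le b$, so $c = b$ works. $(3)$ Show that adding the new $S_\e$-arrows prescribed by $\Pi_\Ec$ preserves \eqref{Jemb1}. This is where the leapfrog hypothesis bites. When a homomorphism $h\colon \G_\e \to \F^i$ forces a new arrow $\bigl(h(u_\e), h(v_\e)\bigr)$ into $S_\e^{\F^{i+1}}$, I need $h(u_\e) \le \d_{S_\e} h(v_\e)$ in $\Aa$. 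By \eqref{epi} (specialised to the rooted leapfrog case, $\e_{\pi_\e} = (\sigma_0 \imp \d_{S_\e} p_{v_\e})$ with $\sigma_0$ built over the single variable $p_{v_\e}$), and because $\Aa \models \e$, it suffices to interpret $p_{v_\e}$ appropriately and read off $\sigma_0[\cdot] \le \d_{S_\e}(\cdot)$; here I use the induction hypothesis that every arrow already present in $\F^i$ witnesses an inequality $x \le \dR y$ in $\Aa$, so that the homomorphism $h$ together with \eqref{sigmai} gives $h(u_\e) \le \sigma_0\bigl[h(v_\e)\bigr]$, and then $\Aa \models \e$ yields $h(u_\e) \le \d_{S_\e} h(v_\e)$ as desired. $(4)$ Show that the closure preserves \eqref{Jemb2}. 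The leapfrog condition $(u_\e,w) \notin S_\e^{\G_\e}$ guarantees that $h(u_\e)$ gains only \emph{outgoing} $S_\e$-arrows from this step and no new incoming arrows that could be relevant, while any newly-demanded witness for $\d_{S_\e} a \le$-membership is the element $h(v_\e) \le a$ produced along with the new arrow — so the existential witness required by \eqref{Jemb2} is always available. (I expect this to require an inductive argument along the step-by-step sequence \eqref{stepbystep}–\eqref{stepbystep2}, using \eqref{homoext} to lift homomorphisms through the closure.) $(5)$ Conclude that $\F = \Pi_\Ec(\F_0) \models \Phi_{\pi_\e}$ for all $\e$, hence $\F \models \Ec$ by \eqref{globalcorr}, and that $\eta$ embeds $\Aa$ into $\Fslo$; therefore $\SPi + \Ec$ is \complex, and so \eqcomp\ by the implication recorded in \S\ref{embed}.

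The main obstacle will be step $(3)$ together with its interaction with $(4)$: one must show that the closure process terminates at a frame that \emph{simultaneously} validates all the leapfrog correspondents and still satisfies both \eqref{Jemb1} and \eqref{Jemb2}, and the delicate point is that adding an $S_\e$-arrow at $h(u_\e)$ might in principle trigger further applications of other profiles at $h(u_\e)$ whose premises now match. The leapfrog restriction $(u_\e, w) \notin S_\e^{\G_\e}$ is precisely designed to block the bad feedback: since the root of each profile never needs an outgoing $S_\e$-edge \emph{inside its own premise}, the new edges created do not re-enter the premises in a way that would violate \eqref{Jemb2}, and a transfinite induction (successor steps as in \eqref{stepbystep2}, unions at limits) closes the argument. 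I would package the verification of \eqref{Jemb1}–\eqref{Jemb2} being preserved as a single lemma proved by induction on the stage of the closure, and then the theorem follows formally. This is the natural generalisation of Theorem~\ref{Th:2} promised by the placement of Theorem~\ref{Th:6} in the text, with the leapfrog condition doing the extra work needed beyond the rooted case.
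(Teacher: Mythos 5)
There is a genuine gap at your step $(3)$, and it sits exactly where the real work of the theorem lies. When a homomorphism $h\colon\G_\e\to\F^i$ triggers a new arrow, you need $h(u_\e)\le\d_{S_\e}h(v_\e)$ in $\Aa$, and you derive it from $h(u_\e)\le\sigma_0\bigl[h(v_\e)\bigr]$ ``because $\Aa\models\e$''\!. But that inference is precisely the assertion that $\Aa$ validates $\e_{\pi_\e}=(\sigma_0\imp\d_{S_\e}p_{v_\e})$, whereas your hypothesis is only that $\Aa$ validates $\e=(\varrho\imp\d_{S_\e}p)$: these two \spequations{} share the same frames (same correspondent $\Phi_{\pi_\e}$) but in general \emph{not} the same \SLOa{}s --- Examples~\ref{Ex:4} and \ref{Ex:4b} are exactly failures of this kind, with $\{\e\}\not\models_{\SLO}\e_\pi$ despite identical profiles. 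To pass from $\sigma_0$ (a term read off the profile tree $\G_\e$, via \eqref{sigmai}--\eqref{epi}) to $\varrho$ (whose tree $\T_\varrho$ may be much larger and may contain many occurrences of $p$), one must produce a homomorphism $\T_\varrho\to\G_\e$ sending $r_\varrho$ to $u_\e$ and \emph{every} point of $\V_\varrho(p)$ to $v_\e$, and then verify $\sigma_0[a_{v}]\le\varrho[\vala]$ for the valuation sending $p$ to $a_{v}$ and all other variables to $\top$; only then does $\Aa\models\e$ yield the required inequality. The existence of that homomorphism is Claim~\ref{c:homo} in the paper's proof, it is non-trivial, and it is the place where the leapfrog condition is actually used (via the analysis of the step-by-step Horn closure of $\G_\e$, showing that all new $\newrel$-arrows emanate from $u_\e$). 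Your proposal instead spends the leapfrog hypothesis on step $(4)$, where it is not needed: condition \eqref{Jemb2} is existential, and the classical witness $c=b$ provided by \eqref{JRclassic} survives when arrows are added, so preservation of \eqref{Jemb2} under the closure is automatic.

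Note also that the closure scaffolding is vacuous if step $(3)$ were true: any new arrow $\bigl(h(u_\e),h(v_\e)\bigr)$ satisfying $h(u_\e)\le\d_{S_\e}h(v_\e)$ already lies in the classical relation of \eqref{JRclassic}, so preservation of \eqref{Jemb1} would mean the closure adds nothing and $\Pi_\Ec(\F_0)=\F_0$. In other words, your induction is literally the statement that the classical frame $\F_0$ validates every $\Phi_{\pi_\e}$ --- which is the paper's route --- so the proposal is not an alternative argument but the paper's argument with its central lemma (Claim~\ref{c:homo} plus the term inequality $\sigma_0[a_v]\le\varrho[\vala]$) missing. To repair it, prove that lemma; the rest of your outline then goes through with the closure step removed.
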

\begin{proof}
Suppose $\e= (\varrho \imp \dRo p)$ is a \hornequation{} with a leapfrog profile $\pi= (\G,\newrel,u,v)$.
Recall the finite tree $\T_\varrho=(W_\varrho,R_\varrho)_{R\in \R}$ with root $r_\varrho$ from
\S\ref{sec:treemodel}.
 By Proposition~\ref{p:eandpi}, we obtain that
 \begin{equation}\label{leapfrog}
\mbox{there is no $z$ with $(r_\varrho,z)\in \newrel_\varrho$.}
 \end{equation}
 
 \begin{lclaim}\label{c:homo}
 $(i)$
For every $y\in\V_\varrho (p)$, there is a homomorphism $h^y\colon \T_\varrho \to \G$ with $h^y(r_\varrho)=u$ and $h^y(y) = v$.

 $(ii)$
 There is a homomorphism $h \colon \T_\varrho \to \G$ such that $h(r_\varrho)=u$ and $h(y) = v$, for all $y\in\V_\varrho (p)$.
 \end{lclaim}
 
 \begin{proof}
 $(i)$  
 Fix some $y\in\V_\varrho (p)$ and consider  the rooted tree-profile 
 $\pi_{\varrho,y} = (\T_\varrho, \newrel,r_\varrho, y)$.  
 With each point $x$ in $W_\varrho$ we associate a variable $\varel{x}$.
 As 
\begin{align*}
& \Phi_{\pi_{\varrho,y}} =\  \forall \varel{\avec{x}} \, \big( \bigwedge_{\stackrel{x,x'\in W_\varrho,\,R\in\R,}{(x,x')\in R_\varrho}} R(\varel{x},\varel{x}')  \to  \newrel(\varel{r}_\varrho,\varel{y}) \big),
\quad\mbox{and}\\
& \Phi_\pi\ \leftrightarrow\ \Psi_{\e}\ \leftrightarrow\  \forall \varel{\avec{x}} \, \big( \bigwedge_{\stackrel{x,x'\in W_\varrho,\,R\in\R,}{(x,x')\in R_\varrho}}
R(\varel{x},\varel{x}')  \to \bigvee_{y \in\V_\varrho(p)} \newrel(\varel{r}_\varrho,\varel{y}) \big),
\end{align*}
$\Phi_{\pi_{\varrho,y}}$ implies $\Phi_\pi$. Take the $\pi_{\varrho,y}$-closure $\pi_{\varrho,y}(\G)$ of $\G$.
As $\pi_{\varrho,y}(\G)\models \Phi_{\pi_{\varrho,y}}$, we have $\pi_{\varrho,y}(\G)\models\Phi_\pi$.
As the identity map is a homomorphism from $\G$ to $\pi_{\varrho,y}(\G)$,
\begin{equation}\label{uv}
(u,v)\in R^{\pi_{\varrho,y}(\G)}.
 \end{equation}
Next, consider the step-by-step construction \eqref{stepbystep}--\eqref{stepbystep2} of $\pi_{\varrho,y}(\G)$. We show by induction that, for every $i<n$,
$(a)$
the homomorphism $h^i\colon\T_\varrho\to\F^i$ used to obtain $\F^{i+1}$ from $\F^i$ is in fact a
$\T_\varrho\to\G$ homomorphism, and so, by Proposition~\ref{p:eandpi}~$(ii)$,
$(b)$
the new pair in $\newrel^{\F^{i+1}}$ is $\bigl(u,h^i(y)\bigr)$.
Indeed, for $i=0$ this follows from $\F^0=\G$. Now suppose inductively that $(a)$ and $(b)$ hold
for all $j\leq i$, and take the homomorphism $h^{i+1}\colon\T_\varrho\to\F^{i+1}$.
Since by IH all the $\newrel$-pairs in $\F^{i+1}$ that are not in $\G$ are of the form $(u,z)$, for some $z$,
\eqref{leapfrog} implies that $h^{i+1}$ is a $\T_\varrho\to\G$ homomorphism, proving $(a)$.
Now by \eqref{uv} and $(a)$, there is $i<n$ such that $h^i(r_\varrho)=u$ and $h^i(y)=v$, 
for the homomorphism $h^i\colon\T_\varrho\to\G$, as required.

$(ii)$ 
We define a homomorphism $h \colon \T_\varrho \to \G$ as follows.
First, define $h$ on the \emph{trunk} of $\T_\varrho$ comprising the points that lie on the paths from $r_\varrho$ to some $y\in\V_\varrho (p)$. Namely, for each $z$ on the trunk, we take any $y$ such that $z$ lies on the path from $r_\varrho$ to $y$ and set $h(z) = h^y(z)$ 
(which is well-defined since $\G$ is a tree, and so all the $y$ are located at the same distance from $r_\varrho$).
Next, for any $d$ on the trunk, we take the \emph{branch} with \emph{base} $d$ (containing all non-trunk descendants of $d$), fix some $y$ such that $y\in\V_\varrho (p)$ and $d$ lies on the path from $r_\varrho$ to $y$, and set $h(z) = h^y(z)$ for any $z$ on that branch. It is readily seen that $h$ is as required.
\end{proof}

Now, let $\A = (A, \land, \top, \dR)_{R\in \R}$ be a SLO validating $\e$.
It is shown in \S\ref{Jembed} that $\Aa$ can be embedded
into $\Fslo$, for the frame $\F=(A,R^\F)_{R\in\R}$ with $R^\F$ given by \eqref{JRclassic}. We show that $\F\models\Phi_\pi$, and so $\F\models\Psi_{\e}$, as required. 
To begin with,
take the tree-shaped frame $\G=(\Delta,R^\G)_{R\in\R}$ and suppose that $\Delta=\{x_1, \dots, x_n\}$ 
such that $x_1=u$ is the root,
and $(x_i, x_j)\in R^\G$ implies $i < j$. For each $i=1,\dots,n$, take some $a_{x_i}\in A$ such that
$(a_{x_i},a_{x_j})\in R^\F$ whenever $(x_i,x_j)\in R^\G$. 
We need to show that $(a_u,a_v)\in \newrel^\F$, that is,
$a_u\le\dRo a_v$.
Take the \spterms{} $\sigma_i$ defined in \eqref{sigmai}.
We prove by induction on $i=n,\dots,0$ that
\begin{equation}\label{treeok}
a_{x_i}\le\sigma_i[a_v].
\end{equation}
Indeed, as $x_n$ is a leaf in $\G$, $\sigma_n$ is either $\top$ (if $x_n\ne v$) or $p_v$ (if $x_n=v$), and so 
in either case \eqref{treeok} holds for $a_{x_n}$.
Now suppose inductively that \eqref{treeok} holds for every $j$, $i<j\leq n$.
We have $a_{x_i}\le\dR a_{x_j}$ for every $x_j$ with $(x_i,x_j)\in R^\G$.
So, by IH and monotonicity, we have 
\[
a_{x_i}\le a_{x_i}\land \bigwedge_{(x_i,x_j)\in R^\G} \dR \sigma_j[a_v].
\]
Since
\[
\sigma_i[a_v]=\left\{
\begin{array}{ll}
\displaystyle
\top\land \bigwedge_{(x_i,x_j)\in R^\G} \dR \sigma_j[a_v], & \mbox{ if $x_i\ne v$},\\[5pt]
\displaystyle
a_v\land \bigwedge_{(x_i,x_j)\in R^\G} \dR \sigma_j[a_v], & \mbox{ if $x_i= v$},
\end{array}
\right.
\]
 \eqref{treeok} follows. In particular, we have
$a_u=a_{x_1}\le\sigma_1[a_v]$. 

Now, take the following valuation $\vala$ in $\Aa$, for any variable $q$:
\[
\vala(q)=\left\{
\begin{array}{ll}
a_v, & \mbox{ if $q=p$},\\
\top, & \mbox{ otherwise,}
\end{array}
\right.
\]
and take the homomorphism $h$ from Claim~\ref{c:homo}~$(ii)$.
For any $y$ in $\T_\varrho$, take the \spterm{} $\trm_y^{\M_\varrho}$
defined in \S\ref{mterms}.
One can readily show by induction that
%
\[
\mbox{$h(y) = x_i$ \ implies \ $\sigma_i[a_v]\le \trm_y^{\M_\varrho}[\vala]$}.
\]
Indeed, if $y$ is a leaf in $\T_\varrho$ and $y\notin\V_\varrho(p)$, then $\trm_y^{\M_\varrho}[\vala]=\top$.
If $y$ is a leaf and $y\in\V_\varrho(p)$, then $h(y)=v$,
and so $\sigma_i[a_v]\le a_v=\trm_y^{\M_\varrho}[\vala]$.
If $y\in\V_\varrho(p)$ and has $\ell$ successors $y_{0},\dots,y_{\ell-1}$ with $(y,y_{j})\in R^j_\varrho$, then by IH and monotonicity, we have
\[
\sigma_i[a_v]\le a_v\land\bigwedge_{\substack{x_k=h(y_j)\\ \mbox{\tiny for some } j<\ell}}\d_{\!R^j\,}\sigma_{k}[a_v]\le
a_v\land\bigwedge_{j<\ell} \d_{\!R^j\,}\trm_{y_j}^{\M_\varrho}[\vala]=
\trm_y^{\M_\varrho}[\vala].
\]
The case $y\notin\V_\varrho(p)$ is similar.
In particular, we have $\sigma_1[a_v]\le\varrho[\vala]$. Finally, as
$\Aa \models \varrho \imp \dRo p$, we obtain $\sigma_1[a_v]\le\dRo a_v$, 
and so $a_u\le \dRo a_v$  by \eqref{treeok}. 
\end{proof}
 

\subsubsection{Horn-implications with arbitrary \tprof s.} 
We consider next \hornequations{} with \tprof s $\pi = (\G,\newrel,u,v)$ such that $u$ is not 
necessarily the root of 
the tree $\G$. Here again there are both positive and negative results. We begin by proving a general sufficient condition for completeness.

A set $\Pi$ of tree-profiles is called \emph{stable} if, for any $\pi=(\G, \newrel,u,v)$ in $\Pi$ and any tree $\T$, every homomorphism $h \colon \G \to \Pi(\T)$ is also a homomorphism from $\G$ to $\T$. 
To illustrate, $\{\pi_1\}$ and $\{\pi_2\}$ in Fig.~\ref{f:stable} are stable, while $\{\pi_3\}$
is not 
(take the `linear' frame $\T$ with $S^\T=\{(u_1,u_2)\}$ and $R^\T=\{(u_2,u_3),(u_3,u_4)\}$). 
We say that a tree-profile $\pi=(\G, \newrel,u,v)$ is \emph{forward-looking} if $u <_\G v$, where $<_\G$ is the transitive closure of $\bigcup_{R\in \R} R^\G$.


\begin{figure}[ht]
\centering
\begin{tikzpicture}[>=latex, point/.style={circle,draw=black,minimum size=1mm,inner sep=0pt},hm/.style={dashed}, xscale=1.0, yscale=1.0]\scriptsize
\node (p) at (-0.5,0) {\normalsize$\pi_1$};
\node (u) at (0,0) [point,fill=black] {};
\node (w) at (1,0) [point,fill=black] {};
\node (v) at (2,0) [point,fill=black] {};
\draw[->,] (u) to node [label=below:{\scriptsize$R$}] {} (w);
\draw[->,] (w) to node [label=below:{\scriptsize$R$}] {} (v);
\draw[->,dashed,bend left,looseness=0.9] (w) to node [label=above:{\scriptsize$S$}] {} (v);
\end{tikzpicture}
\qquad\quad 
\begin{tikzpicture}[>=latex, point/.style={circle,draw=black,minimum size=1mm,inner sep=0pt},hm/.style={dashed}, xscale=1.0, yscale=1.0]\scriptsize
\node (x) at (-1,0) [point,fill=black] {};
\node (u) at (0,0) [point,fill=black] {};
\node (w) at (1,0) [point,fill=black] {};
\node (v) at (2,0) [point,fill=black] {};
\node (w1) at (1,-1) [point,fill=black] {};
\node (v1) at (2,-1) [point,fill=black] {};
\node (p) at (2.5,0) {\normalsize$\pi_2$};
\draw[->,] (x) to node [label=below:{\scriptsize$Q$}] {} (u);
\draw[->,] (u) to node [label=below:{\scriptsize$T$}] {} (w);
\draw[->,] (w) to node [label=below:{\scriptsize$T$}] {} (v);
\draw[->,bend right,looseness=0.9] (u) to node [label=below:{\scriptsize$R$}] {} (w1);
\draw[->,] (w1) to node [label=below:{\scriptsize$S$}] {} (v1);
\draw[->,dashed,bend left,looseness=0.9] (u) to node [label=above:{\scriptsize$S$}] {} (v);
\end{tikzpicture}
\qquad\quad 
\begin{tikzpicture}[>=latex, point/.style={circle,draw=black,minimum size=1mm,inner sep=0pt},hm/.style={dashed}, xscale=1.0, yscale=1.0]\scriptsize
\node (u) at (0,0) [point,fill=black] {};
\node (w) at (1,0) [point,fill=black] {};
\node (v) at (2,0) [point,fill=black] {};
\node (p) at (2.5,0) {\normalsize$\pi_3$};
\draw[->,] (u) to node [label=below:{\scriptsize$S$}] {} (w);
\draw[->,] (w) to node [label=below:{\scriptsize$R$}] {} (v);
\draw[->,dashed,bend left,looseness=0.9] (w) to node [label=above:{\scriptsize$S$}] {} (v);
\end{tikzpicture}
\caption{Stable and unstable tree-profiles.}\label{f:stable}
\end{figure}

Suppose a tree-profile $\pi=(\G, \newrel,u,v)$ is forward-looking and $\G=(\Delta, R^\G)_{R\in\R}$. 
We define an \spequation{} $\e'_\pi$ as follows. 
For every $x\in \Delta$, we take a propositional variable $p_x$, and denote by $\V$ the valuation given by $\V(p_x) = \{x\}$. Let $\M = (\G, \V)$ and $\M'=(\G',\V)$, where  $\G'=(\Delta, R^{\G'})_{R\in\R}$ with $R^{\G'}=R^\G$, for $R\neq \newrel$, and
$\newrel^{\G'} = \newrel^\G \cup \{(u,v)\}$.  Since $\pi$ is forward-looking, $\G'$ does not contain directed cycles, and so both \spterms{} $\trm(\M)$ and $\trm(\M')$ are
defined (see \S\ref{mterms}), with $\trm(\M')$ obtained by substituting $p_u \land \dRo \trm_v^{\M}$ for $p_u$ in  $\trm(\M)$.
We set
\[ 
\e'_\pi = \bigl(\trm(\M) \imp \trm(\M')\bigr).
\]
It is readily checked that $\pi$ is a profile of $\e'_\pi$. The difference between $\e'_\pi$ and the
\spequation{} $\e_\pi$ defined by \eqref{epi}
is that the former contains propositional variables for all points in $\G$, while the latter only for $v$ and for the points on the path from the root of $\G$ to $u$. For example, for the transitivity profile $\pi$ from Table~\ref{t:default}, we have 
$$
\e'_{\pi} = \big( p_1 \land \d(p_2 \land \d p_3) \imp p_1 \land \d(p_2 \land \d p_3) \land \d p_3 \big) \quad \text{and} \quad \e_\pi = \big(\d\d p \imp \d p\big).
$$
The extra variables make it possible to obtain the following: 
\begin{theorem}\label{thm:interpolant}
For any stable set $\Pi$ of forward-looking tree-profiles, the \sptheory{} $\SPi + \Ec'_\Pi$, for $\Ec'_\Pi = \{\e'_\pi \mid \pi \in \Pi\}$,  
is \eqcomp. 
\end{theorem}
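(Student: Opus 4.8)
The plan is to establish the non-trivial direction of the completeness equivalence: for every \spequation{} $\sigma\imp\tau$, if $\Ec'_\Pi\models_{\CA}\sigma\imp\tau$ then $\Ec'_\Pi\vdash_{\SLO}\sigma\imp\tau$ (the converse is \eqref{soundness}); by \eqref{birkhoff} this gives $\SPi+\Ec'_\Pi$ \eqcomp. The $\Ec'_\Pi$-proxy of $\sigma\imp\tau$ will be $\trm(\Pi(\M_\sigma))\imp\tau$. First I would note that, since $\pi$ is a profile of $\e'_\pi$, the correspondent $\Psi_{\e'_\pi}$ is equivalent to $\Phi_\pi$, so $\CA_{\Ec'_\Pi}=\{\F\mid\F\models\Phi_\pi\text{ for all }\pi\in\Pi\}$; the proof of Proposition~\ref{p:Hclosure} uses only the correspondents and so goes through verbatim to yield $\Ec'_\Pi\models_{\CA}\sigma\imp\tau$ iff $\Pi(\M_\sigma),r_\sigma\kmodels\tau$. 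Next, since every $\pi\in\Pi$ is forward-looking, a straightforward induction over the step-by-step construction \eqref{stepbystep}--\eqref{stepbystep2} shows $\Pi(\M_\sigma)$ is a finite acyclic rooted Kripke model: here one uses stability to replace each homomorphism $h^i\colon\G^i\to\F^i$, which is a fortiori a homomorphism into $\Pi(\M_\sigma)$, by a homomorphism $\G^i\to\M_\sigma$, so that the newly added edge respects the ancestor order $<_{\M_\sigma}$ of the tree $\M_\sigma$, whence $\Pi(\M_\sigma)$ inherits irreflexivity of $<_{\M_\sigma}$. Thus $\alpha:=\trm(\Pi(\M_\sigma))$ is defined, and from $\Pi(\M_\sigma),r_\sigma\kmodels\tau$ Propositions~\ref{p:hom} and \ref{p:htot} give a homomorphism $\M_\tau\to\Pi(\M_\sigma)$ sending $r_\tau\mapsto r_\sigma$, hence $\vdash_{\SLO}\alpha\imp\tau$. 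So everything reduces to proving $\Ec'_\Pi\vdash_{\SLO}\sigma\imp\alpha$.

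By stability again, every edge of $\Pi(\M_\sigma)$ not already in $\M_\sigma$ has the form $\bigl(h(u_\pi),h(v_\pi)\bigr)$ for some $\pi=(\G_\pi,\newrel,u_\pi,v_\pi)\in\Pi$ and homomorphism $h\colon\G_\pi\to\M_\sigma$, and such an edge goes strictly forward in $\M_\sigma$; in particular only finitely many edges are added, so the argument works even for infinite $\Pi$. I would enumerate these edges as $e_1,\dots,e_N$ in an order in which the target of $e_\ell$ is never a $\le_{\M_\sigma}$-ancestor of the source of any later $e_{\ell'}$ — such an order exists because the induced binary relation on these edges is acyclic — and set $\F^{(0)}=\M_\sigma$ and $\F^{(\ell)}=\F^{(\ell-1)}$ with $e_\ell$ added (as Kripke models with the same valuation), so that $\F^{(N)}=\Pi(\M_\sigma)$. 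The core is then to show $\Ec'_\Pi\vdash_{\SLO}\trm(\F^{(\ell-1)})\imp\trm(\F^{(\ell)})$ for each $\ell$; chaining these and using $\vdash_{\SLO}\sigma\approx\trm(\M_\sigma)$ yields $\Ec'_\Pi\vdash_{\SLO}\sigma\imp\alpha$, and hence, with $\vdash_{\SLO}\alpha\imp\tau$, the desired $\Ec'_\Pi\vdash_{\SLO}\sigma\imp\tau$.

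For the single step, let $e_\ell=(a,b)=\bigl(h(u_\pi),h(v_\pi)\bigr)$. The chosen order guarantees that in $\F^{(\ell-1)}$ every point $\le_{\M_\sigma}a$ — in particular $a$ and $h(r_{\G_\pi})$ — is reached from the root by a unique directed path, so $\trm(\F^{(\ell-1)})$ contains exactly one occurrence each of $\trm_{h(r_{\G_\pi})}^{\F^{(\ell-1)}}$ and $\trm_a^{\F^{(\ell-1)}}$, the latter nested inside the former, and $\trm(\F^{(\ell)})$ results from replacing that occurrence of $\trm_a^{\F^{(\ell-1)}}$ by $\trm_a^{\F^{(\ell-1)}}\land\d_{\newrel}\trm_b^{\F^{(\ell-1)}}$ (forward-lookingness ensures $b$ cannot reach $a$, so $\trm_b$ is unaffected). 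Now consider the substitution $\theta$ with $\theta(p_x)=\trm_{h(x)}^{\F^{(\ell-1)}}$ for the variables $p_x$ of $\e'_\pi$, and let $\M_\pi,\M'_\pi$ be the models in the definition of $\e'_\pi$. Because $h$ maps $\G_\pi$-edges to $\F^{(\ell-1)}$-edges, one checks $\vdash_{\SLO}\theta(\trm_x^{\M_\pi})\approx\trm_{h(x)}^{\F^{(\ell-1)}}$ for every node $x$ of $\G_\pi$ (each $\theta(\trm_x^{\M_\pi})$ only adds to $\trm_{h(x)}^{\F^{(\ell-1)}}$ conjuncts it already entails); hence $\Ec'_\Pi\vdash_{\SLO}\theta(\trm(\M_\pi))\imp\theta(\trm(\M'_\pi))$ is, modulo $\vdash_{\SLO}$-equivalence, the implication $\gamma\imp\gamma\land\Theta$, where $\gamma=\trm_{h(r_{\G_\pi})}^{\F^{(\ell-1)}}$ and $\Theta$ is the diamond-chain along the image of the $\G_\pi$-path from $r_{\G_\pi}$ to $u_\pi$ whose innermost conjunct is $\trm_a^{\F^{(\ell-1)}}\land\d_{\newrel}\trm_b^{\F^{(\ell-1)}}$. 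A short induction along that path gives $\vdash_{\SLO}\gamma\land\Theta\imp\trm_{h(r_{\G_\pi})}^{\F^{(\ell)}}$, so $\Ec'_\Pi\vdash_{\SLO}\gamma\imp\trm_{h(r_{\G_\pi})}^{\F^{(\ell)}}$; substituting this at the unique occurrence of $\gamma$ inside $\trm(\F^{(\ell-1)})$ and using the monotonicity of \spterms{} provable from the rules \eqref{rules} delivers $\Ec'_\Pi\vdash_{\SLO}\trm(\F^{(\ell-1)})\imp\trm(\F^{(\ell)})$.

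The genuinely delicate point — the main obstacle — is the last paragraph: arranging that one substitution instance of $\e'_\pi$ upgrades \emph{precisely} the occurrence of $\trm_a^{\F^{(\ell-1)}}$ that $\trm(\F^{(\ell)})$ demands, no more and no less. This is exactly what forces the two auxiliary moves: using stability to pass to the \emph{one-round} closure, so that every triggering configuration already sits inside the original tree $\M_\sigma$ and hence persists in every $\F^{(\ell-1)}$; and ordering the edge additions so that at each step the point $a$ (and $h(r_{\G_\pi})$) is reached by a unique path, which makes the relevant occurrence unique and forestalls any need to upgrade "parallel" copies of $\trm_a$ that the axiom instance would not reach. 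It is also precisely where the extra propositional variables $p_x$ of $\e'_\pi$ — as opposed to the sparser $\e_\pi$ of \eqref{epi} — are indispensable: they let $\theta$ record the whole homomorphism $h$, not merely its restriction to the branch leading to $u_\pi$, so that $\theta(\trm(\M_\pi))$ is $\vdash_{\SLO}$-equivalent to $\trm_{h(r_{\G_\pi})}^{\F^{(\ell-1)}}$ and the substituted axiom performs the intended local surgery.
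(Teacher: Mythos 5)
Your proof is correct and follows essentially the same route as the paper's: the proxy $\trm\bigl(\Pi(\M_\sigma)\bigr)$, stability to pull every triggering homomorphism back into $\T_\sigma$, an ordering of the added edges that guarantees a unique path down to each source (the paper orders by non-increasing depth of $h^i(u^i)$, which is a special case of your topological order), and a chain of substitution instances of $\e'_\pi$ yielding $\Ec'_\Pi\vdash_{\SLO}\sigma\imp\trm\bigl(\Pi(\M_\sigma)\bigr)$. The only cosmetic differences are that you obtain $\vdash_{\SLO}\trm\bigl(\Pi(\M_\sigma)\bigr)\imp\tau$ directly from Propositions~\ref{p:hom} and \ref{p:htot} rather than via the paper's property $(i)$ and Proposition~\ref{p:unravel}, and that you use the substitution $p_x\mapsto\trm_{h(x)}^{\F^{(\ell-1)}}$ massaged by SLO-equivalences, whereas the paper tailors the substitution so that each step is literally a substitution instance of $\e'_\pi$.
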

\begin{proof}
The proof uses the \method\ method from \S\ref{sec:interp}. 
Given an \spterm{} $\sigma$, we take the $\Pi$-closure $\Pi(\M_\sigma)$ of its tree-model $\M_\sigma$.
As every $\pi\in\Pi$ is forward-looking, $\Pi(\T_\sigma)$ does not contain directed cycles, and so
the \spterm{}
$\trm\bigl(\Pi(\M_\sigma)\bigr)$ is defined in \S\ref{mterms}.
We show that $\varrho_\sigma=\trm\bigl(\Pi(\M_\sigma)\bigr)$ 
has the following properties: 
\begin{itemize}
\item[$(i)$]
for any \spterm{} $\tau$, if $\Ec'_\Pi\models_{\CA}\varrho_\sigma\imp\tau$ then $\models_{\CA}\varrho_\sigma\imp\tau$,
\item[$(ii)$]
$\Ec'_\Pi\vdash_{\SLO}\varrho_\sigma\imp\sigma$ and $\Ec'_\Pi\vdash_{\SLO}\sigma\imp\varrho_\sigma$,
\end{itemize}
which clearly imply that $\SPi + \Ec'_\Pi$ is complete.

$(i)$ If $\Ec'_\Pi\models_{\CA}\varrho_\sigma\imp\tau$ then 
$\Pi(\T_\sigma)\models \varrho_\sigma\imp\tau$. As 
$\Pi(\M_\sigma),r_\sigma\kmodels\varrho_\sigma$,
we obtain that 
$\Pi(\M_\sigma),r_\sigma\kmodels\tau$, and so $\M_{\varrho_\sigma},r_\sigma\kmodels\tau$ 
by Proposition~\ref{p:unravel}.
Now, take any Kripke model $\M$ and a point $w$ in it with 
$\M,w\kmodels\varrho_\sigma$.
By Proposition~\ref{p:htot}, there is a homomorphism $h \colon \M_{\varrho_\sigma} \to \M$ with 
$h(r_\sigma) = w$, and so 
$\M,w\kmodels\tau$, 
as required.

$(ii)$ As $\Pi(\M_\sigma)$ extends $\M_\sigma$, the identity map is a homomorphism from
$\M_\sigma$ to $\Pi(\M_\sigma)$, from which $\vdash_{\SLO}\varrho_\sigma\imp\sigma$ follows by Proposition~\ref{p:htot}.
To prove  that $\Ec'_\Pi\vdash_{\SLO}\sigma\imp\varrho_\sigma$,
we construct $\Pi(\T_\sigma)$ step-by-step as in 
\eqref{stepbystep}--\eqref{stepbystep2}.  As every $\pi\in\Pi$ is forward-looking, the interim $\F^i$ do not contain directed cycles, but they are not necessarily trees. However,
as $\Pi$ is stable, at each step the homomorphism 
$h^i\colon\G^i\to\F^i$ we use to obtain $\F^{i+1}$ from $\F^i$ is actually a $\G^i\to\T_\sigma$ 
homomorphism, and so we can arrange the steps in such a way that 
the depth of $h^i(u^i)$ in $\T_\sigma$ is not smaller than the depth of $h^{i+1}(u^{i+1})$ in $\T_\sigma$.
This means that, for any $i< n$,
\begin{equation}\label{path}
\mbox{there is a unique path in $\F^i$ from $r_\sigma$ to $h^i(u^i)$.}
\end{equation}
Let $\M^i=(\F^i,\V_\sigma)$, for $i\le n$ (so $\M^0=\M_\sigma$ and $\M^n=\Pi(\M_\sigma)$).
We claim that
\begin{equation}\label{midterm}
\Ec\vdash_{\SLO}\trm(\M^i)\imp\trm(\M^{i+1}),\quad\mbox{for every $i<n$}.
\end{equation}
Indeed, fix some $i<n$ and suppose $r^i$ is the root of $\G^i$. By \eqref{path}, 
$\trm_{h^i(r^i)}^{\M^{i+1}}$ differs from $\trm_{h^i(r^i)}^{\M^i}$
in an extra conjunct $\d_{\!\newrel^i} \trm_{h^i(v^i)}^{\M^i}$ at the unique place corresponding to the point $h^i(u^i)$.
Therefore, the \spequation{} $\trm_{h^i(r^i)}^{\M^i}\imp \trm_{h^i(r^i)}^{\M^{i+1}}$ is in fact 
a substitution instance of $\e'_{\pi^i}$ obtained by replacing each $p_x$ in $\e'_{\pi^i}$ with
\[
\bigwedge_{h^i(x)\in\V_\sigma(p)}\hspace*{-.4cm}p\ \land
\bigwedge_{(y,R)\in A^i_x}\dR\trm_y^{\M^i},
\]
where
\[
A^i_x=\bigl\{(y,R)\mid \bigl(h^i(x),y\bigr)\in R^{\F^i},\mbox{ but $y\ne h^i(x')$ for any $x'$ with $(x,x')\in R^{\G^i}$}\bigr\}.
\] 
It remains to notice that
$\bigl\{\trm_{h^i(r^i)}^{\M^i}\imp \trm_{h^i(r^i)}^{\M^{i+1}}\bigr\}\vdash_{\SLO}\trm(\M^i)\imp\trm(\M^{i+1})$,
which proves \eqref{midterm}.
Finally, as 
\[
\trm(\M^0)=\trm(\M_\sigma)=\sigma\quad \mbox{ and }\quad\trm(\M^n)=\trm\bigl(\Pi(\M_\sigma)\bigr)=\varrho_\sigma,
\]
 we obtain $\Ec'_\Pi\vdash_{\SLO}\sigma\imp\varrho_\sigma$,
as required.
\end{proof}

\begin{question}
Does Theorem~\ref{thm:interpolant} hold for 
$\Ec_\Pi=\{\e_\pi \mid \pi \in \Pi\}$ in place of $\Ec'_\Pi$?
\end{question}

We do not know whether the \sptheories{} covered by Theorem~\ref{thm:interpolant} are complex. The next theorem 
indicates that showing this may require tricky embeddings.

\begin{theorem}\label{t:comp0}
The \sptheory{} $\SPi + \{\e_{\pi_1}'\}$ with $\pi_1$ from Fig.~\ref{f:stable} is \complex.
\end{theorem}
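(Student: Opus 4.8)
The plan is to prove \complex ity by applying the filter-embedding template of \S\ref{Tembed}, but with a \emph{non-classical} accessibility relation for $R$ (the one for $S$ stays classical). First I would make the axiom explicit. Writing $\pi_1=(\G,S,w,v)$, where $\G$ is the $R$-path $u\to w\to v$, so that $\Phi_{\pi_1}=\forall x,y,z\,\bigl(R(x,y)\land R(y,z)\imp S(y,z)\bigr)$, unpacking the definition gives
\[
\e'_{\pi_1}\ =\ \bigl(p_u\land\dR(p_w\land\dR p_v)\ \imp\ p_u\land\dR(p_w\land\dR p_v\land\dS p_v)\bigr).
\]
A short calculation (for one direction substitute $\top,q,r$ for $p_u,p_w,p_v$ and drop $\dR r$ by monotonicity; for the other instantiate at $p_w=q\land\dR r$, $p_v=r$) shows that a \SLOa{} $\Aa=(A,\land,\top,\dR,\dS)$ validates $\e'_{\pi_1}$ iff it validates $\dR(a\land\dR b)\le\dR(a\land\dS b)$ for all $a,b\in A$; I abbreviate this inequality as $(\dagger)$. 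Since $\Phi_{\pi_1}$ is a correspondent of $\e'_{\pi_1}$, it suffices to embed an arbitrary $\Aa$ satisfying $(\dagger)$ into $\Gslo$ for some frame $\G\models\Phi_{\pi_1}$.

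Next I would fix the frame. Call a filter $V\in\FA$ \emph{$S$-saturated} if $\dR c\in V$ implies $\dS c\in V$ for every $c\in A$; $S$-saturated filters are closed under arbitrary intersections and include $A$ itself, so each $a\in A$ lies in one, e.g.\ in $V_a=\bigcup_n F_n$ where $F_0=\{b\mid a\le b\}$ and $F_{n+1}$ is the filter generated by $F_n\cup\{\dS c\mid \dR c\in F_n\}$. Now take $\G=(\FA,R^\G,S^\G)$ with $(U,V)\in R^\G$ iff $\dR[V]\subseteq U$ and $V$ is $S$-saturated, and $(U,V)\in S^\G$ iff $\dS[V]\subseteq U$, together with the usual map $f\colon a\mapsto\{U\in\FA\mid a\in U\}$ of \S\ref{Tembed}. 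Conditions \eqref{emb1}, \eqref{emb2} for $S$ and injectivity of $f$ are verified exactly as in the classical case, and \eqref{emb1} for $R$ is immediate from the definition of $R^\G$; so $f$ will be an \SPa-embedding of $\Aa$ into $\Gslo$ once two further things are checked.

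The hard part is \eqref{emb2} for $R$: given $\dR a\in U$, I would offer $V_a$ as the witness, and the work is to show $\dR[F_n]\subseteq U$ for every $n$. The base case is monotonicity of $\dR$. For the step, any element of $F_{n+1}$ dominates some $m=x\land\dS c_1\land\dots\land\dS c_k$ ($k\ge 0$) with $x\in F_n$ and all $\dR c_i\in F_n$; since $F_n$ is $\land$-closed, $\dR(x\land\dR c_1\land\dots\land\dR c_k)\in\dR[F_n]\subseteq U$ by the induction hypothesis, and then $(\dagger)$ (in the form $\dR(w\land\dR c)\le\dR(w\land\dS c)$, together with $U$ being up-closed) lets me replace the $\dR c_i$ by $\dS c_i$ one at a time without leaving $U$, arriving at $\dR m\in U$; hence $\dR[F_{n+1}]\subseteq U$. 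This is the sole place where the axiom is used. The remaining point, $\Phi_{\pi_1}$, is then automatic: if $(U,V),(V,Y)\in R^\G$, then $\dR[Y]\subseteq V$ and $V$ is $S$-saturated, so $\dS[Y]\subseteq V$, i.e.\ $(V,Y)\in S^\G$. Thus $\G\models\Phi_{\pi_1}$ is a frame for $\SPi+\{\e'_{\pi_1}\}$ and $f$ embeds $\Aa$ into $\Gslo$, so the \sptheory{} is \complex.

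The main obstacle is exactly the tension pointed out before the theorem: the classical filter frame does not validate $\Phi_{\pi_1}$, and the missing $S$-edges cannot simply be added, since that would violate \eqref{emb1} for $S$; the remedy is to shrink $R^\G$ so that no offending $R$-path survives, and the only non-routine verification is that enough $R$-successors remain — which is what the computation in the previous paragraph, powered by $(\dagger)$, delivers.
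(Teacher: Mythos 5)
Your proposal is correct and is essentially the paper's own proof: the paper likewise reduces the axiom to $\dR(p\land\dR q)\le\dR(p\land\dS q)$, works with the filter frame of \S\ref{Tembed} in which $(U,V)\in R^\G$ requires both $\dR[V]\subseteq U$ and the saturation condition ``$\dR a\in V\Rightarrow\dS a\in V$'' (while $S^\G$ stays classical), and verifies \eqref{emb2} for $R$ by exactly your iterative construction $V=\bigcup_n V_n$, using the axiom to trade $\dR y_i$ for $\dS y_i$ one at a time so that $\dR[V_n]\subseteq U$ is preserved.
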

\begin{proof}
Suppose  $\A = (A, \land, \top, \dR,\dS)$ is a \SLOa\ validating the \spequation{} $\e_{\pi_1}'=\bigl(\dR(p\land \dR q)\imp\dR(p\land \dS q)\bigr)$. Take
the set $\FA$ of all filters of $\Aa$ and set, for $U,V\in\FA$,
\begin{align*}
& (U,V)\in R^\G\ \Longleftrightarrow\  \dR[V]\subseteq U,\ \mbox{and $\dR a\in V$ implies $\dS a\in V$ for every $a$};\\
& (U,V)\in S^\G\ \Longleftrightarrow\ \dS[V]\subseteq U.
\end{align*}
Then $\G=(\FA,R^\G,S^\G)$ clearly validates $\Phi_{\pi_1}$.
Also, $S^\G$ satisfies both \eqref{emb1} and \eqref{emb2}, and $R^\G$ satisfies \eqref{emb1}.  We show 
that $R^\G$ satisfies \eqref{emb2} as well. Then,
as shown in \S\ref{Tembed}, $\Aa$ would embed into $\Gslo$.
So suppose $\dR a\in U$ for some $a$. We need to find a $V\in\FA$ such that $a\in V$ and
$(U,V)\in R^\G$. To this end, for any $X\subseteq A$, we let 
$X\!\!\uparrow=\{y\mid y\geq x\mbox{ for some }x\in X\}$, $V_0=\{a\}\!\!\uparrow$ and,  for every $n<\omega$, 
\[
V_{n+1}=\{x\land\dS y_1\land\dots\land \dS y_m\mid x\land\dR y_1\land\dots\land \dR y_m\in V_n\}\!\!\uparrow.
\]
It can be shown by induction that, for every $n<\omega$,
\begin{itemize}
\item[--]
$V_n$ is a filter;
\item[--]
$\dR b\in V_n$ implies $\dS b\in V_{n+1}$, for every $b\in A$,
\item[--]
$\dR [V_n]\subseteq U$.
\end{itemize}
We show that last item only. For $n=0$, it holds because of the monotonicity of $\dR$. If
$b\geq x\land\dS y_1\land\dots\land \dS y_m$, for some $x\land\dR y_1\land\dots\land \dR y_m\in V_n$,
then by monotonicity and $\A \models \dR(p\land \dR q)\le\dR(p\land \dS q)$, we have
\[
\dR b\geq \dR(x\land\dS y_1\land\dots\land \dS y_m)\geq \dR(x\land\dR y_1\land\dots\land \dR y_m).
\]
Since $\dR(x\land\dR y_1\land\dots\land \dR y_m)\in U$ by IH, $\dR b\in U$ follows.

As $V_0\subseteq\dots\subseteq V_n\subseteq\dots$, their union $V=\bigcup_{n<\omega}V_n$ is
the required filter.

Note that Theorem~\ref{t:comp0} cannot be proved using the simpler embedding of \S\ref{Jembed}.
Indeed, take the infinite \SLOa\ $\A= (A, \land, \top,\d)$ with the elements
\[
\top=a_0> a_1>\dots > a_n>\dots > g,
\]
$\dR g=\dS g = g$, $\dR a_n=\top$,
 and $\dS a_n=a_{n+1}$, for $n<\omega$. Then clearly $\A\models \e_{\pi_1}'$.
On the other hand, we claim that there are no $R^\F,S^\F\subseteq A\times A$ that
both satisfy \eqref{Jemb1}--\eqref{Jemb2} and validate $\Phi_{\pi_1}$. 
Indeed, suppose otherwise.
As $a_0\le\dR a_0$, we have $(a_0,x)\in R^\F$ for some $x\le a_0$ by \eqref{Jemb2}.
As $a_0\not\le\dR g$, it follows by \eqref{Jemb1} that $x\ne g$, and so $(a_0,a_n)\in R^\F$
for some $n<\omega$.
As $a_n\le\dR a_n$, we have $(a_n,y)\in R^\F$ for some $y\le a_n$ by \eqref{Jemb2}.
As $a_n\not\le\dR g$, it follows by \eqref{Jemb1} that $y\ne g$, and so $(a_n,a_k)\in R^\F$
for some $k$ with $n\le k<\omega$. Thus, $\Phi_{\pi_1}$ implies that $(a_n,a_k)\in S^\F$, and so
$a_n\le \dS a_k$ by \eqref{Jemb1}, which is a contradiction. 
\end{proof}

The next example shows that the stability condition is essential in Theorem~\ref{thm:interpolant}.
\begin{example}\label{e:nonstable}\em 
Consider
 the unstable set $\{\pi_3\}$ with the forward-looking profile $\pi_3$ from Fig.~\ref{f:stable}.
It is easy to see that $ \{\e_{\pi_3}'\}\models_{\CA}\e$, where
 \[
\e=\bigl(
 \dS\bigl(q\land\dR (p\land\dR r)\bigr)\imp \dS\bigl(q\land\dR (p\land\dS r)\bigr)\bigr).
 \]
On the other hand, the \SLOa{} in  Fig.~\ref{f:pic7} validates $\e_{\pi_3}'$ but refutes
$\e$ when $q$ is $\{2\}$, $p$ is $\{3,4\}$, and $r$ is $\{5,6,7\}$. Therefore, $\SPi + \{\e_{\pi_3}'\}$ is not
\eqcomp.
\end{example}
\begin{figure}[ht]
\centering
\begin{tikzpicture}[scale=.7]
\draw [fill=gray] (3.5,1) circle [radius=.25];
\node at (3.5,1)  {\textcolor{white}{\bf\footnotesize 1}}; 
\draw [fill=gray] (6,5) circle [radius=.25];
\node at (6,5)  {\textcolor{white}{\bf\footnotesize 4}}; 
\draw [fill=gray] (3.5,3.5) circle [radius=.25];
\node at (3.5,3.5)  {\textcolor{white}{\bf\footnotesize 2}}; 
\draw [fill=gray] (1,5) circle [radius=.25];
\node at (1,5)  {\textcolor{white}{\bf\footnotesize 3}}; 
\draw [fill=gray] (1,7.5) circle [radius=.25];
\node at (1,7.5)  {\textcolor{white}{\bf\footnotesize 5}}; 
\draw [fill=gray] (4.5,7.5) circle [radius=.25];
\node at (4.5,7.5)  {\textcolor{white}{\bf\footnotesize 6}}; 
\draw [fill=gray] (7.5,7.5) circle [radius=.25];
\node at (7.5,7.5)  {\textcolor{white}{\bf\footnotesize 7}}; 
\draw [thick,->] (3.9,3.65) to (5.7,4.8);
\draw [thick,->] (3.2,3.65) to (1.3,4.8);
\draw [thick,->] (3.5,1.35) to (3.5,3.15);
\draw [thick,->] (1,5.35) to (1,7.15);
\draw [thick,->] (5.8,5.35) to (4.6,7.15);
\draw [thick,->] (6.2,5.35) to (7.4,7.15);

\node[right] at (2.9,2) {${}^S$};
\node[right] at (1.5,4) {${}^R$};
\node[right] at (4.4,3.8) {${}^S$};
\node[right] at (.4,6.3) {${}^R$};
\node[right] at (4.3,6.55) {${}^R$};
\node[right] at (7,6.55) {${}^S$};

\draw[thick,dotted,rounded corners=4] (5.3,5.6) -- (6.7,5.6) -- (6.7,4.4) -- (5.3,4.4) -- cycle;
\draw[thick,dotted,rounded corners=4] (2.8,4.1) -- (4.1,4.1) -- (4.1,2.9) -- (2.8,2.9) -- cycle;
\draw[thick,dotted,rounded corners=4] (2.8,1.6) -- (4.1,1.6) -- (4.1,.4) -- (2.8,.4) -- cycle;
\draw[thick,dotted,rounded corners=6] (2.6,4.3) -- (4.3,4.3) -- (4.3,.2) -- (2.6,.2) -- cycle;
\draw[thick,dotted,rounded corners=6] (5.2,5.8) -- (6.9,5.8) -- (6.9,4.4) -- (4.5,2.7) -- (2.4,2.7) -- (2.4,4.4) -- cycle;
\draw[thick,dotted,rounded corners=6] (.55,5.6) -- (5,6) -- (7.1,6) -- (7.1,4.2) -- (5.1,4.2) -- (5.1,4.5) -- (.55,4.5) -- cycle;
\draw[thick,dotted,rounded corners=8] (5.2,6.2) -- (7.3,6.2) -- (7.3,4.2) -- (4.5,2.2) -- (4.5,0) -- (2.2,0) -- (2.2,4.7) -- cycle;
\draw[thick,dotted,rounded corners=8] (.35,5.8) -- (5.3,6.4) -- (7.5,6.4) -- (7.5,4) -- (5.4,2.5) -- (1.9,2.5) -- (.35,4) -- cycle;
\draw[thick,dotted,rounded corners=8] (.5,8) -- (8,8) -- (8,6.9) -- (.5,6.9) -- cycle;
\draw[thick,dotted,rounded corners=10] (.1,8.3) -- (8.2,8.3) -- (8.2,-.2) -- (.1,-.2) -- cycle; 
\end{tikzpicture}
\caption{The \SLOa{} of Example~\ref{e:nonstable}.}\label{f:pic7}
\end{figure}

However, 
\hornequations{} with forward-looking but unstable profiles (such as $\etrans$) can still 
axiomatise \complex{} \sptheories.
Likewise, \sptheories{} axiomatised by \hornequations{} having non-forward-looking profiles such as $\esym$ can also be
 \eqcomp{} and even \complex:

\begin{theorem}\label{t:comp}
The following \sptheories{} are \complex, and so \eqcomp:
\begin{itemize}
\item[$(i)$] 
$\SPi + \{\esym\}$;

\item[$(ii)$] 
$\Esfive= \SPi + \Axsfive=\SPi + \Axsfive'$, where $\Axsfive=\{\erefl,\etrans,\esym\}$
and $\Axsfive'=\{\erefl,\etrans,\eeuc\}$.
%
\end{itemize}
\end{theorem}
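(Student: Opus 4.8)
The plan is to prove \complex ity directly — that is, to embed every \SLOa\ in the relevant variety into $\Fslo$ for a suitable frame $\F$ for the logic — using the element-based embedding of \S\ref{Jembed}, but with the \emph{symmetric closure} of the usual accessibility relation. Recall that for a \SLOa\ $\A = (A,\land,\top,\d)$ the map $\eta\colon a\mapsto\{b\in A\mid b\le a\}$ is an \SPa-embedding of $\A$ into $\Fslo$ as soon as the relation $R^\F\subseteq A\times A$ satisfies \eqref{Jemb1} and \eqref{Jemb2}; the classical choice \eqref{JRclassic} is unusable here since it need not be symmetric. Instead, I would take
\[
(a,b)\in R^\F\quad\Longleftrightarrow\quad a\le\d b\ \text{ and }\ b\le\d a,
\]
which is symmetric by construction, and check that \eqref{Jemb1}--\eqref{Jemb2} persist under this symmetrisation provided $\A\models\esym$.

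For $(i)$, assume $\A\models\esym$. Then \eqref{Jemb1} holds by the very definition of $R^\F$. For \eqref{Jemb2}, suppose $a\le\d b$ and put $c=b\land\d a$. Clearly $c\le b$ and $c\le\d a$; moreover $a=a\land\d b\le\d(b\land\d a)=\d c$ by the instance of $\esym$ with $p\mapsto b$, $q\mapsto a$. Hence $(a,c)\in R^\F$ with $c\le b$, as \eqref{Jemb2} requires. So $\eta$ embeds $\A$ into $\Fslo$ with $\F = (A,R^\F)$. Since $R^\F$ is symmetric, $\F$ validates $\esym$ (a direct check, or by the correspondence recorded in Table~\ref{t:SP-logics}), i.e.\ $\F$ is a frame for $\SPi+\{\esym\}$; this proves \complex ity and hence \eqcomp ness.

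For $(ii)$, I would first note that $\SPi+\Axsfive = \SPi+\Axsfive'$: this follows from the two routine Birkhoff derivations $\{\etrans,\esym\}\vdash_{\SLO}\eeuc$, via $\d p\land\d q=\d q\land\d p\le\d(p\land\d\d q)\le\d(p\land\d q)$ (using $\esym$ then $\etrans$), and $\{\erefl,\eeuc\}\vdash_{\SLO}\esym$, via $q\land\d p\le\d q\land\d p=\d p\land\d q\le\d(p\land\d q)$ (using $\erefl$ then $\eeuc$). As \complex ity of an \sptheory\ depends only on the \sptheory\ itself and not on the chosen set of axioms, it then suffices to show that $\SPi+\Axsfive$ is \complex. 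So let $\A\models\Axsfive$ and take the same relation $R^\F$ as above. The verification of \eqref{Jemb1}--\eqref{Jemb2} is exactly as in $(i)$ (only $\esym$ was used). In addition, $R^\F$ is now reflexive ($a\le\d a$ by $\erefl$) and transitive: if $a\le\d b$, $b\le\d a$, $b\le\d c$, $c\le\d b$, then $a\le\d b\le\d\d c\le\d c$ by monotonicity of $\d$ and $\etrans$, and symmetrically $c\le\d a$. Thus $\F = (A,R^\F)$ is an equivalence frame, so $\F\models\Axsfive$, and $\eta$ embeds $\A$ into $\Fslo$; hence $\Esfive=\SPi+\Axsfive$ is \complex.

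The only step that is not bookkeeping is the choice of the symmetrised $R^\F$ together with the verification that the existential clause \eqref{Jemb2} survives symmetrisation; this is precisely where $\esym$ enters, through the witness $c=b\land\d a$. Everything else reduces to monotonicity of the operators, to the trivial syntactic equivalences between $\Axsfive$ and $\Axsfive'$, and to elementary properties of reflexive, transitive and symmetric relations on frames.
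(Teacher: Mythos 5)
Your proposal is correct and follows essentially the same route as the paper: the same symmetrised relation $(a,b)\in R^\F \Leftrightarrow a\le\d b$ and $b\le\d a$, the same witness $b\land\d a$ (the paper's $\d a\land x$) for condition \eqref{Jemb2} via $\esym$, the same syntactic interderivability of $\Axsfive$ and $\Axsfive'$, and the same observation that $\erefl$ and $\etrans$ make $R^\F$ reflexive and transitive. No gaps.
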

\begin{proof}
$(i)$ Let $\A = (A, \land, \top, \d)$ be a SLO with  $\A \models \esym$.
For $a,b\in A$, let 
 \begin{equation}\label{JRsymm}
  (a,b)\in R^\F\  \   \Longleftrightarrow\ \ a\le \d b\ \mbox{ and }\ b\le \d a.
 \end{equation}
Then $R^\F$ is clearly symmetric and satisfies \eqref{Jemb1}. We show that it satisfies
\eqref{Jemb2} as well, and so, as shown in \S\ref{Jembed}, $\Aa$ embeds
to $\Fslo$, for $\F=(A,R^\F)$. To this end, fix some $a\in A$ and let $x$ be such that
$a \le \d x$. Then, by $\A \models \esym$, we have
\[
a = a\land \d x\le \d \bigl(\d a\land x\bigr).
\]
Let $b=\d a\land x$. Then $a\le \d b$, $b\le x$ and $b\le \d a$; so $(a,b)\in R^\F$, as required 
in \eqref{Jemb2}.

$(ii)$ 
It is easy to see that 
$\{\erefl,\eeuc\}\vdash_{\SLO}\esym$ and $\{\etrans,\esym\}\vdash_{\SLO}\eeuc$,
and so $\SPi + \Axsfive=\SPi + \Axsfive'$.
It is straightforward to check that if $\Aa\models\erefl$ and $\Aa\models\etrans$, then
the $R^\F$ defined in \eqref{JRsymm} is reflexive and transitive
as well.
%
%
%
%
%
Note that Jackson \cite{jackson2004} proves  \eqcomp ness of  $\Esfive$ by showing that 
$\Axsfive'\models_{\BAO}$ is conservative over  $\Axsfive'\models_{\SLO}$. 
\end{proof}

The next two examples show incomplete \sptheories{} axiomatised by \spequations{} with non-rooted, non-forward looking and unstable \tprof s.

\begin{example}\label{e:fourstep}\em 
The \spequation{}
$
\e =\bigl(q \land \d \d p \imp \d \d (p \land \d q)\bigr)
$
has the non-rooted \tprof\  %
\raisebox{-0.5mm}{
\begin{tikzpicture}[>=latex, point/.style={circle,draw=black,minimum size=1mm,inner sep=0pt},hm/.style={dashed}, xscale=1.0, yscale=1.0]\scriptsize
\node (u) at (0,0) [point,fill=black] {};
\node (w) at (1,0) [point,fill=black] {};
\node (v) at (2,0) [point,fill=black] {};
\draw[->,] (u) to node {} (w);
\draw[->,] (w) to node {} (v);
\draw[->,dashed,bend right,looseness=0.9] (v) to node {} (u);
\end{tikzpicture}
}
.
It is easy to see that $\{\e\}\models_{\CA} \d\d\d\d p \imp \d p$. On the other hand, the \SLOa{} in 
Fig.~\ref{f:pic2} validates $\e$ but refutes $\d\d\d\d p \imp \d p$ when $p$ is $\{5\}$. 
Therefore, $\SPi + \{\e\}$ is not \eqcomp.
\end{example}
\begin{figure}[ht]
\centering
\begin{tikzpicture}[scale=.6]
\draw [fill=gray] (1,2) circle [radius=.25];
\node at (1,2)  {\textcolor{white}{\bf\footnotesize 1}}; 
\draw [fill=gray] (3.5,2) circle [radius=.25];
\node at (3.5,2)  {\textcolor{white}{\bf\footnotesize 2}}; 
\draw [fill=gray] (6,2) circle [radius=.25];
\node at (6,2)  {\textcolor{white}{\bf\footnotesize 3}}; 
\draw [fill=gray] (8.5,2) circle [radius=.25];
\node at (8.5,2)  {\textcolor{white}{\bf\footnotesize 4}}; 
\draw [fill=gray] (11,2) circle [radius=.25];
\node at (11,2)  {\textcolor{white}{\bf\footnotesize 5}}; 
\draw [thick,->] (1.3,2) to [out=0, in=180] (3.2,2);
\draw [thick,->] (3.8,2) to [out=0, in=180] (5.7,2);
\draw [thick,->] (6.3,2) to [out=0, in=180] (8.2,2);
\draw [thick,->] (8.8,2) to [out=0, in=180] (10.7,2);
\draw [thick,->] (5.9,1.7) to [out=200, in=-20] (1.1,1.7);
\draw [thick,->] (10.9,1.7) to [out=200, in=-20] (6.1,1.7);
\draw [thick,->] (8.4,2.3) to [out=-200, in=20] (3.6,2.3);
\draw[thick,dotted,rounded corners=4] (5.5,2.5) -- (6.5,2.5) -- (6.5,1.5) -- (5.5,1.5) -- cycle;
\draw[thick,dotted,rounded corners=4] (8,2.5) -- (9,2.5) -- (9,1.5) -- (8,1.5) -- cycle;
\draw[thick,dotted,rounded corners=4] (10.5,2.5) -- (11.5,2.5) -- (11.5,1.5) -- (10.5,1.5) -- cycle;
\draw[thick,dotted,rounded corners=4] (.5,2.8) -- (.5,.7) -- (9.3,.7) -- (9.3,2.8) -- (7.7,2.8) -- (7.7,1.1) -- (1.5,1.1) -- (1.5,2.8) -- cycle;
\draw[thick,dotted,rounded corners=4] (3,3.5) -- (3,1.5) -- (4,1.5) -- (4,3.1) -- (10.3,3.1) -- (10.2,1.1) -- (11.8,1.1) -- (11.8,3.5) -- cycle;
\draw[thick,dotted,rounded corners=8] (0,4) -- (12.2,4) -- (12.2,0.3) -- (0,0.3) -- cycle;
\end{tikzpicture}
\caption{The \SLOa{} of Example~\ref{e:fourstep}.}\label{f:pic2}
\end{figure}

\begin{example}\label{e:eucl}\em
Consider next the \spequation{} $\eeuc$ (see Table~\ref{t:default}).
It is not hard to see that $\{\eeuc\}\models_{\CA} \d\d p \land \d q \imp \d (q \land \d p)$.  On the other hand, the \SLOa{}  in Fig.~\ref{f:pic6_4}~(a)
validates $\eeuc$ but refutes 
$\d\d p \land \d q \imp \d (q \land \d p)$ 
when $p$ is $ \{5\}$ and $q$ is $ \{3,4\}$.
Therefore, $\SPi + \{\eeuc\}$ is not \eqcomp.
\end{example}
\begin{figure}[ht]
\centering
\begin{tikzpicture}[scale=.6]
\tikzset{every loop/.style={thick,min distance=8mm,in=45,out=135,looseness=8}}
\tikzset{place/.style={circle,thin,draw=white,fill=white,scale=1.2}}

\draw [fill=gray] (3.5,1) circle [radius=.25];
\node at (3.5,1)  {\textcolor{white}{\bf\footnotesize 1}}; 
\node[place] (foo5) at (6,6) {}; 
\draw [fill=gray] (6,6) circle [radius=.25];
\node at (6,6)  {\textcolor{white}{\bf\footnotesize 5}}; 
\node[place] (foo2) at (3.5,3.5) {}; 
\draw [fill=gray] (3.5,3.5) circle [radius=.25];
\node at (3.5,3.5)  {\textcolor{white}{\bf\footnotesize 2}}; 
\node[place] (foo3) at (1,3.5) {}; 
\draw [fill=gray] (1,3.5) circle [radius=.25];
\node at (1,3.5)  {\textcolor{white}{\bf\footnotesize 3}}; 
\node[place] (foo4) at (1,6) {}; 
\draw [fill=gray] (1,6) circle [radius=.25];
\node at (1,6)  {\textcolor{white}{\bf\footnotesize 4}}; 
\draw [thick,->] (3.9,3.6) to [out=45, in=-135] (5.8,5.7);
\draw [thick,->] (6,5.6) to [out=-110, in=20] (3.9,3.5);
\draw [thick,->] (3.2,3.6) to [out=135, in=-45] (1.2,5.7);
\draw [thick,->] (1,5.6) to [out=-70, in=160] (3.1,3.5);
\draw [thick,->] (3.15,1.2) to [out=135, in=-45] (1.15,3.2);
\draw [thick,->] (3.5,1.35) to [out=90, in=-90] (3.5,3.15);
\draw [thick,->] (1.35,3.3) to [out=0, in=180] (3.15,3.3);
\draw [thick,->] (1.35,6.1) to [out=10, in=170] (5.65,6.1);
\draw [thick,->] (5.65,5.9) to [out=-170, in=-10] (1.35,5.9);
\path[->] (foo5) edge [loop] node {} ();
\path[->] (foo3) edge [loop] node {} ();
\path[->] (foo2) edge [loop] node {} ();
\path[->] (foo4) edge [loop] node {} ();
\draw[thick,dotted,rounded corners=4] (.2,7.3) -- (1.8,7.3) -- (1.8,5.4) -- (.2,5.4) -- cycle;
\draw[thick,dotted,rounded corners=4] (5.2,7.3) -- (6.8,7.3) -- (6.8,5.4) -- (5.2,5.4) -- cycle;;
\draw[thick,dotted,rounded corners=6] (-.2,7.7) -- (2.2,7.7) -- (2.2,2.8) -- (-.2,2.8) -- cycle;
\draw[thick,dotted,rounded corners=8] (-.6,8.1) -- (7.2,8.1) -- (7.2,5) -- (5.1,2.9) -- (2.5,2.9) -- (-.6,6) -- cycle;
\draw[thick,dotted,rounded corners=10] (-1,8.5) -- (7.6,8.5) -- (7.6,.2) -- (-1,.2) -- cycle; 

\node at (3.4,-.9) {(a)};
\end{tikzpicture}
\hspace*{1.8cm}
\begin{tikzpicture}[scale=.65,baseline={(0,-1.8)}]
\tikzset{every loop/.style={thick,min distance=15mm,in=45,out=135,looseness=10}}
\tikzset{place/.style={circle,thin,draw=white,fill=white,scale=1.2}}

\draw [fill=gray] (3.5,1) circle [radius=.25];
\node at (3.5,1)  {\textcolor{white}{\bf\footnotesize 1}}; 
\node[place] (foo) at (3.5,3.5) {}; 
\draw [fill=gray] (3.5,3.5) circle [radius=.25];
\node at (3.5,3.5)  {\textcolor{white}{\bf\footnotesize 2}}; 
\draw [fill=gray] (1,3.5) circle [radius=.25];
\node at (1,3.5)  {\textcolor{white}{\bf\footnotesize 3}}; 
\draw [thick,->] (3.25,1.25) to [out=135, in=-45] (1.25,3.25);
\draw [thick,->] (3.5,1.35) to [out=90, in=-90] (3.5,3.15);
\path[->] (foo) edge [loop right] node {\ ${}^S$} ();
\draw[thick,dotted,rounded corners=4] (2.7,5) -- (4.5,5) -- (4.5,2.8) -- (2.7,2.8) -- cycle;
\draw[thick,dotted,rounded corners=4] (2.7,1.6) -- (4.5,1.6) -- (4.5,.5) -- (2.7,.5) -- cycle;
\draw[thick,dotted,rounded corners=6] (.2,5.3) -- (4.8,5.3) -- (4.8,2.4) -- (.2,2.4) -- cycle;
\draw[thick,dotted,rounded corners=6] (2.4,5.6) -- (5.1,5.6) -- (5.1,.2) -- (2.4,.2) -- cycle;
\draw[thick,dotted,rounded corners=10] (-.6,5.9) -- (5.5,5.9) -- (5.5,-.1) -- (-.6,-.1) -- cycle; 

\node[right] at (1.8,1.9) {${}^R$};
\node[right] at (3.4,1.8) {${}^S$};
\node at (2.7,-2.7) {(b)};
\end{tikzpicture}
\caption{The \SLOa s of Examples~\ref{e:eucl} and \ref{e:bekl}.}\label{f:pic6_4}
\end{figure}

\begin{example}[\cite{Beklemishev15}]\label{e:bekl}\em 
Consider
$
\e=\bigl(  \dS p \imp \dS (p \land \dS p)\bigr)
$
 with non-rooted \tprof{} \hspace*{-6mm} \raisebox{-3mm}{
\begin{tikzpicture}[>=latex, point/.style={circle,draw=black,minimum size=1mm,inner sep=0pt},hm/.style={dashed}, xscale=1.0, yscale=1.0]\scriptsize
\node (u) at (0,0) [point,fill=black] {};
\node (v) at (1,0) [point,fill=black] {};
\draw[->,] (u) to node [label=above:{\qquad\qquad\qquad\scriptsize$S$}] {} (v);
\draw[->,loop,dashed,looseness=20] (v) to node  {} (v);
\node at (.4,-.2) {\scriptsize$S$};
\end{tikzpicture}}
\hspace*{-3mm} and 
$\e' = (\dR p \imp \dS p)$ with rooted \tprof{} \raisebox{-3mm}{
\begin{tikzpicture}[>=latex, point/.style={circle,draw=black,minimum size=1mm,inner sep=0pt},hm/.style={dashed}, xscale=1.0, yscale=1.0]\scriptsize
\node (u) at (0,0) [point,fill=black] {};
\node (v) at (1,0) [point,fill=black] {};
\draw[->,] (u) to node [label=below:{\scriptsize$R$}] {} (v);
\draw[->,dashed,bend left,looseness=0.9] (u) to node [label=above:{\scriptsize$S$}] {} (v);
\end{tikzpicture}} %
.
Then $\{\e,\e'\}\models_{\CA}\dR p \imp \dR(p \land \dS p)$. However, the \SLOa{} in Fig.~\ref{f:pic6_4}~(b)  
validates both $\e$ and $\e'$, but refutes $\dR p \imp \dR(p \land \dS p)$ when $p$ is $\{2,3\}$.
Therefore, $\SPi + \{\e,\e'\}$ is not \eqcomp.
%
\end{example}

This example generalises to the following theorem: 

\begin{theorem}\label{thm:genericinc}
For any \hornequation{} $\e$ with a non-rooted \tprof, there is a \hornequation{} $\e'$ with a rooted 
\tprof\  \textup{(}and a fresh diamond operator\textup{)} such that the \sptheory{} $\SPi + \{\e, \e'\}$ is not \eqcomp.
\end{theorem}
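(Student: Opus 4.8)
The plan is to carry out the generalisation of Example~\ref{e:bekl} announced above. Fix a non-rooted tree-profile $\pi=(\G,\newrel,\mf{s},\mf{t})$ of $\e=(\sigma\imp\tau)$, so that $\G$ is a tree with root $r_\G\ne\mf{s}$, $(\mf{s},\mf{t})\notin\newrel^\G$, and $\Phi_\pi$ is a correspondent of $\e$; we may take $\pi$ minimal. Introduce a fresh relation symbol $P\notin\R$ and define $\e'$ to be the \hornequation{} whose (rooted, one-$P$-edge) tree-profile $\pi'$ has correspondent $\forall x,y\,(P(x,y)\to\newrel(x,y))$, namely $\e'=(\d_P\,p\imp\d_\newrel\,p)$; thus $\e'$ has a rooted tree-profile and uses the fresh diamond $\d_P$, as required. (If $\newrel$ does not occur in $\G$, replace $\newrel$ here and below by a relation that does occur on the path from $r_\G$ to $\mf{s}$; this changes nothing of substance.) Let $\G^P$ be $\G$ with every $\newrel$-edge relabelled to a $P$-edge, so that $\pi''=(\G^P,\newrel,\mf{s},\mf{t})$ is again a tree-profile, and let $\e''=\e_{\pi''}$ be the \hornequation{} attached to $\pi''$ by \eqref{epi}.

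The inclusion $\{\e,\e'\}\models_{\CA}\e''$ is quick: given $\F\models\{\e,\e'\}$ and a homomorphism $h\colon\G^P\to\F$, the validity of $\e'$ turns every $P$-edge of $\F$ into an $\newrel$-edge, so $h$ is also a homomorphism from $\G$ into $\F$; then $\F\models\e$, i.e.\ $\F\models\Phi_\pi$, yields $(h(\mf{s}),h(\mf{t}))\in\newrel^\F$. Since $h$ was arbitrary, $\F\models\Phi_{\pi''}$, i.e.\ $\F\models\e''$.

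For the converse I would build an \SLOa\ validating $\e$ and $\e'$ but not $\e''$; with \eqref{birkhoff} this gives $\{\e,\e'\}\not\models_{\SLO}\e''$ and hence incompleteness. Following Figure~\ref{f:pic6_4}(b), take the $\Phi_\pi$-closure $\G^\natural=\pi(\G)$, attach to it a disjoint ``$P$-shadow'' $\G^P$ of $\G$ to form a frame $\F$, and let $\Aa$ be the sub-\SLOa\ of $\Fslo$ consisting of those $X\subseteq W$ whose trace on the $\G^P$-component is \emph{saturated}: if $X$ contains a $P$-target $y$ of $\G^P$, then it also contains every point that $\Phi_\pi$ would force into the conclusion position of that copy once the edge into $y$ is promoted from $P$ to $\newrel$. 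One then has to check the four facts: $\Aa$ is closed under $\cap$ and under each $\Diamond^+_R$ ($R\in\R\cup\{P\}$) and contains $W$, so $\Aa\in\SLO$; $\Aa\models\e$, because $\G^\natural\models\Phi_\pi$ disposes of the first component while on the $\G^P$-component the diagram of $\sigma$ cannot be realised, $\sigma$ genuinely using a diamond that $\G^P$ lacks; $\Aa\models\e'$, because the first component has no $P$-edges and saturation is exactly the property that any admissible $X$ with $w\in\Diamond^+_PX$ already satisfies $w\in\Diamond^+_\newrel X$ on the second component; and $\Aa\not\models\e''$, because the valuation on the $\G^P$-component realising the antecedent of $\e''$ at $r_{\G^P}$ is $\Aa$-admissible, yet $\G^P$ itself refutes $\Phi_{\pi''}$ — it carries no $\newrel$-edge at all, in particular none between the copies of $\mf{s}$ and $\mf{t}$.

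The hard part is pinning down $\Aa$ and verifying these last two facts uniformly in $\G$ — in Example~\ref{e:bekl} this is exactly the ad hoc choice of admissible sets in Figure~\ref{f:pic6_4}(b), and one must see that ``saturation'' does the same job for an arbitrary tree. The point of taking the \emph{least} $\Phi_\pi$-expansion $\G^\natural=\pi(\G)$ of $\G$ is to guarantee that enlarging an admissible set to its saturation never reveals the ``$P$-to-$\newrel$'' promotion, while keeping the two components disjoint prevents a homomorphism from transporting the forced edge of $\G^\natural$ back into $\G^P$.
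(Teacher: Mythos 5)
Your overall strategy (a companion $\e'$ transferring the fresh relation into an old one at the frame level, plus a two-component algebra refuting an induced equation) is the same as the paper's, and your frame-consequence half $\{\e,\e'\}\models_{\CA}\e''$ is fine. The gap is in the algebraic counterexample, and it is not a matter of pinning down ``saturation'': with \emph{all} $\newrel$-edges of the shadow relabelled to $P$ and the two components kept \emph{disjoint}, the source $w$ of any $P$-edge lies inside $\G^P$ and has no $\newrel$-successor anywhere in your frame (none inside the shadow, since every $\newrel$-edge was relabelled, and none across components). Hence $\Aa\models\e'$ holds if and only if no admissible set contains a $P$-target; no condition on which sets you admit can change this, because it is a property of the frame. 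But then $\Diamond^+_{P}Z=\emptyset$ for every $Z\in\Aa$, so under every admissible valuation no point satisfies any subterm $\Diamond_{P}\gamma$; since the antecedent of $\e''=\e_{\pi''}$ must be witnessed through at least one $\Diamond_{P}$, that antecedent is unsatisfiable and $\Aa\models\e''$ vacuously --- nothing is refuted. Already for $\e=\esym$ (profile $u\to v$ with missing arrow $(v,u)$) your construction collapses in exactly this way.

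The paper's proof is engineered precisely to avoid this: it relabels only the \emph{single} edge leaving the root on the path towards the antecedent point $u$, so the fresh edge's source is the root, and it \emph{merges} the copy with $\pi(\G)$ at the root, so that the root retains its original $R$-successor $w$ in $\pi(\G)$. The coupling condition on admissible sets ($X'\subseteq' X$, i.e.\ $x'\in X'$ implies $x\in X$) then yields $\Diamond^+_{R_\dag}(X\cup X')\subseteq\Diamond^+_{R}(X\cup X')$, so $\e'=(\d_{\!R_\dag}p\imp\dR p)$ is valid in $\Aa$ while the diagram valuation refuting $\e_{\pi'}$ on the copy remains admissible (note also that the paper's $\e'$ targets the label $R$ of that specific root edge, not the conclusion relation $\newrel$, and the refuted equation has $\d_{\!R_\dag}$ at the head of its conclusion, which is what makes the algebraic refutation go through). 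To repair your argument you would have to adopt both of these choices; relabelling every $\newrel$-edge is itself fatal even with merged roots, since $P$-edges deep inside the shadow would still have sources with no $\newrel$-successors.
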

\begin{proof}
Suppose $\pi = (\G,\newrel,u,v)$ is the non-rooted profile of $\e = (\sigma \imp \tau)$. Denote by $r$ the root of 
$\G=(\Delta,R^\G)_{R\in\R}$ and by $w$ the successor of $r$ on the branch from $r$ to $u$ with, say,  $(r,w)\in R^\G$ for some $R\in \R$. 
Define $\G'$ to be a tree whose points are copies $x'$ of the points $x$ in $\G$, and the arrows between them are the same as in $\G$ except that we replace the $R^{\G'}$-arrow from $r'$ to $w'$ with an $R_\dag^{\G'}$-arrow, for some fresh $R_\dag\notin\R$. 
Let $\pi' = (\G',\newrel,u',v')$ and let $\e' = (\d_{\!R_\dag\,} p \imp \dR p)$. It is readily seen that any 
frame validating $\{\e,\e'\}$ also validates the \spequation{} $\e_{\pi'}$. 

We now construct a \SLOa\ $\Aa$ validating $\{\e,\e'\}$ but refuting $\e_{\pi'}$.
Consider the Horn closure $\pi(\G)$ of $\G$. Clearly, $\pi(\G) \models \Phi_\pi$, from which $\pi(\G) \models \Psi_{\e}$ and 
\begin{equation}\label{psieholds}
\pi(\G)\models\e.
\end{equation}
Now let $\F$ be the result of merging the roots $r$ of $\pi(\G)$
 and $r'$ of $\G'$ into a single point. We define $\Aa$ 
as the subalgebra of $\Fslo$ with domain 
$$
A = \{X \cup X' \mid X \subseteq \G,\ X' \subseteq \G',\ X' \subseteq' X\},
$$
where $X' \subseteq' X$ iff $x'\in X'$ implies $x \in X$. Then $\emptyset$ and the domain of $\F$ clearly belong to $A$. Also, $A$ is closed under intersections because we clearly have  $(X \cup X') \cap (Y \cup Y') = (X \cap Y) \cup (X' \cap Y')$; here we use the fact that $r=r'$. Furthermore, $\d_{\!R_\dag\,}^+ (X \cup X') = \emptyset$ if $w' \notin X'$,  
$\d_{\!R_\dag\,}^+ (X \cup X') = \{r\}$ if $w'\in X'$, and $\d_{\!Q\,}^+ (X \cup X') = \d_{\!Q\,}^+ X \cup \d_{\!Q\,}^+ X'$ with $\d_{\!Q\,}^+ X' \subseteq' \d_{\!Q\,}^+ X$, for any $Q$ different from $R_\dag$. Thus, $\A$ is a \SLOa.
Observe also that, for every $X\cup X'\in A$, we have 
$\d_{\!R_\dag}^+(X\cup X')\subseteq\d_{\!R}^+(X\cup X')$, and so $\A\models\e'$.

Next, we show that $\A \not\models \e_{\pi'}$. Indeed, 
suppose $\e_{\pi'}=(\alpha\imp\d_{\!R_\dag\,}\beta)$ (cf.\ \eqref{epi}).
We have $\G' \not \models \e_{\pi'}$ by \eqref{pinote}, and so there exist a Kripke model $\M=(\G',\V)$ 
and some $w$ in it such that 
$\M,w\kmodels\alpha$ but $\M,w\not\kmodels\d_{\!R_\dag\,}\beta$.
We define a valuation $\vala$ in $\A$  
by taking
\[ 
\vala(p)=\V(p)\cup \{ x \mid x' \in \V(p)\},\ \mbox{ for every variable $p$.}
\]
It is easy to see that 
$\varrho[\vala]\cap\Delta=\{w\mid \M,w\kmodels\varrho\}$, 
for every \spterm{} $\varrho$.
Then 
$\alpha[\vala]\supseteq\{w\mid \M,w\kmodels\alpha\}$ 
and $(\d_{\!R_\dag\,}\beta)[\vala]=
\d_{\!R_\dag\,}^+\bigl(\beta[a]\bigr)=\d_{\!R_\dag\,}^+\bigl(\{w\mid\M,w\kmodels\beta\}\bigr)=
\{w\mid \M,w\kmodels \d_{\!R_\dag\,}\beta\}$, from which $\A\not\models \e_{\pi'}[\vala]$.

It remains to establish $\A \models \e$. As $\A$ is a subalgebra of $\Fslo$, it is enough
to show that $\F\models\e$. Take any Kripke model $\M=(\F,\V)$ and
suppose 
$\M,x\kmodels\sigma$, 
for some point $x$ in $\F$. By Proposition~\ref{p:hom}, 
there is a homomorphism $h \colon \M_\sigma \to \M$ with $h(r_\sigma)=x$ for the root $r_\sigma$ of
 $\T_\sigma$. We show that 
 $\M,x\kmodels\tau$. 
 Indeed,
note first that $x$ cannot be a non-root point in $\G'$ because otherwise we would have a homomorphism 
$f\colon\T_\sigma\to\G$ with $f(r_\sigma)\ne r$, contradicting Proposition~\ref{p:eandpi}~$(ii)$.
Thus, $x$ is a point in $\pi(\G)$. We define a map $h' \colon \T_\sigma \to \pi(\G)$ by taking 
\[
h'(y)=\left\{
\begin{array}{ll}
h(y), & \mbox{ if $h(y)$ is in $\pi(\G)$},\\
z, & \mbox{ if $h(y)=z'$ for some $z'$ in $\G'$}.
\end{array}
\right.
\]
As $\sigma$ does not contain $\d_{\!R_\dag}$, it is easy
to see that $h'$ is a homomorphism from $\M_\sigma$ to the Kripke model $\M^-=\bigl(\pi(\G), \V\mathop{\restriction} \pi(\G)\bigr)$ with $h'(r_\sigma)=h(r_\sigma)=x$, and so 
$\M^-,x\kmodels\sigma$ 
by Proposition~\ref{p:hom}.
Then we have 
$\M^-,x\kmodels\tau$ and so $\M,x\kmodels\tau$
by \eqref{psieholds}, as required.
\end{proof}


\subsection{Universal Horn correspondents with equality.} 

Example~\ref{e:simple} showed that the \sptheory{} $\SPi + \{\d p\imp p\}$ with the correspondent 
 \[
\forall x,y \, (R(x,y) \to (x = y))
\]
 is incomplete. It is easy to find an extension of this \sptheory{} that is \complex:
 
 \begin{theorem}\label{t:simple1}
  The \sptheory{} 
   $\SPi + \{\erefl,\d p\imp p\}=\SPi + (\Axsfour\cup\{\d p\imp p\})=$ $\SPi + (\Axsfive\cup\{\d p\imp p\})=\SPi + (\Axsfive'\cup\{\d p\imp p\})$
 is \complex, and so \eqcomp. 
\end{theorem}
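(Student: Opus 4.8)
The plan has two parts: first, to identify the four \sptheories{} in the statement with one another, and second, to prove complexity of the common logic using the embedding template of \S\ref{Jembed}.

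For the first part, the point I would make is that in any \SLOa{} the \spequation{} $\erefl$ (that is, $p\le\d p$) together with $\d p\imp p$ (that is, $\d p\le p$) force $\d$ to be the identity operator, $\d a=a$ for every $a$; and conversely, once $\d$ is the identity, each of $\etrans$, $\esym$, $\eeuc$ collapses to a trivial semilattice inequality ($a\le a$, $b\land a\le a\land b$, $a\land b\le a\land b$, respectively). Hence the four axiom sets $\{\erefl,\d p\imp p\}$, $\Axsfour\cup\{\d p\imp p\}$, $\Axsfive\cup\{\d p\imp p\}$ and $\Axsfive'\cup\{\d p\imp p\}$ define one and the same variety $\mathcal V$ of \SLOa s --- the meet-semilattices carrying the identity operator --- and therefore, by \eqref{birkhoff}, the four \sptheories{} coincide. (Equivalently, one checks directly in the calculus of \S\ref{algcalc} that $\etrans$, $\esym$, $\eeuc$ are $\{\erefl,\d p\imp p\}\vdash_{\SLO}$-derivable; e.g.\ $\etrans$ is a substitution instance of $\d p\imp p$.) So it remains to show that $\SPi+\{\erefl,\d p\imp p\}$ is \complex.

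For the second part, let $\A=(A,\land,\top,\d)$ be a \SLOa{} in $\mathcal V$, so $\d a=a$ for all $a\in A$. I take the frame $\F=(A,R^\F)$ with $R^\F=\{(a,a)\mid a\in A\}$, the identity relation on $A$; then $R^\F$ is reflexive and satisfies $R^\F(x,y)\to x=y$, so $\F\models\{\erefl,\d p\imp p\}$. I would then apply the template of \S\ref{Jembed} to the map $\eta\colon a\mapsto\{b\in A\mid b\le a\}$: once $R^\F$ is shown to satisfy \eqref{Jemb1} and \eqref{Jemb2}, that section gives that $\eta$ is an \SPa-embedding of $\A$ into $\Fslo$. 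Both conditions are immediate for the identity relation: for \eqref{Jemb1}, $(a,b)\in R^\F$ means $a=b$, hence $a\le a=\d b$; for \eqref{Jemb2}, if $a\le\d b=b$ then $c=a$ works, since $c\le b$ and $(a,a)\in R^\F$. This embeds an arbitrary $\A\in\mathcal V$ into $\Fslo$ with $\F\models\{\erefl,\d p\imp p\}$, so $\SPi+\{\erefl,\d p\imp p\}$ is \complex, and hence \eqcomp{} by the implication recorded in \S\ref{embed}.

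The argument is short and I do not expect a real obstacle. The one point that needs care is the choice of $R^\F$: the ``default'' choice \eqref{JRclassic} gives $(a,b)\in R^\F\iff a\le\d b=b$, i.e.\ the order relation, whose complex algebra validates $\etrans$ but \emph{not} $\d p\imp p$; one must instead take $R^\F$ to be literally the identity relation, and then \eqref{Jemb1}--\eqref{Jemb2} hold for the trivial reason above.
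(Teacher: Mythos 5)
Your proposal is correct and follows essentially the same route as the paper: the paper likewise identifies the four axiomatisations (via $\{\d p\imp p\}\vdash_{\SLO}\etrans$ and $\{\erefl,\d p\imp p\}\vdash_{\SLO}\eeuc$) and then embeds any $\A\models\{\erefl,\d p\imp p\}$ into $\Fslo$ for $\F=(A,R^\F)$ with $R^\F$ the identity relation, checking \eqref{Jemb1}--\eqref{Jemb2} exactly as you do. Your closing remark that the default choice \eqref{JRclassic} must be replaced by the identity relation is precisely the point the paper also flags in \S\ref{Jembed}.
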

\begin{proof}
It is easy to see that $\{\d p\imp p\}\vdash_{\SLO}\etrans$ and $\{\erefl,\d p\imp p\}\vdash_{\SLO}\eeuc$,
and so all four \sptheories{} are the same. 
The correspondent of this \sptheory{} is 
\begin{equation}\label{idarrow}
\Phi=\ \forall x, y \, \bigl(R(x,y)\leftrightarrow x=y\bigr).
\end{equation}
Let $\A = (A, \land,  \top, \d)$ be a \SLOa\ with \mbox{$\A \models \{\erefl,\d p\imp p\}$.} 
For all $a,b\in A$, we set $(a,b)\in R^\F$ iff $a=b$.
Then $R^\F$ clearly satisfies $\Phi$,  \eqref{Jemb1} and  \eqref{Jemb2}.
So, as is shown in \S\ref{Jembed}, $\Aa$ embeds to $\Fslo$ for $\F=(A,R^\F)$.
\end{proof}


Our next example is the \spequation{}  
\[
\efun=\bigl( \d p \land \d q \imp \d(p\land q)\bigr)
\]
saying that $\d$ is a semilattice homomorphism.%
\!\footnote{In \cite{jackson2004}, any $\A\in\SLO_{\Esfour}$ validating $\efun$ is called \emph{entropic\/}.}
The first-order correspondent of $\efun$ is \emph{functionality\/}:
\[
\forall x,y,z \, \bigl(R(x,y) \land R(x,z) \to y=z\bigr).
\]
It is easy to see that 
$\{\erefl,\d p\imp p\}\vdash_{\SLO}\efun$ and $\{\erefl,\efun\}\vdash_{\SLO}\eeuc$,
and so
\begin{multline*}
\SPi + \{\erefl,\efun,\d p\imp p\} = 
\SPi + (\Axsfour\cup\{\efun,\d p\imp p\}) =\\
\SPi + (\Axsfive\cup\{\efun,\d p\imp p\})=
\SPi + (\Axsfive'\cup\{\efun,\d p\imp p\})
\end{multline*}
is the same \sptheory{} as in Theorem~\ref{t:simple1}.

\begin{theorem}\label{t:fun}
\begin{itemize}
\item[$(i)$]
The \sptheory{} $\SPi + \{\efun\}$ is \complex, and so \eqcomp.
\end{itemize}
On the other hand, the following \sptheories{} are incomplete:
\begin{itemize}
\item[$(ii)$]
$\SPi + \{\d p\imp p,\efun\}$;

\item[$(iii)$]
$\SPi + \{\erefl,\efun\}=\SPi + \{\erefl,\eeuc,\efun\}$;

\item[$(iv)$]
$\SPi + (\Axsfour\cup\{\efun\})=
\SPi + (\Axsfive\cup\{\efun\})=
\SPi + (\Axsfive'\cup\{\efun\})$;

\item[$(v)$]
$\SPi + \{\esym,\efun\}$.
\end{itemize}
\end{theorem}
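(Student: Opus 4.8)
The plan is to treat the five items separately, using the filter-embedding template of \S\ref{Tembed} for the positive item~$(i)$ and concrete refuting \SLOa s (drawn as subalgebras of complex algebras, as in \S\ref{draw}) for the four incompleteness claims $(ii)$--$(v)$. For item~$(i)$, suppose $\A=(A,\land,\top,\d)$ is a \SLOa\ with $\A\models\efun$. The frame I would build is $\G=(\FA,R^\G)$ with a \emph{non-classical} accessibility relation on filters tailored to functionality: put $(U,V)\in R^\G$ iff $\d[V]\subseteq U$ \emph{and} $V$ is the largest such filter, i.e.\ $V=\{a\mid \d a\in U\}$ (one should first check this set is a filter, using $\efun$ together with monotonicity: if $\d a,\d b\in U$ then $\d(a\land b)=\d a\land\d b\in U$, and up-closedness is immediate from monotonicity). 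With this definition each $U$ has at most one $R^\G$-successor, so $\G\models\forall x,y,z\,(R(x,y)\land R(x,z)\to y=z)$, i.e.\ $\G\models\efun$. Condition~\eqref{emb1} holds by construction, and \eqref{emb2} holds because, if $\d a\in U$, then $V=\{b\mid\d b\in U\}$ is a filter containing $a$ with $(U,V)\in R^\G$. Hence $f\colon a\mapsto\{U\in\FA\mid a\in U\}$ embeds $\A$ into $\Gslo$, and $\SPi+\{\efun\}$ is complex, hence complete.

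For the incompleteness items, in each case I would first compute a $\models_{\CA}$-consequence that fails in \SLO. For $(ii)$, note that $\{\d p\imp p,\efun\}$ forces every $R^\F$ to be the identity (by Example~\ref{e:simple}), so as in that example $\{\d p\imp p,\efun\}\models_{\CA}(p\land\d\top\imp\d p)$; but the $3$-element \SLOa\ of Example~\ref{e:simple} (elements $b\le a\le\top$, $\d a=\d b=b$, $\d\top=a$) validates $\d p\imp p$, validates $\efun$ trivially since $\d$ is injective-on-its-image there in the relevant sense (one checks $\d x\land\d y=\d(x\land y)$ directly on the three elements), and refutes $p\land\d\top\imp\d p$. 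For $(iii)$ and $(iv)$, the correspondent of $\{\erefl,\efun\}$ is ``$R$ reflexive and functional'', which forces $R$ to be the identity again, so $\{\erefl,\efun\}\models_{\CA}(\d p\imp p)$ and indeed $\models_{\CA}\Phi$ of \eqref{idarrow}; I would then exhibit a \SLOa\ validating $\erefl$ and $\efun$ (and, for the $\mf{S4}$/$\mf{S5}$ variants, also $\etrans$, $\esym$) but refuting $\d p\imp p$ — a natural candidate is a small closure-algebra-like \SLOa, e.g.\ two comparable elements $a<\top$ with $\d a=\top=\d\top$, which validates $\erefl$ and $\efun$ and $\etrans$, and a symmetric variant for the $\mf{S5}$ case; here one checks $\efun$ since $\d$ is constant on $\{a,\top\}$. (The fact that all the displayed \sptheories\ in $(iii)$ and $(iv)$ coincide uses the $\vdash_{\SLO}$-derivabilities noted just before the theorem, $\{\erefl,\d p\imp p\}\vdash_{\SLO}\efun$ and $\{\erefl,\efun\}\vdash_{\SLO}\eeuc$.) For $(v)$, the correspondent of $\{\esym,\efun\}$ says $R$ is symmetric and functional; a symmetric functional relation is a partial matching of isolated edges and loops, which makes $R$ an involution wherever defined, and one extracts a $\models_{\CA}$-consequence such as $\d\d p\imp p\lor\d p$ (or, combined with seriality-style reasoning available from the tree models, a cleaner formula like $q\land\d\d p\imp\d(q\land p)\lor\d\d(q\land p)$), then refutes it in a concrete symmetric \SLOa\ with $\d$ a homomorphism.

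The order I would carry this out is: $(i)$ first (it is the conceptual core and fixes the right filter-frame construction), then $(ii)$, then $(iii)$--$(iv)$ together (they collapse to one \SLOa\ plus the derivability bookkeeping), then $(v)$. The main obstacle I expect is item~$(v)$: unlike $(ii)$--$(iv)$, the class of symmetric functional frames is not simply ``the identity'', so the $\models_{\CA}$-consequence to be refuted is less obvious and must be chosen so that (a) it genuinely follows over all symmetric functional frames — this is where one uses the homomorphism characterisation of Corollary~\ref{c:krchar}~$(i)$ to read off the consequence from the forced edge-structure — and (b) it is refuted by an \SLOa\ in which $\d$ is a semilattice homomorphism satisfying $\esym$; balancing these two requirements is the delicate part. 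A secondary subtlety throughout is verifying that the small refuting \SLOa s actually \emph{are} \SLOa s and actually validate $\efun$, since $\efun$ is not automatic; but on three- or four-element algebras this is a routine finite check, best presented via the subalgebra-of-$\Fslo$ drawing so the ``modal logic minded reader'' can verify $\efun$ by inspecting the admissible valuations.
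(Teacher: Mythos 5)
Items $(i)$--$(iv)$ of your plan are essentially the paper's proof: the filter relation $V=\d^{-1}[U]$ (your ``largest $V$ with $\d[V]\subseteq U$'' is the same relation) is exactly the construction used for $(i)$, item $(ii)$ reuses Example~\ref{e:simple} just as the paper does (and your finite check that that algebra validates $\efun$ is correct), and for $(iii)$--$(iv)$ your two-element algebra $a<\top$ with $\d a=\d\top=\top$ together with the $\models_{\CA}$-consequence $\d p\imp p$ is a legitimate (in fact slightly more economical) variant of the paper's three-element witnesses, with the same derivability bookkeeping for the theory equalities.

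Item $(v)$, however, has a genuine gap as written. Both candidate consequences you propose, $\d\d p\imp p\lor\d p$ and $q\land\d\d p\imp\d(q\land p)\lor\d\d(q\land p)$, contain disjunction, so they are not \spequations{} at all: incompleteness of $\SPi+\{\esym,\efun\}$ must be witnessed by an \spequation{} $\e$ with $\{\esym,\efun\}\models_{\CA}\e$ but $\{\esym,\efun\}\not\models_{\SLO}\e$, and a formula with $\lor$ cannot even be evaluated in a \SLOa, so your plan of ``refuting it in a concrete symmetric \SLOa'' cannot be carried out for these formulas. The repair is immediate from your own observation that a symmetric functional relation is an involution where defined: if $xRy$ and $yRz$ then symmetry gives $yRx$ and functionality at $y$ forces $z=x$, so the disjunction-free \spequation{} $\d\d p\imp p$ is already valid in every symmetric functional frame. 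This is the witness the paper uses: it is refuted by the three-element algebra $b\le a\le\top$ with $\d b=b$, $\d a=\d\top=\top$ (which validates $\esym$ and $\efun$, and has $\d\d a=\top\not\le a$); your own two-element algebra $a<\top$, $\d a=\d\top=\top$ also validates $\esym$ and $\efun$ and refutes $\d\d p\imp p$, so once the witness is corrected your construction closes case $(v)$ as well.
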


\begin{proof}
$(i)$
Let $\A = (A, \land, \top, \d)$ be a \SLOa\ such that $\A \models \efun$ and let $\FA$ be the set  of all filters of $\Aa$.
We claim that in this case
 $\d^{-1}[U]$ is either empty or a filter, for every $U\in\FA$. Indeed, $\d^{-1}[U]$ is up-closed by the
 monotonicity of $\d$, and $\land$-closed by $\efun$. Now
we set, for $U,V\in\FA$,
\[
(U,V)\in R^\G\quad\Longleftrightarrow\quad V=\d^{-1}[U].
\]
Then $R^\G$ is clearly functional and satisfies \eqref{emb1}. It is readily seen that it satisfies \eqref{emb2} as well. So, as shown in \S\ref{Tembed}, $\Aa$ embeds into
$\Gslo$ for $\G=(\FA,R^\G)$.


$(ii)$ 
The proof in Example~\ref{e:simple} again works. Note that the \SLOa{} in Fig.~\ref{f:fivepointslo} shows that the \sptheories{} $\SPi + \{\d p\imp p,\efun\}$ and $\SPi + \{\d p\imp p\}$ are not the same.
\begin{figure}[ht]
\centering
\begin{tikzpicture}[scale=.5]
\draw [very thick] (1,1) circle [radius=0.15];
\draw [fill] (1,3) circle [radius=0.1];
\draw [fill] (-1,4) circle [radius=0.1];
\draw [fill] (3,4) circle [radius=0.1];
\draw [very thick] (1,5) circle [radius=0.15];
\draw [very thin] (1,1.3) -- (1,2.8);
\draw [very thin] (1.2,3.1) -- (2.8,3.9);
\draw [very thin] (.8,3.1) -- (-.8,3.9);
\draw [very thin] (1.3,4.85) -- (2.8,4.1);
\draw [very thin] (.7,4.85) -- (-.8,4.1);

\draw [very thick,->] (1.2,2.7) to [out=-65, in=65] (1.3,1);
\draw [very thick,->] (3,3.8) to [out=-100, in=-15] (1.2,3);
\draw [very thick,->] (-1,3.8) to [out=-80, in=-165] (.8,3);
\end{tikzpicture}
\caption{A \SLOa{} showing that $\{\d p\imp p\}\not\vdash_{\SLO}\efun$.}\label{f:fivepointslo}
\end{figure}

$(iii)$
The correspondent of this \sptheory{} is $\Phi$ in \eqref{idarrow}. 
So it is easy to see that $\{\erefl,\efun\}\models_{\CA}\etrans$.
On the other hand, take the \SLOa\ $\Aa$ with 3 elements 
 $b\le a\le \top$, $\d b =a$ and $\d a=\d \top=\top$.
Then $\Aa\models\erefl$ and $\Aa\models\efun$, but $\d\d b\not\le \d b$.

$(iv)$ 
The correspondent of this \sptheory{} is again $\Phi$ in \eqref{idarrow}. 
So it is easy to see that $\Axsfour\cup\{\efun\}\models_{\CA}\d p\imp p$.
On the other hand, take the \SLOa\ $\Aa$ with 3 elements 
 $b\le a\le \top$, $\d b =b$ and $\d a=\d \top=\top$.
Then $\Aa\models \Axsfour \cup \{\efun\}$, but $\d a=\top\not\le a$.

$(v)$ 
It is easy to see that $\d\d p\imp p$ is valid in any symmetric and functional frame, and so
$\{\esym,\efun\}\models_{\CA}\d\d p\imp p$. 
On the other hand, in the \SLOa\ $\Aa$ from item $(iv)$, $\d\d a=\top\not\leq a$. 
\end{proof}

\begin{remark}\label{r:noJ}\em
Theorem~\ref{t:fun}~$(i)$ cannot be proved using the simpler embedding of \S\ref{Jembed}.
Indeed, take the \SLOa\ $\A = (A, \land, \top,\d)$ where 
$A=\{a_n\mid n< \omega\}$, $a_n\land a_m=a_n$ whenever $n\geq m$ (and so 
$\top=a_0$), 
and $\d a_n=\top$ for all $n<\omega$. Then $\Aa\models\efun$ clearly holds.
On the other hand, we claim that there is no functional $R^\F\subseteq A\times A$ satisfying \eqref{Jemb2}.
Indeed, suppose $R^\F$ satisfies \eqref{Jemb2}. Since for every $n<\omega$, we have
$\top\le\d a_n$, it follows from \eqref{Jemb2} that, for any $n<\omega$, there exists
$m\geq n$ such that $(\top,a_m)\in R^\F$, and so $R^\F$ is not functional.
\end{remark}

\subsection{Negative universal Horn correspondents.} 

Finally, we discuss \spequations{} with Horn correspondents of the form `false' and
`something implies false'\!.
Recall that Example~\ref{e:simple2} showed that 
the \sptheory{} $\SPi + \{\d p\imp \d q\}$ with the correspondent $R=\emptyset$---or $\forall x,y\,(R(x,y) \to \bot)$, to be more precise---is incomplete.
The next theorem gives an incomplete extension of this \sptheory:

\begin{theorem}\label{t:simple2}
The \sptheory{}
$\SPi + \{\erefl,\d p\imp \d q\}=\SPi+(\Axsfour \cup \{\d p\imp \d q\}) =\SPi+(\Axsfive \cup \{\d p\imp \d q\})$
is incomplete.  
\end{theorem}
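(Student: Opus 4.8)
The plan is to show that the three displayed axiom sets all axiomatise one and the same \sptheory, call it $\SPL$, and then to derive incompleteness from the fact that $\CA_{\SPL}=\emptyset$ — so that $\models_{\CA}$ makes every \spequation{} "valid" — while the algebraic consequence relation $\models_{\SLO}$ is far from trivial.

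First I would verify the three equalities. Because $\d p\imp\d q$ is a schema, substituting \spterms{} for $p$ and $q$ shows that $\d\sigma\imp\d\tau$ is derivable from $\{\d p\imp\d q\}$ for all \spterms{} $\sigma,\tau$; in particular $\etrans=(\d\d p\imp\d p)$ is such a substitution instance, so $\{\d p\imp\d q\}\vdash_{\SLO}\etrans$. Likewise, composing the instance $\d p\imp\d(p\land\d q)$ with the axiom instance $q\land\d p\imp\d p$ via the transitivity rule of \eqref{rules} yields $q\land\d p\imp\d(p\land\d q)$, i.e.\ $\{\d p\imp\d q\}\vdash_{\SLO}\esym$. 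Hence $\etrans$ and $\esym$ lie in $\SPi+\{\erefl,\d p\imp\d q\}$, and since conversely $\{\erefl,\d p\imp\d q\}\subseteq\Axsfour\cup\{\d p\imp\d q\}\subseteq\Axsfive\cup\{\d p\imp\d q\}$, all three sets derive exactly the same \spequations, so they axiomatise one \sptheory\ $\SPL$. (The same computation in fact shows $\SLO_{\{\erefl,\d p\imp\d q\}}$ to be precisely the class of meet-semilattices with top element in which $\d$ is constantly $\top$, although this is not needed below.)

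Next I would observe that $\CA_{\SPL}=\emptyset$. By Example~\ref{e:simple2}, a frame validates $\d p\imp\d q$ exactly when its accessibility relation is empty, whereas $\erefl$ forces that relation to be reflexive; since frame domains are non-empty by definition, no frame can validate both axioms. Consequently $\{\erefl,\d p\imp\d q\}\models_{\CA}\e$ holds vacuously for every \spequation{} $\e$, in particular for $\e=(\d\top\imp p)$.

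Finally, for incompleteness I would reuse the two-element \SLOa{} of Example~\ref{e:simple2}: elements $a\le\top$ with $\d a=\d\top=\top$. It validates $\d p\imp\d q$ (all values of $\d$ are $\top$), and it validates $\erefl$ as well, since $x\le\top=\d x$ for every element $x$; but it refutes $\d\top\imp p$ under the valuation sending $p$ to $a$, because $\d\top=\top\not\le a$. Hence $\{\erefl,\d p\imp\d q\}\not\models_{\SLO}(\d\top\imp p)$, so $\models_{\CA}$ and $\models_{\SLO}$ disagree on $\d\top\imp p$ and $\SPL$ is incomplete, completeness being independent of the chosen axiomatisation. I do not anticipate any genuine obstacle here; the only points requiring a little care are confirming that the collapse $\CA_{\SPL}=\emptyset$ is really caused by the interaction of $\erefl$ with $\d p\imp\d q$, and checking the two routine Sahlqvist correspondence facts used above.
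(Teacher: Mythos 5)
Your proposal is correct and follows essentially the same route as the paper's proof: the same derivations of $\etrans$ and $\esym$ as consequences of $\d p\imp\d q$ to identify the three axiomatisations, the same target \spequation{} $\d\top\imp p$ obtained from the absence of frames for the axioms, and the same two-element \SLOa{} ($a\le\top$, $\d a=\d\top=\top$) refuting it algebraically. The only cosmetic slip is calling $q\land\d p\imp\d p$ an ``axiom instance''\,---\,it is obtained from the axioms $p\land q\imp q\land p$ and $p\land q\imp p$ by one application of the transitivity rule\,---\,which does not affect the argument.
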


\begin{proof}
It is easy to see that $\{\d p\imp \d q\}\vdash_{\SLO}\etrans$ and $\{\d p\imp \d q\}\vdash_{\SLO}\esym$,
and so all three \sptheories{} are the same.
As there is no frame validating $\d p\imp \d q$, we have
$\{\erefl,\d p\imp \d q\}\models_{\CA}\d \top\imp p$.
On the other hand, we have $\{\erefl,\d p\imp \d q\} \not\models_{\SLO} \d\top\imp p$, as the \SLOa\ $\Aa$ with 2 elements 
$a\le \top$ such that $\d a=\d \top=\top$ validates both $\erefl$ and $\d p\imp \d q$, but refutes $\d\top\imp p$, since $\d\top=\top\not\leq a$. 
\end{proof}

Of course, not every \sptheory{} without frames is incomplete.
We call an \sptheory{} $\SPL$ \emph{trivial} if $(p\imp q)\in \SPL$. 
Then we clearly have:

\begin{proposition}\label{p:trivial}
Every trivial \sptheory{} is \eqcomp.
\end{proposition}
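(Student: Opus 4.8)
The statement to prove is Proposition~\ref{p:trivial}: every trivial \sptheory{} is \eqcomp.

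\medskip

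The plan is to unwind the definitions and observe that triviality forces both consequence relations to collapse to the ``everything is derivable'' relation. First I would recall that a \sptheory{} $\SPL = \SPi + \Ec$ is \emph{trivial} when $(p \imp q) \in \SPL$, i.e.\ $\Ec \vdash_{\SLO} p \imp q$. By the soundness direction of \eqref{birkhoff}, this gives $\Ec \models_{\SLO} p \imp q$, hence every $\Aa \in \SLO_\Ec$ validates $p \imp q$. For any \spequation{} $\e = (\sigma \imp \tau)$ and any valuation $\vala$ in such an $\Aa$, substituting $p \mapsto \sigma[\vala]$ and $q \mapsto \tau[\vala]$ into $p \imp q$ yields $\sigma[\vala] \le \tau[\vala]$, so $\Aa \models \e$. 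Thus $\Ec \models_{\SLO} \e$ for \emph{every} \spequation{} $\e$. (Alternatively, and even more directly at the syntactic level: from $\Ec \vdash_{\SLO} p \imp q$ one gets $\Ec \vdash_{\SLO} \sigma \imp \tau$ by substitution, so $\e \in \SPL$ for all $\e$, and then $\Ec \models_{\SLO} \e$ by \eqref{birkhoff}.)

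\medskip

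Then I would invoke the soundness implication \eqref{soundness}, namely $\Ec \models_{\SLO} \e \Rightarrow \Ec \models_{\CA} \e$, to conclude that $\Ec \models_{\CA} \e$ holds for every \spequation{} $\e$ as well. (Concretely, this is also immediate: triviality means there is no frame at all validating $\Ec$, since $\F \models p \imp q$ fails in any frame with at least one point in the range of a valuation—or one simply notes $\CA_\Ec = \CA_{\SPL} = \emptyset$ because the complex algebra of any nonempty frame refutes $p \imp q$—so $\Ec \models_{\CA} \e$ is vacuously true.) Hence for every \spequation{} $\e$ we have both $\Ec \models_{\CA} \e$ and $\Ec \models_{\SLO} \e$, and in particular the biconditional $\Ec \models_{\CA} \e \Leftrightarrow \Ec \models_{\SLO} \e$ holds trivially. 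By the definition of \eqcomp ness in \S\ref{eqcompness}, $\SPL$ is \eqcomp.

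\medskip

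There is essentially no obstacle here—the proof is a one-line unwinding of definitions, and the only mild subtlety is deciding whether to argue on the algebraic side (all \SLOa s in $\SLO_\Ec$ validate everything, because substitution into $p \imp q$ is unrestricted) or on the frame side (there are simply no frames, so $\models_{\CA}$ is vacuous); both routes work and I would present the short algebraic-to-frame-via-\eqref{soundness} version for uniformity with the rest of the section. The resulting proof is:

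\begin{proof}
Suppose $\SPL = \SPi + \Ec$ is trivial, so that $\Ec \vdash_{\SLO} p \imp q$. For any \spequation{} $\e = (\sigma \imp \tau)$, substituting $p \mapsto \sigma$ and $q \mapsto \tau$ shows $\Ec \vdash_{\SLO} \sigma \imp \tau$, hence $\Ec \models_{\SLO} \e$ by \eqref{birkhoff}, and so also $\Ec \models_{\CA} \e$ by \eqref{soundness}. Thus both $\Ec \models_{\CA} \e$ and $\Ec \models_{\SLO} \e$ hold for every \spequation{} $\e$, and in particular $\Ec \models_{\CA} \e$ iff $\Ec \models_{\SLO} \e$. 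Therefore $\SPL$ is \eqcomp.
\end{proof}
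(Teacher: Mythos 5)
Your proof is correct: triviality gives $\Ec\vdash_{\SLO}\sigma\imp\tau$ for every \spequation{} by substitution, so $\Ec\models_{\SLO}\e$ holds universally by \eqref{birkhoff}, and with \eqref{soundness} both consequence relations hold for every $\e$, making the biconditional trivially true. The paper leaves this proposition as immediate ("we clearly have"), and your argument is exactly the intended definitional unwinding.
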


%
The 
following two theorems imply that \sptheories{} axiomatised by \spequations{} of the form \mbox{$\dR\dS p\imp q$} 
(with negative universal Horn correspondent $\forall x,y,z\,\bigl((R(x,y)\land S(y,z) \to \bot\bigr)$)
behave differently in the uni- and multi-modal cases.

\begin{theorem}\label{t:unicomp}
 $\SPi + \{\d^n p\imp q\}$ is \complex, and so \eqcomp, for any $n\geq 1$. 
\end{theorem}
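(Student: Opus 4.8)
The plan is to establish complexity directly by embedding an arbitrary SLO validating $\d^n p \imp q$ into the complex algebra of a suitable frame, following the template of \S\ref{Jembed}. Let $\Aa = (A, \land, \top, \d)$ be a SLO with $\Aa \models \d^n p \imp q$; unwinding the definition, this says $\d^n a \le b$ for all $a,b \in A$, equivalently $\d^n a$ is the least element of $\Aa$. In particular $\Aa$ has a least element, call it $\bot_\Aa$, and $\d^n a = \bot_\Aa$ for every $a$; also $\d\bot_\Aa \le \bot_\Aa$, so $\d\bot_\Aa = \bot_\Aa$. The first task is to observe that $\CA_{\{\d^n p\imp q\}}$ is the class of frames $\F=(W,R^\F)$ in which there is no $R^\F$-path of length $n$, i.e.\ $\F \models \forall x_0,\dots,x_n\,\bigl(\bigwedge_{i<n} R(x_i,x_{i+1}) \to \bot\bigr)$; this is the negative universal Horn correspondent mentioned just before the theorem.

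Next I would define the frame. Use the carrier $\F = (A, R^\F)$ with $R^\F$ given by the classical definition \eqref{JRclassic}, namely $(a,b)\in R^\F \iff a \le \d b$, and let $\eta\colon a \mapsto \{b \in A \mid b \le a\}$ as in \S\ref{Jembed}. By the general argument there, $\eta$ is an SP-embedding of $\Aa$ into $\Fslo$ as soon as $R^\F$ satisfies \eqref{Jemb1} and \eqref{Jemb2}, and with the definition \eqref{JRclassic} both hold automatically: \eqref{Jemb1} is immediate, and \eqref{Jemb2} holds by taking $c = \d b$ when $a \le \d b$ (then $c \le b$ is \emph{not} what we need — rather we need $c\le b$; here we instead use that $a\le\d b$ already witnesses $(a,b)\in R^\F$ with $c=b$, giving $c\le b$ trivially). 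So the only thing left to check is that $\F$ is a frame for $\d^n p \imp q$, i.e.\ that $\F$ has no $R^\F$-path of length $n$. Suppose $a_0 R^\F a_1 R^\F \cdots R^\F a_n$. Then $a_{n-1} \le \d a_n$, and inductively $a_{i} \le \d^{\,n-i} a_n$; in particular $a_0 \le \d^n a_n = \bot_\Aa$, so $a_0 = \bot_\Aa$. But then $a_0 \le \d a_1$ forces nothing contradictory yet — so I should be more careful: actually a path of length $n$ starting \emph{anywhere} gives $a_0 \le \d^n a_n = \bot_\Aa$; repeating from $a_1$, if the path could be extended, we'd get similar constraints, but length exactly $n$ already only pins down $a_0=\bot_\Aa$, which is consistent. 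The resolution is that a path of length $n$ in $\F$ is allowed to exist; what we must rule out is that it \emph{causes} $\Fslo \not\models \d^n p \imp q$. But $\Fslo \models \d^n p\imp q$ iff $\F$ has no $R^\F$-path of length $n$, and $\F$ \emph{may} have such a path. So the classical $R^\F$ does not work directly, and this is the main obstacle.

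To fix this, I would modify $R^\F$ so as to cut paths at length $n-1$: define $(a,b)\in R^\F$ iff $a \le \d b$ \emph{and} $a \not\le \bot_\Aa$ would be wrong too; rather, stratify $A$ by the "nilpotency rank" $\mathrm{rk}(a) = $ the least $k$ with $\d^k a = \bot_\Aa$ (which exists and is $\le n$), and put $(a,b)\in R^\F$ iff $a\le\d b$ and $\mathrm{rk}(a) = \mathrm{rk}(b)+1$, with a separate treatment of $\bot_\Aa$ (make $\bot_\Aa$ a dead point, or put a reflexive-free copy). One then rechecks \eqref{Jemb1} trivially and \eqref{Jemb2} by noting that if $a\le\d b$ with $a\ne\bot_\Aa$ then $\mathrm{rk}(a)\le\mathrm{rk}(b)+1$, and one can shrink $b$ to $b\land$ (an appropriate power of $\d$ of something) to force the rank to drop by exactly one — this is the delicate bookkeeping step. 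With ranks strictly decreasing along $R^\F$-edges and bounded by $n$, every $R^\F$-path has length $\le n-1$, so $\F \models \d^n p \imp q$, and $\eta$ embeds $\Aa$ into $\Fslo$. Hence $\SPi + \{\d^n p \imp q\}$ is complex, and therefore complete by the implication recorded in \S\ref{embed}. I expect the rank-dropping verification of \eqref{Jemb2} to be the crux; the rest is routine once the stratified $R^\F$ is in place.
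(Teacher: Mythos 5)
You correctly diagnose the obstacle: with the classical relation \eqref{JRclassic} on the full carrier $A$, the bottom element $\d^n\top$ is below $\d b$ for every $b$, so $\F$ does contain $R^\F$-chains of length $n$ and $\Fslo\not\models\d^n p\imp q$. But your repair fails. First, your rank condition is oriented the wrong way: if $a\le\d b$ and $\mathrm{rk}(a)=r\ge 1$, then $\d^{r-1}a\le\d^{r}b$ and $\d^{r-1}a\ne\d^n\top$, hence $\d^{r}b\ne\d^n\top$, i.e.\ $\mathrm{rk}(b)\ge r+1$. So along the classical relation the rank of the \emph{target} is strictly larger than that of the source (except when the source is the bottom), and your requirement $\mathrm{rk}(a)=\mathrm{rk}(b)+1$ is never satisfiable: the relation you define is empty, so it cannot satisfy \eqref{Jemb2} or represent $\d$. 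Second, keeping $\d^n\top$ in the carrier as a ``dead point'' breaks the embedding regardless of how the relation is fixed: $\d^n\top\in\eta(\d b)$ for every $b$, but $\d^n\top\notin\d^+\eta(b)$ if it has no outgoing arrows, so $\eta$ does not preserve $\d$. Third, even with the orientation corrected, ranks range over $\{0,\dots,n\}$, so ``strictly monotone ranks bounded by $n$'' still permits a chain of length $n$ unless the rank-$0$ point is removed; and the exact-rank-drop version of \eqref{Jemb2}, which is where all the work would lie, is left unverified.

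The paper's proof avoids stratification altogether, and the missing idea is simply to delete the bottom from the \emph{carrier}: treat $|A|=1$ separately, and otherwise take $A^-=A\setminus\{\d^n\top\}$, the classical relation $a\le\d b$ restricted to $A^-$, and $\eta(a)=\{b\in A^-\mid b\le a\}$. Condition \eqref{Jemb1} is immediate, and the required analogue of \eqref{Jemb2} holds with $c=b$, because $a\in A^-$ and $a\le\d b$ force $b\in A^-$ (otherwise $a\le\d^{n+1}\top\le\d^n\top$, using $\{\d^n p\imp q\}\vdash_{\SLO}\d^{n+1}\top\imp\d^n\top$); injectivity also survives, since a witness $a\not\le b$ is automatically in $A^-$. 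Finally, an $R^\F$-chain $a_0,\dots,a_n$ in $A^-$ would give $a_0\le\d^n a_n\le\d^n\top$, contradicting $a_0\in A^-$. In other words, the chains of length $n$ you observed all pass through $\d^n\top$, and removing that single point from the frame both kills them and keeps the embedding machinery of \S\ref{Jembed} intact; no rank bookkeeping is needed.
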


\begin{proof}
The correspondent of $\{\d^n p\imp q\}$ is `there is no $R$-chain of length $n$'\!.
Let $\A = (A, \land,  \top, \d)$ be a \SLOa\ with $\A\models\d^n p\imp q$. Then $\d^n\top$ is the $\le$-smallest
element in $\A$. If $|A|=1$ then $\A$ is clearly embeddable into $\Fslo$ of any frame $\F$.
So let $|A|>1$ and $A^-=A \setminus\{\d^n\top\}$.
For any $a,b\in A^-$, let $(a,b)\in R^\F$ iff $a\le\d b$, and let $\F=(A^-,R^\F)$. Then $R^\F$ clearly satisfies \eqref{Jemb1}. 
As $\{\d^n p\imp q\}\vdash_{\SLO}\d^{n+1}\top\imp\d^n\top$, we also have the following analogue of
 \eqref{Jemb2}: 
 \[
 \forall a\in A^-,b\in A\ \bigl[ a\le \dR b \   \Rightarrow \ \exists c\in A^-\, \bigl(c\le b \ \mbox{ and }\  (a,c)\in R^\F\bigr)\bigr].
 \]
A proof similar to the one in \S\ref{Jembed} shows that the map
 $\eta(a)=\{b\in A^-\mid b\le a\}$ embeds $\A$ into $\Fslo$. Now, suppose $\F$ contains an
 $R^\F$-chain of length $n$, that is, there are $a_0,a_1,\dots,a_n\in A^-$ with $(a_i,a_{i+1})\in R^\F$ for 
 all $i<n$. Then $a_0\le\d^n a_n$, and so $a_0=\d^n\top$, contrary to $a_0\in A^-$.
\end{proof}

\begin{theorem}\label{t:multiincomp}
Let $\sigma$ be an \spterm{} containing $\dS$ but not $\dR$,
and let $q$ be a propositional
variable not occurring in $\sigma$. Then $\SPi + \{\dR\sigma\imp q\}$ is incomplete.
\end{theorem}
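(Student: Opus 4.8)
The plan is to exhibit, exactly as in Examples~\ref{e:simple} and \ref{e:simple2}, a single \spequation{} $\e$ with $\{\dR\sigma\imp q\}\models_{\CA}\e$ but $\{\dR\sigma\imp q\}\not\models_{\SLO}\e$; I will take $\e = (\dS\dR\sigma\imp q)$. The intuition behind this choice is that the ``free'' variable $q$ forces the value of $\dR\sigma$ down to the least element in any \SLOa{} that has one, and, correspondingly, $\dR\sigma$ is unsatisfiable in every Kripke frame for the axiom; but in a Kripke frame $\dS$ applied to the empty set is empty, whereas in a \SLOa{} $\dS$ of the bottom element need not be the bottom element --- and it is precisely this gap that produces incompleteness. (Note that the unimodal analogue of this trick is unavailable, in agreement with Theorem~\ref{t:unicomp}.)

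For the $\models_{\CA}$ part I first show that if a frame $\F$ validates $\dR\sigma\imp q$ then $\dR\sigma$ is unsatisfiable in $\F$. Indeed, suppose $\M=(\F,\V)$ and $w$ satisfy $\M,w\kmodels\dR\sigma$. Changing $\V$ so that $q$ becomes false everywhere does not affect the truth of $\dR\sigma$ at $w$ (as $q$ does not occur in $\sigma$), so it produces a model $\M'$ of $\F$ with $\M',w\kmodels\dR\sigma$ and $\M',w\not\kmodels q$, contradicting $\F\models\dR\sigma\imp q$. Since $\dR\sigma$ then holds at no point of any model over $\F$, the same is true of $\dS\dR\sigma$ (a point satisfying $\dS\dR\sigma$ would have an $S$-successor satisfying $\dR\sigma$). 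Hence $\F\models\dS\dR\sigma\imp q$, and therefore $\{\dR\sigma\imp q\}\models_{\CA}\dS\dR\sigma\imp q$.

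For the $\not\models_{\SLO}$ part I use the two-element \SLOa{} $\A = (\{\bot,\top\},\land,\top,(\d_T)_{T\in\R})$ with $\bot<\top$, $\land$ the meet, $\top$ the top element, $\dR$ constantly equal to $\bot$, and $\d_T$ constantly equal to $\top$ for every $T\in\R$ with $T\ne R$. Since constant unary functions are monotone, $\A$ validates \eqref{idem}--\eqref{dmon}, so $\A$ is indeed a \SLOa. For any valuation $\vala$ we have $\sigma[\vala]\in\{\bot,\top\}$, whence $(\dR\sigma)[\vala]=\dR(\sigma[\vala])=\bot\le\vala(q)$; thus $\A\models\dR\sigma\imp q$. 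On the other hand, taking $\vala(q)=\bot$ and using that $S\ne R$ (which holds because $\sigma$ contains $\dS$ but not $\dR$) we get $(\dS\dR\sigma)[\vala]=\dS(\bot)=\top\not\le\bot=\vala(q)$, so $\A\not\models\dS\dR\sigma\imp q$. By the soundness direction of \eqref{birkhoff} this gives $\{\dR\sigma\imp q\}\not\models_{\SLO}\dS\dR\sigma\imp q$. Combining the two parts, $\SPi + \{\dR\sigma\imp q\}$ is not \eqcomp.

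There is no serious obstacle here: the argument is short and in the spirit of the preceding examples. The only points worth stressing are (a) the hypothesis ``$\sigma$ contains $\dS$ but not $\dR$'' is used merely to guarantee that $S$ and $R$ are two distinct operators of the signature, which is all that is needed to build $\A$ with $\dR\equiv\bot$ and $\dS\equiv\top$; and (b) one must check carefully that, although $\dR\sigma$ is frame-unsatisfiable, the \spequation{} $\dS\dR\sigma\imp q$ nonetheless fails to be an \SLOa-consequence of the axiom --- this is not automatic, and holds exactly because a monotone operator need not send the least element of a \SLOa{} to itself.
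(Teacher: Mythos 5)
Your proof is correct and follows essentially the same route as the paper's: both arguments rest on the observation that $\dR\sigma$ is unsatisfiable in any frame validating the axiom (because $q$ is fresh), and both refute the resulting Kripke consequence in the same two-element \SLOa{} in which $\dR$ is constantly the bottom element and $\dS$ constantly $\top$. The only cosmetic difference is the witnessing implication ($\dS\dR\sigma\imp q$ in your version versus $\dS\dR\sigma\imp\dR\sigma$ in the paper's), which does not affect the argument.
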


\begin{proof}
It is easy to see that $\{\dR\sigma\imp q\}\models_{\CA}\dS\dR\sigma\imp\dR\sigma$ for any $\dS$
 in $\sigma$. 
On the other hand, take the \SLOa\ $\Aa$ with 2 elements $a\le \top$ such that $\dR a=\dR \top=a$
and $\dS a=\dS\top=\top$, for $S\ne R$.
Then $\A$ validates $\dR\sigma\imp q$ but refutes $\dS\dR\sigma\imp\dR\sigma$, and so  
$\SPi + \{\dR\sigma\imp q\}$ is incomplete.
\end{proof}


\section{Completeness of \sptheories{} with existential  correspondents}\label{sec:exist}
We now extend Theorem~\ref{Th:2} to \spequations{} whose 
correspondents contain existential quantifiers (but no disjunction) on the right-hand side of implication.

It is not hard to see, using distributivity of $\land$ over $\lor$, that the correspondent $\Psi_{\e}$
of an \spequation{} $\e=(\sigma\imp\tau)$ (see \eqref{eq:corr1} and \eqref{globalcorr}) can be equivalently rewritten as
\begin{multline}\label{eq:corr2}
\Psi_{\e} ~=~ \forall \varel{v}_0,\dots,\varel{v}_{n_\sigma} \, \Big( \bigwedge_{\stackrel{i,j\leq n_{\sigma},\,R\in\R}{(v_i,v_j)\in R_\sigma}} R(\varel{v}_i,\varel{v}_j)  \to \\
\exists \varel{u}_0,\dots,\varel{u}_{n_\tau} \, \big( (\varel{v}_0 = \varel{u}_0) \land \!\!\bigwedge_{\stackrel{i,j\leq n_\tau,\,R\in\R}{(u_i,u_j)\in R_\tau}}\!\! R(\varel{u}_i,\varel{u}_j) 
\land  \bigvee_{f\in Y_{\sigma,\tau}} \ \bigwedge_{\stackrel{(u_i,p)\in X_\tau}{f(u_i,p)=v_j}} \bigl(\varel{u}_i=\varel{v}_j \big)\Big),
\end{multline}
where $X_\tau=\{(u_i,p)\mid p\mbox{ is a variable and }u_i\in\V_\tau(p)\}$  and
\[
Y_{\sigma,\tau}=\{f\mid f\colon X_\tau\to W_\sigma, f(u_i,p)\in\V_\sigma(p)\mbox{ for all }(u_i,p)\in X_\tau\}.
\]
If the right-hand side of $\Psi_{\e}$ does not contain any disjunction, this means that $Y_{\sigma,\tau}$
consists of a single `choice' function $f$. 

\begin{theorem}\label{t:exist}
Any \sptheory{} axiomatised by \spequations{} $\sigma \imp \tau$ such that 
\begin{itemize}
\item[$(i)$] every variable in $\tau$ occurs in $\sigma$ exactly once,

\item[$(ii)$] 
$|W_\tau|\geq 2$ and all points in any $\V_\tau(p)$ are leaves of $\T_\tau$,

\item[$(iii)$] 
$\V_\tau(p)\cap\V_\tau(q)=\emptyset$ whenever $p\ne q$
\end{itemize}
is complex, and so \eqcomp.
\end{theorem}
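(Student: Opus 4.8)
The plan is to show that every such sp-logic $\SPi+\Ec$ is \complex, from which \eqcomp ness follows. Let $\A=(A,\land,\top,\dR)_{R\in\R}$ be a SLO with $\A\models\Ec$. I would take the frame $\F=(A,R^\F)_{R\in\R}$ with $R^\F$ given by \eqref{JRclassic}, i.e.\ $(a,b)\in R^\F$ iff $a\le\dR b$; by \S\ref{Jembed} the map $\eta\colon a\mapsto\{b\in A\mid b\le a\}$ is then an embedding of $\A$ into $\Fslo$, so the only thing to prove is that $\F\models\e$, equivalently $\F\models\Psi_{\e}$, for every $\e=(\sigma\imp\tau)\in\Ec$. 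Note that (i) forces $\V_\sigma(p)=\{v_{j(p)}\}$ to be a singleton for each variable $p$ of $\tau$, so in \eqref{eq:corr2} the set $Y_{\sigma,\tau}$ is a singleton and $\Psi_{\e}$ has no disjunction; unpacked, $\F\models\Psi_{\e}$ says that whenever $a_0,\dots,a_{n_\sigma}\in A$ satisfy $(a_i,a_j)\in R^\F$ for all $(v_i,v_j)\in R_\sigma$, there are $b_0,\dots,b_{n_\tau}\in A$ with $b_0=a_0$, with $(b_i,b_j)\in R^\F$ for all $(u_i,u_j)\in R_\tau$, and with $b_i=a_{j(p)}$ whenever $u_i\in\V_\tau(p)$.

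To manufacture the $b_i$ I would first pass back into $\A$. Define a valuation $\vala$ in $\A$ by $\vala(p)=a_{j(p)}$ if $p$ occurs in $\tau$ and $\vala(p)=\top$ otherwise. A routine induction on $W_\sigma$ from the leaves up, using monotonicity of the $\dR$ together with $a_i\le\dR a_j$ for $(v_i,v_j)\in R_\sigma$, shows $a_i\le\trm_{v_i}^{\M_\sigma}[\vala]$ for every $v_i$ (the variable conjuncts cause no trouble because, by (i), $\vala(p)=a_i$ exactly when $v_i\in\V_\sigma(p)$ and $p$ occurs in $\tau$, and otherwise $\vala(p)=\top\ge a_i$). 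In particular $a_0\le\sigma[\vala]$, whence $a_0\le\tau[\vala]$ since $\A\models\sigma\imp\tau$. Transporting $\vala$ along $\eta$ to the Kripke valuation $\V''=\eta\circ\vala$ on $\F$ (so that $\tau[\V'']=\eta(\tau[\vala])$, $\eta$ being an \SPa-homomorphism), we get $\M'',a_0\kmodels\tau$ for $\M''=(\F,\V'')$, hence by Proposition~\ref{p:hom} a homomorphism $h\colon\M_\tau\to\M''$ with $h(r_\tau)=a_0$; in particular $h(u_i)\le a_{j(p)}$ whenever $u_i\in\V_\tau(p)$.

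It remains to correct $h$ on the variable-carrying leaves, and this is the only delicate point. Set $h'(u)=a_{j(p)}$ if $u\in\V_\tau(p)$ for some $p$ (well-defined by (iii)) and $h'(u)=h(u)$ otherwise. I claim $h'$ is still a frame homomorphism $\T_\tau\to\F$. If $(u,u')\in R_\tau$ then $u$ is not a leaf, so by (ii) $h'(u)=h(u)$; if moreover $h'(u')=h(u')$ the arrow survives because $h$ is a homomorphism, while if $u'\in\V_\tau(p)$ then from $h(u)\le\dR h(u')$ and $h(u')\le a_{j(p)}$ we obtain $h(u)\le\dR a_{j(p)}$ by monotonicity, i.e.\ $(h'(u),h'(u'))\in R^\F$. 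Finally, $|W_\tau|\ge 2$ makes $r_\tau$ a non-leaf, so $h'(r_\tau)=h(r_\tau)=a_0$ by (ii). Thus $b_i:=h'(u_i)$ are the required witnesses, establishing $\F\models\Psi_{\e}$; as this holds for each $\e\in\Ec$, $\F\models\Ec$ and $\SPi+\Ec$ is \complex.

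I expect the last paragraph to be the crux: recognising that $h$ must be perturbed to meet the equality constraints of $\Psi_{\e}$, and that conditions (ii) (variables only at leaves, so the perturbation touches no outgoing arrow), (iii) (disjointness, so the perturbation is single-valued) and $|W_\tau|\ge 2$ (so the root is left untouched) are precisely what keeps the perturbed map a frame homomorphism with root value $a_0$. Everything else is bookkeeping, and the result specialises to Theorem~\ref{Th:2} (take $\tau=\dRo p_v$); the incompleteness of $\SPi+\{\d p\imp p\}$ in Example~\ref{e:simple}, where $|W_\tau|=1$, shows that (ii) cannot be dropped.
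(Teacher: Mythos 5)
Your proof is correct, and its skeleton is the paper's: you use the same embedding of $\A$ into $\Fslo$ via \eqref{JRclassic}, essentially the same valuation $\vala$ built from the tuple $(a_i)$, the same induction giving $a_i\le\trm_{v_i}^{\M_\sigma}[\vala]$ (the paper's \eqref{lefteq}), and the same pivotal step $a_0\le\sigma[\vala]\le\tau[\vala]$ from $\A\models\sigma\imp\tau$. Where you genuinely diverge is in how the existential witnesses for $\Psi_{\e}$ are produced. The paper defines the $b_j$ explicitly by downward recursion on $\T_\tau$ ($b_j=a_k$ at a $p$-carrying leaf, $\top$ at a variable-free leaf, a meet of diamonds of the children at internal nodes, $a_0$ at the root), gets the arrow conditions for $j\ne 0$ for free from this definition, and then uses a second induction, $\trm_{u_j}^{\M_\tau}[\vala]\le b_j$ (the paper's \eqref{righteq}), solely to verify the arrows out of the root via $a_0\le\tau[\vala]\le\dR\trm_{u_j}^{\M_\tau}[\vala]\le\dR b_j$. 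You instead transport $\vala$ through $\eta$, read $a_0\le\tau[\vala]$ as $\M'',a_0\kmodels\tau$, extract a homomorphism $h\colon\M_\tau\to\M''$ by Proposition~\ref{p:hom}, and then perturb $h$ at the variable-carrying leaves to upgrade the inequalities $h(u_i)\le a_{j(p)}$ to the equalities demanded by $\Psi_{\e}$; monotonicity preserves the incoming arrows, and conditions $(ii)$, $(iii)$ and $|W_\tau|\ge 2$ make the perturbation single-valued, confined to leaves, and root-preserving --- exactly the role these hypotheses play for the well-definedness of the paper's $b_j$ and its root case. Your route trades the paper's second induction for the perturbation lemma and lets Proposition~\ref{p:hom} do the bookkeeping, which is arguably more transparent; the paper's explicit $b_j$ keep the whole argument inside the algebra. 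A small bonus of your version: defining $\vala(p)$ only for variables of $\tau$ (and $\top$ otherwise) sidesteps the question of what $\vala$ should do on variables occurring several times in $\sigma$ but not in $\tau$, which the paper's definition leaves implicit.
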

\begin{proof}
Suppose that $\e = (\sigma \imp \tau)$ and the points of $W_\sigma= \{v_0, \dots, v_{n_\sigma}\}$ and 
$W_\tau = \{u_0, \dots, u_{n_\tau}\}$ are listed so
that $(v_i, v_j)\in R_\sigma$ or $(u_i, u_j)\in R_\tau$ imply
$i < j$ (and so $v_0 = r_\sigma$ and $u_0 = r_\tau$). 
By \eqref{eq:corr2} and $(i)$, 
\begin{multline*}
\Psi_{\e} ~=~ \forall \varel{v}_0,\dots,\varel{v}_{n_\sigma} \, \Big( \bigwedge_{\stackrel{i,j\leq n_{\sigma},\,R\in\R}{(v_i,v_j)\in R_\sigma}} R(\varel{v}_i,\varel{v}_j)  \to \\
\exists \varel{u}_0,\dots,\varel{u}_{n_\tau} \, \big( (\varel{v}_0 = \varel{u}_0) \land \!\!\bigwedge_{\stackrel{i,j\leq n_\tau,\,R\in\R}{(u_i,u_j)\in R_\tau}}\!\! R(\varel{u}_i,\varel{u}_j) 
\land  \bigwedge_{\stackrel{u_i\in \V_\tau(p)}{\V_\sigma(p)=\{v_j\}}} \bigl(\varel{u}_i=\varel{v}_j \big)\Big).
\end{multline*}
Let $\A = (A, \land, \top, \dR)_{R\in \R}$ be a \SLOa\  validating $\e$.
It is shown in \S\ref{Jembed} that $\Aa$ can be embedded
into $\Fslo$ for the frame $\F=(A,R^\F)_{R\in\R}$ with $R^\F$ defined by 
\eqref{JRclassic}. We claim that $\F\models\Psi_{\e}$.
Indeed, for each point $v_i$ in $\T_\sigma$, take some $a_{i}\in A$ such that
$(v_i,v_j)\in R_\sigma$ imply $(a_{i},a_{j})\in R^\F$, that is, $a_i\le\dR a_j$. We need to find $b_0,\dots,b_m\in A$ 
such that $b_0=a_0$ and the following properties hold, for $j=0,\dots,m$:
\begin{itemize}
\item[(a)] 
$b_j=a_k$ if $u_j \in\V_\tau(p)$ and $\V_\sigma(p)=\{v_k\}$, for some variable $p$;
\item[(b)] 
if $(u_j,u_k)\in R_\tau$ then $(b_j,b_k)\in R^\F$, that is, $b_j\le \dR b_k$.
\end{itemize}
We define inductively $b_m, \dots, b_0$ by taking:
\[
b_j=\left\{
\begin{array}{ll}
a_k, & \mbox{if $u_j\in\V_\tau(p)$, $\V_\sigma(p)=\{v_k\}$ for some $p$},\\
\top, & \mbox{if $u_j$ is a leaf  and there is no $p$} \mbox{ with $u_j\in\V_\tau(p)$},\\
\d_{R^1}b_{k_1}\land\dots\land\d_{R^\ell}b_{k_\ell}, 
& \mbox{if $j\ne 0$ and $u_j$ has $\ell>0$ successors}\\
& \hspace*{1.5cm}\mbox{$u_{k_1},\dots,u_{k_\ell}$ with $(u_j,u_{k_i})\in R^i_\tau$},\\
a_0, & \mbox{if $j=0$}.
\end{array}
\right.
\]
By $(i)$--$(iii)$, $b_j$ is well-defined. We clearly have (a) and (b), for $j\ne 0$. 
To show (b) for $j=0$,
take the following valuation $\vala$ in $\Aa$, for all variables $p$:
\[
\vala(p)=\left\{
\begin{array}{ll}
a_k, & \mbox{if $p$ occurs in $\sigma$ and $\V_\sigma(p)=\{v_k\}$},\\
\top, & \mbox{otherwise}.
\end{array}
\right.
\]
By (i), $\vala$ is well-defined.
Let $\tau_j=\trm_{u_j}^{\M_\tau}$, for $j=m,\dots,1$ (cf.\ \S\ref{mterms}). 
We prove  that 
\begin{equation}\label{righteq}
\tau_j[\vala]\le b_j,\qquad\mbox{for every $j=m,\dots,1$}.
\end{equation}
Indeed, if $u_j$ is a leaf, then either $\tau_j=\top=b_j$, or $\tau_j=p$ for some variable $p$,
and so $\tau_j[\vala]=a_k=b_j$ for $k$ with $\V_\sigma(p)=\{v_k\}$. 
Now suppose inductively that, for some $j\geq 1$, \eqref{righteq} holds for every $k$ with $j<k\leq m$.
If $u_j$ has $\ell>0$ successors $u_{k_1},\dots,u_{k_\ell}$ with $(u_j,u_{k_i})\in R^i_\tau$,
then each $\d_{R^i}\tau_{k_i}$ is a conjunct of $\tau_j$, and so, by IH and monotonicity,
\[
\tau_j[\vala]\le \d_{R^1}\tau_{k_1}[\vala]\land\dots\land \d_{R^\ell}\tau_{k_\ell}[\vala]\le
 \d_{R^1}b_{k_1}\land\dots\land  \d_{R^\ell}b_{k_\ell}=b_j,
\]
as required in \eqref{righteq}. 
Next, let $\sigma_i=\trm_{v_i}^{\M_\sigma}$, $i=0,\dots,n$.
%
%
We prove that
\begin{equation}\label{lefteq}
a_{i}\le\sigma_i[\vala],\qquad\mbox{for every $i=n,\dots,0$}.
\end{equation}
Indeed, if $v_i$ is a leaf in $\T_\sigma$, then either $\sigma_i=\top$ or $\sigma_i[\vala]=\vala(p)=a_i$
for some $p$.
Now suppose inductively that \eqref{lefteq} holds for every $\ell$ with $i<\ell\leq n$.
We have $a_{i}\le\dR a_{\ell}$ for every $v_\ell$ with $(v_i,v_\ell)\in R_\sigma$.
So, by IH and monotonicity, we have 
\[
a_{i}\le a_{i}\land \bigwedge_{(v_i,v_\ell)\in R_\sigma} \dR \sigma_\ell[\vala]\le \sigma_i[\vala],
\]
as required in \eqref{lefteq}. In particular, $a_0\le \sigma_0[\vala]=\sigma[\vala]$.
As $\Aa\models (\sigma\imp\tau)[\vala]$, 
\begin{equation}\label{rooteq}
 a_0\le \tau[\vala].
 \end{equation}
 %
 %
Finally, to prove (b) for $j=0$, suppose that $R_\tau(u_0,u_j)$ for some $j$.
Then $\dR\tau_j$ is a conjunct of $\tau$, therefore $ \tau[\vala]\le \dR\tau_j[\vala]$, and so
$b_0=a_0\le \dR\tau_j[\vala]\le \dR b_j$ by \eqref{rooteq}, \eqref{righteq} and monotonicity, thus establishing $(b_0,b_j)\in R^\F$.
\end{proof}

Theorem~\ref{t:exist} has
the following consequence about the \SP-fragments of modal 
\emph{grammar logics}~\cite{farinasdelCerroPenttonen88}:

\begin{corollary}\label{co:grammar}
Every \sptheory{} axiomatised by \spequations{} of the form 
\[
\d_{\!R_1\,} \dots \d_{\!R_n\,} p \imp  \d_{\!S_1\,} \dots \d_{\!S_m\,}p,\qquad\mbox{ for $n\ge 0$, $m>0$},
\]
is \complex, and so \eqcomp. 
\end{corollary}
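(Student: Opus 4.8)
The plan is to derive Corollary~\ref{co:grammar} directly from Theorem~\ref{t:exist}. Fix any axiom $\e = (\sigma \imp \tau)$ with $\sigma = \d_{\!R_1\,}\dots\d_{\!R_n\,}p$ and $\tau = \d_{\!S_1\,}\dots\d_{\!S_m\,}p$, where $n \ge 0$, $m > 0$ and all $R_i, S_j \in \R$. I would verify that $\e$ meets the three hypotheses $(i)$--$(iii)$ of Theorem~\ref{t:exist}; applying that theorem to an arbitrary set of \spequations{} of this shape then shows that the corresponding \sptheory{} is \complex, hence \eqcomp.

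First, the easy conditions. For $(i)$: the only variable occurring in $\tau$ is $p$, and it occurs in $\sigma$ exactly once, whether $n = 0$ (in which case $\sigma = p$) or $n > 0$. For $(iii)$: since $p$ is the only variable appearing in $\tau$, the requirement that $\V_\tau(p) \cap \V_\tau(q) = \emptyset$ for $p \ne q$ is vacuous. The one point needing an explicit (but routine) argument is $(ii)$: unwinding the inductive definition of the tree model in \S\ref{sec:treemodel}, since $\tau$ is a chain of $m$ diamonds over a single variable, $\T_\tau = (W_\tau, R_\tau)_{R \in \R}$ is a single branch $r_\tau = w_0, w_1, \dots, w_m$ with $(w_{j-1}, w_j) \in S_j^{\T_\tau}$ for $1 \le j \le m$, and with $\V_\tau(p) = \{w_m\}$ and $\V_\tau(q) = \emptyset$ for $q \ne p$. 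Hence $|W_\tau| = m + 1 \ge 2$ (here $m > 0$ is used) and the unique point $w_m$ of $\V_\tau(p)$ is a leaf of $\T_\tau$, so $(ii)$ holds.

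There is no genuine obstacle here — the corollary is just a specialisation of Theorem~\ref{t:exist}. The only thing to be careful about is matching the syntactic shape of the grammar-logic axioms to hypotheses $(i)$--$(iii)$: reading off $\T_\tau$ correctly, noting that $m > 0$ is exactly what forces $|W_\tau| \ge 2$, and observing that in a pure diamond-chain the variable $p$ is true only at the leaf endpoint.
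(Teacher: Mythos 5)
Your proposal is correct and is exactly the paper's intended argument: Corollary~\ref{co:grammar} is stated in the paper as an immediate consequence of Theorem~\ref{t:exist}, and your verification of conditions $(i)$--$(iii)$ (single occurrence of $p$ in $\sigma$, the $\tau$-tree being a chain of $m+1\ge 2$ points with $p$ true only at its leaf, and the disjointness condition holding vacuously) is precisely the routine check the paper leaves implicit.
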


In particular, the \sptheories{} $\SPi + \{\eden\}$ with
$\eden=(\d p\imp\d\d p)$ (defining \emph{density\/}) and 
$\SPi + \{\dR\dS p\imp\dS\dR p,\,\dS\dR p\imp\dR\dS p\}$ 
(defining \emph{commutativity}) are \complex\ and \eqcomp. On the other hand, Corollary~\ref{co:grammar} gives examples of \eqcomp\ but undecidable finitely axiomatisable  \sptheories~\cite{Tseitin58,Shehtman82,Chagrov-Shehtman95,Baader03,Beklemishev15b}, which clearly cannot have the finite frame property.

The following theorem will be used in \S\ref{sec:undec}.
\begin{theorem}\label{t:funcomm}
Suppose $R$, $S$ and $Z$ are distinct elements in some signature $\R$, and let 
$\Ec$ consist of the following \spequations:
$\efun$ for $\dR$, $\efun$ for $\dS$,
%
\begin{align}
\label{comm}
& \dR\dS p\imp\dS\dR p\ \mbox{ and }\ \dS\dR p\imp\dR\dS p,\\
\label{smallest}
& \dZ \top\imp p, \\
\label{allnormal}
& \dX\dZ\top\imp \dZ\top,\quad\mbox{for all $X\in\R$}.
\end{align}
Then the \sptheory{} $\SPi + \Ec$ is  \complex.
\end{theorem}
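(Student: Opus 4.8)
The plan is to combine the embedding via filters from \S\ref{Tembed} with the observation that $\efun$ makes $\d^{-1}$ behave like a partial semilattice homomorphism, while the variable-free axiom \eqref{smallest} forces $\dZ\top$ to be the $\le$-smallest element of any SLO validating $\Ec$, and \eqref{allnormal} says that the principal filter generated by $\dZ\top$ is `$\dX$-closed' in a suitable sense. Concretely, let $\A=(A,\land,\top,\dR,\dS,\dZ,\dots)$ be a SLO with $\A\models\Ec$. Since $\A\models\dZ\top\imp p$, the element $o=\dZ\top$ is the $\le$-smallest element of $\A$, and by \eqref{allnormal} we have $\dX o\le o$, hence $\dX o=o$, for every $X\in\R$. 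I would then build the frame $\G=(\FA,R^\G,S^\G,Z^\G,\dots)$ on the set $\FA$ of all filters of $\A$, defining $X^\G$ for the `generic' operators $X\notin\{R,S\}$ by the classical clause \eqref{Rclassic}, and defining $R^\G$ and $S^\G$ in the functional style used in the proof of Theorem~\ref{t:fun}~$(i)$:
\[
(U,V)\in R^\G\ \Longleftrightarrow\ V=\dR^{-1}[U]\ \text{or}\ \dR^{-1}[U]=\emptyset\ \text{and}\ V=\{o\}\!\!\uparrow,
\]
and symmetrically for $S^\G$, where $\{o\}\!\!\uparrow=\FA$ is in fact the improper filter $A$ (since $o$ is smallest), so the second clause just says: if $\dR^{-1}[U]$ is empty, send $U$ to the top filter $A$. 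By the $\efun$-argument from Theorem~\ref{t:fun}~$(i)$, $\dR^{-1}[U]$ is always either empty or a filter, so $R^\G$ (and $S^\G$) is well-defined and total and functional.

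Next I would verify the three kinds of first-order conditions that $\G$ must satisfy. First, the embedding conditions \eqref{emb1}--\eqref{emb2} for \emph{every} operator of $\R$: for the generic $X$ this is exactly the classical computation of \S\ref{Tembed}; for $R$ and $S$ one checks \eqref{emb1} using $\dR[\dR^{-1}[U]]\subseteq U$, and \eqref{emb2} using that $\dR a\in U$ implies $a\in\dR^{-1}[U]$, and in the degenerate branch ($\dR^{-1}[U]=\emptyset$, forcing $\dR a\le o$, i.e.\ $\dR a=o$, so $\dR a\in U$ means $o\in U$, i.e.\ $U=A$) one takes $V=A$ and indeed $a\in A$ while $\dR[A]\subseteq U=A$. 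Second, functionality of $R^\G$ and $S^\G$ (immediate from the definition) gives $\G\models\efun$ for $\dR$ and $\dS$. Third, the commutativity conditions \eqref{comm}: I need $\forall U,V,W\,(R^\G(U,V)\land S^\G(V,W)\to\exists W'\,(S^\G(U,W')\land R^\G(W',W)))$ and its mirror; since $R^\G,S^\G$ are functional, this amounts to showing $\dS^{-1}[\dR^{-1}[U]]=\dR^{-1}[\dS^{-1}[U]]$ (with the degenerate cases handled by the $A$-convention), and this equality follows directly from $\A\models\dR\dS p\imp\dS\dR p$ and $\A\models\dS\dR p\imp\dR\dS p$: $a\in\dS^{-1}[\dR^{-1}[U]]$ iff $\dR\dS a\in U$ iff (by the two implications and up-closure) $\dS\dR a\in U$ iff $a\in\dR^{-1}[\dS^{-1}[U]]$. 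Finally, the axiom \eqref{smallest} $\dZ\top\imp p$: one must check $\G\models\dZ\top\imp p$, i.e.\ $Z^{\G,+}(\FA)$ is the $\le_{\Gslo}$-smallest subset of $\FA$, equivalently $Z^{\G,+}(\FA)=\emptyset$; using the classical clause $(U,V)\in Z^\G\Leftrightarrow\dZ[V]\subseteq U$ this says no filter $U$ can have $\dZ[V]\subseteq U$ for any $V$ --- but that is false unless we are careful, so here instead I would exploit that $f(\dZ\top)=f(o)=\{U\mid o\in U\}=\{A\}$ is already the image of the smallest element, and check that the chosen $Z^\G$ makes $f(\dZ\top)=Z^{\G,+}(f(\top))=Z^{\G,+}(\FA)$; since by \eqref{allnormal}/\eqref{smallest} we have $\dX o=o$ for all $X$, the filter $A$ behaves as a $Z$-sink and one arranges $Z^\G$ so that $Z^{\G,+}(\FA)=\{A\}=f(o)=f(\dZ\top)$, which is exactly what preservation of $\dZ$ at $\top$ requires, and $\{A\}$ is the $\le$-smallest \emph{admissible} value so $\dZ\top\imp p$ holds on the general frame $\A\hookrightarrow\Gslo$. (One small subtlety: \eqref{smallest} holds in $\A$ and hence, since $f$ is a SLO-embedding preserving all operations including $\dZ$, it automatically holds in the subalgebra $f(\A)$ of $\Gslo$; we do not need it to hold in the full complex algebra $\Gslo$, only in a frame \emph{for} $\Ec$, so in fact it suffices to check the frame conditions for $R$, $S$, commutativity and the variable-free \eqref{allnormal}, then appeal to Proposition~\ref{p:varfree}-style reasoning for \eqref{smallest} --- indeed $\SPi+\Ec_1$ with $\Ec_1$ the non-variable-free part is complex by the above, and \eqref{allnormal} together with \eqref{smallest} are variable-free, so Proposition~\ref{p:varfree} finishes the job.)

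So the cleanest route is: $(1)$ let $\Ec_0$ be the variable-free part $\{\text{\eqref{smallest}},\text{\eqref{allnormal}}\}$ and $\Ec_1=\Ec\setminus\Ec_0$ the part consisting of the two $\efun$ axioms and the commutativity pair \eqref{comm}; $(2)$ show $\SPi+\Ec_1$ is complex by the filter embedding above, defining $R^\G,S^\G$ functionally and all other $X^\G$ classically, and verifying \eqref{emb1}--\eqref{emb2} for all operators, functionality of $R^\G,S^\G$, and the commutativity frame condition; $(3)$ invoke Proposition~\ref{p:varfree} to conclude $\SPi+(\Ec_1\cup\Ec_0)=\SPi+\Ec$ is complex. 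The main obstacle is Step $(2)$: getting the functional definitions of $R^\G$ and $S^\G$ to simultaneously satisfy the embedding conditions \eqref{emb1}--\eqref{emb2} (which normally want the \emph{classical} non-functional relation) \emph{and} the commutativity condition --- the degenerate branch (when $\dR^{-1}[U]$ or $\dS^{-1}[U]$ is empty) has to be routed consistently to the top filter $A$, and one must check that this convention does not break $\dR[\dR^{-1}[U]]\subseteq U$ or the commutation square; this is exactly the kind of place where, as Remark~\ref{r:noJ} warns for the single-operator case, the simpler element-based embedding of \S\ref{Jembed} fails and the filter construction is genuinely needed.
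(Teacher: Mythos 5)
Your overall machinery---the filter frame, functional relations for $R$ and $S$ given by $V=\d_R^{-1}[U]$, the classical clause \eqref{Rclassic} for the remaining operators, and commutativity of the frame relations derived from \eqref{comm} by chasing preimages---is essentially the paper's proof. The gaps are in your treatment of \eqref{smallest} and \eqref{allnormal}. First, $\dZ\top\imp p$ is \emph{not} variable-free: its right-hand side is the variable $p$, while Proposition~\ref{p:varfree} requires both sides to be built from $\top$, $\land$ and diamonds; only \eqref{allnormal} qualifies, so your step $(3)$ does not go through. Your fallback---that it suffices for \eqref{smallest} to hold on the embedded subalgebra, i.e.\ on admissible valuations---conflicts with the definition of complexity: you must embed $\A$ into $\Gslo$ for a frame $\G$ \emph{validating} $\Ec$, and $\G\models\e$ iff $\Gslo\models\e$, so ``holds on the general frame'' buys nothing. \spequationscap{} of this shape genuinely threaten completeness (cf.\ Example~\ref{e:simple2} and Theorem~\ref{t:multiincomp}), so they cannot be waved through.

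Second, your totalisation of $R^\G$ (routing a filter $U$ with $\d_R^{-1}[U]=\emptyset$ to the improper filter $A$) breaks the embedding: with $(U,A)\in R^\G$ and $a\in A$ for every $a$, we get $U\in\dRca f(a)$ for all $a$, whereas $U\notin f(\dR a)$ for any $a$ (that is exactly what $\d_R^{-1}[U]=\emptyset$ says), so $f(\dR a)\neq\dRca f(a)$; equivalently, \eqref{emb1} fails because $\dR[A]\subseteq U$ is impossible for such $U$. Your in-line justification that $\d_R^{-1}[U]=\emptyset$ forces $\dR a=\dZ\top$ and hence $U=A$ is a non sequitur. Totality is not needed anyway: the correspondent of $\efun$ only demands \emph{at most} one successor. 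The missing idea is the paper's: discard the filters containing $\dZ\top$, working only over \emph{proper} filters, use \eqref{allnormal} to check that $\d_R^{-1}[U]$ and $\d_S^{-1}[U]$ are empty or proper whenever $U$ is proper, and interpret $Z$ by the empty relation. Then the frame genuinely validates \eqref{smallest} and \eqref{allnormal}, the map sending $a$ to the set of proper filters containing $a$ still preserves everything (including $\dZ$, since $\dZ a=\dZ\top$ for all $a$ and $f(\dZ\top)=\emptyset$), and the rest of your argument---\eqref{emb1}--\eqref{emb2} for all operators $\ne Z$ and the commutation square---goes through as you describe.
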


\begin{proof}
Let $\A = (A, \land, \top, \dX)_{X\in\R}$ be a \SLOa\ such that $\A \models \Ec$.
Then by \eqref{smallest}, $\dZ\top$ is the $\le$-smallest element in $A$.
We call a filter $U$ of $\A$ \emph{proper} if $\dZ\top\notin U$.
As shown in the proof of Theorem~\ref{t:fun} $(i)$, for $X\in\{R,S\}$ and any filter $U$ of $\A$,  $\d_X^{-1}[U]$ is 
either empty or a filter. By \eqref{allnormal},  
$\d_X^{-1}[U]$  is either empty or a proper filter whenever $U$ is proper.
Let $\PFA$ be the set of all proper filters of $\Aa$.
We set, for $U,V\in\PFA$,
\begin{align*}
(U,V)\in X^\G\quad& \Longleftrightarrow\quad V=\d^{-1}_X[U],\quad\mbox{for $X\in\{R,S\}$},\\
(U,V)\in X^\G\quad& \Longleftrightarrow\quad \dX[V]\subseteq U,\quad\mbox{for $X\in\R\setminus\{R,S,Z\}$}.
\end{align*}
Then $R^\G$ and $S^\G$ are functional. Moreover, every $X\in\R\setminus\{Z\}$ satisfies \eqref{emb1} and \eqref{emb2} as well (with respect to $\PFA$), and so the map
$f(a)= \{U \in \PFA \mid a \in U\}$, for $a\in A$,
embeds $\A$ into $\Gslo$ for $\G=(\PFA,R^\G,S^\G,\emptyset,X^\G)_{X\in\R\setminus\{R,S,Z\}}$.
Clearly, $\G$ validates \eqref{smallest} and \eqref{allnormal}. 
It remains to show that $\G$ validates \eqref{comm}, that is, $R^\G$ and $S^\G$ commute. Suppose that, say, $(U,V)\in R^\G$ and $(V,W)\in S^\G$, and let $Y=\d^{-1}_S[U]$. We claim that $Y\ne\emptyset$,
$(U,Y)\in S^\G$ and $(Y,W)\in R^\G$. Indeed, take some $a\in W$. Then $\dS a\in V$, and so $\dR\dS a\in U$.
By \eqref{comm}, $\dS\dR a\in U$, and so $\dR a\in Y$, whence $Y\ne\emptyset$ and 
$a\in \d^{-1}_R [Y]$. Therefore, $(U,Y)\in S^\G$ and $W\subseteq \d^{-1}_R [Y]$. 
The inclusion $\d^{-1}_R [Y]\subseteq W$ is similar, proving $(Y,W)\in R^\G$.
The other direction of \eqref{comm} is shown analogously.
\end{proof}


\section{Completeness of \sptheories{} with disjunctive correspondents}\label{sec:disj}
%
Finally, we consider \spequations{} whose  correspondents contain disjunction, starting 
with a simple example.

\begin{example}\label{e:pic5}\em 
The \sptheory{} $\SPi + \{\e\}$ with $\e = (p \land \dR p \imp \dS p)$ corresponding to the non-Horn, disjunctive first-order condition 
\begin{equation}\label{e1}
\Psi_{\e} = \ \forall x,y\, \bigl(R(x, y) \to S(x, x) \lor S(x, y)\bigr)
\end{equation}
is not \eqcomp. It is easy to see that $\{\e\}\models_{\CA}p \land \dR \dS p \imp \dS \dS p$. However, the \SLOa{} in Fig.~\ref{f:pic5} 
validates $\e$, but refutes $p \land \dR \dS p \imp \dS \dS p$ when $p$ is $\{1,4\}$.
\end{example}
\begin{figure}[ht]
\centering
\begin{tikzpicture}[scale=.6]
\tikzset{every loop/.style={thick,min distance=15mm,in=45,out=135,looseness=8}}
\tikzset{place/.style={circle,thin,draw=white,fill=white,scale=1.2}}

\draw [fill=gray] (3.5,1) circle [radius=.25];
\node at (3.5,1)  {\textcolor{white}{\bf\footnotesize 1}}; 
\node[place] (foo) at (6,3.5) {}; 
\draw [fill=gray] (6,3.5) circle [radius=.25];
\node at (6,3.5)  {\textcolor{white}{\bf\footnotesize 2}}; 
\draw [fill=gray] (3.5,3.5) circle [radius=.25];
\node at (3.5,3.5)  {\textcolor{white}{\bf\footnotesize 3}}; 
\draw [fill=gray] (1,3.5) circle [radius=.25];
\node at (1,3.5)  {\textcolor{white}{\bf\footnotesize 4}}; 
\draw [thick,->] (3.75,1.25) to [out=45, in=-135] (5.75,3.25);
\draw [thick,->] (3.5,1.35) to [out=90, in=-90] (3.5,3.15);
\draw [thick,->] (3.15,3.5) to [out=180, in=0] (1.35,3.5);
\path[->] (foo) edge [loop above] node {${}_S$} ();
\draw[thick,dotted,rounded corners=4] (2.7,4.3) -- (4.5,4.3) -- (4.5,2.8) -- (2.7,2.8) -- cycle;
\draw[thick,dotted,rounded corners=4] (2.7,1.6) -- (4.5,1.6) -- (4.5,.5) -- (2.7,.5) -- cycle;
\draw[thick,dotted,rounded corners=6] (.2,4.3) -- (1.7,4.3) -- (1.7,1.8) -- (4.8,1.8) -- (4.8,.2) -- (2.5,.2) -- (.2,1.9) -- cycle;
\draw[thick,dotted,rounded corners=6] (2.4,2) -- (3.9,2) -- (5.2,3.3) -- (5.2,5.3) -- (6.8,5.3) -- (6.8,1.8) -- (5,-.1) -- (2.4,-.1) -- cycle;
\draw[thick,dotted,rounded corners=10] (2,5.6) -- (7.2,5.6) -- (7.2,-.4) -- (2,-.4) -- cycle; 
\draw[thick,dotted,rounded corners=10] (-.2,5.9) -- (7.5,5.9) -- (7.5,-.7) -- (-.2,-.7) -- cycle; 

\node[right] at (2,3.7) {${}^S$};
\node[right] at (2.8,2.3) {${}^R$};
\node[right] at (4.8,2.1) {${}^S$};

\end{tikzpicture}
\caption{The \SLOa{} of Example~\ref{e:pic5}.}\label{f:pic5}
\end{figure}


\subsection{\spequationscap{} defining $n$-functionality.}

Let $P=\{p_0,\dots,p_{n}\}$,  for \mbox{$n\geq 1$}, and let 
%
\begin{equation}\label{efuneq}
\efun^n ~=~ \bigl ( \hspace*{-4mm}\bigwedge_{Q\subseteq P,\, |Q|=n} \hspace*{-4mm}\d \bigwedge Q ~\imp~ 
\d\bigwedge P\bigr).
\end{equation}
In particular, $\efun^1=\efun$. 
%
%
It is easy to see that $\efun^n$ corresponds to $n$-\emph{functionality\/}:
\[
\forall x,y_0,\dots,y_n\ \bigl(\bigwedge_{i\leq n}R(x,y_i)\to\bigvee_{i\ne j}(y_i=y_j)\bigr).
\]

\begin{theorem}\label{t:altncomplex}
None of $\SPi + \{\efun^n\}$, $\SPi + \{\erefl,\efun^n\}$, $\SPi + \{\etrans,\efun^n\}$, 
$\SPi+(\Axsfour \cup \{\efun^n\})$, and 
$\SPi+(\Axsfive \cup \{\efun^n\})$ is \complex, for $n\geq 2$.
\end{theorem}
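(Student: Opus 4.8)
The plan is to exhibit, for each of the five $\spterm$-theories in question, a single $\SLOa$ that validates the relevant axioms but whose embedding into the complex algebra of any frame for the theory is impossible — equivalently, to show that no frame for the theory realises the algebra as a subalgebra of its complex algebra. By the soundness direction of \eqref{soundness} and the definition of \eqcomp ness, the cleanest route is to demonstrate that these theories are nonetheless \eqcomp{} (so that they are genuine witnesses of the ``complete but not \complex'' phenomenon advertised in \S\ref{rproblem}), and then to separately refute complexity. Since the theorem only claims non-complexity, the core task is the second half: produce a concrete $\A \in \SLO_{\Ec}$ (where $\Ec$ is $\{\efun^n\}$, $\{\erefl,\efun^n\}$, etc.) that cannot embed into $\Fslo$ for any $\F \models \Ec$.

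\textbf{Construction of the bad algebra.} First I would focus on the base case $\SPi + \{\efun^n\}$ and build the algebra by hand. The idea is to take a semilattice with a ``top'' element $\top$, a collection of $n{+}1$ incomparable elements $a_0,\dots,a_n$ sitting just below $\top$ (or arranged so that no $n$ of them have a meet distinct from the meet of all $n{+}1$), together with enough meets of these to close under $\land$, and a bottom-ish element $g$, and to define $\d$ so that $\efun^n$ holds in $\A$ by a purely arithmetical accident — e.g.\ $\d$ sends every ``large'' element to $\top$ and every ``small'' element to $g$, making the antecedent $\bigwedge_{|Q|=n}\d\bigwedge Q$ of $\efun^n$ collapse to $\top$ exactly when $\d\bigwedge P$ is also $\top$. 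The point is that in $\A$ there is no element that could serve as an ``$R$-successor witnessing $n$-functionality'': if $\A$ embedded into $\Fslo$ via some $\eta$, then $\eta$ must preserve $\d$, forcing the frame $\F$ to contain, below some point, $n{+}1$ pairwise-distinct successors (because the relevant meets are distinct in $\A$), and this contradicts $\F \models \efun^n$ — i.e.\ contradicts $n$-functionality of $R^\F$. This is the exact analogue of Remark~\ref{r:noJ} and of the counterexample at the end of the proof of Theorem~\ref{t:comp0}, where an infinite $\SLOa$ blocks all admissible $R^\F$; here a finite $\SLOa$ should suffice because $\efun^n$ only constrains finitely many successors.

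\textbf{Propagating to the extensions.} Once the witness $\A$ for $\SPi+\{\efun^n\}$ is in place, the extensions $\{\erefl,\efun^n\}$, $\{\etrans,\efun^n\}$, $\Axsfour\cup\{\efun^n\}$, $\Axsfive\cup\{\efun^n\}$ are handled by refining the same algebra so that $\d$ additionally satisfies $p \imp \d p$ and $\d\d p \imp \d p$ (and, for the $\Axsfive$ case, symmetry $\esym$ as well). Concretely, one makes $\d$ \emph{increasing} and \emph{idempotent} — a closure operator — on the semilattice; the elements $a_0,\dots,a_n$ should then be taken to be $\d$-closed (hollow circles in the notation of \S\ref{draw}), so that $\erefl$ and $\etrans$ hold trivially on them, while $\efun^n$ still holds because any $n$-element subset of $\{a_0,\dots,a_n\}$ has the same meet as the full set (choose the semilattice structure so that $\bigwedge Q = \bigwedge P$ for all $Q\subseteq P$ with $|Q|=n$ — e.g.\ put a single common lower bound $c$ below all of $a_0,\dots,a_n$ and let that be their pairwise meet). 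For the $\Axsfive$ version one additionally checks $\esym$; since $\esym$ follows from $\{\erefl,\eeuc\}$ and in a quasiorder closure algebra Euclideanness is the extra ingredient, I would verify $\eeuc$ directly on the finitely many nontrivial valuations. In each case, the embedding obstruction is inherited: an $R^\F$ with the required closure/equivalence properties that also preserves the $n{+}1$ distinct meets would again violate $n$-functionality. The \eqcomp ness of these theories (needed to make ``not \complex'' interesting, and presumably cross-referenced to Theorem~\ref{t:altn} or \ref{t:s5altP}) I would cite from the later results obtained via \method, rather than reprove here.

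\textbf{Main obstacle.} The delicate part is getting a \emph{single} algebra to simultaneously (i) validate $\efun^n$, (ii) validate the additional frame axioms ($\erefl$, $\etrans$, possibly $\esym$/$\eeuc$), and (iii) provably admit \emph{no} embedding into any complex algebra of a frame for the theory — this last being a universally quantified statement over all candidate frames, not just the ``canonical'' ones given by \eqref{JRclassic} or \eqref{Rclassic}. For (iii) the argument must be semantic: from a hypothetical embedding $\eta\colon\A\to\Fslo$ one extracts, using that $\eta$ preserves $\d$ and that certain meets in $\A$ are strictly below $\top$, a point $w\in W$ with $n{+}1$ successors $w_0,\dots,w_n$ that are forced to be pairwise distinct (each $w_i$ lies in $\eta(a_i)$ but some $w_i\notin\eta(a_j)$ for $j\ne i$, by incomparability of the $a_i$ in $\A$), directly contradicting $n$-functionality of $\F$. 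Making the ``$n{+}1$ forced distinct successors'' step airtight — i.e.\ choosing the $a_i$ and the action of $\d$ so that $\d a_i$ (or rather the relevant $\d$ of a meet) being ``big'' genuinely compels distinct witnesses in any frame realisation — is where the real care is needed; everything else is routine verification of finitely many \SLOa-inequalities on a small explicit algebra.
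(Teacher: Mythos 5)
Your base construction is essentially the paper's algebra $\Aa_n$ (elements $g\le a_0,\dots,a_n\le\top$ with $a_i\land a_j=g$, $\d g=g$, $\d\top=\top$, $\d a_i=\top$), but the two places you leave open or alter are exactly where the proof lives. First, distinctness of the $n{+}1$ successors does \emph{not} follow from ``incomparability of the $a_i$'': the witnesses $y_i\in\eta(a_i)$ could all coincide inside $\eta(a_i)\cap\eta(a_j)=\eta(a_i\land a_j)=\eta(g)$. The missing idea, which you flag as ``where the real care is needed'' but never supply, is to pick $x\in W\setminus\eta(g)$ (possible since $\eta$ is injective) and to use $\d g=g$: then $\eta(g)=\eta(\d g)=\d^+\eta(g)$, so no $R^\F$-successor of $x$ can lie in $\eta(g)$, while $\d a_i=\top$ gives $\d^+\eta(a_i)=W\ni x$, forcing a successor $y_i\in\eta(a_i)\setminus\eta(g)$ for each $i$; pairwise distinctness is then immediate from $\eta(a_i)\cap\eta(a_j)=\eta(g)$. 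Both features are indispensable: without $\d a_i=\top$ no single point is forced to see all $n{+}1$ sets, and without $\d g=g$ nothing keeps the witnesses apart.

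Second, your treatment of the extensions would fail. Making $\d$ a closure operator with the $a_i$ $\d$-closed destroys precisely the feature $\d a_i=\top$ that drives the obstruction: with $\d a_i=a_i$, $\d$ fixing the common lower bound and $\top$, the operator can simply be the identity, and such an algebra embeds into $\Fslo$ for a reflexive, symmetric, transitive and $1$-functional frame (cf.\ the embedding in Theorem~\ref{t:simple1}), so ``the embedding obstruction is inherited'' is false for the modified algebra. No modification is needed at all: the original $\Aa_n$ already validates $\Axsfive\cup\{\efun^n\}$ for $n\ge 2$ ($\erefl$ only requires $a\le\d a$, not $\d a=a$, and $\etrans$, $\esym$ are routine checks with $\d a_i=\top$), and since every frame for any of the five listed theories is in particular $n$-functional, the single non-embeddability argument above covers all five cases simultaneously, which is exactly how the paper proceeds.
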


\begin{proof}
Let $\Aa_n$ be the \SLOa{} in Fig.~\ref{f:altncomplex}.
It is easy to see that $\Aa_n\models\Axsfive \cup\{ \efun^n\}$ if $n\geq 2$.  
Now suppose  there is an \SPa-embedding 
$\eta:\Aa_n\to\Fslo$ for some frame $\F=(W,R^\F)$.
Then there is some $x\in W \setminus\eta(g)$.
As $W=\eta(\d a_i)=\d^+\eta(a_i)$ for all $i\leq n$, 
there exist
$y_0\in\eta(a_0)$, $\dots$, $y_n\in\eta(a_n)$ such that $(x,y_i)\in R^\F$ for all $i\leq n$. 
As $\eta(g)=\eta(\d g)=\d^+\eta(g)$, we have $y_i\notin\eta(g)$, for any $i\leq n$.
It follows that all the $y_i$ are distinct, showing that $\F$ is not $n$-functional.
\begin{figure}[ht]
\begin{center}
\begin{tikzpicture}[scale=.8]
\draw [fill] (1,2) circle [radius=0.1];
\node[left] at (1,2) {$a_0$};
\draw [fill] (2,2) circle [radius=0.1];
\node[left] at (2,2) {$a_1$};
\draw [fill] (5,2) circle [radius=0.1];
\node[right] at (5,2) {$a_n$};
\draw [fill] (1,2) circle [radius=0.1];
\draw [very thick] (3,1) circle [radius=0.15];
\node[below] at (3,.8) {$g$};
\draw [very thick] (3,3) circle [radius=0.15];
\node[above] at (3,3.2) {$\top$};

\draw [fill] (3.2,2) circle [radius=0.015];
\draw [fill] (3.4,2) circle [radius=0.015];
\draw [fill] (3.6,2) circle [radius=0.015];

\draw [very thin] (3.2,1.1) -- (5,2);
\draw [very thin] (2.8,1.1) -- (1,2);
\draw [very thin] (2.8,1.2) -- (2,2);
\draw [very thin] (1,2) -- (2.8,2.9);
\draw [very thin] (2,2) -- (2.8,2.8);
\draw [very thin] (5,2) -- (3.2,2.9);

\draw [very thick,->] (1,2) to [out=90, in=180] (2.8,3);
\draw [very thick,->] (2,2) to [out=30, in=-120] (3,2.8);
\draw [very thick,->] (5,2) to [out=90, in=0] (3.2,3);
\end{tikzpicture}
\end{center}
\caption{The \SLOa{} $\Aa_n$ in the proof of Theorem~\ref{t:altncomplex}.}\label{f:altncomplex}
\end{figure}
\end{proof}


\begin{theorem}\label{t:altn}
$\SPi + \{\efun^n\}$ is \eqcomp, for any $n\geq 1$.
\end{theorem}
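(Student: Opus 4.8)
The plan is to apply the method of \method{} from \S\ref{sec:interp} with $\Ec^-=\emptyset$. Given an \spequation{} $\sigma\imp\tau$, I would transform the antecedent $\sigma$ into an $\{\efun^n\}$-equivalent normal form $\sigma^*$ — its \emph{$\efun^n$-saturation} — with the two properties $(ii)$--$(iii)$ required by the method, namely $\{\efun^n\}\vdash_{\SLO}\sigma\imp\sigma^*$ together with $\vdash_{\SLO}\sigma^*\imp\sigma$, and: $\{\efun^n\}\models_{\CA}\sigma^*\imp\tau$ implies $\models_{\CA}\sigma^*\imp\tau$. Granting these, $\{\efun^n\}\models_{\CA}\sigma\imp\tau$ gives $\{\efun^n\}\models_{\CA}\sigma^*\imp\tau$ by soundness \eqref{soundness}, hence $\models_{\CA}\sigma^*\imp\tau$, hence $\vdash_{\SLO}\sigma^*\imp\tau$ by Theorem~\ref{t:valid}, and finally $\{\efun^n\}\vdash_{\SLO}\sigma\imp\sigma^*\imp\tau$, which is \eqcomp ness via \eqref{birkhoff}. (For $n=1$ this reproves part of Theorem~\ref{t:fun}~$(i)$; the new content is $n\ge 2$, where the \sptheory{} is not \complex{} by Theorem~\ref{t:altncomplex}, so an embedding-based argument is unavailable.)

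The first step is to record the right semantic characterisation of $\{\efun^n\}\models_{\CA}$. Using Proposition~\ref{p:hom} and the fact that $n$-functionality is a universal first-order condition (so subframes of $n$-functional frames are $n$-functional), one checks that $\{\efun^n\}\models_{\CA}\sigma\imp\tau$ holds iff $\N,r\kmodels\tau$ for every $n$-functional quotient model $\N$ of the tree model $\M_\sigma$ — that is, every $\M_\sigma/{\sim}$ whose underlying frame is $n$-functional, with $r$ the image of $r_\sigma$ (the image of any homomorphism out of $\M_\sigma$ into an $n$-functional model is such a quotient, and conversely). The second step is to define $\sigma^*$: starting from $\M_\sigma$, whenever a node $v$ has, among its $R$-successors, subtrees whose terms are $\varrho_{j(0)},\dots,\varrho_{j(n)}$ matching the antecedent pattern of a substitution instance $\bigwedge_{\ell\le n}\d_{\!R}(\bigwedge_{i\ne\ell}\psi_i)\imp\d_{\!R}(\bigwedge_{i\le n}\psi_i)$ of $\efun^n$ (so $\varrho_{j(\ell)}$ equals $\bigwedge_{i\ne\ell}\psi_i$ up to \SLOa-axioms \eqref{idem}--\eqref{assoc}), add a fresh $R$-successor of $v$ carrying the merged demand $\d_{\!R}(\bigwedge_{i\le n}\psi_i)$, and iterate to a fixpoint. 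Every new subtree-term is a conjunction of the finitely many original subtree-terms, so up to $\vdash_{\SLO}$-equivalence only finitely many arise and the process terminates; $\M_{\sigma^*}$ then contains $\M_\sigma$ as a submodel via an obvious homomorphism $\M_\sigma\to\M_{\sigma^*}$, giving $\vdash_{\SLO}\sigma^*\imp\sigma$ by Proposition~\ref{p:htot}, while each added conjunct is obtained at its node $v$ from the conjuncts already present there by one use of the relevant $\efun^n$-instance, transitivity and the conjunction rule, and then propagated to the root through the surrounding tree context by the monotonicity rules in \eqref{rules}; this yields $\{\efun^n\}\vdash_{\SLO}\sigma\imp\sigma^*$.

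The heart of the proof is the \emph{saturation lemma}: for the saturated $\M_{\sigma^*}$, the set of \spterms{} true at the root of \emph{every} $n$-functional quotient of $\M_{\sigma^*}$ coincides with $\{\varrho\mid\M_{\sigma^*},r\kmodels\varrho\}$. The inclusion $\supseteq$ is immediate since quotient maps are homomorphisms and \spterms{} are preserved by them; the substance is $\subseteq$, equivalently: if $\M_\varrho\not\to\M_{\sigma^*}$ (with $r_\varrho\mapsto r$), then $\M_\varrho\not\to\N$ for some $n$-functional quotient $\N$ of $\M_{\sigma^*}$. I would build such an $\N$ by collapsing $\M_{\sigma^*}$ to an $n$-functional model top-down, at each over-full node $v$ (more than $n$ $R$-successors) choosing a partition of its successors into $\le n$ blocks \emph{tailored to $\varrho$} so as to keep the obstruction to $\M_\varrho\to\M_{\sigma^*}$ alive; the role of saturation is precisely that every \emph{forced} merge of a block $B$ produces a successor satisfying $\bigwedge_{j\in B}\varrho_j$ which — because the merge is forced, the $\efun^n$-pattern applies and the corresponding demand was already added during saturation — realises nothing at $v$ beyond what an existing successor already realised, so a subtree of $\M_\varrho$ that failed to map into $\M_{\sigma^*}$ still fails to map into $\N$. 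With the saturation lemma in hand, $\{\efun^n\}\models_{\CA}\sigma^*\imp\tau$ forces $\N,r\kmodels\tau$ for all such $\N$, hence $\M_{\sigma^*},r\kmodels\tau$, i.e.\ $\models_{\CA}\sigma^*\imp\tau$ by Corollary~\ref{c:krchar}~$(ii)$, and we finish as above.

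I expect the saturation lemma to be the main obstacle — specifically, organising the $\varrho$-dependent top-down collapse so that each forced merge is simultaneously ``covered'' by saturation and compatible with the chosen obstruction to $\M_\varrho\to\M_{\sigma^*}$, and handling the recursion into merged subtrees (which a merge may itself render over-full, triggering further forced merges that must again be covered). By contrast, termination of the saturation and the derivability $\{\efun^n\}\vdash_{\SLO}\sigma\imp\sigma^*$ are routine bookkeeping. As a sanity check, for $n=1$ the saturation adds at each node the conjunctions of \emph{all} subsets of its successors, the unique $1$-functional quotient is the full collapse, and the saturation lemma is immediate, consistently with $\SPi+\{\efun^1\}=\SPi+\{\efun\}$ being even \complex.
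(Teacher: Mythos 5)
Your semantic reduction of $\{\efun^n\}\models_{\CA}$ to $n$-functional quotients of $\M_\sigma$ is sound, and the bookkeeping parts ($\{\efun^n\}\vdash_{\SLO}\sigma\approx\sigma^*$, termination of the closure) are indeed routine. The problem is the saturation lemma, and it is not merely unfinished: for the saturation you actually define---add a fresh successor only when $n+1$ existing successor terms are, up to the axioms \eqref{idem}--\eqref{assoc}, \emph{literally} of the form $\bigwedge_{i\ne\ell}\psi_i$---the lemma is false. Take $n=2$ and $\sigma=\d(a\land b\land x)\land\d(a\land c)\land\d(b\land c)$. No assignment of these three successor terms to the three slots of the $\efun^2$ pattern is possible (comparing variable sets: $x$ would have to lie in $S_1\cup S_2$ while $S_1\subseteq\{b,c\}$ and $S_2\subseteq\{a,c\}$, and the degenerate assignments fail similarly), so $\sigma^*=\sigma$. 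Yet $\{\efun^2\}\vdash_{\SLO}\sigma\imp\d(a\land b\land c)$: weaken the first conjunct to $\d(a\land b)$ by monotonicity and apply the instance of $\efun^2$ with $p_0/a$, $p_1/b$, $p_2/c$. By soundness, $\d(a\land b\land c)$ therefore holds at the root of \emph{every} $2$-functional quotient of $\M_{\sigma^*}$, while $\M_{\sigma^*},r_\sigma\not\kmodels\d(a\land b\land c)$. So matching must be at least up to entailment of the components rather than ACI-equality; and once you liberalise it, the new successor cannot just carry $\bigwedge_{i\le n}\psi_i$ but must carry its own recursively saturated version (a forced merge of two successors also forces merges among their combined children), after which termination and your $\varrho$-tailored top-down collapse have to be re-proved for the stronger closure. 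None of this is supplied, and it is the substance of the theorem, not residual detail.

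For comparison, the paper works on the opposite side and thereby avoids your hard lemma altogether: it shows that under $\{\efun^n\}$ every \emph{consequent} is provably equivalent to a conjunction of normal forms in which each $\d$ governs at most $n$ conjuncts (using $\{\efun^n\}\vdash_{\SLO}\efun^m$ for $m>n$), and then proves that for such normal-form consequents $\{\efun^n\}\models_{\CA}\sigma\imp\tau$ already implies $\models_{\CA}\sigma\imp\tau$, by an easy induction that glues at most $n$ countermodels under a fresh root, which keeps the frame $n$-functional; completeness then follows from completeness of $\SPi$ and \eqref{birkhoff}. If you wish to salvage the antecedent-saturation route, what you are really after is a disjunction-free core of the ``$n$-functional closure'' of $\M_\sigma$, and you should expect establishing its existence and correctness to be at least as much work as the paper's consequent-side argument.
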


\begin{proof}
For $n=1$, this is Theorem~\ref{t:fun}~$(i)$. For $n\geq 2$, we prove the theorem by the \method\ method
from \S\ref{sec:interp}.
We first define $\{\efun^n\}$-\emph{\nterm s} by induction: 
$(i)$ all propositional variables and $\top$ are $\{\efun^n\}$-\nterm s; 
$(ii)$ if $\tau_{1},\ldots,\tau_{n}$ are $\{\efun^n\}$-\nterm s, then so is $\d(\tau_{1} \land \cdots \land\tau_{n})$. 

\begin{lclaim}\label{c:altnormalform}
For any \spterm{} $\varrho$, there is a set $N_\varrho$ of $\{\efun^n\}$-\nterm s such that
\[
\{\efun^n\} \vdash_{\SLO} \varrho \approx \bigwedge N_\varrho.
\]
\end{lclaim}

\begin{proof}
The proof is by induction on the modal depth $d$ of $\varrho$. The basis $d=0$ is trivial. 
Suppose inductively that $\varrho$ is an \spterm{} of depth $d>0$. Then 
$\varrho = \bigwedge P_\varrho\land\d \varrho_1 \land \dots \land \d \varrho_k$, where 
$P_\varrho$ is a set consisting of propositional variables and $\top$, and each $\varrho_i$ is an \spterm{} of depth $\leq d-1$. By IH, $\{\efun^n\} \vdash_{\SLO} \varrho_i \approx \bigwedge A_i$, for some set $A_i$ of $\{\efun^n\}$-\nterm s and  $i=1,\dots,k$. Then 
$$
\{\efun^n\} \vdash_{\SLO} \varrho \approx (\bigwedge P_\varrho\land\bigwedge_{i=1}^k \d\bigwedge A_i).
$$
If $|A_i|\leq n$ for all $i$, then we are done. So fix some $i$ and suppose  $|A_i|=k>n$.
Then we always have 
$ \vdash_{\SLO} \d\bigwedge A_i \imp
\bigwedge_{Q\subseteq A_i,\,|Q|=n} \d \bigwedge Q$.
We show that
\begin{equation}\label{allfun}
\{\efun^n\} \vdash_{\SLO}
 \bigwedge_{\substack{Q\subseteq A_i,\\|Q|=n}} \d \bigwedge Q\imp
 \d\bigwedge A_i.
\end{equation}
In order to prove this, first we claim that
%
 $\{\efun^m\}\vdash_{\SLO}\efun^{m+1}$, for every $m$.
 %
Indeed,
\begin{multline*}
\{ \efun^m\}\vdash_{\SLO}
\bigwedge_{\substack{Q\subseteq\{p_0,\dots,p_{m+1}\}\\|Q|=m+1}}
\hspace*{-.5cm}\d\bigwedge Q\imp \hspace*{-.2cm}
\bigwedge_{\substack{Q\subseteq\{p_1,\dots,p_{m+1}\}\\|Q|=m}}
\hspace*{-.5cm}\d\bigl(p_0\land\bigwedge Q \bigr)\imp\\[5pt]
\bigwedge_{\substack{Q\subseteq\{p_1,\dots,p_{m+1}\}\\|Q|=m}}
\hspace*{-.3cm}\d \bigwedge_{q\in Q}(p_0\land q) 
\imp~ \d\hspace*{-.7cm}\bigwedge_{q\in\{p_1,\dots,p_{m+1}\}}(p_0\land q)\approx
\d(p_0\land\dots\land p_{m+1}).
\end{multline*}
Therefore, we have 
%
$\{\efun^n\}\vdash_{\SLO}\efun^{m}$, for every $m>n$.
%
Thus,
\begin{multline*}
\{\efun^n\}\vdash_{\SLO}
\bigwedge_{\substack{Q\subseteq \{p_0,\dots,p_{k-1}\}\\ |Q|=n}} \d \bigwedge Q  ~\imp~ 
\bigwedge_{\substack{Q\subseteq \{p_0,\dots,p_{k-1}\}\\ |Q|=n+1}} \d \bigwedge Q ~\imp~\dots\\
\dots ~\imp~ \d(p_0\land\dots\land p_{k-1}),
\end{multline*}
and so a substitution of the $k$ terms in $A_i$ for $p_0,\dots,p_{k-1}$ in $\efun^k$ gives \eqref{allfun}.
\end{proof}

\begin{lclaim}\label{c:altkrtoslo}
For any \spterm{} $\sigma$ and $\{\efun^n\}$-\nterm\ $\tau$,   
if $\{\efun^n\}\models_{\CA} \sigma \imp \tau$ then $\models_{\CA} \sigma \imp \tau$.
\end{lclaim}
\begin{proof}
The proof is by induction on the modal depth $d$ of $\tau$. The basis is again easy. 
Now assume inductively that the claim holds for $d$ and the depth of $\tau$ is $d+1$. Let 
$\sigma=\bigwedge P_\sigma \land \d\sigma_{1}\land \ldots \land\d\sigma_{k}
$, 
where $P_\sigma$ is some set of propositional variables and $\top$,
and each $\sigma_i$ is an \spterm.
Suppose $\tau= \d(\tau_{1}\wedge \ldots \wedge \tau_{n})$, where 
each $\tau_{i}$ is either a variable, or $\top$, or of the form 
$\d (\tau_{1}^{i}\wedge \cdots \wedge \tau_{n}^{i})$.
Let $\not\models_{\CA} \sigma \imp \tau$. Then, for every $j$ ($1 \le j \le k$), there is $i$ ($1 \le i\le n$) such that $\not\models_{\CA} \sigma_j \imp \tau_i$, and so
$\bigcup_{i=1}^{n} K_i=\{1,\dots,k\}$, for $
K_i=\{1\le j\le k \mid \ \not\models_{\SLO}\sigma_j\imp\tau_i\}$.
It is not hard to see that, for any $i$ with $K_i\ne\emptyset$, we have $\not\models_{\CA}(\bigwedge_{j\in K_{i}} \sigma_j) \imp \tau_{i}$.
By IH, for any such $i$, there is a Kripke model $\M_{i}$ based on an $n$-functional frame with root $r_{i}$ where $\bigwedge_{j\in K_{i}} \sigma_j$ holds, but $\tau_i$ does not. Now take a fresh node $r$, make $\bigwedge P_\sigma$ true in $r$, and connect $r$ to $r_i$ of each $\M_{i}$. The constructed Kripke model is based on an $n$-functional frame and refutes $\sigma\imp \tau$ at $r$, showing that 
$\{\efun^n\}\not\models_{\CA}\sigma\imp\tau$ as required.
\end{proof}

That $\SPi + \{\efun^n\}$ is \eqcomp\ follows now from Claims~\ref{c:altnormalform}, \ref{c:altkrtoslo}, \eqcomp ness of $\SPi$ (Theorem~\ref{t:valid}) and \eqref{birkhoff}.
\end{proof}


Now, we set 
%
\[
\Axfunn ~=~ \Axsfive \cup \{\efun^n\}, \quad \text{ for }1\le n<\omega.
\]
and $\Efunn=\SPi+\Axfunn$.
The correspondent $\Phi_n$ of $\Axfunn$ 
says that $R$ is an equivalence relation whose classes (clusters) are of size $\leq n$.

\begin{theorem}\label{t:s5alt} \textup{(}\textup{\cite{jackson2004}}\textup{)}%
\footnote{This result also follows from~\cite{jackson2004}, which showed (for the similarity type without $\top$) that $\Efunn\models_{\BAO}$ is conservative over 
$\Efunn\models_{\SLO}$.}
$\Efunn$ is \eqcomp,
%
%
 for every $n\geq 2$.
\end{theorem}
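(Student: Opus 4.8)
The plan is to establish completeness of $\Efunn=\SPi+\Axfunn$ for $n\geq 2$ by the method of \method{} from \S\ref{sec:interp}, following the same template used in the proof of Theorem~\ref{t:altn} but now working modulo the full set $\Axfunn=\Axsfive\cup\{\efun^n\}$. The key idea is that $\Axsfive\models_{\CA}$ already collapses any rooted model to a single cluster (a reflexive, symmetric, transitive set of points around the root), so an \spterm{} $\varrho$, when evaluated at the root of an $\Axsfive$-frame, only ``sees'' the conjunction of the literals true at each point of that cluster. First I would introduce an appropriate notion of $\Axfunn$-normal form: a term of the shape $\bigwedge P \land \d(\bigwedge P_1)\land\dots\land\d(\bigwedge P_m)$ with $m\le n$ and each $P_j$ a set of propositional variables, intuitively describing a cluster of $\le n$ points together with the literals holding at them (this is essentially the $\Axsfire$-reduction of the $\{\efun^n\}$-normal form of Claim~\ref{c:altnormalform}, pushed one level because $\esym,\etrans$ flatten nested diamonds into a single layer).

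The proof then proceeds in two claims mirroring Claims~\ref{c:altnormalform} and~\ref{c:altkrtoslo}. The first claim: for every \spterm{} $\varrho$ there is a finite set $N_\varrho$ of $\Axfunn$-normal forms with $\Axfunn\vdash_{\SLO}\varrho\approx\bigwedge N_\varrho$. This is proved by induction on modal depth. The base case is trivial. For the inductive step one uses $\erefl$ ($p\imp\d p$), $\etrans$ and $\esym$ to show that, within the scope of a diamond, $\Axfunn\vdash_{\SLO}\d\sigma\approx\d\sigma'$ whenever $\sigma'$ is obtained from $\sigma$ by flattening nested diamonds and pulling all variables up into one cluster (the key derivations being $\{\erefl,\etrans\}\vdash_{\SLO}\d\d p\approx\d p$ and the symmetry derivation $q\land\d p\imp\d(p\land\d q)$ that lets one ``bring literals back'' to the current point); then the $\{\efun^n\}$ reasoning from Claim~\ref{c:altnormalform} bounds the number of surviving diamond-conjuncts by $n$. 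The second claim: for every \spterm{} $\sigma$ and $\Axfunn$-normal form $\tau$, if $\Axfunn\models_{\CA}\sigma\imp\tau$ then already $\Axsfive\models_{\CA}\sigma\imp\tau$ — no wait, more precisely, $\models_{\CA}\sigma\imp\tau$ relative to the $\Axsfire$-frames is reducible to the finite-frame subcase; concretely, if $\Axfunn\not\models_{\CA}\sigma\imp\tau$ one builds a \emph{finite} $\Phi_n$-frame (an equivalence relation with clusters of size $\le n$) refuting $\sigma\imp\tau$ at some point. The construction takes the $\sigma$-tree model, maps each of its points that must be realized into a fresh point of a single cluster, adds up to $n$ such points per ``branching requirement'' of $\tau$, and closes under reflexivity/symmetry/transitivity; $n$-functionality is what forces the cluster size bound to be respected, and here is where $n\ge 2$ matters (for $n\ge 2$ there is enough room to place the distinct witnesses demanded by $\sigma$'s subformulas without collapsing them against the ones forbidden by $\tau$).

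Putting it together: given an arbitrary \spequation{} $\sigma\imp\tau$ with $\Axfunn\models_{\CA}\sigma\imp\tau$, rewrite $\tau$ as $\bigwedge N_\tau$ using the first claim (so $\Axfunn\vdash_{\SLO}\tau\approx\bigwedge N_\tau$, hence by \eqref{birkhoff} $\Axfunn\models_{\SLO}\tau\approx\bigwedge N_\tau$, and by soundness \eqref{soundness} also $\Axfunn\models_{\CA}\tau\approx\bigwedge N_\tau$); it suffices to show $\Axfunn\models_{\SLO}\sigma\imp\alpha$ for each $\alpha\in N_\tau$, and since $\Axfunn\models_{\CA}\sigma\imp\alpha$, the second claim reduces this to checking finitely many finite $\Phi_n$-frames, on which Kripke and \SLOa{} consequence agree via the canonical-model-style soundness; then assembling back gives $\Axfunn\models_{\SLO}\sigma\imp\tau$. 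By \eqref{birkhoff} this yields $(\sigma\imp\tau)\in\Efunn$, establishing completeness.

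The main obstacle I expect is the first claim, specifically getting the flattening derivations in the syntactic calculus $\vdash_{\SLO}$ exactly right: one must show that modulo $\Axsfive$ a deeply nested \spterm{} is $\vdash_{\SLO}$-equivalent to a depth-one ``cluster description'', and the delicate point is that $\esym$ is not an equation between \spterms{} but an \spequation{} $q\land\d p\imp\d(p\land\d q)$, so recovering the reverse inclusions (to get $\approx$, not just $\imp$) needs careful combination of $\erefl$, $\etrans$, $\esym$ and monotonicity, and one has to be sure the $n$-bound from $\efun^n$ survives the flattening rather than blowing up. A secondary subtlety is making the finite refuting $\Phi_n$-frame in the second claim genuinely $n$-functional while still realizing all the homomorphic images of $\T_\sigma$'s points; the argument from Claim~\ref{c:altkrtoslo}, which built $n$-functional models by gluing fresh roots, has to be adapted so that the new node and its $\le n$ successors per required witness stay within one cluster after the $\Axsfire$-closure. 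Since Jackson~\cite{jackson2004} already proved the stronger conservativity statement $\Efunn\models_{\BAO}$ over $\Efunn\models_{\SLO}$ (cited in the footnote), an alternative, and possibly shorter, route would be to simply invoke that result; but the \method{}-based argument above is the self-contained one consistent with the paper's developed machinery.
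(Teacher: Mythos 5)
Your overall architecture is the paper's own: the published proof also proceeds by syntactic proxies, with an analogue of Claim~\ref{c:altnormalform} rewriting any \spterm{} modulo $\Axfunn$ into a conjunction of normal forms (there: variables, $\top$, and formulas $\d(q_1\land\dots\land q_n)$ with at most $n$ variables under the diamond, obtained by flattening the $\{\efun^n\}$-normal forms via $\Axsfive\vdash_{\SLO}\d p\land\d q\approx\d(p\land\d q)$), and an analogue of Claim~\ref{c:altkrtoslo} reducing $\Axfunn\models_{\CA}\sigma\imp\tau$, for normal $\tau$, to $\Axsfive\models_{\CA}\sigma\imp\tau$, after which completeness of $\Esfive$ (Theorem~\ref{t:comp}~$(ii)$) and \eqref{birkhoff} finish the job. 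So you have the right skeleton, but two steps in your plan have genuine gaps. First, your final assembly is not sound as stated: you retract the clean reduction to $\Axsfive\models_{\CA}$ in favour of ``checking finitely many finite $\Phi_n$-frames, on which Kripke and \SLOa{} consequence agree via canonical-model-style soundness''\!. Soundness \eqref{soundness} only gives $\models_{\SLO}\Rightarrow\models_{\CA}$; validity on (even all finite) $\Phi_n$-frames does not by itself yield $\Axfunn\models_{\SLO}\sigma\imp\alpha$ --- that converse is exactly the completeness being proved. The proxy method needs the reduction to land on a set $\Ec^-\subseteq\Axfunn$ whose spi-logic is \emph{already known} to be complete; here $\Ec^-=\Axsfive$, and you must explicitly invoke Theorem~\ref{t:comp}~$(ii)$ at that point.

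Second, your normal forms bound the wrong parameter. Bounding the \emph{number of diamond conjuncts} by $n$ is vacuous once you take conjunctions over a set $N_\varrho$, and it is not what the countermodel step needs; what is essential is that each diamond contains at most $n$ variables. That bound is what makes the paper's shrinking argument work: starting from the universal closure $\M_\sigma^\forall$ of the $\sigma$-tree model refuting $\sigma\imp\d(q_1\land\dots\land q_n)$, every point misses some $q_k$, so each point can be tagged with an $(n-1)$-element subset of $\{q_1,\dots,q_n\}$ covering its true $q$'s; since there are only $n$ such subsets, the pigeonhole principle lets one merge two points (a homomorphic image, so $\sigma$ survives and the refutation of the diamond is preserved) and iterate until at most $n$ points remain, i.e.\ until the frame is a $\Phi_n$-frame. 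Your sketch neither secures the per-diamond bound in the normal form nor gives a concrete mechanism (your ``fresh point per branching requirement'' construction) guaranteeing the final cluster has size $\le n$ while still satisfying $\sigma$; without the pigeonhole-style merge (or an equivalent device) the second claim does not go through. The place where $n\ge 2$ enters is also more specific than ``enough room for witnesses'': it is needed already when $\tau$ is a single variable, where the countermodel is a two-point universal cluster. Falling back on Jackson's conservativity result, as the footnote notes, is of course a legitimate alternative, but then nothing of your argument is used.
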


\begin{proof}
The proof is again by the method of \method\  from \S\ref{sec:interp}.
Now, $\Axfunn$-\emph{\nterm s\/} 
 are defined as propositional variables, $\top$ and \spterms{}  of the form
$\d(q_1\land\dots\land q_n)$, where the $q_i$ are propositional variables  
or $\top$.

\begin{lclaim}\label{c:s5altnormalform}
For any \spterm{} $\varrho$, there is a set $N_\varrho$ of $\Axfunn$-\nterm s with 
\[
\Axfunn \vdash_{\SLO} \varrho \approx \bigwedge N_\varrho.
\]
\end{lclaim}

\begin{proof}
As $\Axsfive\vdash_{\SLO}\d p\land\d q\approx\d (p\land\d q)$    (by $\Esfive=\Esfive'$ and $\eeuc$), 
it is easy to see that, for any $\{\efun^n\}$-\nterm\ $\alpha$ (as defined in the proof of Theorem~\ref{t:altn}), 
there is some $\Axfunn$-\nterm\ $\beta$ such that $\Axsfive\vdash_{\SLO}\alpha\approx\beta$.
Now the claim follows from Claim~\ref{c:altnormalform}.
In particular,
\[
N_\varrho=P_{r_\varrho}\cup\{\d\bigwedge Q\mid Q\subseteq P_x,\ |Q|\le n,\ \mbox{$x$ in $\M_\varrho$}\},
\]
where $P_x$ is the set of variables that are true at $x$ in the $\varrho$-tree model $\M_\varrho$.
\end{proof}

\begin{lclaim}\label{c:s5altkrtoslo}
For any \spterm{} $\sigma$ and $\Axfunn$-\nterm\ $\tau$, if
$\Axfunn\models_{\CA} \sigma \imp \tau$ then $\Axsfive\models_{\CA} \sigma \imp \tau$.
\end{lclaim}

\begin{proof}
Suppose $\Axsfive\not\models_{\CA} \sigma \imp \tau$. 
Let $R_\sigma^\forall=W_\sigma\times W_\sigma$ for the domain $W_\sigma$ of the $\sigma$-tree
model $\M_\sigma$. Consider the Kripke model $\M_\sigma^\forall=(\T_\sigma^\forall,\V_\sigma)$
over $\T_\sigma^\forall=(W_\sigma,R_\sigma^\forall)$.
As $\M_\sigma^\forall$ is the equivalence-closure of $\M_\sigma$, we have 
$\M_\sigma^\forall,r_\sigma\kmodels\sigma$  and
$\M_\sigma^\forall,r_\sigma\not\kmodels\tau$ 
by Proposition~\ref{p:Hclosure}, and so $\tau\ne\top$.
If $\tau$ is a propositional variable $p$, then take the following model
$\M$ based on the universal frame over $\{x,y\}$:
for each variable $q$, let $\M,x\kmodels q$ iff $r_\sigma\in\V_\sigma(q)$ and $\M,y\kmodels q$ iff $\V_\sigma(q)\setminus\{r_\sigma\}\ne\emptyset$. (That is, $\M$ is obtained from $\M_\sigma^\forall$ by `sticking together' all of its points different from $r_\sigma$.)
Then we clearly have  $\M,x\not\models(\sigma\imp p)$.
Finally, let  $\tau$ be of the form $\d(q_1\land\dots\land q_n)$.
If $W_\sigma$ contains $\leq n$ points, then $\Axfunn\not\models_{\CA} \sigma \imp \tau$. So suppose 
$W_\sigma=\{w_1,\dots,w_m\}$ for some $m> n$. 
We show that there is a Kripke model $\M$ based on a universal frame with $<m$ points and such that  
$\M\not\models(\sigma\imp\tau)$. Indeed, as 
$\M_\sigma^\forall,r_\sigma\not\kmodels\d(q_1\land\dots\land q_n)$, 
for every $1\leq i\leq m$ there is $Q_i\subseteq\{q_1,\dots,q_n\}$ such that $|Q_i|=n-1$ and
$\{q_k\mid 1\leq k\leq n,\ \M_\sigma^\forall,w_i\kmodels q_k\}\subseteq Q_i$.
So by the pigeonhole principle, there are $i\ne j$ with $Q_i=Q_j$.
Now let $\M$ result from $\M_\sigma^\forall$ by `sticking together' $w_i$ and $w_j$.
Then we have  $\M,r_\sigma\not\models\sigma\imp\d(q_1\land\dots\land q_n)$, and so 
$\M\not\models(\sigma\imp\tau)$, as required.
\end{proof}

That $\Efunn$ is \eqcomp\ follows now from Claims~\ref{c:s5altnormalform}, \ref{c:s5altkrtoslo}, \eqcomp ness of $\Esfive$ (Theorem~\ref{t:comp}~$(ii)$) and \eqref{birkhoff}.
\end{proof}

As a consequence of Claims~\ref{c:s5altnormalform} and \ref{c:s5altkrtoslo} we also obtain:

\begin{theorem}\label{t:s5altP}  
$\Efunn$ is decidable in {\sc PTime}, for every $n\geq 2$.
\end{theorem}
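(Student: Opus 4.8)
The plan is to turn the two normal-form results Claims~\ref{c:s5altnormalform} and~\ref{c:s5altkrtoslo} (together with the completeness of $\Esfive$) into a polynomial decision procedure for membership in $\Efunn$. Given an \spequation{} $\sigma\imp\tau$, the first step is to replace $\tau$ by its $\Axfunn$-normal form: by Claim~\ref{c:s5altnormalform} there is a set $N_\tau$ of $\Axfunn$-\nterm s with $\Axfunn\vdash_{\SLO}\tau\approx\bigwedge N_\tau$, and the description of $N_\tau$ given in the proof of that claim --- namely $N_\tau=P_{r_\tau}\cup\{\d\bigwedge Q\mid Q\subseteq P_x,\ |Q|\le n,\ \text{$x$ in $\M_\tau$}\}$ --- shows that $N_\tau$ can be computed in polynomial time in $\tau$ (each $P_x$ has size at most the number of variables, the number of nodes $x$ is linear in $\tau$, and for fixed $n$ the number of $n$-subsets of $P_x$ is polynomial). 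Since $\Axfunn\vdash_{\SLO}\sigma\imp\tau$ iff $\Axfunn\vdash_{\SLO}\sigma\imp\alpha$ for every $\alpha\in N_\tau$, and by \eqref{birkhoff} and \eqcomp ness of $\Efunn$ (Theorem~\ref{t:s5alt}) this is equivalent to $\Axfunn\models_{\CA}\sigma\imp\alpha$ for every $\alpha\in N_\tau$, it suffices to decide, in polynomial time, each of the polynomially many subproblems $\Axfunn\models_{\CA}\sigma\imp\alpha$ with $\alpha$ an $\Axfunn$-\nterm.

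For a single such subproblem, Claim~\ref{c:s5altkrtoslo} reduces $\Axfunn\models_{\CA}\sigma\imp\alpha$ to $\Axsfive\models_{\CA}\sigma\imp\alpha$ --- except that this reduction is not an equivalence as stated; what is genuinely true is the converse (a frame for $\Axfunn$ is a frame for $\Axsfive$) together with the contrapositive content of the claim, so that in fact $\Axfunn\models_{\CA}\sigma\imp\alpha$ iff $\Axsfive\models_{\CA}\sigma\imp\alpha$, \emph{provided} one first checks the one place where the reduction in the proof of Claim~\ref{c:s5altkrtoslo} could fail, viz.\ the case $\tau=\d(q_1\wedge\dots\wedge q_n)$ with $\le n$ points in $W_\sigma$, which is decided directly by inspecting $\M_\sigma^\forall$. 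Thus the second step is: build the equivalence-closure $\M_\sigma^\forall$ of the $\sigma$-tree model (linear size, constructible in linear time), and by Proposition~\ref{p:Hclosure} check whether $\M_\sigma^\forall,r_\sigma\kmodels\alpha$ --- this is just model checking a single strictly positive formula of size $O(|\tau|)$ at a point of a model of size $O(|\sigma|)$, which is polynomial. If $\M_\sigma^\forall,r_\sigma\kmodels\alpha$ then $\Axsfive\models_{\CA}\sigma\imp\alpha$ (since $\M_\sigma^\forall$ maps homomorphically into any $\Esfive$-model of $\sigma$, by Proposition~\ref{p:hom} and Corollary~\ref{c:krchar}); conversely, if $\M_\sigma^\forall,r_\sigma\not\kmodels\alpha$, then either $W_\sigma$ has $>n$ points and the pigeonhole construction in the proof of Claim~\ref{c:s5altkrtoslo} yields a finite $\Axfunn$-frame refuting $\sigma\imp\alpha$ (so $\Axfunn\not\models_{\CA}\sigma\imp\alpha$), or $W_\sigma$ has $\le n$ points, $\T_\sigma^\forall$ is itself an $\Axfunn$-frame, and $\M_\sigma^\forall$ is the sought refuting model.

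Putting these together: the algorithm computes $N_\tau$, and for each $\alpha\in N_\tau$ tests $\M_\sigma^\forall,r_\sigma\kmodels\alpha$ (with the small-cluster side check when $\alpha=\d(q_1\wedge\dots\wedge q_n)$ and $|W_\sigma|\le n$); it answers ``$\sigma\imp\tau\in\Efunn$'' iff all tests succeed. Every step is polynomial and the number of steps is polynomial, so $\Efunn$ is decidable in {\sc PTime}. The part that needs the most care --- and the only real obstacle --- is making the equivalence in the second step airtight, i.e.\ checking that the reduction of Claim~\ref{c:s5altkrtoslo} does run in both directions once the finitely many degenerate cases (empty $K_i$, and $|W_\sigma|\le n$) are handled by direct inspection of $\M_\sigma^\forall$; everything else is bookkeeping about sizes of tree models and normal forms, which is routine given Claims~\ref{c:s5altnormalform} and~\ref{c:s5altkrtoslo} and Proposition~\ref{p:Hclosure}.
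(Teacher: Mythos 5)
Your proposal is correct and follows essentially the same route as the paper, whose proof of this theorem is precisely the observation that $|N_\varrho|$ in Claim~\ref{c:s5altnormalform} is polynomial for fixed $n$ and that the remaining checks reduce, via Claim~\ref{c:s5altkrtoslo}, to $\Esfive$-consequences decidable in {\sc PTime} by Theorem~\ref{thm:horncomplete} (i.e., model checking in the equivalence closure $\M_\sigma^\forall$ of the $\sigma$-tree model, exactly as you do). The only cosmetic difference is your extra side check in the degenerate case, which is unnecessary because Claim~\ref{c:s5altkrtoslo} together with the trivial inclusion $\CA_{\Axfunn}\subseteq\CA_{\Axsfive}$ already yields the two-way reduction uniformly.
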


\begin{proof}
Follows from the tractability of $\Esfive$ 
(Theorem~\ref{thm:horncomplete})
and the fact that $|N_\varrho|$ in Claim~\ref{c:s5altnormalform}  is clearly polynomial in the size of $\M_\varrho$.
\end{proof}

Jackson \cite{jackson2004} also proves the following about extensions of $\Esfive$:

\begin{theorem}\label{t:jackson} \textup{(}\textup{\cite{jackson2004}}\textup{)}
Let $\SPL$ be any non-trivial \sptheory{} extending $\Esfive$. 
Then exactly one of the following cases holds: 
\begin{itemize}
\item[--]
$\SPL = \Esfive$,
\item[--]
$\SPL = \SPi+(\Axsfive\cup \{\d p\imp p\})$,
\item[--]
$\SPL = \SPi+(\Axsfive \cup \{\d p\imp \d q\})$,
\item[--]
$\SPL = \Efunn$, for some $n$ $(1\leq n<\omega)$.
\end{itemize}
\end{theorem}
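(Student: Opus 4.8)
The plan is to determine by hand the whole interval of \sptheories{} above $\Esfive$, essentially recovering Jackson's~\cite{jackson2004} description of the subvariety lattice of $\SLO_{\Esfive}$. Throughout I would use that $\Esfive$ is \eqcomp\ (Theorem~\ref{t:comp}~$(ii)$), so that over $\Esfive$ the relations $\vdash_{\SLO}$ and $\models_{\CA}$ coincide, together with two normal forms: since $\{\erefl,\etrans\}$ proves $\d\d p\approx\d p$ and $\Axsfive$ proves $\d(p\land\d q)\approx\d p\land\d q$, a routine induction (as in the proof of Claim~\ref{c:s5altnormalform}) shows that every \spterm{} is $\Axsfive\vdash_{\SLO}$-equivalent to one of the shape $\bigwedge P\land\bigwedge_{S\in\mathcal S}\d\bigwedge S$ with $P$ a set of variables and $\mathcal S$ a finite family of finite sets of variables, and, for each $n$, Claim~\ref{c:s5altnormalform} refines this over $\Axfunn=\Axsfive\cup\{\efun^n\}$ so that all $|S|\le n$. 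Splitting conjunctions in conclusions (legal over $\SPi$), it then suffices to study \spequations{} $\sigma\imp\alpha$ whose conclusion is an \emph{atom} — a variable, $\top$, or $\d\bigwedge Q$ — and I call such a $\sigma\imp\alpha$ a \emph{witness} for $\SPL$ if it lies in $\SPL\setminus\Esfive$. Inspecting $\M_\sigma^\forall$, the single cluster whose root carries $P$ and which has one further point carrying $S$ for each $S\in\mathcal S$, one sees that $\sigma\imp\top$ and $\sigma\imp\d\top$ always lie in $\Esfive$, and that $\sigma\imp\d\bigwedge Q$ is a witness precisely when $Q\not\subseteq P$ and $Q\not\subseteq S$ for all $S\in\mathcal S$.

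First I would dispose of the two degenerate extensions. If $\SPL$ has a witness $\sigma\imp p$ with $p$ a variable, substituting $\top$ for all variables $\ne p$ collapses $\sigma$ to $\top$, $\d\top$, or $\d p$; in the first two cases $\SPL$ contains $\top\imp p$ or $\d\top\imp p$, forcing every algebra in $\SLO_{\SPL}$ to be one-element (since $\d\top=\top$ would lie below everything), contrary to non-triviality, so $\d p\imp p\in\SPL$ and $\SPL\supseteq\SPi+(\Axsfive\cup\{\d p\imp p\})$; as $\{\erefl,\d p\imp p\}$ makes $\d$ the identity, any \emph{proper} extension of $\SPi+(\Axsfive\cup\{\d p\imp p\})$ proves $\top\imp q$ and is trivial, so $\SPL=\SPi+(\Axsfive\cup\{\d p\imp p\})$. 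Symmetrically, if $\SPL\vdash_{\SLO}\d p\imp\d q$ then over $\erefl$ every algebra in $\SLO_{\SPL}$ satisfies $\d a=\top$ for all $a$, the same triviality argument handles proper extensions of $\SPi+(\Axsfive\cup\{\d p\imp\d q\})$, and since $\Esfive\not\vdash\d p\imp\d q$ we obtain $\SPL=\SPi+(\Axsfive\cup\{\d p\imp\d q\})$.

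Now assume $\SPL\supsetneq\Esfive$ is non-trivial, has no variable-headed witness, and does not prove $\d p\imp\d q$; I claim $\SPL=\Efunn$ for a unique $n\ge1$. Take any witness $\sigma\imp\d\bigwedge Q$ with $\sigma=\bigwedge P\land\bigwedge_S\d\bigwedge S$ in normal form. If $|Q|\le 1$ then $Q$ does not occur in $\sigma$, and substituting $\top$ for $\sigma$'s variables puts $\d\top\imp\d\bigwedge Q$ (or $\top\imp\d\bigwedge Q$) into $\SPL$, whence $\d p\imp\d q\in\SPL$, a contradiction; so $|Q|\ge2$. Closing the witness under the modal rule and using $\etrans$ gives $\d\sigma\imp\d\bigwedge Q\in\SPL$, where $\d\sigma\approx\d\bigwedge P\land\bigwedge_S\d\bigwedge S$ over $\Axsfive$; substituting $\top$ for every variable outside $Q$ yields $\bigwedge_{T\in\mathcal T}\d\bigwedge T\imp\d\bigwedge Q\in\SPL$ for a non-empty family $\mathcal T$ of \emph{proper} subsets of $Q$, and since each such $T$ extends to a $(|Q|{-}1)$-subset of $Q$ and $\d$ is monotone, transitivity gives $\efun^{|Q|-1}\in\SPL$. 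Hence $\{m\ge1:\SPL\vdash\efun^m\}\ne\emptyset$; let $n$ be its least element, so $\SPL\supseteq\SPi+(\Axsfive\cup\{\efun^n\})=\Efunn$. For the reverse inclusion, suppose $\sigma\imp\tau\in\SPL\setminus\Efunn$; rewriting $\tau$ in $\Axfunn$-normal form over $\Efunn$ and splitting, some $\sigma\imp\alpha$ with $\alpha$ an $\Axfunn$-\nterm{} lies in $\SPL\setminus\Efunn\subseteq\SPL\setminus\Esfive$, so $\alpha\notin\{\top,\d\top\}$, $\alpha$ is not a variable by hypothesis, and therefore $\alpha=\d\bigwedge Q$ with $|Q|\le n$. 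But then either $|Q|\le1$, contradicting $\SPL\not\vdash\d p\imp\d q$, or $2\le|Q|\le n$ and the preceding paragraph gives $\efun^{|Q|-1}\in\SPL$ with $|Q|-1<n$, contradicting minimality of $n$. So $\SPL=\Efunn$.

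Finally, the four alternatives are pairwise distinct: $\Esfive$ proves none of $\d p\imp p$, $\d p\imp\d q$, $\efun$; the two-element \SLOa s with $\d=\idmap$ and with $\d\equiv\top$ separate the two degenerate logics from each other and from all of the $\Efunn$; and the complex algebra of the $(n{+}1)$-element cluster validates $\efun^{n+1}$ but refutes $\efun^{n}$, so the $\Efunn$ form a strictly decreasing chain above $\Esfive$. I expect the main work to sit in the functional case, in particular the inclusion $\SPL\subseteq\Efunn$: that is exactly where the $\Axfunn$-normal form must be combined with the ``prepend-$\d$, then restrict to $Q$'' substitution — the former caps the size of the conclusion $\d\bigwedge Q$ of an arbitrary $\SPL$-witness at $n$, the latter converts such a witness into an instance of $\efun^{|Q|-1}$ — and both, together with the triviality arguments, must be run purely syntactically, since $\Efunone$ and the two degenerate logics are incomplete and so Kripke completeness is unavailable for them.
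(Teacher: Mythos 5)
Your argument is, as far as I can check, correct, but it is not the paper's: the paper does not prove Theorem~\ref{t:jackson} at all, it imports it from Jackson~\cite{jackson2004}, whose proof is algebraic (a description of the lattice of subvarieties of semilattices with closure, which also yields the $\models_{\BAO}$-conservativity facts quoted around Theorems~\ref{t:comp} and~\ref{t:s5alt}). You instead re-derive the classification syntactically inside the spi-calculus, using exactly the machinery the paper develops elsewhere: the $\Axsfive$-flattening $\d(p\land\d q)\approx \d p\land\d q$ and the $\Axfunn$-\nterm s of Claim~\ref{c:s5altnormalform}, the cluster model $\M_\sigma^\forall$ (only its easy, syntactic direction is really needed to read off which atomic-conclusion implications lie in $\Esfive$), and closure under substitution. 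The two genuinely new ingredients you supply both check out: the ``prepend $\d$, then substitute $\top$ outside $Q$'' maneuver, which converts any witness $\sigma\imp\d\bigwedge Q$ with $|Q|=k\ge 2$ into an instance of $\efun^{k-1}$ (the $\d$-prefix is essential to get rid of the unguarded root conjunct, and the witness condition $Q\not\subseteq P$, $Q\not\subseteq S$ is what makes the resulting left-hand conjuncts proper subsets of $Q$); and the interplay between minimality of $n$ and the size bound $|Q|\le n$ enforced by the $\Axfunn$-normal form, which gives $\SPL\subseteq\Efunn$. The degenerate cases and the pairwise distinctness of the four logics are also fine (the two-element algebras with $\d$ the identity and with $\d a=\top$ for all $a$, the three-element algebra from Theorem~\ref{t:fun}~$(iv)$, and the cluster complex algebras do what you claim). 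Comparing the routes: Jackson's analysis buys the whole lattice of extensions plus conservativity over \BAOa s, which the paper uses elsewhere, whereas yours is elementary and self-contained in the paper's own calculus, and makes explicit that only soundness of the Kripke semantics and the Birkhoff-style calculus \eqref{birkhoff} are needed; indeed your occasional semantic phrasing (e.g.\ ``every algebra in $\SLO_{\SPL}$ is one-element'') can be replaced by direct substitutions such as deriving $\top\imp q$ from $\top\imp p$, so the proof can be run purely syntactically, as it must be for the incomplete logics $\Efunone$ and $\SPi+(\Axsfive\cup\{\d p\imp\d q\})$.
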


Then Proposition~\ref{p:trivial}, Theorems~\ref{t:comp}~$(ii)$, \ref{t:simple1}, \ref{t:fun}~$(iv)$, \ref{t:simple2} and \ref{t:jackson}  give a full
classification of the extensions $\Esfive$ according to their completeness: 
the trivial \sptheory, $\Esfive$,  $\SPi+(\Axsfive \cup \{\d p\imp p\})$ and $\Efunn$, for $1<n<\omega$, are \eqcomp, 
while $\Efunone$  and $\SPi+(\Axsfive \cup \{\d p\imp \d q\})$ are incomplete.


By Theorem~\ref{t:simple1},
$\SPi+(\Axsfive \cup \{\d p\imp p\})$ is \complex. Theorems~\ref{t:simple2}, \ref{t:altncomplex} and \ref{t:jackson} imply that it is the only \complex\ non-trivial proper extension of $\Esfive$:

\begin{corollary}\label{c:s5complex}
Let $\SPL$ be any non-trivial \sptheory{} such that $L\supseteq\Esfive$,  $\SPL\neq \Esfive$ and 
$L\neq \SPi+(\Axsfive \cup \{\d p\imp p\})$.
Then $L$ is not \complex. 
\end{corollary}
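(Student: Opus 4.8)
The plan is to derive Corollary~\ref{c:s5complex} directly from the structural results already assembled, namely Theorem~\ref{t:jackson} (Jackson's classification of the extensions of $\Esfive$), Theorem~\ref{t:simple2} (which handles the extension obtained by adding $\d p\imp\d q$) and Theorem~\ref{t:altncomplex} (which handles the \sptheories{} $\Efunn$ for $n\geq 2$). The key observation is that complexity implies completeness (as recorded in \S\ref{embed}), so any \complex{} \sptheory{} is in particular \eqcomp, and the incomplete extensions are therefore automatically not \complex.

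Concretely, I would proceed as follows. Let $\SPL$ be a non-trivial \sptheory{} with $\SPL\supseteq\Esfive$, $\SPL\ne\Esfive$, and $\SPL\ne\SPi+(\Axsfive\cup\{\d p\imp p\})$; we must show $\SPL$ is not \complex. By Theorem~\ref{t:jackson}, exactly one of the four listed cases holds for $\SPL$. The case $\SPL=\Esfive$ is excluded by hypothesis, and so is the case $\SPL=\SPi+(\Axsfive\cup\{\d p\imp p\})$. This leaves two possibilities: either $\SPL=\SPi+(\Axsfive\cup\{\d p\imp\d q\})$, or $\SPL=\Efunn$ for some $n$ with $1\leq n<\omega$. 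In the first case, Theorem~\ref{t:simple2} tells us that $\SPi+(\Axsfive\cup\{\d p\imp\d q\})$ is incomplete, hence not \complex. In the second case, if $n\geq 2$ then Theorem~\ref{t:altncomplex} states precisely that $\SPi+(\Axsfive\cup\{\efun^n\})=\Efunn$ is not \complex. Finally, the remaining sub-case $n=1$ gives $\SPL=\Efunone=\SPi+(\Axsfive\cup\{\efun\})$, which by Theorem~\ref{t:fun}~$(iv)$ is incomplete, hence not \complex. In every case $\SPL$ fails to be \complex, as required.

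There is essentially no genuine obstacle here: the corollary is a bookkeeping consequence of results proved earlier, and the only thing to be careful about is making sure the case analysis from Theorem~\ref{t:jackson} is exhaustive and that each of the four (really five, after splitting off $n=1$) cases is matched with the appropriate cited result. One minor point worth spelling out is the implication ``\complex{} $\Rightarrow$ \eqcomp,'' which is exactly the displayed implication in \S\ref{embed}; contrapositively, ``incomplete $\Rightarrow$ not \complex,'' which is what we invoke for the $\d p\imp\d q$ and $\Efunone$ cases. The $\Efunn$ ($n\geq 2$) case is the one where Theorem~\ref{t:altncomplex} directly asserts non-complexity (these \sptheories{} are in fact \eqcomp\ by Theorem~\ref{t:s5alt}, so the weaker route via incompleteness is unavailable there and the explicit \SLOa{} $\Aa_n$ of Fig.~\ref{f:altncomplex} is needed). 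Thus the proof is a short, clean appeal to the preceding theorems, and I would write it up in just a few sentences as above.
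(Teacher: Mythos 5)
Your proposal is correct and follows essentially the same route as the paper, which derives the corollary from Jackson's classification (Theorem~\ref{t:jackson}) together with Theorem~\ref{t:simple2}, Theorem~\ref{t:altncomplex}, and the incompleteness of $\Efunone$ (Theorem~\ref{t:fun}~$(iv)$), using the implication ``\complex{} $\Rightarrow$ \eqcomp'' for the incomplete cases. Your explicit splitting off of the $n=1$ case and citation of Theorem~\ref{t:fun}~$(iv)$ is exactly the detail the paper relies on in the sentence preceding the corollary.
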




Finally, we show that $\efun^n$ behaves differently when added to $\Esfour$.
A transitive frame $\Ff$ is said to be \emph{of depth\/} $n$, for $n\geq 1$, if $\Ff$ contains a chain of $n$ points from distinct
clusters but no longer chain of this sort. It is easy to see that, over  $\Esfour$, we can define the property
`$\F$ is of depth $\le n$' by the \spequation{} 
\[
\edep^n=\bigl( p\land \underbrace{\d\bigl(q\land \d(p\land\dots)\dots\bigr)}_{\text{$\d$ is used $n$ times}}\ \imp\ 
 \underbrace{\d\bigl(q\land \d(p\land\dots)\dots\bigr)}_{\text{$\d$ is used $n+1$ times}}\bigr).
\]
Then $\edep^1=\esym$ and $\edep^2$ has the following `disjunctive' correspondent:
\[
\forall x,y, z\,\bigl(R(x,y)\land R(y,z)\to R(y,x)\lor R(z,y)\bigr).
\]
Also, it is not hard to see that $\{\edep^n\}\models_{\SLO}\edep^{n+1}$, for all $n\ge 1$
(simply substitute $p\land\d q$ for $p$, and $q\land\d p$ for $q$ in $\edep^n$).

\begin{question}\label{q:depth}
Are $\SPi + \{\edep^n\}$, $\SPi + \{\etrans,\edep^n\}$ and  
$\SPi+(\Axsfour \cup \{\edep^n\})$ complete?
\end{question}

As an $n$-functional reflexive and transitive frame can have at most $n$ points, its depth must be $\leq n$.
Therefore, for any $n\geq 1$, we have 
\begin{equation}\label{alttodepth}
\Axsfour \cup \{\efun^n\}\models_{\CA}\edep^n.
\end{equation}

\begin{theorem}\label{t:altnotcomp}
$\Axsfour \cup \{\efun^n\}\not\models_{\SLO}\edep^n$, for any $n\geq 2$. 
\end{theorem}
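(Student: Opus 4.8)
The plan is to exhibit, for each fixed $n\geq 2$, a single \SLOa{} $\Aa$ that validates $\Axsfour\cup\{\efun^n\}$ but refutes $\edep^n$; by soundness \eqref{soundness} together with the definition of $\models_{\SLO}$ this yields $\Axsfour\cup\{\efun^n\}\not\models_{\SLO}\edep^n$. Since $\Axsfour$ forces the operator $\d$ to be a closure operator ($\erefl$ gives $p\le\d p$, $\etrans$ gives $\d\d p\le\d p$) and $\efun^n$ forces $\d$ to be ``$n$-functional'' in the algebraic sense of \eqref{efuneq}, while $\edep^n$ is a genuinely disjunctive (non-Horn) sp-implication, the whole point is that the algebraic semantics cannot ``see'' the case distinction that makes $\edep^n$ valid on $n$-functional quasiorder frames via \eqref{alttodepth}. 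So I expect $\Aa$ to be a smallish finite \SLOa{} drawn in the style of \S\ref{draw}, along the lines of the $3$-element counterexamples appearing in the proofs of Theorems~\ref{t:fun} and \ref{t:simple2} but with enough points to accommodate the depth-$n$ failure.

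First I would pin down what $\edep^n$ actually says as an algebraic inequality: substituting a valuation that makes $p$ equal to some $a$ and $q$ equal to some $b$, the left-hand side becomes the $n$-fold alternating term $a\land\d(b\land\d(a\land\cdots))$ and the right-hand side the same term with one more $\d$. To refute it I need $a,b\in A$ for which this alternating term of ``depth $n$'' is strictly above the one of ``depth $n+1$''. The natural candidate is a chain-like semilattice: take a linearly ordered semilattice $\top=c_0>c_1>\dots>c_N>g$ for a suitable $N$ (depending on $n$), with $g$ the bottom, and define $\d$ to move each element ``up by one step along the chain'' ($\d c_i=c_{i-1}$ for $i\ge 1$, $\d c_0=c_0=\top$) except that $\d g=g$ — this is exactly the flavour of the \SLOa{} used for Example~\ref{Ex:4b} and of the infinite algebra in the proof of Theorem~\ref{t:comp0}. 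One checks directly that $\d$ is monotone, that $p\le\d p$ holds (each element is $\le$ its $\d$-image since $c_i\le c_{i-1}$), and that $\d\d p\le\d p$ can be arranged by a small modification — or, if this particular chain fails transitivity, by instead taking $\d$ to be the closure operator ``jump straight to some fixed upper element'' on a cleverly chosen non-linear meet-semilattice. The key computational check is then that the alternating term stabilises only after depth $>n$, so that $\edep^n$ fails.

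The remaining, and I expect main, obstacle is to simultaneously satisfy $\efun^n$, i.e.\ $\bigwedge_{Q\subseteq P,|Q|=n}\d\bigwedge Q\le\d\bigwedge P$ for an $(n{+}1)$-element set $P$. Equivalently (as used inside the proof of Theorem~\ref{t:altn}), one must ensure that $\d$ never ``splits'' an $(n{+}1)$-fold meet more than it splits all the $n$-fold sub-meets; on a linearly ordered semilattice with $\d$ monotone this is automatic because $\bigwedge P$ already equals the least of its elements, hence equals $\bigwedge Q$ for the $Q$ containing that least element, so $\efun^n$ holds trivially there — in fact $\efun^1=\efun$ already holds on any chain. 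That is a strong hint that a linearly ordered \SLOa{} is the right object: on it $\Axsfour\cup\{\efun^n\}$ is essentially free, while $\edep^n$ — which on frames needs the ``linear quasiorder up to depth $n$'' structure — can still be refuted by making the chain long enough and $\d$ a step-up map rather than a jump-to-top map. So the proof reduces to: (1) fix $N=n$ (or $n+1$), define $\Aa$ on the chain $\top=c_0>\dots>c_n>g$ with $\d c_0=c_0$, $\d c_i=c_{i-1}$ for $1\le i\le n$, $\d g=g$; (2) verify $p\le\d p$, $\d\d p\le\d p$ and $\efun^n$ by the triviality-on-chains argument; (3) compute the alternating term of \eqref{efuneq}\,–\,style for $\edep^n$ with a well-chosen valuation (say $p\mapsto g$ at the relevant coordinates and $q\mapsto c_n$, reading off the term bottom-up) and observe the strict inequality. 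The one delicate point is step (2)'s transitivity clause for the step-up $\d$; if it genuinely fails I would replace the linear order by the semilattice obtained from the $n$-point reflexive-transitive chain frame of depth $n$ by closing under the relevant operations, mirroring how the ``entropic'' algebras of \cite{jackson2004} are handled, and redo the computation there.
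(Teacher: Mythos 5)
Your high-level strategy (exhibit one finite \SLOa{} validating $\Axsfour\cup\{\efun^n\}$ and refuting $\edep^n$ under a concrete valuation) is indeed the paper's strategy, but your concrete construction cannot work, and the obstacle is not the one you flag. The problem is not merely that the ``step-up'' map violates $\etrans$; it is that \emph{no} linearly ordered \SLOa{} validating $\erefl$ can refute $\edep^n$ at all. Since $\edep^n$ contains only the two variables $p,q$, a refuting valuation assigns two elements $a=\vala(p)$, $b=\vala(q)$, and on a chain these are comparable. Writing $\sigma_1=\d a$, $\tau_1=\d b$, $\sigma_k=\d(a\land\tau_{k-1})$, $\tau_k=\d(b\land\sigma_{k-1})$, an easy induction using only $x\le\d x$ and monotonicity shows that all $\sigma_k,\tau_k\ge a\land b$; hence if, say, $b\le a$ then $b\land\sigma_k=b$ for all $k$, so $\tau_k=\d b$ for all $k\ge1$ and $a\land\tau_n\le\d b=\tau_{n+1}$, and the case $a\le b$ is symmetric. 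So $\edep^n$ holds in every chain with a reflexive monotone $\d$ (transitivity is not even needed), and the ``triviality-on-chains'' phenomenon that hands you $\efun^n$ for free also hands you $\edep^n$, destroying the counterexample. Your fallback is equally dead: any subalgebra of $\Fslo$ for a quasiorder frame $\F$ of depth $\le n$ validates $\edep^n$, because $\F$ does.

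The missing idea is that the refuting valuation must send $p$ and $q$ to \emph{incomparable} elements, so the algebra must contain two long incomparable descending chains meeting in a bottom element $g$, arranged so that the alternating terms walk down the two chains without collapsing, while $\efun^n$ still holds because any $n+1$ pairwise incomparable elements necessarily contain one element from each chain, whose meet is $g$ (so the left-hand side of $\efun^n$ already evaluates down to the bottom region). This is exactly what the paper's algebra $\Aa_n$ in Fig.~\ref{f:bigslo} does: with $\vala(p)=d_n$, $\vala(q)=e_n$ one gets $\tau_k[\vala]=c_{n-k}$ and $\sigma_k[\vala]=b_{n-k}$, so the left-hand side of $\edep^n$ evaluates to $d_n\land c_0=d_0$ while the right-hand side evaluates to $\d(e_n\land b_0)=e_0$, and $d_0\not\le e_0$; meanwhile $\Aa_n\models\Axsfour$ and $\Aa_n\models\efun^n$ by the incomparability argument just described. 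Without some such two-winged structure your plan cannot be repaired.
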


\begin{proof}
Fix some $n\geq 2$ and take the \SLOa{} $\Aa_n$ in Fig.~\ref{f:bigslo}.
\begin{figure}[ht]
\begin{center}
\begin{tikzpicture}[scale=.6]
\draw [fill] (6.5,6) circle [radius=0.03];
\draw [fill] (6.5,6.2) circle [radius=0.03];
\draw [fill] (6.5,6.4) circle [radius=0.03];
\draw [fill] (3.85,3.65) circle [radius=0.03];
\draw [fill] (3.7,3.8) circle [radius=0.03];
\draw [fill] (3.55,3.95) circle [radius=0.03];
\draw [fill] (9.15,3.65) circle [radius=0.03];
\draw [fill] (9.3,3.8) circle [radius=0.03];
\draw [fill] (9.45,3.95) circle [radius=0.03];

\draw [fill] (4.5,3) circle [radius=0.1];
\node[left] at (4.5,3) {$d_1$};
\draw [fill] (3,4.5) circle [radius=0.1];
\node[left] at (3,4.5) {$d_{n-2}$};
\draw [fill] (2,5.5) circle [radius=0.1];
\node[left] at (2,5.5) {$d_{n-1}$};
\draw [fill] (1,6.5) circle [radius=0.1];
\node[left] at (1,6.5) {$d_n$};
\draw [fill] (8.5,3) circle [radius=0.1];
\node[right] at (8.5,2.8) {$e_1$};
\draw [fill] (10,4.5) circle [radius=0.1];
\node[right] at (10,4.3) {$e_{n-2}$};
\draw [fill] (11,5.5) circle [radius=0.1];
\node[right] at (11,5.3) {$e_{n-1}$};
\draw [fill] (12,6.5) circle [radius=0.1];
\node[right] at (12,6.3) {$e_{n}$};

\draw [very thick] (6.5,1) circle [radius=0.15];
\node[below right] at (6.5,1) {$g$};
\draw [very thick] (6.5,3) circle [radius=0.15];
\node[right] at (6.6,3) {$a_0$};
\draw [very thick] (6.5,5) circle [radius=0.15];
\node[right] at (6.6,5) {$a_1$};
\draw [very thick] (6.5,8) circle [radius=0.15];
\node[right] at (6.6,8) {$a_{n-2}$};
\draw [very thick] (6.5,10) circle [radius=0.15];
\node[right] at (6.6,10) {$a_{n-1}$};
\draw [very thick] (6.5,12) circle [radius=0.15];
\node[above right] at (6.5,12) {$\top$};
\draw [very thick] (5.5,2) circle [radius=0.15];
\node[left] at (5.4,2) {$d_0$};
\draw [very thick] (5.5,4) circle [radius=0.15];
\node[above left] at (5.5,4.1) {$b_0$};
\draw [very thick] (5.5,7) circle [radius=0.15];
\node[above left] at (5.7,7.1) {$b_{n-3}$};
\draw [very thick] (5.5,9) circle [radius=0.15];
\node[above left] at (5.7,9.1) {$b_{n-2}$};
\draw [very thick] (5.5,11) circle [radius=0.15];
\node[above left] at (5.7,11.1) {$b_{n-1}$};

\draw [very thick] (7.5,2) circle [radius=0.15];
\node[right] at (7.6,1.8) {$e_0$};
\draw [very thick] (7.5,4) circle [radius=0.15];
\node[above right] at (7.4,4.1) {$c_0$};
\draw [very thick] (7.5,7) circle [radius=0.15];
\node[above right] at (7.4,7) {$c_{n-3}$};
\draw [very thick] (7.5,9) circle [radius=0.15];
\node[above right] at (7.4,9) {$c_{n-2}$};
\draw [very thick] (7.5,11) circle [radius=0.15];
\node[above right] at (7.4,11) {$c_{n-1}$};

\draw [very thin] (6.7,1.2) -- (7.3,1.8);
\draw [very thin] (6.3,1.2) -- (5.7,1.8);
\draw [very thin] (7.7,2.2) -- (8.8,3.3);
\draw [very thin] (5.3,2.2) -- (4.2,3.3);
\draw [very thin] (9.7,4.2) -- (12,6.5);
\draw [very thin] (3.3,4.2) -- (1,6.5);
\draw [very thin] (1,6.5) -- (5.3,10.8);
\draw [very thin] (2,5.5) -- (5.3,8.8);
\draw [very thin] (3,4.5) -- (5.3,6.8);
\draw [very thin] (4.5,3) -- (5.3,3.8);
\draw [very thin] (5.7,2.2) -- (6.3,2.8);
\draw [very thin] (7.3,2.2) -- (6.7,2.8);
\draw [very thin] (8.5,3) -- (7.7,3.8);
\draw [very thin] (10,4.5) -- (7.7,6.8);
\draw [very thin] (11,5.5) -- (7.7,8.8);
\draw [very thin] (12,6.5) -- (7.7,10.8);
\draw [very thin] (6.3,3.2) -- (5.7,3.8);
\draw [very thin] (6.7,3.2) -- (7.3,3.8);
\draw [very thin] (5.7,4.2) -- (6.3,4.8);
\draw [very thin] (7.3,4.2) -- (6.7,4.8);
\draw [very thin] (5.7,7.2) -- (6.3,7.8);
\draw [very thin] (7.3,7.2) -- (6.7,7.8);
\draw [very thin] (5.7,9.2) -- (6.3,9.8);
\draw [very thin] (7.3,9.2) -- (6.7,9.8);
\draw [very thin] (5.7,11.2) -- (6.3,11.8);
\draw [very thin] (6.3,8.2) -- (5.7,8.8);
\draw [very thin] (6.7,8.2) -- (7.3,8.8);
\draw [very thin] (6.3,10.2) -- (5.7,10.8);
\draw [very thin] (6.7,10.2) -- (7.3,10.8);
\draw [very thin] (7.3,11.2) -- (6.7,11.8);

\draw [very thick,->] (4.5,3) to [out=90, in=180] (5.3,4);
\draw [very thick,->] (3,4.5) to [out=70, in=200] (5.3,7);
\draw [very thick,->] (2,5.5) to [out=70, in=200] (5.3,9);
\draw [very thick,->] (1,6.5) to [out=70, in=200] (5.3,11);

\draw [very thick,->] (8.5,3) to [out=90, in=0] (7.7,4);
\draw [very thick,->] (10,4.5) to [out=110, in=-20] (7.7,7);
\draw [very thick,->] (11,5.5) to [out=110, in=-20] (7.7,9);
\draw [very thick,->] (12,6.5) to [out=110, in=-20] (7.7,11);
\end{tikzpicture}
\end{center}
\caption{The \SLOa{} $\Aa_n$ in the proof of Theorem~\ref{t:altnotcomp}.}\label{f:bigslo}
\end{figure}
%
%
%
It is easy to check that $\Aa_n\models\Axsfour$. We claim that $\Aa_n\models \efun^n$. In fact, if $n\geq 3$ then $\Aa_n\models \efun^3$. We prove only the latter. Take any valuation $\vala$ in $\Aa_n$. If 
$\vala(p_0)$, $\dots$, $\vala(p_3)$ are not pairwise $\le$-incomparable, then 
$\Aa_n\models \efun^3[\vala]$ clearly holds. And if they are, then 
$d_i,e_j\in\{\vala(p_0),\dots,\vala(p_3)\}$ must hold for some $i,j\le n$. As $d_i\land e_j=g$, the left-hand side of
$\efun^3[\vala]$ evaluates to $g$.

On the other hand, we claim that $\Aa_n\not\models \edep^n[\vala]$ for the valuation $\vala(p)=d_n$ and $\vala(q)=e_n$. Indeed, for $1\le k\le n$, define \spterms{} $\tau_k$ and $\sigma_k$ by taking
$\tau_1=\d q$, $\sigma_1=\d p$, $\tau_k=\d (q\land\sigma_{k-1})$ and $\sigma_k=\d (p\land\tau_{k-1})$.
Then $\edep^n$ is $p\land\tau_n\imp\tau_{n+1}$. It is not hard to prove by parallel induction that
$\tau_k[\vala]= c_{n-k}$ and $\sigma_k[\vala]=b_{n-k}$ for all $1\le k\le n$. Therefore, 
the left-hand side of $\edep^n$ evaluates to $d_n\land c_0=d_0$, while the right-hand side to 
$\d(e_n\land b_0)=\d e_0=e_0$.
\end{proof}

As a consequence of \eqref{alttodepth}, Theorems~\ref{t:fun} and \ref{t:altnotcomp} we obtain:

\begin{corollary}
$\SPi+(\Axsfour \cup \{\efun^n\})$ is not \eqcomp, for any $n\geq 1$. 
\end{corollary}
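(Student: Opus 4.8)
The plan is to read the result off facts already in hand, splitting into the cases $n=1$ and $n\geq 2$. For $n=1$ I would simply observe that $\efun^1=\efun$, so the \sptheory{} in question is literally the one listed as incomplete in Theorem~\ref{t:fun}~$(iv)$; the separating \spequation{} there is $\d p\imp p$, which is a $\models_{\CA}$-consequence of $\Axsfour\cup\{\efun\}$ (all of whose frames are based on the identity relation) but not a $\models_{\SLO}$-consequence, by the three-element \SLOa{} exhibited in the proof of Theorem~\ref{t:fun}~$(iv)$. Nothing more is needed in this case.

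For $n\geq 2$ the plan is to use $\edep^n$ as the witness of incompleteness. On the Kripke side, \eqref{alttodepth} already records that $\Axsfour\cup\{\efun^n\}\models_{\CA}\edep^n$, since a reflexive, transitive, $n$-functional frame has at most $n$ points and hence depth at most $n$. On the algebraic side, Theorem~\ref{t:altnotcomp} states exactly that $\Axsfour\cup\{\efun^n\}\not\models_{\SLO}\edep^n$ when $n\geq 2$, the refuting \SLOa{} being $\Aa_n$ of Fig.~\ref{f:bigslo}. Hence the two consequence relations $\Axsfour\cup\{\efun^n\}\models_{\CA}$ and $\Axsfour\cup\{\efun^n\}\models_{\SLO}$ disagree on the \spequation{} $\edep^n$, which is precisely the definition of $\SPi+(\Axsfour\cup\{\efun^n\})$ failing to be \eqcomp; and by \S\ref{eqcompness} this conclusion does not depend on the chosen axiomatisation.

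Since every ingredient — Theorem~\ref{t:fun}~$(iv)$, the observation \eqref{alttodepth}, and Theorem~\ref{t:altnotcomp} — is already available, the serious work (building $\Aa_n$ and checking $\Aa_n\models\Axsfour\cup\{\efun^n\}$ while $\Aa_n\not\models\edep^n$) is behind us, so there is essentially no obstacle here; the only verification left is the trivial one that $\edep^n$ (respectively $\d p\imp p$) is a genuine \spequation, immediate from its syntactic form. The single point to keep straight is that the two cases really exhaust $n\geq 1$ and that $\edep^n$ is the correct separator exactly when $n\geq 2$ — for $n=1$ the analogous separator would be $\edep^1=\esym$, but there Theorem~\ref{t:fun} already does the job, so $\edep^1$ is not needed.
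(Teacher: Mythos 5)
Your proof is correct and follows exactly the route the paper intends: the corollary is stated there as a consequence of \eqref{alttodepth}, Theorem~\ref{t:fun} (covering $n=1$ via $\efun^1=\efun$) and Theorem~\ref{t:altnotcomp} (giving $\Axsfour\cup\{\efun^n\}\not\models_{\SLO}\edep^n$ for $n\geq 2$). Your additional remarks — that completeness is independent of the chosen axiomatisation and that $\edep^1$ would not separate in the $n=1$ case — are accurate but not needed beyond what the paper's cited ingredients already provide.
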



\subsection{\spequationscap{} defining width above $\Esfour$.}
Consider the \spequation{}
\[
\ewc=\bigl(  \d(p\land q)\land \d(p\land r)\imp \d (p\land \d q\land \d r)\bigr),
\]
with the disjunctive correspondent
\begin{equation}\label{fowconn}
\forall x,y,z\, \bigl(R(x,y)\land R(x,z)\to 
\bigl(R(y,y)\land R(y,z)\bigr)\lor \bigl(R(z,z)\land R(z,y)\bigr)\bigr).
\end{equation}
Now let
\begin{equation}\label{ewconn}
\Axlin=\{\erefl,\etrans, \ewc\}.
\end{equation}
It is easy to see that $\Axlin$ defines the class of all linear quasiorders (frames for the modal logic 
$\mf{S4.3}$).
%
We set 
\[
\Elin=\SPi +\Axlin.
\]

\begin{theorem}\label{t:notdlocons}
Neither $\SPi + \{\ewc\}$ nor $\Elin$ is \complex.	
\end{theorem}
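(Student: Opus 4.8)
The plan is to exhibit, for each of the two \sptheories{}, a \SLOa{} that validates the axioms but cannot be embedded into $\Fslo$ for any frame $\F$ validating those axioms. Since $\Elin = \SPi + \Axlin$ with $\Axlin = \{\erefl, \etrans, \ewc\}$ extends $\SPi + \{\ewc\}$ only by the Horn axioms for reflexivity and transitivity, it is natural to try to design a single \SLOa{} that simultaneously handles both cases, or two closely related ones. The key structural fact to exploit is the disjunctive correspondent \eqref{fowconn} of $\ewc$: in any frame validating $\ewc$, whenever a point $x$ sees two points $y, z$, one of $y, z$ must see the other (and be reflexive). So the obstruction I would build is a \SLOa{} with two ``incomparable successors'' that are forced, in the algebra, to behave as though neither sees the other, yet whose existence in a complex algebra of a frame for $\ewc$ would require one of them to see the other.

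First I would construct a finite \SLOa{} $\Aa$ over a small meet-semilattice — the natural candidate is something like a ``diamond'' shape: a top $\top$, two incomparable elements $b, c$ below it, their meet $b \land c$, and possibly a smallest element $g$ with $\d g = g$ playing the role of the sink/$\bot$-like element (as in the proofs of Examples~\ref{e:simple}, \ref{e:simple2} and Theorems~\ref{t:altncomplex}, \ref{t:altnotcomp}). I would define $\d$ so that $\d b, \d c$ are ``large'' (say $\d b = \d c = \top$ or some element $\ge$ both $b$ and $c$), while $\d(b \land c)$ and $\d g$ collapse downward, arranging the values so that $\Aa \models \ewc$: one checks that for any valuation, $\d(p \land q)[\vala] \land \d(p \land r)[\vala] \le \d(p \land \d q \land \d r)[\vala]$, which typically holds because the left-hand side already collapses to a small element whenever $p \land q$ and $p \land r$ are genuinely different. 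For the $\Elin$ case I would additionally need $\Aa \models \erefl$ and $\Aa \models \etrans$, i.e.\ $a \le \d a$ and $\d\d a \le \d a$ for all $a$ — this constrains the $\d$ values (forces reflexivity-type behaviour $x \le \d x$), so the $\Elin$-witness may have to be a slightly different or larger \SLOa{} than the bare $\ewc$-witness; I would build whichever minimal examples work.

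The core of the argument is then the non-embeddability. Suppose $\eta \colon \Aa \to \Fslo$ is an \SPa-embedding for some frame $\F = (W, R^\F)$ with $\F \models \ewc$ (and, in the $\Elin$ case, $\F$ reflexive and transitive). From $\d b = \d c = \top$ (or $\ge b \lor c$) one gets, picking a point $w \in W \setminus \eta(g)$, witnesses $y \in \eta(b)$ and $z \in \eta(c)$ with $(w, y), (w, z) \in R^\F$. Since $b, c$ are incomparable and $g$ is $\d$-fixed, one argues $y \notin \eta(c)$, $z \notin \eta(b)$, and neither is in $\eta(g)$, so $y \ne z$ and in fact $y$ ``belongs only to $b$-side'' sets and $z$ only to $c$-side ones. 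Now the disjunctive correspondent \eqref{fowconn} forces $R^\F(y, z)$ or $R^\F(z, y)$; say $R^\F(y, z)$. Then $y \in \d^+ \eta(c) = \eta(\d c)$, but also $y \in \eta(b)$, so $y \in \eta(b \land \d c) = \eta(b) \cap \eta(\d c)$ — and by choosing the algebra so that $b \land \d c$ (equivalently $\d c$ below the relevant threshold, using monotonicity and the specific values) collapses into $\eta(g)$ or contradicts $y \notin \eta(g)$, we reach a contradiction. The $\Elin$ case adds reflexivity/transitivity of $R^\F$, which only strengthens the available implications (e.g.\ $R^\F(y,y)$, transitivity pushing $w$ further), so the same collapse argument applies, perhaps with the enriched \SLOa{}.

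The main obstacle I anticipate is getting the \SLOa{} right: the algebra must (i) genuinely validate $\ewc$ (and, for $\Elin$, $\erefl, \etrans$), which pushes toward making $\d$ values large and the meet-structure flat, yet (ii) retain enough rigidity that the forced arrow $R^\F(y,z)$ produces a contradiction, which pushes toward $\d c$ being small near $b$. Balancing these — likely by a careful choice of a handful of elements and $\d$-values, possibly with an auxiliary ``shadow'' element analogous to the $d_i, e_i$ construction in Figure~\ref{f:bigslo} — is the delicate part, as is verifying $\ewc$ in full over all valuations (the worst cases are when $p$ is assigned $\top$ and $q, r$ straddle $b$ and $c$). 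Once the algebra is fixed, both the validity check and the non-embeddability argument are routine diagram-chases in $\Fslo$ using monotonicity of $\d^+$ and the first-order correspondent.
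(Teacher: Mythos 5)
Your overall strategy is exactly the one the paper uses: exhibit a finite \SLOa{} validating the axioms and show that any \SPa-embedding into $\Fslo$ would force, via witnesses for two diamonds, a pair of successors neither of which sees the other, contradicting the correspondent \eqref{fowconn}. But the proposal has a genuine gap: the entire weight of the theorem lies in producing a concrete algebra in which both requirements can be met simultaneously, and you never produce one; moreover, the candidate you sketch first (a diamond $g < b,c < \top$ with $\d g = g$ and $\d b = \d c = \top$, or any variant where $\d b,\d c$ dominate both $b$ and $c$ \emph{via} $\top$) demonstrably does not work. With $\d c = \top$ the forced arrow $R^\F(y,z)$ only gives $y \in \d^+\eta(c) = \eta(\d c) = W$, so $y \in \eta(b \land \d c) = \eta(b)$, which is no contradiction at all: the very largeness of $\d b,\d c$ that makes the witnesses easy to find also destroys the collapse you need. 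You correctly identify this tension at the end, but resolving it is the proof, not a routine detail.

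The paper resolves it with the seven-element \SLOa{} of Fig.~\ref{f:notdlocons}: elements $g < a,b,c$, with $a,b < d$, $b,c < e$, $d,e < \top$, where $\d$ fixes $g,b,d,e,\top$ and $\d a = d$, $\d c = e$. The point is that the diamonds are kept \emph{strictly below} $\top$: one has $b \le \d a$ and $b \le \d c$ (so any $u \in \eta(b)\setminus\eta(g)$ has $R^\F$-successors $v \in \eta(a)$ and $w \in \eta(c)$), yet $a \land \d c = g$ and $c \land \d a = g$. Since $\eta(g) = \eta(\d g) = \d^+\eta(g)$ and $u \notin \eta(g)$, neither $v$ nor $w$ lies in $\eta(g)$; hence $v \in \eta(a)\setminus\eta(g)$ forces $v \notin \d^+\eta(c)$, so $(v,w)\notin R^\F$, and symmetrically $(w,v)\notin R^\F$, refuting \eqref{fowconn}. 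Note also that your worry about needing a separate witness for $\Elin$ is unnecessary: the paper's single algebra validates all of $\Axlin$, and since every frame for either $\SPi+\{\ewc\}$ or $\Elin$ validates \eqref{fowconn}, the one non-embeddability argument disposes of both logics at once. So the verdict is: right approach, but the decisive construction is missing and your provisional version of it fails; the fix is precisely the intermediate elements $d,e$ that let the diamonds cover $b$ without being $\top$, so that $a\land\d c$ and $c\land\d a$ collapse to $g$ while $\Axlin$ still holds in the algebra.
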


\begin{proof}
Take the \SLOa{} $\Aa$ in Fig.~\ref{f:notdlocons}.
\begin{figure}[ht]
\begin{center}
\begin{tikzpicture}[scale=.65]
\draw [fill] (1,3) circle [radius=0.1];
\node[left] at (1,3) {$a$};
\draw [fill] (5,3) circle [radius=0.1];
\node[right] at (5,3) {$c$};
\draw [very thick] (3,1) circle [radius=0.15];
\node[below left] at (3,1) {$g$};
\draw [very thick] (3,3) circle [radius=0.15];
\node[left] at (2.8,3) {$b$};
\draw [very thick] (2,4) circle [radius=0.15];
\node[above] at (2,4.1) {$d$};
\draw [very thick] (4,4) circle [radius=0.15];
\node[above] at (4,4.1) {$e$};
\draw [very thick] (3,5) circle [radius=0.15];
\node[above right] at (3,5) {$\top$};

\draw [very thin] (3,1.2) -- (3,2.8);
\draw [very thin] (2.8,1.2) -- (1.1,2.9);
\draw [very thin] (3.2,1.2) -- (4.9,2.9);
\draw [very thin] (1.1,3.1) -- (1.8,3.8);
\draw [very thin] (4.9,3.1) -- (4.2,3.8);
\draw [very thin] (2.8,3.2) -- (2.2,3.8);
\draw [very thin] (3.2,3.2) -- (3.8,3.8);
\draw [very thin] (2.2,4.2) -- (2.8,4.8);
\draw [very thin] (3.8,4.2) -- (3.2,4.8);

\draw [very thick,->] (1,3.1) to [out=90, in=180] (1.8,4);
\draw [very thick,->] (5,3.1) to [out=90, in=0] (4.2,4);
\end{tikzpicture}
\end{center}
\caption{The \SLOa{} $\Aa$ in the proof of Theorem~\ref{t:notdlocons}.}\label{f:notdlocons}
\end{figure}
It is not hard to check that $\Aa\models\Axlin$.
Now suppose we have an \SPa-embedding $\eta:\Aa\to\Fslo$, for some $\F=(W,R^\F)$.
Then there is $u\in\eta(b)\setminus\eta(g)$.
As $b\leq \d a$ and $b\leq \d c$, we have
$\eta(b)\subseteq\eta(\d a)=\d^+\eta(a)$ and $\eta(b)\subseteq\eta(\d c)=\d^+\eta(c)$. 
Then $u\in \d^+\eta(a)\cap \d^+\eta(c)$, and so there are $v\in \eta(a)$ and $w\in \eta(c)$ such 
that $(u,v)\in R^\F$ and $(u,w)\in R^\F$. 
As $\eta(g)=\eta(\d g)=\d^+\eta(g)$, we have $v\notin\eta(g)=\eta(a\land\d c)=\eta(a)\cap\d^+\eta(c)$,
and so $(v,w)\notin R^\F$. Similarly,
$w\notin\eta(g)=\eta(c\land\d a)=\eta(c)\cap\d^+\eta(a)$,
and so $(w,v)\notin R^\F$.
Therefore,  \eqref{fowconn} does not hold in $\F$.
\end{proof}

Now we use the \method\ method to prove the following:
\begin{theorem}\label{t:lincomp}
$\Elin$ is \eqcomp.
\end{theorem}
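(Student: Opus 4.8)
The plan is to run the method of \method{} from \S\ref{sec:interp} with $\Ec=\Axlin$ and $\Ec^-=\Axsfour$; recall that $\SPi+\Axsfour=\Esfour$ is \complex{} --- hence \eqcomp{} --- and has the polynomial finite frame property by Corollary~\ref{t:ss} and Theorem~\ref{thm:horncomplete}. Call an \spterm{} \emph{linear} if its tree-model $\M_\varrho$ is a single branch (equivalently, if $\varrho$ is built from propositional variables and $\top$ by $\land$ and applications of a single diamond $\d(\cdot)$). For an \spterm{} $\varrho$ let $N_\varrho$ be the finite set of linear \spterms{} $\trm(\M_\varrho^{b})$, where $b$ ranges over the root-to-leaf branches of $\M_\varrho$ and $\M_\varrho^{b}$ is the submodel of $\M_\varrho$ induced on $b$ (keeping the propositional labels occurring along $b$). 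The normal-form step is to show that $\Axlin\vdash_{\SLO}\varrho\approx\bigwedge N_\varrho$ for every \spterm{} $\varrho$. The implication $\varrho\imp\bigwedge N_\varrho$ already holds in $\SPi$: for each branch $b$ the inclusion $\M_\varrho^{b}\hookrightarrow\M_\varrho$ is a homomorphism, so $\vdash_{\SLO}\varrho\imp\trm(\M_\varrho^{b})$ by Proposition~\ref{p:htot} (using $\trm(\M_\varrho)=\varrho$), and the conjunction rule collects these.

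For the converse I would induct on the modal depth of $\varrho$. If $\varrho=\bigwedge P\land\d\gamma_1\land\dots\land\d\gamma_m$, the branches of $\M_\varrho$ are the branches of the $\M_{\gamma_i}$ prefixed by the root, so $\bigwedge N_\varrho$ is $\SPi$-provably equal to $\bigwedge P\land\bigwedge_{i\le m}\bigwedge_{c}\d\,\trm(\M_{\gamma_i}^{c})$; since the induction hypothesis gives $\Axlin\vdash_{\SLO}\bigwedge_{c}\trm(\M_{\gamma_i}^{c})\imp\gamma_i$, hence $\Axlin\vdash_{\SLO}\d\bigl(\bigwedge_c\trm(\M_{\gamma_i}^c)\bigr)\imp\d\gamma_i$, it suffices to prove for each $i$ that $\Axlin\vdash_{\SLO}\bigwedge_{c}\d\,\trm(\M_{\gamma_i}^{c})\imp\d\bigl(\bigwedge_{c}\trm(\M_{\gamma_i}^{c})\bigr)$ --- that is, a diamond may be pulled out of a conjunction whose conjuncts all share the same root-label $\beta_0$. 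This is precisely what $\ewc$ buys: substituting $p:=\beta_0$, $q:=\d\delta_1$, $r:=\d\delta_2$ into $\ewc$ yields $\d(\beta_0\land\d\delta_1)\land\d(\beta_0\land\d\delta_2)\imp\d(\beta_0\land\d\d\delta_1\land\d\d\delta_2)$, after which $\etrans$ and monotonicity collapse the double diamonds; one iterates $\ewc$ (with a little $\etrans$ cleanup) for more than two conjuncts. Thus the $\Axlin$-proxy of a given $\sigma\imp\tau$ is $\sigma\imp\beta$ with $\beta=\bigwedge N_\tau$, and by the normal-form equivalence and soundness (see~\eqref{soundness}) we have $\Axlin\vdash_{\SLO}\tau\approx\beta$ and $\Axlin\models_{\CA}\sigma\imp\tau$ iff $\Axlin\models_{\CA}\sigma\imp\beta$.

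It remains to reduce Kripke-validity of the proxy to $\Axsfour$. Since $\beta$ is a conjunction of linear \spterms{} and $\CA_{\Axlin}\subseteq\CA_{\Axsfour}$, it is enough to prove that for every linear \spterm{} $\lambda$ and every \spterm{} $\sigma$,
\[
\Axlin\models_{\CA}\sigma\imp\lambda\quad\Longleftrightarrow\quad\Axsfour\models_{\CA}\sigma\imp\lambda .
\]
The direction $\Leftarrow$ is immediate. For $\Rightarrow$ I would argue by contraposition. As $\Axsfour$ consists of \hornequations, Proposition~\ref{p:Hclosure} gives $\Axsfour\models_{\CA}\sigma\imp\lambda$ iff $\Pi_{\Axsfour}(\M_\sigma),r_\sigma\kmodels\lambda$, where $\Pi_{\Axsfour}(\M_\sigma)$ is the reflexive--transitive closure of the tree $\M_\sigma$; since $\lambda$ is a chain, the failure of this condition says exactly that the sequence of label-sets along $\M_\lambda$ cannot be matched --- in order, with repetitions allowed --- by any root-to-leaf branch of $\M_\sigma$. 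Given such a failure I would produce a finite \emph{linear} quasiorder refuting $\sigma\imp\lambda$ by equipping $W_\sigma$ with a carefully chosen reflexive linear order $\preceq$ extending the tree order of $\M_\sigma$ and taking the frame $(W_\sigma,\preceq)$ with valuation $\V_\sigma$: for \emph{any} such $\preceq$ the identity map is a homomorphism from $\M_\sigma$, so $\sigma$ still holds at $r_\sigma$ there; the point of the construction is to order the sibling subtrees of $\M_\sigma$ so that no $\preceq$-increasing labelled sequence realises $\lambda$ at $r_\sigma$.

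Once both steps are in place the conclusion is routine: from $\Axlin\models_{\CA}\sigma\imp\tau$ we get $\Axlin\models_{\CA}\sigma\imp\beta$, hence $\Axsfour\models_{\CA}\sigma\imp\tau_b$ for each branch $b$ of $\tau$, hence $\Axsfour\models_{\CA}\sigma\imp\beta$; by \eqcomp ness of $\Esfour$ and~\eqref{birkhoff}, $\Axsfour\vdash_{\SLO}\sigma\imp\beta$, so $\Axlin\vdash_{\SLO}\sigma\imp\beta$, and then $\Axlin\vdash_{\SLO}\sigma\imp\tau$ by the normal-form equivalence; finally~\eqref{birkhoff} gives $\Axlin\models_{\SLO}\sigma\imp\tau$, so $\Elin$ is \eqcomp. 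The step I expect to be the real obstacle is the linearisation in the previous paragraph: one must turn the branch-by-branch non-realisability of $\lambda$ in $\Pi_{\Axsfour}(\M_\sigma)$ into a single global order on $W_\sigma$ that still blocks $\lambda$ (as simple examples show, the correct order of sibling subtrees is genuinely nonobvious). I would attempt this by induction over the tree $\M_\sigma$, attaching to each subtree a rank recording which suffixes of $\M_\lambda$ that subtree can be made unable to realise, and ordering sibling subtrees according to these ranks.
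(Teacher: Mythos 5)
Your overall plan is the paper's plan: the proof of Theorem~\ref{t:lincomp} given in the paper also runs the \method\ method with $\Axlin$-\nterm s being exactly your linear branch-terms, proves $\Axlin \vdash_{\SLO} \varrho \approx \bigwedge N_\varrho$ (Claim~\ref{c:linnormalform}), and then shows that $\Axlin\models_{\CA}\sigma\imp\lambda$ implies $\Axsfour\models_{\CA}\sigma\imp\lambda$ for linear $\lambda$ (Claim~\ref{c:linnterms}), finishing via completeness of $\Esfour$ and \eqref{birkhoff}. Your normal-form half is essentially correct and coincides with the paper's: your iterated use of $\ewc$ with $q:=\d\delta_1$, $r:=\d\delta_2$ followed by an $\etrans$ clean-up is precisely the derived rule \eqref{lincons}, and your induction on depth is the paper's proof of \eqref{termind}. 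The final assembly (proxy, reduction, \eqcomp ness of $\Esfour$) is also fine.

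The genuine gap is the step you yourself flag: the linearisation. Concretely, what must be proved is that if the reflexive--transitive closure $\M_\sigma^\ast$ of the $\sigma$-tree model refutes a linear \nterm{} $\lambda$ at $r_\sigma$, then some linear order on $W_\sigma$ extending $R_\sigma^\ast$ still refutes $\lambda$ at $r_\sigma$; this is the entire technical content of Claim~\ref{c:linnterms}, and your proposal replaces it by a one-sentence plan (assign to each subtree a ``rank'' of suffixes of $\lambda$ it can be made unable to realise, and order siblings by rank). As stated this is not a proof, and it is not clear it can be pushed through in that form: in the final linear order a realisation of $\lambda$ may be spread across several sibling subtrees (with arbitrary split points), so the relevant datum for a subtree is not a set of blockable suffixes but which \emph{segments} $\beta_j,\dots,\beta_k$ it realises, and, worse, the segments a subtree blocks depend on how it is internally linearised --- there is no guarantee that one internal linearisation blocks simultaneously all the segments your greedy sibling ordering needs. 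The paper circumvents exactly this difficulty by never making global choices: it linearises one comparability at a time, picking an $R$-defect $(u,v_\Le,v_\Re)$ nearest to the root (so that the part below $u$ is already a chain), forming the two extensions $R_\Le$ and $R_\Re$, and showing by splicing two hypothetical homomorphisms $h_\Le,h_\Re$ from the chain $\M_\tau$ (using the indices $i_\Le,i_\Re$ where they cross the new edge) that at least one of the two extensions still refutes $\tau$. Until you supply an argument of this strength --- either by carrying out your rank induction with the segment/split-point bookkeeping made precise, or by an exchange argument like the paper's --- the reduction from $\Axlin$ to $\Axsfour$, and hence the theorem, is not established.
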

\begin{proof}
We define $\Axlin$-\emph{\nterm s} by induction: $(i)$ all finite conjunctions of propositional variables are $\Axlin$-\nterm s; $(ii)$ if $\tau$ is an $\Axlin$-\nterm\ and $P_\tau$ is a set of propositional variables, then $\bigwedge P_\tau\land\d\tau$ is an $\Axlin$-\nterm.


\begin{lclaim}\label{c:linnormalform}
For any \spterm{} $\varrho$, there is a set $N_\varrho$ of $\Axlin$-\nterm s with 
\[
\Axlin \vdash_{\SLO} \bigl(\varrho \approx \bigwedge N_\varrho\bigr).
\]
\end{lclaim}

\begin{proof}
Let $N_\varrho$ be the set of \nterm s describing the full linear branches of $\M_\varrho$ (from root to a leaf): if $\varrho=\bigwedge P_\varrho$ then $N_\varrho=\{\varrho\}$, and if $\varrho=\bigwedge P_\varrho\land\bigwedge_{i<k}\d\varrho_i$ then
$N_\varrho=\{\bigwedge P_\varrho\land\d\tau\mid \tau\in\bigcup_{i<k} N_{\varrho_i}\}$.
We clearly have $\vdash_{\SLO} \bigl(\varrho \imp \bigwedge N_\varrho\bigr)$.
Conversely, it is easy to see first that, for any $n$,
\begin{equation}\label{lincons}
\Axlin\vdash_{\SLO} \d(p\land \d q_1)\land\dots\land \d(p\land \d q_n)\imp 
\d (p\land \d q_1\land\dots\land \d q_n).
\end{equation}
Next, we prove by induction on $\varrho$ that
\begin{equation}\label{termind}
\Axlin \vdash_{\SLO} \bigwedge_{\tau\in N_\varrho} \hspace*{-2mm} \d\tau ~\imp~ \d\varrho.
\end{equation}
This is obvious if the depth of $\varrho$ is $0$. So suppose  
$\varrho=\bigwedge P_\varrho\land\bigwedge_{i<k}\d\varrho_i$. Then
\begin{align*}
\Axlin \vdash_{\SLO} \bigwedge_{\tau\in N_\varrho} \hspace*{-2mm} \d\tau  & \approx
\bigwedge_{i<k}\bigwedge_{\tau\in N_{\varrho_i}} \hspace*{-2mm}\d (\bigwedge P_\varrho\land\d \tau) ~\imp
\quad\mbox{(by \eqref{lincons})}\\
& \imp\ \bigwedge_{i<k} \d (\bigwedge P_\varrho \,\land \bigwedge_{\tau\in N_{\varrho_i}}\hspace*{-2mm}\d\tau) 
\quad\mbox{(by IH)}\\
& \imp\ \bigwedge_{i<k} \d (\bigwedge P_\varrho\land\d\varrho_i) 
\quad\mbox{(by \eqref{lincons})}\\
& \imp\ \d(\bigwedge P_\varrho\land\bigwedge_{i<k}\d\varrho_i) ~\approx~\d\varrho.
\end{align*}
Finally, by \eqref{termind}, for every $i<k$, 
\[
\Axlin \vdash_{\SLO} 
\bigwedge N_\varrho ~\imp~ \bigwedge_{\tau\in N_{\varrho_i}} \hspace*{-2mm}\d\tau ~\imp~ \d\varrho_i.
\]
Since $\Axlin \vdash_{\SLO} \bigwedge N_\varrho\imp\bigwedge P_\varrho$,
we have
$\Axlin \vdash_{\SLO} \bigwedge N_\varrho\imp\varrho$ as required.
\end{proof}


\begin{lclaim}\label{c:linnterms}
For any \spterm{} $\sigma$ and any $\Axlin$-\nterm\ $\tau$, 
$\Axlin\models_{\CA} \sigma \imp \tau$ implies $\Axsfour\models_{\CA} \sigma \imp \tau$.
\end{lclaim}

\begin{proof}
Suppose   
$\Axsfour\not\models_{\CA} \sigma \imp \tau$.
Take the $\sigma$-tree model $\M_\sigma$, and let 
$\M_\sigma^\ast=\bigl((W_\sigma,R_\sigma^\ast),\V_\sigma\bigr)$
for the reflexive and transitive closure $R_\sigma^\ast$ of $R_\sigma$.
By Proposition~\ref{p:Hclosure}, we have 
$\M_\sigma^\ast,r_\sigma\not\kmodels\tau$.
We call $\M=(W_\sigma,R,\V_\sigma)$ a \emph{linearisation} of $\M_\sigma^\ast$ 
if $R$ is a linear order%
\footnote{A \emph{linear order} is an antisymmetric linear quasiorder.}
containing $R_\sigma^\ast$.

It should be clear that 
$\M,r_\sigma\kmodels\sigma$ 
for any  linearisation $\M$ of $\M_\sigma^\ast$.
We show that
there is a linearisation $\M_\sigma^+$ of $\M_\sigma^\ast$ with 
$\M_\sigma^+,r_\sigma\not\kmodels\tau$, 
which means that $\Elin\not\models_{\CA} \sigma \imp \tau$.
We construct $\M_\sigma^+$ step-by-step by rearranging the points in $W_\sigma$.
We build a
binary tree $(\mathbb L_\sigma, \prec)$ of models $\M=(W_\sigma,R,\V_\sigma)$ by induction so that 
each $(W_\sigma,R)$ is a reflexive and transitive tree containing $R_\sigma^\ast$ and,  for each 
$\M$ in $\mathbb L_\sigma$, there is some $\M'$ with $\M\prec\M'$ and 
$\M',r_\sigma\not\kmodels\tau$.
Each leaf in $(\mathbb L_\sigma, \prec)$ will be a linearisation of $\M_\sigma^\ast$.
First, let $\M_\sigma^\ast$ be the root of $(\mathbb L_\sigma, \prec)$.
Suppose now inductively that $\M=(W_\sigma,R,\V_\sigma)$ in $\mathbb L_\sigma$ has been defined,
$\M,r_\sigma\not\kmodels\tau$, 
and $R$ is not a linear order.
We call a triple $(u,v_\Le,v_\Re)$ of distinct points in $W_\sigma$ an $R$-\emph{defect\/} 
if $(u,v_\Le)\in R$, $(u,v_\Re)\in R$,
$v_\Le\ne v_\Re$, but neither $(v_\Le,v_\Re)\in R$ nor $(v_\Re,v_\Le)\in R$ hold.
Take any $R$-defect $(u,v_\Le,v_\Re)$ with minimal $R$-distance between $r_\sigma$ and $u$.
 We define two relations
$R_{\Le} = \bigl(R \setminus \{(u,v_\Re)\}\bigr) \cup \{(v_\Le,v_\Re)\}$ and 
$R_{\Re} = \bigl(R \setminus \{(u,v_\Le)\}\bigr) \cup \{(v_\Re,v_\Le)\}$ (see Fig.~\ref{f:lr}), and 
add $\M \prec \M_{\Le}$ and $\M \prec \M_{\Re}$ to $(\mathbb L_\sigma, \prec)$, where
$\M_i = (W_\sigma,R_i,\V_\sigma)$ for $i=\Le,\Re$. 
\begin{figure}[ht]
\centering
\hspace*{1.5cm}
\begin{tikzpicture}[scale=.5]
\draw [fill] (4,1) circle [radius=.1];
\draw [thick,->] (4,1.2) to [out=90, in=-90] (4,1.8);
\draw [fill] (4,2) circle [radius=.1];
\draw [thick,->] (4,2.2) to [out=90, in=-90] (4,2.8);
\draw [fill] (4,3) circle [radius=.1];
\draw [thick,->] (4,3.2) to [out=90, in=-90] (4,3.8);
\draw [fill=gray] (4,4) circle [radius=.15];
\node[below right] at (4,4) {$\, u$};
\draw [fill=gray] (1,6) circle [radius=.15];
\node[below left] at (1,6) {$v_\Le$};
\draw[thin,fill=lightgray] (.9,6.3) -- (0,9) -- (1.8,9) -- (1.1,6.3);
\node at (1,8.4) {$T_\Le$};
\draw [fill=gray] (3,6) circle [radius=.15];
\node[below right] at (3,6) {$\;v_\Re$};
\draw[thin,fill=lightgray] (2.9,6.3) -- (2,9) -- (3.8,9) -- (3.1,6.3);
\node at (3,8.4) {$T_\Re$};
\draw [fill] (5.5,6) circle [radius=.1];
\node[right] at (5.5,6) {$v_1$};
\draw[thin] (5.4,6.2) -- (4.5,9) -- (6.3,9) -- (5.6,6.2);
\node at (5.5,8.2) {$T_1$};
\node at (6.8,7) {$\dots$};
\draw [fill] (8,6) circle [radius=.1];
\node[right] at (8,6) {$v_k$};
\draw[thin] (7.9,6.2) -- (7,9) -- (8.8,9) -- (8.1,6.2);
\node at (8,8.2) {$T_k$};

\draw [thick,->] (3.8,4.1) -- (1.2,5.8);
\draw [thick,->] (3.9,4.2) -- (3.1,5.75);
\draw [thick,->] (4.1,4.2) -- (5.4,5.75);
\draw [thick,->] (4.2,4.1) -- (7.8,5.9);

\node at (4,0) {$R$};
\end{tikzpicture}
\newline

\vspace*{-1cm}
\begin{tikzpicture}[scale=.5]
\draw [fill] (4,1) circle [radius=.1];
\draw [thick,->] (4,1.2) to [out=90, in=-90] (4,1.8);
\draw [fill] (4,2) circle [radius=.1];
\draw [thick,->] (4,2.2) to [out=90, in=-90] (4,2.8);
\draw [fill] (4,3) circle [radius=.1];
\draw [thick,->] (4,3.2) to [out=90, in=-90] (4,3.8);
\draw [fill=gray] (4,4) circle [radius=.15];
\node[below right] at (4,4) {$\, u$};
\draw[thin,fill=lightgray] (3.7,6.3) -- (1.4,9) -- (3.2,9) -- (3.8,6.3);
\node at (2.65,8.4) {$T_\Le$};
\draw [fill=gray] (4,6) circle [radius=.15];
\node[right] at (4,6) {$\,v_\Le$};
\draw [fill=gray] (4,8) circle [radius=.15];
\node[below right] at (4,8) {$v_\Re$};
\draw[thin,fill=lightgray] (3.9,8.3) -- (3,11) -- (4.8,11) -- (4.1,8.3);
\node at (4,10.4) {$T_\Re$};

\draw [fill] (7,6) circle [radius=.1];
\node[right] at (7,6) {$v_1$};
\draw[thin] (6.9,6.2) -- (6,9) -- (7.8,9) -- (7.1,6.2);
\node at (7,8.2) {$T_1$};
\node at (8.3,7) {$\dots$};
\draw [fill] (9.5,6) circle [radius=.1];
\node[right] at (9.5,6) {$v_k$};
\draw[thin] (9.4,6.2) -- (8.5,9) -- (10.3,9) -- (9.6,6.2);
\node at (9.5,8.2) {$T_k$};

\draw [thick,->] (4,4.25) -- (4,5.75);
\draw [thick,->] (4,6.25) -- (4,7.75);
\draw [thick,->] (4.1,4.2) -- (6.8,5.9);
\draw [thick,->] (4.2,4.1) -- (9.3,5.9);

\node at (4,0) {$R_\Le$};

\end{tikzpicture}
\hspace*{1cm}
\begin{tikzpicture}[scale=.5]
\draw [fill] (4,1) circle [radius=.1];
\draw [thick,->] (4,1.2) to [out=90, in=-90] (4,1.8);
\draw [fill] (4,2) circle [radius=.1];
\draw [thick,->] (4,2.2) to [out=90, in=-90] (4,2.8);
\draw [fill] (4,3) circle [radius=.1];
\draw [thick,->] (4,3.2) to [out=90, in=-90] (4,3.8);
\draw [fill=gray] (4,4) circle [radius=.15];
\node[below right] at (4,4) {$\, u$};
\draw[thin,fill=lightgray] (3.7,6.3) -- (1.4,9) -- (3.2,9) -- (3.8,6.3);
\node at (2.65,8.4) {$T_\Re$};
\draw [fill=gray] (4,6) circle [radius=.15];
\node[right] at (4,6) {$\,v_\Re$};
\draw [fill=gray] (4,8) circle [radius=.15];
\node[below right] at (4,8) {$v_\Le$};
\draw[thin,fill=lightgray] (3.9,8.3) -- (3,11) -- (4.8,11) -- (4.1,8.3);
\node at (4,10.4) {$T_\Le$};

\draw [fill] (7,6) circle [radius=.1];
\node[right] at (7,6) {$v_1$};
\draw[thin] (6.9,6.2) -- (6,9) -- (7.8,9) -- (7.1,6.2);
\node at (7,8.2) {$T_1$};
\node at (8.3,7) {$\dots$};
\draw [fill] (9.5,6) circle [radius=.1];
\node[right] at (9.5,6) {$v_k$};
\draw[thin] (9.4,6.2) -- (8.5,9) -- (10.3,9) -- (9.6,6.2);
\node at (9.5,8.2) {$T_k$};

\draw [thick,->] (4,4.25) -- (4,5.75);
\draw [thick,->] (4,6.25) -- (4,7.75);
\draw [thick,->] (4.1,4.2) -- (6.8,5.9);
\draw [thick,->] (4.2,4.1) -- (9.3,5.9);

\node at (4,0) {$R_\Re$};

\end{tikzpicture}
\caption{Linearising step-by-step.}\label{f:lr}
\end{figure}
%

We claim that 
either $\M_\Le,r_\sigma\not\kmodels\tau$ or $\M_\Re,r_\sigma\not\kmodels\tau$.
Suppose otherwise. Then, by Proposition~\ref{p:hom}, there are two homomorphisms
$h_\Le\colon\M_\tau\to\M_\Le$ and $h_\Re\colon\M_\tau\to\M_\Re$ with $h_\Le(r_\tau)=h_\Re(r_\tau)=r_\sigma$.
If one of these is an $\M_\tau\to\M$ homomorphism, then 
$\M,r_\sigma\kmodels\tau$, 
contrary to IH.
If this is not the case, suppose that $\M_\tau$ is based on an irreflexive and intransitive unary tree
$x_0<x_1<\dots< x_n$.
It is not hard to see that 
\begin{itemize}
\item[--]
there is $i_\Le < n$ such that
%
$\bigl(h_\Le(x_i), v_\Le\bigr)\in R_\Le$ for every $i\le i_\Le$,
and $\bigl(v_\Re, h_\Le(x_{i})\bigr)\in R_\Le$
for every $i\geq i_\Le+1$;
%
\item[--]
there is $i_\Re < n$ such that
%
$\bigl(h_\Re(x_i), v_\Re\bigr)\in R_\Re$ for every $i\le i_\Re$,
and $\bigl(v_\Le,  h_\Re(x_{i})\bigr)\in R_\Re$
for every $i\geq i_\Re+1$.
%
\end{itemize}
Suppose $i_\Re\geq i_\Le$ (the other case is similar). Define $h$ by taking, for any $i\leq n$,
\[
h(x_i)=\left\{
\begin{array}{ll}
h_\Re(x_i), & \mbox{if } i\leq i_\Re,\\
h_\Le(x_i), & \mbox{else.}
\end{array}
\right.
\]
We prove that $h$ is an $\M_\tau\to\M$ homomorphism with $h(r_\tau)=r_\sigma$, from which we shall have 
$\M,r_\sigma\kmodels\tau$, 
contrary to IH.
Thus, we need to show that, for every $i<n$, we have $\bigl(h(x_i),h(x_{i+1})\bigr)\in R$.
There are three cases:

\emph{Case} 1: $i<i_\Re$. 
Then $h(x_i)=h_\Re(x_i)$, $h_\Re(x_{i+1})=h(x_{i+1})$ and 
$\bigl(h(x_i),h(x_{i+1})\bigr)\in R_\Re$. Since $i,i+1\leq i_\Re$,
we have $\bigl(h(x_i), v_\Re\bigr)\in R_\Re$ and $\bigl(h(x_{i+1}), v_\Re\bigr)\in R_\Re$, 
and so 
$\bigl(h(x_i),h(x_{i+1})\bigr)\in R$ follows from $\bigl( h(x_i), h(x_{i+1})\bigr)$ $\in R_\Re$.

\emph{Case} 2: $i>i_\Re$.
Then $h(x_i)=h_\Le(x_i)$, $ h_\Le(x_{i+1})=h(x_{i+1})$ and we also have $\bigl(h(x_i),h(x_{i+1})\bigr)\in R_\Le$.
As $i,\mbox{$i+1$}\geq i_\Le+1$,
we have $\bigl(v_\Re , h(x_i)\bigr)\in R_\Le$ and $\bigl(v_\Re, h(x_{i+1})\bigr)\in R_\Le $.
Therefore, we obtain $\bigl(h(x_i),h(x_{i+1})\bigr)\in R$ from $\bigl(h(x_i), h(x_{i+1})\bigr)\in  R_\Le$.

\emph{Case} 3: $i=i_\Re$.
Then $h(x_i)=h_\Re(x_i)$, and so $\bigl(h(x_i), v_\Re\bigr)\in R_\Re$.
Also, $h(x_{i+1})= h_\Le(x_{i+1})$ and, since $i+1= i_\Re+1\geq i_\Le+1$, we have  
$\bigl(v_\Re,  h(x_{i+1})\bigr)\in R_\Le$.
Therefore, $\bigl(h(x_i),h(x_{i+1})\bigr)\in R$,
as required.
\end{proof}

That $\Elin$ is \eqcomp\ follows now from Claims~\ref{c:linnormalform}, \ref{c:linnterms}, \eqcomp ness  of $\Esfour$ (Corollary~\ref{t:ss}) and \eqref{birkhoff}.
\end{proof}

As a consequence of Claims~\ref{c:linnormalform} and \ref{c:linnterms} we also obtain:

\begin{theorem}\label{t:linP}  
$\Elin$ is decidable in {\sc PTime}.
\end{theorem}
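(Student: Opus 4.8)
The plan is to derive Theorem~\ref{t:linP} from the two \lclaim s established in the proof of Theorem~\ref{t:lincomp}, together with the {\sc PTime}-decidability of $\Esfour$. Recall that by Theorem~\ref{t:lincomp} the \sptheory{} $\Elin$ is \eqcomp, so for any \spequation{} $\sigma\imp\tau$ we have $(\sigma\imp\tau)\in\Elin$ iff $\Axlin\models_{\CA}\sigma\imp\tau$. Hence it suffices to decide the latter in polynomial time.

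First I would normalise the right-hand \spterm. By Claim~\ref{c:linnormalform} there is a set $N_\tau$ of $\Axlin$-\nterm s with $\Axlin\vdash_{\SLO}\tau\approx\bigwedge N_\tau$, so (by \eqref{birkhoff} and \eqref{soundness}) $\Axlin\models_{\CA}\sigma\imp\tau$ holds iff $\Axlin\models_{\CA}\sigma\imp\tau'$ for every $\tau'\in N_\tau$. The key complexity observation is that $N_\tau$ is computable in polynomial time and of polynomial total size in $|\tau|$: inspecting the inductive definition of $N_\varrho$ in the proof of Claim~\ref{c:linnormalform}, the elements of $N_\tau$ describe the full root-to-leaf branches of the $\tau$-tree model $\M_\tau$, so $|N_\tau|$ equals the number of leaves of $\M_\tau$, which is at most $|\tau|$, and each $\tau'\in N_\tau$ has size $O(|\tau|)$, being built from the variables along a single branch. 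Thus $\sum_{\tau'\in N_\tau}|\tau'| = O(|\tau|^2)$.

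Next, for each $\Axlin$-\nterm\ $\tau'\in N_\tau$ I would invoke Claim~\ref{c:linnterms}, which gives $\Axlin\models_{\CA}\sigma\imp\tau'$ iff $\Axsfour\models_{\CA}\sigma\imp\tau'$ --- the converse implication being immediate since every linear quasiorder is a quasiorder, i.e.\ $\CA_{\Axlin}\subseteq\CA_{\Axsfour}$. Since $\Esfour$ is \eqcomp\ by Corollary~\ref{t:ss} and $\Axsfour$ is a finite set of \hornequations, Theorem~\ref{thm:horncomplete} shows that $(\sigma\imp\tau')\in\Esfour$, equivalently $\Axsfour\models_{\CA}\sigma\imp\tau'$, is decidable in {\sc PTime}. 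Putting everything together, $(\sigma\imp\tau)\in\Elin$ iff $(\sigma\imp\tau')\in\Esfour$ for all $\tau'\in N_\tau$, which amounts to at most $|\tau|$ many {\sc PTime} tests on inputs of size $O(|\sigma|+|\tau|)$, and is therefore decidable in {\sc PTime}.

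The substantive work has already been carried out in the proof of Theorem~\ref{t:lincomp}: Claim~\ref{c:linnormalform} (the linear-branch normal form and the \SLOa-derivability in both directions, via \eqref{lincons} and \eqref{termind}) and, more delicately, Claim~\ref{c:linnterms}, whose proof relies on the step-by-step linearisation of the reflexive--transitive closure $\M_\sigma^\ast$ of $\M_\sigma$ while preserving the refutation of $\tau$. So the only point that needs checking for Theorem~\ref{t:linP} itself is the polynomial bound on $|N_\tau|$ and the efficiency of the reductions, which I expect to be routine rather than an obstacle; the argument then mirrors the proof of Theorem~\ref{t:s5altP}.
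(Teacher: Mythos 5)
Your proposal is correct and follows essentially the same route as the paper: the paper's proof of Theorem~\ref{t:linP} likewise combines the {\sc PTime} decidability of $\Esfour$ (via Theorem~\ref{thm:horncomplete}) with the observation that $|N_\varrho|$ in Claim~\ref{c:linnormalform} is the number of leaves of $\M_\varrho$, the reduction through Claims~\ref{c:linnormalform} and \ref{c:linnterms} and completeness of $\Elin$ being exactly as you describe. Your explicit size bounds on the branch formulas and the remark that the converse of Claim~\ref{c:linnterms} is immediate from $\CA_{\Axlin}\subseteq\CA_{\Axsfour}$ just spell out details the paper leaves implicit.
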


\begin{proof}
Follows from the {\sc PTime}-time decidability of $\Esfour$ 
\cite{stokkermans2008} (see also Theorem~\ref{thm:horncomplete})
and the fact that $|N_\varrho|$ in Claim~\ref{c:linnormalform} is the number of leaves in $\M_\varrho$.
\end{proof}


The completeness landscape for extensions of $\Elin$ is much more involved than for extensions of $\Esfive$. 
In \cite{aiml18}, all complete extensions of $\Elin$ are characterised, and
infinitely many incomplete extensions of $\SPi+(\Axlin \cup \{\efun^2\})$ are given. Here 
we prove the following:

\begin{theorem}\label{t:lindep}
$\SPi+(\Axlin \cup \{\efun^n\})$ is not \eqcomp, for any $n\geq 1$. 
\end{theorem}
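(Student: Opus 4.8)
The plan is to exhibit, for each $n \geq 1$, a concrete \spequation{} $\e$ such that $\SPi+(\Axlin \cup \{\efun^n\})\models_{\CA}\e$ holds (that is, $\e$ is valid in every linear quasiorder whose clusters have size $\leq n$), but $\SPi+(\Axlin \cup \{\efun^n\})\not\models_{\SLO}\e$, witnessed by a \SLOa{} validating $\Axlin \cup \{\efun^n\}$ and refuting $\e$. For $n=1$ the incompleteness should follow from Theorem~\ref{t:fun}~$(iv)$ or a direct argument in the spirit of the proof there, since $\Axlin \cup \{\efun^1\}$ subsumes $\Axsfour \cup \{\efun\}$; but it is cleaner to treat all $n$ uniformly. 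First I would choose $\e$ to be an \spequation{} that captures a \emph{depth} consequence of $n$-functionality over linear quasiorders: an $n$-functional $\mf{S4.3}$-frame is a linear quasiorder in which every cluster has $\leq n$ points, hence its depth is bounded, so we can take $\e = \edep^{N}$ for a suitable $N$ (likely $N=n$, as in \eqref{alttodepth} which gives $\Axsfour \cup \{\efun^n\}\models_{\CA}\edep^n$, and a fortiori $\Axlin \cup \{\efun^n\}\models_{\CA}\edep^n$).

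The heart of the argument is then the semantic (algebraic) separation: I would construct a \SLOa{} $\Aa$ that validates $\Axlin$ and $\efun^n$ but refutes $\edep^n$. The natural candidate is a modification of the algebra $\Aa_n$ from Fig.~\ref{f:bigslo} used in the proof of Theorem~\ref{t:altnotcomp}: that algebra already validates $\Axsfour \cup \{\efun^n\}$ and refutes $\edep^n$, so the only new requirement is to make it validate the weak-connectedness \spequation{} $\ewc$ (equivalently, to upgrade the quasiorder structure $\Axsfour$ to the linear quasiorder structure $\Axlin$). The plan is to inspect $\Aa_n$ and check whether it already validates $\ewc$; if not, I would symmetrise or linearise the semilattice so that for any valuation the relevant meets $\d(p\land q)\land\d(p\land r)$ collapse appropriately (typically by ensuring that whenever two "branches" $d_i$, $e_j$ both appear as values, their meet is the bottom element $g$, which already happens in $\Aa_n$, and arranging the remaining comparable cases to satisfy $\ewc$ by the same chain structure that makes $\Aa_n$ validate $\Axsfour$). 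Then I would re-verify, exactly as in the proof of Theorem~\ref{t:altnotcomp}, that $\Aa$ still validates $\efun^n$ (using that the left-hand side of $\efun^n[\vala]$ evaluates to the bottom element $g$ as soon as two incomparable branch-values occur) and still refutes $\edep^n$ under the valuation $\vala(p)=d_n$, $\vala(q)=e_n$, via the parallel induction computing $\tau_k[\vala]=c_{n-k}$, $\sigma_k[\vala]=b_{n-k}$.

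The main obstacle I anticipate is getting a \emph{single} algebra that simultaneously satisfies the linearity axiom $\ewc$ and refutes $\edep^n$: linearity is a fairly strong constraint on the semilattice-with-operator structure, and the two symmetric "arms" $d_0,\dots,d_n$ and $e_0,\dots,e_n$ of $\Aa_n$ look superficially like a width-$2$ configuration, which is exactly what $\ewc$ forbids. Resolving this will require checking that in $\Aa_n$ the $\d$-operator never realises a genuine branching at an element where $\ewc$ could fail—i.e.\ that the only elements $x$ with two $\le$-incomparable "$\d$-successors" have meet equal to $g$, and that all configurations $\d(p\land q), \d(p\land r)$ with $\le$-comparable arguments are handled by the chain $a_0 \le a_1 \le \dots$ on which $\ewc$ reduces to $\etrans$. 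If $\Aa_n$ as drawn does not validate $\ewc$, the fallback is to quotient or to add a few elements forcing $g$ to absorb the problematic meets while preserving the computation of $\tau_k,\sigma_k$; since those computations only involve the $b_i,c_i,d_i,e_i$ and $g$, minor surgery elsewhere in the lattice will not disturb the refutation of $\edep^n$ or the validity of $\efun^n$. Once the separating algebra is in hand, incompleteness of $\SPi+(\Axlin \cup \{\efun^n\})$ follows immediately from $\Axlin \cup \{\efun^n\}\models_{\CA}\edep^n$ (by \eqref{alttodepth}) together with $\Axlin \cup \{\efun^n\}\not\models_{\SLO}\edep^n$.
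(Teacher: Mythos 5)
Your plan coincides with the paper's own proof: for $n\geq 2$ it takes exactly the algebra $\Aa_n$ of Theorem~\ref{t:altnotcomp} together with the consequence $\edep^n$ supplied by \eqref{alttodepth}, and the only new work is the finite check of the linearity axiom on $\Aa_n$, which the paper performs for the $\Axsfour$-equivalent axiom $\ewc'$ by the same comparable/incomparable case analysis you sketch, so no modification of $\Aa_n$ is needed. For $n=1$ your fallback via the argument of Theorem~\ref{t:fun} is also what the paper does, and is in fact unavoidable there, since $\{\erefl,\efun\}\vdash_{\SLO}\esym=\edep^1$, so the uniform separating equation $\edep^n$ cannot work in that case.
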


\begin{proof}
For $n=1$, we reuse the proof of Theorem~\ref{t:fun}~(iii) since we clearly have 
$\Aa\models\ewc$.
Now, fix some $n\geq 2$. Observe that $\Elin=\SPi+(\Axsfour\cup\{\ewc'\})$, where
\begin{equation}\label{esv}
\ewc'=\bigl( \d(p\land \d q)\land \d(p\land \d r)\imp \d (p\land \d q\land \d r)\bigr).
\end{equation}
Let $\Aa_n$ be the \SLOa\ from the proof of Theorem~\ref{t:altnotcomp}. We claim
that $\Aa_n\models\ewc'$. Indeed, take a valuation $\vala$ in $\Aa_n$. If there are 
distinct $x,y\in\{\vala(p),\d q[\vala],\d r[\vala]\}$ such that $x\le y$, then $\Aa\models\ewc'[\vala]$
clearly holds. So we may assume that $\vala(p)$, $\d q[\vala]$, and $\d r[\vala]$ are pairwise
$\leq$-incomparable. Let, say, $\d q[\vala]=b_i$, $\d r[\vala]=c_i$, and $\vala(p)=d_j$, 
for some $i<n-1$ and $i+1<j\leq n$ (the other cases are similar). Then both sides of $\ewc'$
evaluate to $d_0$ if $i=0$, and to $b_{i-1}$ if $i>0$, proving that $\Aa\models\ewc'[\vala]$.
\end{proof}

\begin{question}
Is $\SPi+(\Axlin \cup \{\edep^n\})$ \eqcomp\ for $n>1$? 
\end{question}

The \spequation{} $\ewc'$ in \eqref{esv} was also used by Svyatlovsky~\cite{Svy18} who showed that $\{\etrans,\ewc'\}$ axiomatise the \SP-fragment of $\mf{K4.3}$---the modal logic of all transitive and weakly connected frames---and described the class of Kripke frames validating $\{\etrans,\ewc'\}$. As not all frames in this class are weakly connected, it follows that the class of $\mf{K4.3}$-frames is not \SP-definable. For a direct model-theoretic proof of this fact, see Proposition~\ref{p:weakly} below.  Svyatlovsky also proved that the \sptheory{}  $\SPi + \{\etrans,\ewc'\}$ is tractable.

We can generalise $\ewc$ to
\[
\ew^n=\Bigl(  \d\bigl(p\land\bigwedge P^n_0\bigr)\land\dots\land \d\bigl(p\land\bigwedge P^n_n\bigr)\imp \d
(p\land \d p_0\land\dots\land \d p_n)\Bigr).
\]
Then $\ewc^1=\ew$.

\begin{question}
Are $\SPi+(\Axsfour \cup \{\ew^n\})$ and $\SPi+(\Axsfour \cup \{\edep^m,\ew^n\})$ \eqcomp? 
\end{question}


\section{Undecidability of completeness}\label{sec:undec}

Having established quite a few \eqcomp ness and incompleteness results for 
\sptheories, we now show that an exhaustive and decidable classification of finitely axiomatisable 
\sptheories{} according to their \eqcomp ness (or \complex ity) is not possible.

\begin{theorem}\label{undec}
Given a finite set $\Ec$ of \spequations, 
it is undecidable whether the \sptheory{} $\SPi+\Sigma$ is \eqcomp{}; it is also undecidable whether it is complex. 
\end{theorem}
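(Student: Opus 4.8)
The plan is to reduce the halting problem for Turing machines to the (non-)\eqcomp{}ness and (non-)\complex{}ity problems for finitely axiomatised \sptheories. Given a deterministic Turing machine $M$ and an input word, I want to build a finite set $\Ec_M=\Ec\cup\{\TMeq\}$ of \spequations, where $\Ec$ is a fixed `infrastructure' part and $\TMeq$ encodes the run of $M$, such that $\SPi+\Ec_M$ is \eqcomp{} (equivalently, \complex) if and only if $M$ does \emph{not} halt. The simplest way to exploit the very simple incomplete \sptheories{} already at hand (Examples~\ref{e:simple} and~\ref{e:simple2}) is to arrange things so that the axioms force a `collapse' of the relevant diamonds precisely when $M$ halts: if $M$ halts, the frame semantics degenerates (e.g.\ some relation must be empty, or contained in the identity) and the algebraic semantics does not, yielding incompleteness exactly as in those examples; if $M$ never halts, the \sptheory{} stays within the scope of one of our positive completeness results (most naturally Theorem~\ref{t:funcomm}, whose hypotheses---two functional commuting operators $\dR,\dS$ plus a $\le$-smallest element witnessed by $\dZ\top$ and the `normalisation' axioms \eqref{allnormal}---are tailored to simulate the cells of a Turing-machine tape/time grid).

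First I would fix the encoding. Use operators $\dR$ (``move right along the tape'') and $\dS$ (``move one step forward in time''), both forced to be functional and to commute via the \spequations{} of Theorem~\ref{t:funcomm}; then a point reachable by $\dR^i\dS^t\top$ represents tape cell $i$ at time $t$, and the grid structure (rows = tape configurations, columns = cell histories) is rigid because of functionality and commutativity. Introduce further unary diamonds $\dq$ for each state $q$ of $M$ and $\da$ for each tape symbol $a$, together with \spequations{} that (i) propagate symbols and the head/state marker from time $t$ to time $t+1$ according to the transition function of $M$ (``$\d_{q}\d_{a}\,p \imp \dS(\ldots)$''-style implications, which are leapfrog/Horn and hence harmless on their own), and (ii) all of the form $\d_{X}\dZ\top\imp\dZ\top$ so that the designated smallest element $\dZ\top$ absorbs everything, keeping us inside the scope of Theorem~\ref{t:funcomm}. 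The crucial extra axiom $\TMeq$ says: ``if the halting state $q_h$ is reachable (from the origin, along an $R,S$-path) then $\dR\top\imp\bot$'' --- more precisely, since we have no $\bot$, phrase it as $\dqh(\text{something})\imp\dS p\wedge$ (a clause forcing $\dS$ to collapse), i.e.\ reachability of $q_h$ triggers one of the pathological \spequations{} of Example~\ref{e:simple} or~\ref{e:simple2} for the operator $\dS$ (or $\dR$). One must be slightly careful to write $\TMeq$ so that its first-order correspondent is still of the shape handled by our machinery; using an implication whose antecedent is the variable-free \spterm{} asserting existence of a $q_h$-labelled grid point and whose consequent is $\dS p\imp p$ (or $\dS p\imp\dS q$) does the job, and variable-freeness of the antecedent lets Proposition~\ref{p:varfree} be invoked in the non-halting case.

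The two directions then run as follows. If $M$ does not halt, no frame for $\Ec_M$ ever contains a $q_h$-point (the grid faithfully simulates $M$, which by assumption never enters $q_h$), so over frames the antecedent of $\TMeq$ is never satisfied and $\TMeq$ imposes nothing beyond $\Ec$; moreover I must check that the \emph{algebraic} models behave the same way, i.e.\ every \SLOa{} validating $\Ec_M$ also validates $\Ec$ and in fact the antecedent of $\TMeq$ forces its own conclusion or is unsatisfiable --- this is the delicate bookkeeping step, and it is where I would lean on the rigidity supplied by functionality+commutativity+the smallest element so that any \SLOa{} in $\SLO_{\Ec_M}$ is `close to' a complex algebra of a grid. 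Granting that, $\SPi+\Ec_M=\SPi+\Ec$ as \sptheories, and $\SPi+\Ec$ is \complex{} by Theorem~\ref{t:funcomm} (with the variable-free part absorbed via Proposition~\ref{p:varfree}), hence \eqcomp. If $M$ halts, then there is a finite grid frame containing a $q_h$-point; feeding it (with the simulating valuation) into $\TMeq$ forces the corresponding collapse, so $\Ec_M\models_{\CA}$ some \spequation{} $\e$ that plays the role of $p\wedge\d\top\imp\d p$ (or $\d\top\imp p$); meanwhile an explicit finite \SLOa{} --- built exactly as in Examples~\ref{e:simple}/\ref{e:simple2} but with enough extra structure to satisfy all the (now only finitely many relevant) instances of the simulation axioms up to the halting time --- validates $\Ec_M$ but refutes $\e$, witnessing $\Ec_M\not\models_{\SLO}\e$, so $\SPi+\Ec_M$ is incomplete, and a fortiori not \complex. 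The main obstacle, as indicated, is the `algebraic faithfulness' argument in the non-halting case: showing that passing from frames to arbitrary \SLOa s does not create spurious models that would break completeness. I expect this to require carefully choosing the infrastructure axioms $\Ec$ (beyond those literally in Theorem~\ref{t:funcomm}) so that every \SLOa{} validating them embeds into the complex algebra of a grid frame on which the simulation is forced --- essentially re-running, in the presence of $\TMeq$, the filter-embedding proof of Theorem~\ref{t:funcomm} and checking that the halting antecedent cannot be satisfied nontrivially in the embedded frame when $M$ loops.
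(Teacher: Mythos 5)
There is a genuine gap, and it lies in the direction of your reduction. You arrange things so that the incompleteness trigger $\TMeq$ has as its antecedent ``the halting state $q_h$ is reachable'', and you then claim that when $M$ does not halt, no frame for your axioms contains a $q_h$-point, so $\TMeq$ is vacuous over frames and completeness is inherited from the infrastructure theory. But \spequations{} are strictly positive: their frame conditions can only \emph{force} successors forwards from a given configuration; they cannot express the backward condition that every $q_h$-labelled point arises from a genuine simulation of $M$ started on the empty tape. A frame validating all your infrastructure axioms may simply contain an extra point with an $R_{q_h}$-successor (and whatever further successors the positive axioms demand), entirely unrelated to any computation, and this is possible whether or not $M$ halts. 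Hence the antecedent of your $\TMeq$ is frame-satisfiable regardless of $M$, the identity $\CA_{\Ec_M}=\CA_{\Ec}$ you need fails, and the non-halting direction of your argument collapses; you cannot repair it by adding an axiom such as \eqref{nohalt} that outright forbids $q_h$, because then the antecedent would be vacuous for \emph{every} $M$ and the construction would no longer depend on halting at all.

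The paper's proof avoids exactly this trap by inverting the roles: it includes $\dqh\top\imp p$ (frame correspondent: the $q_h$-relation is empty) as part of the infrastructure $\AxM^0$, and makes the trigger $\TMeq=(\dqo\top\land\dhead\top\land\dR p\imp p)$ fire on the \emph{start} configuration. Since positive axioms do force the simulation forwards, any point of a frame for $\AxM^0$ satisfying $\dqo\top\land\dhead\top$ unfolds the entire computation; so if $M$ halts this contradicts $\dqh\top\imp p$, the start antecedent is frame-unsatisfiable, $\TMeq$ is derivable by completeness of $\SPi+\AxM^0$ (Theorem~\ref{t:funcomm} plus Proposition~\ref{p:varfree}), and the theory is \complex{} and \eqcomp; if $M$ does not halt, the infinite computation itself yields a frame where the start antecedent is satisfied, and a general-frame (subalgebra) construction over it validates $\AxM$ while refuting an Example~\ref{e:simple}-style consequence, giving incompleteness. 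Your choice of infrastructure (functional commuting $\dR,\dS$, a smallest element via $\dZ\top$, variable-free simulation axioms) matches the paper's, but without flipping the trigger to the initial configuration and encoding ``$M$ does not halt'' as the axiom \eqref{nohalt}, the delicate step you yourself flag (``algebraic faithfulness'' in the non-halting case) cannot be carried out, because the needed frame-level fact is already false.
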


\begin{proof}
We encode the halting problem for deterministic Turing machines starting from an empty tape. 
Recall that a Turing machine is  a tuple
$$
M = (Q, \Gamma, \delta, q_{0}, q_{h}),
$$
where $Q$ is a non-empty finite set of states with an initial state $q_{0}$ and a halting state $q_{h}$,
$\Gamma$ is a finite tape alphabet with a special symbol $b\in \Gamma$ denoting the blank cell, and $\delta$ is a transition function that, for any pair $(q,a)\in Q\times \Gamma$, gives a triple $\delta(q,a)\in Q\times \Gamma\times \{\TMl,\TMr\}$, where $\TMl$ and $\TMr$ stand for
`move left' and `move right'\!, respectively. We use the standard 
definition of a computation of $M$ on an input word. Then the problem to decide whether the computation starting from an empty tape in state $q_{0}$ reaches the halting state $q_{h}$
is undecidable~\cite{DBLP:books/daglib/0032222}. 
We may assume that 
the initial state is not reachable from any state, 
the halting state has no successor state, and that the head never moves to the left of  its initial position. Now, suppose $M = (Q, \Gamma, \delta, q_{0}, q_{h})$ is such a Turing machine.

For the reduction, we encode the computation of $M$ starting from the empty tape by a grid with points 
$d_{n,m}$ for the $n$th cell of the $m$th configuration of the computation.
We use relations ${\sf next}$ and ${\sf step}$ such that $(d_{n,m},d_{n+1,m})\in {\sf next}$
and $(d_{n,m},d_{n,m+1})\in {\sf step}$. We encode that the $n$th cell contains symbol $a\in \Gamma$ in 
the $m$th configuration by introducing a relation $R_{a}$ and stating that
$d_{n,m}$ has some $R_{a}$-successor. Likewise, we encode that $M$ is in state $q\in Q$
in the $m$th configuration by introducing a relation $R_{q}$ and stating that all $d_{n,m}$ have 
an $R_{q}$-successor. In the same way, we use relations $R_{{\sf head}}$, $R_{{\sf left}}$, and $R_{{\sf right}}$
to encode the position of the head and, for technical reasons, all cells to the left and right of the
position of the head. 

Using the above intuition, we now construct a finite set $\AxM$ of \spequations{} such that $M$ halts on the empty tape  iff $\SPi+\AxM$ is \eqcomp{} iff $\SPi+\AxM$ is \complex.
Let $\dnext$ and $\dstep$ be modal operators interpreted by the relations ${\sf next}$ and ${\sf step}$ introduced
above. The following set of \spequations{} state that
the relations ${\sf next}$ and ${\sf step}$ are functional and commute:
\begin{align}
\label{nextfun}
& \dnext p \land \dnext q \imp \dnext(p \land q), \\
\label{stepfun}
& \dstep p\land \dstep q \imp \dstep(p \land q),\\
\label{nextstepcomm}
& \dnext\dstep p \imp \dstep \dnext p\quad\mbox{and}\quad \dstep \dnext p\imp  \dnext\dstep p.
\end{align}
To axiomatise the properties of $R_{a}$, $a\in \Gamma$, $R_{q}$, $q\in Q$, and $R_{\sf head}$, $R_{\sf left}$, and $R_{\sf right}$,
we introduce an operator $\dq$ for every state $q\in Q$, an operator $\da$ for every $a\in \Gamma$, and
operators $\dhead$, $\dleft$ and $\dright$. We say that $M$ does not halt by the \spequation{} 
%
\begin{equation}
\label{nohalt}
 \dqh \top\imp p.
\end{equation}
In order to show that what we have so far axiomatise a \complex{} \sptheory{} (see Theorem~\ref{t:funcomm}),
we also need to add the \spequations{}
\begin{equation}
\label{n2}
\d\dqh\top\imp \dqh\top,
\end{equation}
for all $\d=\dnext,\,\dstep,\,\dhead,\,\dleft,\,\dright,\,\dq,\,\da$, $q\in Q$, $a\in\Gamma$.
(Note that if the language contained a constant $\bot$, interpreted as `falsehood' in Kripke models and the $\leq$-smallest element in `normal' \SLOa{}s, then  $\dqh \top\imp\bot$ would suffice in place of  \eqref{nohalt}--\eqref{n2}, see \S\ref{bot}.) 
%
Let $\varXi$ be the set of \spequations{} comprising \eqref{nextfun}--\eqref{n2}. By Theorem~\ref{t:funcomm}, the \sptheory{} $\SPi + \varXi$ is \complex. 
To ensure that $\varXi$ together with the set of \spequations{} encoding the computation of $M$ on empty tape axiomatise
a \complex\ \sptheory, we apply Proposition~\ref{p:varfree}, and therefore represent states, tape symbols and tape positions using variable-free \spterms{} of the form $\dR\top$ for the operators $\dR$ introduced above. 
We first set ${\sf left}$ and ${\sf right}$ correctly, exploiting the assumed functionality of  ${\sf next}$:
\begin{align}
\label{leftgen1}
& \dnext\dleft \top \imp \dleft \top,\\
\label{leftgen2}
& \dnext\dhead \top\imp \dleft \top,\\
\label{rightgen1}
& \dhead \top \imp \dnext\dright\top, \\
\label{rightgen2}
& \dright\top \imp \dnext\dright\top.
\end{align}
Then we say that the state of each configuration is encoded in a uniform way over the tape:
for all $q\in Q$,
\begin{equation}
\label{qconfig}
\dq\top \imp  \dnext\dq\top
\ \mbox{ and }\ 
\dnext\dq\top\imp \dq\top.
\end{equation}
Exploiting that $q_0$ is not reachable from any state,
we can say that the tape is initially blank with
\begin{equation}
\label{startsblank}
\dqo\top \imp \d_b\top.
\end{equation}
Exploiting the commutativity and functionality of ${\sf next}$ and ${\sf step}$, for each transition 
$\delta(q,a)=(q',a',\TMl)$, we set
\begin{equation}
\label{movel}
\dnext(\dq\top\land \dhead \top\land \da\top) \imp \dstep (\dqp\top\land \dhead \top\land \dnext\dap\top),
\end{equation}
and for each transition $\delta(q,a)=(q',a',\TMr)$, we set
\begin{equation}
\label{mover}
\dq\top\land \dhead \top\land\da\top \imp \dstep (\dap\top\land\dqp\top\land \dnext\dhead \top).
\end{equation}
We also say that symbols not under the head do not change: for all
$a \in \Gamma$, put
\begin{align}
\label{nochange1}
& \da\top\land \dleft \top \imp \dstep \da\top,\\
\label{nochange2}
& \da\top\land \dright\top \imp \dstep \da\top.
\end{align}
%
%
%
%
Let $\AxM^0$ be $\varXi$ together with the \spequations{} \eqref{leftgen1}--\eqref{nochange2}. 
%
%
%
Finally, we obtain $\AxM$ from $\AxM^0$ by adding the following
\spequation{} that triggers incompleteness whenever $\dqo\top \land \Diamond_{{\sf head}}\top$ is satisfiable in
a frame for $\AxM^0$:
\begin{equation}\label{eqincomp}
\TMeq=(\dqo\top \land \Diamond_{{\sf head}}\top \land \dR p \imp p),
\end{equation}
where $R$ is a fresh relation. 

\begin{lclaim}\label{c:haltstoc}
If $M$ halts on the empty tape, then $\SPi+\AxM$ is \complex.
\end{lclaim}

\begin{proof}
Suppose $M$ halts on the empty tape in $H<\omega$ steps.
As $\SPi+\AxM^0$ is \complex{} by Theorem~\ref{t:funcomm} and Proposition~\ref{p:varfree},
it is enough to show that $\SPi+\AxM=\SPi+\AxM^0$.
We prove that 
\begin{multline}
\label{computeq}
\{w\mid \M,w\kmodels\dqo \top \land \dhead \top\}=\emptyset\\
\mbox{ for any model $\M$ over any frame $\F$ for $\AxM^0$.}
\end{multline}
Then $\AxM^0\models_{\CA} \TMeq$ would follow, and so $\TMeq\in\SPi+\AxM^0$ would hold by 
the \eqcomp ness of $\SPi+\AxM^0$.

To prove \eqref{computeq},
take any frame $\F\models\AxM^0$ and  
suppose to the contrary that there is some $d_{0,0}$ with
$\M,d_{0,0}\kmodels\dqo\top \land \d_{{\sf head}}\top$.
We show by induction on $m$ that, for any $m\leq H$
and $n<\omega$, there exists $d_{n,m}$ in $\F$ 
representing the $n$th cell in the $m$th configuration of the
computation of $M$ in the following sense: 
for all $q\in Q$ and $a\in\Gamma$,
\begin{itemize}
\item[(\emph{i})]
$(d_{n,m},d_{n+1,m})\in {\sf next}$;
\item[(\emph{ii})]
$(d_{n,m},d_{n,m+1})\in {\sf step}$ whenever $m<H$;
\item[(\emph{iii})]
 if the state in the $m$th configuration is $q$, then $\M,d_{n,m}\kmodels\dq\top$;
 \item[(\emph{iv})]
 if the $n$th cell contains $a$ in the  $m$th configuration, then $\M,d_{n,m}\kmodels\da\top$;
 \item[(\emph{v})]
 if the head is at the $n$th cell in the $m$th configuration, then $\M,d_{n,m}\kmodels\dhead \top$.
\end{itemize}
%
%
Indeed, for $m=n=0$, (\emph{iii})--(\emph{v}) follow from our assumption and \eqref{startsblank}. 
We have $d_{n,0}$ for all $n>0$ satisfying (\emph{i}), (\emph{iii}) and (\emph{iv}) by \eqref{qconfig} and \eqref{startsblank}.
Now suppose inductively
that we have $d_{n,m}$ for some $m<H$ and all $n<\omega$. Suppose that 
in the $m$th configuration the head is at the $n$th cell containing symbol $a$, $M$ is in state $q$
and $\delta(q,a)=(q',a',R)$. (The case when $\delta(q,a)=(q',a',L)$ is similar and left to the reader.)
Then, by IH, 
$\M,d_{n,m}\kmodels\dq\top\land\dhead \top\land\da\top$, 
and so, by \eqref{mover}, there exist $d_{n,m+1}$ and $d_{n+1,m+1}$ such that 
$(d_{n,m},d_{n,m+1})\in {\sf step}$, $(d_{n,m+1},d_{n+1,m+1})\in {\sf next}$,
$\M,d_{n,m+1}\kmodels\dap\top\land\dqp\top$ and $\M,d_{n+1,m+1}\kmodels\dhead \top$.
%
%
If $n>0$ then we have $d_{i,m+1}$ for all $i<n$ satisfying (\emph{i}), (\emph{ii}) and (\emph{iv}) by \eqref{nextstepcomm}, \eqref{leftgen1}, \eqref{leftgen2}, \eqref{nochange1} and \eqref{stepfun}.
We have $d_{i,m+1}$ for all $i>n+1$ satisfying (\emph{i}) and (\emph{ii}) by \eqref{qconfig}, \eqref{nextfun} and
\eqref{nextstepcomm}. Then  $d_{i,m+1}$ for all $i\geq n+1$ satisfy (\emph{iv}) by
\eqref{rightgen1}, \eqref{rightgen2}, \eqref{nextfun} and \eqref{nochange2}. 
Finally, we have (\emph{iii}) by \eqref{qconfig} and \eqref{nextfun}.

Thus, 
$\M,d_{n,H}\kmodels\dqh\top$ 
for some $n$, and so the relation $R_{q_h}$ in $\F$ interpreting $\dqh$ is not empty, contrary to 
$\F\models\eqref{nohalt}$. 
 This establishes   \eqref{computeq}.
\end{proof}

\begin{lclaim}\label{c:ctohalts} 
If $M$ does not halt on the empty tape, then $\SPi+\AxM$ is incomplete.
\end{lclaim}

\begin{proof}
Consider the \spequation{} 
%
\[
\e' = (\dqo\top \land \dhead \top \land p\land \dR\top  \imp \dR p).
\]
On the one hand, it is easy to see that $\{\TMeq\}\models_{\CA}\e'$ (cf.\ Example~\ref{e:simple}),
and so $\AxM\models_{\CA}\e'$.
On the other hand, take the
infinite computation of $M$ starting from the empty tape. Using this computation, we define a frame $\F$ 
with domain $W=\{d_{n,m} \mid n,m<\omega\}\cup \{g,g'\}$ by taking:
\begin{itemize}
\item[--] $(d_{n,m},d_{n+1,m})\in {\sf next}$ for all $n,m<\omega$;
\item[--] $(d_{n,m},d_{n,m+1})\in {\sf step}$ for all $n,m<\omega$;
\item[--] $(d_{n,m},g)\in R_{q}$ if the state of the $m$th configuration is $q$, for $q\in Q$;
\item[--] $(d_{n,m},g)\in R_{a}$ if the $n$th cell contains  $a$ in the  $m$th configuration, for $a\in\Gamma$;
\item[--] $(d_{n,m},g)\in {\sf head}$ if the head  is at the $n$th cell in the $m$th configuration;
\item[--] $(d_{n,m},g)\in {\sf left}$ if the head is to the right of the $n$th cell in the $m$th configuration;
\item[--] $(d_{n,m},g)\in {\sf right}$ if the head  is to the left of the $n$th cell in the $m$th configuration;
\item[--] $(d_{0,0},g')\in R$.
\end{itemize}
It is straightforward to check that $\F\models\AxM^0$.
Define an \sptype{} subalgebra 
$\A$ of $\Fslo$ by taking all subsets of $W$ except those that contain $g'$ but not $d_{0,0}$. 
Then $\A\models\AxM^0$. It is easy to see that $\A$ is a \SLOa\ and $\A\models\TMeq$, and 
so $\A\models\AxM$. However, $\A\not\models\e'$, witnessed by evaluating $p$ to $\{d_{0,0}\}$. Thus, $\AxM\not\models_{\SLO}\e'$, and so $\SPi+\AxM$ is incomplete.
\end{proof}

Now, Theorem~\ref{undec} follows from Claims~\ref{c:haltstoc} and \ref{c:ctohalts}.
\end{proof}

\begin{question}\label{q:undec}
Does Theorem~\ref{undec} hold in the unimodal case? Does it hold for 
\sptheories{} with Horn correspondents?
\end{question}


\section{Some related topics}\label{sec:other}


\subsection{\SPcap-definability}\label{eldef}

A class $\Cc$ of frames is called \emph{\SP-definable} if
$\Cc=\CA_{\Ec}$, for some set $\Ec$ of \spequations. 
In this section, we prove a necessary condition for \SP-definability and use it to give a few examples
of modally definable frame classes that are not \SP-definable. To keep the notation
simple, we formulate everything for the unimodal setting only, that is, for $\R=\{R\}$.

Suppose that 
$\F_i=(W_i,R_i)$, for $i\in I$, $\G=(W,R^\G)$, $\T=(T,R^\T)$ are frames, 
$w\in T$,
$g_i\colon\T\to\F_i$, for $i\in I$, and $h\colon \T\to \G$ are homomorphisms, and that 
$Z\subseteq \bigl(\prod_{i\in I}W_i\bigr)\times W$. We write 
\[
(\F_i,g_i)_{i\in I}>\!>_Z(\G,h,w)
\]
if the following conditions hold: 
\begin{itemize}
\item[{\bf (s1)}]
$\bigl((g_i(w))_{i\in I},h(w)\bigr)\in Z$;
\item[{\bf (s2)}]
for all $(\avec{x},y)\in Z$ and $\avec{x}'=(x'_i\in W_i \mid i\in I)$, if $(x_i,x'_i)\in R_i$ for all $i\in I$, then
there is $y'$ such that $(y,y')\in R^\G$ and  $(\avec{x}',y')\in Z$;
\item[{\bf (s3)}]
for all $(\avec{x},y)\in Z$ and $A\subseteq T$, if $x_i\in g_i[A]$  for all $i\in I$, then
$y\in h[A]$.
\end{itemize}
We write
\[
(\F_i)_{i\in I} >\!>\!\> \G
\]
if, for all finite trees $\T$ with root $w$ and all homomorphisms $h\colon \T\to \G$, there exist $(g_i)_{i\in I}$ and $Z$ such 
that  $(\F_i,g_i)_{i\in I}>\!>_Z(\G,h,w)$.

\begin{theorem}\label{t:sim}
For any \spequation{} $\e$, if $(\F_i)_{i\in I} >\!>\!\> \G$ and 
$\F_i\models\e$ for all $i\in I$, then $\G\models\e$.
\end{theorem}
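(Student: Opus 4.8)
The statement is a "transfer" or "simulation" lemma: if a relational structure $\G$ is simulated by a family $(\F_i)_{i \in I}$ in the strong sense $(\F_i)_{i\in I} >\!>\!\> \G$, and each $\F_i$ validates an \spequation{} $\e$, then $\G$ validates $\e$. The plan is to argue by contraposition. Suppose $\G \not\models \e$ for $\e = (\sigma \imp \tau)$. By the frame semantics, there is a Kripke model $\M = (\G,\V)$ and a point $w_0$ with $\M,w_0 \kmodels \sigma$ but $\M,w_0 \not\kmodels\tau$. By Proposition~\ref{p:hom}, $\M,w_0\kmodels\sigma$ gives a homomorphism $h\colon\M_\sigma\to\M$ with $h(r_\sigma)=w_0$; in particular $h$ is a frame homomorphism from the finite tree $\T_\sigma$ into $\G$, and $\T_\sigma$ is rooted at $w:=r_\sigma$.

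\textbf{Key steps.} First I would invoke the hypothesis $(\F_i)_{i\in I} >\!>\!\> \G$ with the finite tree $\T = \T_\sigma$ (root $w = r_\sigma$) and the homomorphism $h\colon\T_\sigma\to\G$ just obtained: this yields homomorphisms $g_i\colon\T_\sigma\to\F_i$ for each $i\in I$ and a relation $Z\subseteq\bigl(\prod_i W_i\bigr)\times W$ with $(\F_i,g_i)_{i\in I} >\!>_Z (\G,h,w)$. The idea now is to "pull back" the valuation $\V$ along the $g_i$ to a valuation $\V_i$ on $\F_i$, show $\F_i \not\models \e$ for \emph{some} $i$, and obtain a contradiction with $\F_i\models\e$ for all $i$. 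Concretely, for a variable $p$ set $\V_i(p) = g_i[\V_\sigma(p)]$ — the $g_i$-image in $W_i$ of the set of $p$-points of the $\sigma$-tree. Since $g_i$ is a homomorphism $\M_\sigma \to (\F_i,\V_i)$, Proposition~\ref{p:hom} gives $(\F_i,\V_i), g_i(r_\sigma)\kmodels\sigma$ for every $i$. Because $\F_i\models\e$, we get $(\F_i,\V_i), g_i(r_\sigma)\kmodels\tau$ for every $i$. The heart of the argument is then a claim, proved by induction on the \spterm{} structure, that for every \spterm{} $\varrho$ and every pair $(\avec{x},y)\in Z$: if $(\F_i,\V_i),x_i\kmodels\varrho$ for all $i\in I$, then $\M,y\kmodels\varrho$. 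Applying this with $\varrho=\tau$ and the pair $\bigl((g_i(w))_{i\in I}, h(w)\bigr)\in Z$ (which lies in $Z$ by {\bf(s1)}) yields $\M, w_0\kmodels\tau$, contradicting $\M,w_0\not\kmodels\tau$.

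\textbf{The inductive claim and its cases.} The base case $\varrho=\top$ is trivial, and $\varrho=p$ is exactly condition {\bf(s3)}: if $x_i\in\V_i(p)=g_i[\V_\sigma(p)]$ for all $i$, then taking $A=\V_\sigma(p)$, {\bf(s3)} gives $y\in h[\V_\sigma(p)]\subseteq\V(p)$, wait — we need to be slightly careful about whether $h[\V_\sigma(p)]\subseteq\V(p)$; since $h\colon\M_\sigma\to\M$ is a model homomorphism, $h$ maps $\V_\sigma(p)$ into $\V(p)$, so this holds, giving $\M,y\kmodels p$. The conjunction case $\varrho=\varrho_1\land\varrho_2$ is immediate from the inductive hypothesis applied to the same pair $(\avec{x},y)$. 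The diamond case $\varrho=\dR\varrho'$ is where {\bf(s2)} is used: assume $(\F_i,\V_i),x_i\kmodels\dR\varrho'$ for all $i$, so for each $i$ there is $x_i'$ with $(x_i,x_i')\in R_i$ and $(\F_i,\V_i),x_i'\kmodels\varrho'$; setting $\avec{x}'=(x_i')_{i\in I}$, condition {\bf(s2)} produces $y'$ with $(y,y')\in R^\G$ and $(\avec{x}',y')\in Z$; the inductive hypothesis at $\varrho'$ then gives $\M,y'\kmodels\varrho'$, hence $\M,y\kmodels\dR\varrho'$. I would also restrict everything to the unimodal case $\R=\{R\}$ as the paper does.

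\textbf{Main obstacle.} The bookkeeping is routine; the only genuinely delicate point is making sure the pull-back valuation is set up so that Proposition~\ref{p:hom} applies in \emph{both} directions — forward along $g_i$ (to get $\sigma$ true at $g_i(r_\sigma)$, which needs $g_i\colon\M_\sigma\to(\F_i,\V_i)$ to be a \emph{model} homomorphism, i.e. $\V_i(p)\supseteq g_i[\V_\sigma(p)]$) and then in the inductive step where we track $\tau$ across $Z$ back into $\M$ using {\bf(s2)}–{\bf(s3)}. Defining $\V_i(p)$ to be \emph{exactly} $g_i[\V_\sigma(p)]$ (rather than something larger) is what makes {\bf(s3)} usable for the variable case, while still being large enough for the forward direction; choosing the valuation correctly is the one design decision that matters, and after that the induction goes through mechanically. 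I do not expect any finiteness issue, since {\bf(s1)}–{\bf(s3)} are exactly tailored to the three \spterm{} constructors and $\T_\sigma$ is finite by construction in \S\ref{sec:treemodel}.
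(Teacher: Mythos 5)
Your proof is correct: the valuation pull-back $\V_i(p)=g_i[\V_\sigma(p)]$ makes the forward step (truth of $\sigma$ at $g_i(r_\sigma)$, hence of $\tau$ by $\F_i\models\e$) work, and your induction on the structure of $\tau$ transfers truth across $Z$ exactly as needed, with $\top$ trivial, the variable case handled by {\bf(s3)} applied to $A=\V_\sigma(p)$, conjunction by the inductive hypothesis at the same pair, and the diamond case by {\bf(s2)}; the contradiction at the root then follows from {\bf(s1)}. The underlying mechanism is the same as in the paper, but the packaging is genuinely different. The paper argues through the first-order correspondent $\Psi_{\e}$ of \eqref{eq:corr1}--\eqref{globalcorr}: from $\F_i\models\Psi'_{\e}$ it extracts homomorphisms $g_i^\tau\colon\T_\tau\to\F_i$ satisfying the equality constraints, then builds a homomorphism $h^\tau\colon\T_\tau\to\G$ step-by-step along the tree $\T_\tau$ using {\bf(s2)} (seeded at the root by {\bf(s1)}), and finally verifies the variable/equality part of the matrix of $\Psi_{\e}$ via {\bf(s3)}; no valuations on the $\F_i$ ever appear. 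Your version stays entirely on the Kripke-semantic side, replacing the correspondent and the explicit construction of $h^\tau$ by the pulled-back valuations and a structural induction on \spterms{} (note that, via Proposition~\ref{p:hom}, your semantic witnesses in the diamond case are exactly the paper's $g_i^\tau$ in disguise, and your induction over subformulas of $\tau$ mirrors the paper's point-by-point construction over $W_\tau$). What your route buys is independence from the correspondence machinery and a slightly more elementary presentation; what the paper's route buys is a statement directly at the level of $\Psi_{\e}$, which fits its use of correspondents elsewhere in \S\ref{eldef}. Your one design decision---taking $\V_i(p)$ to be exactly $g_i[\V_\sigma(p)]$ so that {\bf(s3)} is applicable in the variable case while $g_i$ remains a model homomorphism---is precisely the right one, and you correctly flagged it.
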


\begin{proof}
Suppose $\e=(\sigma\imp\tau)$. It is enough to show that, for the correspondent $\Psi_{\e}$ of $\e$ from \eqref{eq:corr1}--\eqref{globalcorr}, $\G\models\Psi_{\e}$ holds whenever
$\F_i\models\Psi_{\e}$ for $i\in I$.
Recall the respective tree models $\M_\sigma$ and $\M_\tau$ from \S\ref{sec:treemodel}, and
let  $W_\sigma=\{v_0,\dots,v_{n_\sigma}\}$ with $v_0=r_\sigma$. Let
$x_0,\dots,x_{n_\sigma}$ be a sequence of points in $\G$ such that
$(x_k,x_\ell)\in R^\G$ whenever $(v_k,v_\ell)\in R_\sigma$.
%
%
Then $h^\sigma\colon\T_\sigma\to \G$ defined by $h^\sigma(v_k)=x_k$, for $k\leq n_\sigma$,
is a homomorphism. As $(\F_i)_{i\in I} >\!>\!\> \G$, there are $(g_i^\sigma)_{i\in I}$ and $Z$ such that
\begin{align}
\label{sigmahomo}
& \mbox{$g_i^\sigma\colon\T_\sigma\to\F_i$ are homomorphisms for all $i\in I$,}\\
\label{sigmasimul}
& (\F_i,g_i^\sigma)_{i\in I}>\!>_Z(\G,h^\sigma,r_\sigma).
\end{align}
%
%
%
Since $\F_i \models \Psi_{\e}$, it follows that $\F_i \models \Psi'_{\e}[g_i^\sigma(r_\sigma)/\varel{v}_0]$ for all $i\in I$, and so, by \eqref{sigmahomo}, there exist homomorphisms $g_i^\tau\colon \T_\tau \to \F_i$ such that
\begin{equation}\label{respect}
g_i^\tau(r_\tau)=g_i^\sigma(r_\sigma)\land
\bigwedge_{\substack{(u,p)\\ u \in\V_\tau(p)}} \bigvee_{\ \ v \in\V_\sigma(p)} \bigl(g_i^\tau(u)=g_i^\sigma(v)\bigr),\quad
\mbox{for all $i\in I$.}
\end{equation}
We define a homomorphism $h^\tau\colon \T_\tau \to \G$ such that
\begin{equation}\label{inZ}
\bigl((g_i^\tau(u))_{i\in I},h^\tau(u)\bigr) \in Z,\quad\mbox{for all $u \in W_\tau$}
\end{equation}
in a step-by-step manner, by constructing
its approximations $f_0, f_1, f_2, \dots$ with domains $B_0, B_1, \dots$ which are subsets of
$W_\tau$ and initial segments of $\T_\tau$. 
To begin with, let $B_0 = \{r_\tau\}$ and $f_0 = \bigl\{\bigl(r_\tau, h^\sigma(r_\sigma)\bigr)\bigr\}$. 
By  \eqref{respect} and {\bf(s1)} of \eqref{sigmasimul},  
$\bigl((g_i^\tau(r_\tau))_{i\in I},f_0(r_\tau)\bigr)=
\bigl((g_i^\sigma(r_\sigma))_{i\in I},h^\sigma(r_\sigma)\bigr)  \in Z$.
So suppose $B_l$ and $f_l$ are defined for some $l$,
and we have 
$\bigl((g_i^\tau(u))_{i\in I},f_l(u)\bigr) \in Z$ for all $u \in B_l$ (IH).
Take some $x \in B_l$ and $y \not\in B_l$ such that $(x,y)\in R_\tau$.
Since all $g_i^\tau$ are homomorphisms,
we have $\bigl(g_i^\tau(x), g_i^\tau(y)\bigr)\in R_i$.
By IH, $\bigl((g_i^\tau(x))_{i\in I},f_l(x)\bigr) \in Z$, and so, 
by {\bf(s2)} of \eqref{sigmasimul}, there is $z \in W$ such that $\bigl(f_l(x), z\bigr)\in R^\G$ and
\mbox{$\bigl((g_i^\tau(y))_{i\in I},z\bigr) \in Z$.}
Thus, we may extend $B_l$ and $f_l$ by setting
$B_{l+1} = B_l \cup \{y\}$ and $f_{l+1} = f_l \cup \{(y, z)\}$ while preserving IH.
Clearly, $h^\tau=\bigcup_{l<\omega}f_l$ is a homomorphism as required in \eqref{inZ}.

Suppose $W_\tau=\{u_0,\dots,u_{n_\tau}\}$ with $u_0=r_\tau$. We claim that 
\begin{multline}\label{tausimul}
\G\models
\bigl((\varel{v}_0 = \varel{u}_0) \land \!\!\bigwedge_{\stackrel{k,\ell\leq n_\tau,}{(u_k,u_\ell)\in R_\tau}}\!\! R(\varel{u}_k,\varel{u}_\ell) 
\land 
 \\[-12pt]
\bigwedge_{\stackrel{k\leq n_\tau,\,p\in\varset,}{u_k \in\V_\tau(p)}} \bigvee_{\stackrel{\ell\leq n_\tau,}{v_\ell\in\V_\sigma(p)}}\!\!(\varel{u}_k=\varel{v}_\ell) \big)
\bigl[h^\sigma(\avec{v})/\varel{\avec{v}},h^\tau(\avec{u})/\varel{\avec{u}}\bigr],
\end{multline}
proving $\G\models\Psi_{\e}$. Indeed,
$h^\sigma(r_\sigma)=h^\tau(r_\tau)$ and $h^\tau$ is a homomorphism, so it is enough to show
the second line in \eqref{tausimul}.
Fix $u_k$ and $p$ such that $u_k\in \V_{\tau}(p)$.
By \eqref{respect}, for any $i\in I$, there is $v_i\in \V_\sigma(p)$ with  
$g_i^\tau(u_k)=g_i^\sigma(v_i)$, and so $g_i^\tau(u_k)\in g_i^\sigma[\V_\sigma(p)]$ for all $i\in I$.
By \eqref{inZ} and {\bf(s3)} of \eqref{sigmasimul}, we have $h^\tau(u_k)\in h^\sigma[\V_\sigma(p)]$,
and so there is some $v_\ell\in\V_\sigma(p)$ with $h^\tau(u_k)= h^\sigma(v_\ell)$, as required in
\eqref{tausimul}.
\end{proof}

In certain cases, we may simplify the criterion of the previous theorem:

\begin{proposition}\label{p:simple}
If 
there exist homomorphisms \mbox{$f_i\colon \G\to \F_i$,} for $i\in I$, and $Z$ such that 
$(\F_i,f_i)_{i\in I}>\!>_Z(\G,\idmap,v)$, for all $v$ in $\G$ and the identity map $\idmap\colon\G\to\G$,
then $(\F_i)_{i\in I} >\!>\!\> \G$.
\end{proposition}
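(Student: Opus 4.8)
The plan is to show that the hypothesis of Proposition~\ref{p:simple} supplies, for each finite tree $\T$ with root $w$ and each homomorphism $h\colon\T\to\G$, the data $(g_i)_{i\in I}$ and $Z'$ needed to witness $(\F_i,g_i)_{i\in I}>\!>_{Z'}(\G,h,w)$ in the definition of $(\F_i)_{i\in I}>\!>\!\>\G$. The natural move is to take $g_i = f_i\circ h$, which is a composition of homomorphisms and hence a homomorphism $\T\to\F_i$, and to define the new witnessing relation $Z'\subseteq\bigl(\prod_{i\in I}W_i\bigr)\times T$ by pulling back the given $Z$ along the maps in the obvious way. Concretely, since $(\F_i,f_i)_{i\in I}>\!>_Z(\G,\idmap,v)$ holds for every $v$ in $\G$, and in particular for $v=h(w)$, I would like $Z'$ to relate a tuple $\avec{x}$ to a point $t\in T$ exactly when $(\avec{x},h(t))\in Z$. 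That is, set $Z' = \{(\avec{x},t)\mid (\avec{x},h(t))\in Z\}$.

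First I would verify \textbf{(s1)}: we need $\bigl((g_i(w))_{i\in I},h(w)\bigr)\in Z'$, i.e.\ $\bigl((f_i(h(w)))_{i\in I},h(w)\bigr)\in Z$. But this is precisely \textbf{(s1)} for the instance $(\F_i,f_i)_{i\in I}>\!>_Z(\G,\idmap,h(w))$, reading $f_i(h(w)) = (f_i\circ\idmap)(h(w))$, so it holds by hypothesis. Next, for \textbf{(s2)}, suppose $(\avec{x},t)\in Z'$ and $(x_i,x_i')\in R_i$ for all $i$; then $(\avec{x},h(t))\in Z$, so by \textbf{(s2)} of the given relation there is $y'$ with $(h(t),y')\in R^\G$ and $(\avec{x}',y')\in Z$. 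The subtlety is that \textbf{(s2)} for $Z'$ demands a \emph{successor in $\T$} of $t$, not just of $h(t)$ in $\G$; here I would use that $h$ is a homomorphism and $\T$ is a tree — but this does not immediately give a successor of $t$ mapping to $y'$. So the cleaner route is to re-derive the whole relation directly: instead of pulling back a single $Z$, invoke the hypothesis of Proposition~\ref{p:simple} \emph{at each point of $\G$ that is the $h$-image of a point of $\T$}, and thread these together. In fact, because the hypothesis gives a \emph{uniform} $Z$ and uniform $f_i$ working at all $v$ simultaneously (by taking $v = h(w)$ once and letting \textbf{(s2)} generate successors inside $\G$), the step-by-step construction of an approximation sequence $f_0,f_1,\dots$ as in the proof of Theorem~\ref{t:sim} does the job: walk outward along $\T$ from its root, maintaining at each new point $t$ that $\bigl((g_i(t))_{i\in I},h(t)\bigr)\in Z$, using \textbf{(s2)} of the given relation to move along $R^\G$, and noting that since $g_i = f_i\circ h$ and $h$ is a homomorphism, the required $R_i$-edges are automatic. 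Then set $Z' = \{(\avec{x},t)\in\bigl(\prod_i W_i\bigr)\times T \mid (\avec{x},h(t))\in Z\}$, and the successor generated by \textbf{(s2)} of $Z$ lies over $h(t')$ for the appropriate $R_\T$-successor $t'$ of $t$, because $h$ sends that edge of $\T$ into $R^\G$ and $Z$'s \textbf{(s2)}-clause can be applied with the witness $y' := h(t')$ once we check $(\avec{x}',h(t'))\in Z$; the latter follows from running the argument along the tree.

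Finally I would check \textbf{(s3)} for $Z'$: given $(\avec{x},t)\in Z'$ and $A\subseteq T$ with $x_i\in g_i[A] = f_i[h[A]]$ for all $i$, we get $x_i \in f_i[h[A]]$, so by \textbf{(s3)} of the given relation (applied with the subset $h[A]\subseteq W_\G$, after noting that the hypothesis of Proposition~\ref{p:simple} is stated with $\T = \G$ and $\idmap$, so its \textbf{(s3)} reads: $x_i\in f_i[B]$ for all $i$ implies $h(t)\in \idmap[B] = B$) we obtain $h(t)\in h[A]$, as required. The main obstacle I anticipate is exactly the mismatch in \textbf{(s2)} between successors in the abstract tree $\T$ and successors in $\G$: the definition of $>\!>\!\>$ quantifies over arbitrary finite trees $\T$, whereas the hypothesis of Proposition~\ref{p:simple} only talks about $\G$ itself with the identity map, so one has to be careful that the $Z$-successor produced inside $\G$ can always be \emph{chosen} to be the $h$-image of the intended tree-successor — this should work precisely because $h$ is a homomorphism and one builds $Z'$ and the approximations in lockstep, but it is the one place where a naive "compose everything" argument breaks and the step-by-step construction is genuinely needed. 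This is entirely parallel to (and simpler than) the approximation argument already carried out in the proof of Theorem~\ref{t:sim}, so I would model the write-up on that.
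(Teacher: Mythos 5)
The obstacle you build your write-up around is not actually there: you have misread the type of the witnessing relation. In the definition, $Z$ is a subset of $\bigl(\prod_{i\in I}W_i\bigr)\times W$ with $W$ the domain of $\G$, and this is also what is required in the instance $(\F_i,g_i)_{i\in I}>\!>_Z(\G,h,w)$ occurring in the definition of $(\F_i)_{i\in I}>\!>\!\>\G$; the second components of $Z$ are points of $\G$, not of the tree $\T$. Consequently \textbf{(s2)} is a closure property of $Z$ with respect to the relations $R_i$ and $R^\G$ alone --- it mentions neither $\T$, nor $h$, nor the $g_i$ --- and it never asks for a successor in $\T$. So the ``naive compose everything'' argument that you abandon is exactly the intended proof, and it does not break: given $h\colon\T\to\G$ with root $w$, put $v=h(w)$, take the $f_i$ and $Z$ supplied by the hypothesis at this $v$, set $g_i=f_i\circ h$, and keep the \emph{same} $Z$. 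Then \textbf{(s1)} for the new instance is condition \textbf{(s1)} of the hypothesis at $v$; \textbf{(s2)} is literally unchanged; and \textbf{(s3)} follows by applying condition \textbf{(s3)} of the hypothesis to $B=h[A]\subseteq W$, exactly as you compute at the end. No pulled-back relation $Z'$ and no step-by-step construction are needed; this is the paper's (one-line) proof.

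Moreover, the repair you sketch for your $Z'\subseteq\bigl(\prod_{i\in I}W_i\bigr)\times T$ would not work even on its own terms: an \textbf{(s2)}-style condition for $Z'$ would require, for every tuple $\avec{x}'$ with $(x_i,x'_i)\in R_i$ for all $i$, a successor $t'$ of $t$ in $\T$ with $(\avec{x}',h(t'))\in Z$. This fails in general --- for instance, a leaf $t$ of $\T$ has no successors at all, while the $x_i$ may well have $R_i$-successors --- and your claim that the \textbf{(s2)}-clause ``can be applied with the witness $y':=h(t')$ once we check $(\avec{x}',h(t'))\in Z$'' is circular: \textbf{(s2)} only guarantees the existence of \emph{some} $y'$ in $\G$ with $(\avec{x}',y')\in Z$, not one of your choosing. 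Note that in the proof of Theorem~\ref{t:sim} the flow is the opposite of what you assume: there \textbf{(s2)} is used to \emph{build} a homomorphism into $\G$ out of whatever successors $Z$ provides, which is precisely why the definition places the second components of $Z$ in $\G$ and keeps $\T$ out of \textbf{(s2)} entirely.
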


\begin{proof}
Suppose $h\colon\T\to\G$ is a homomorphism, for a finite tree $\T$ with root $w$.
Let $v=h(w)$. By our assumption, there are homomorphisms \mbox{$f_i\colon \G\to \F_i$,} 
for $i\in I$, and $Z$ such that 
$(\F_i,f_i)_{i\in I}>\!>_Z(\G,\idmap,v)$.
Define $g_i\colon\T\to \F_i$ by $g_i=f_i\circ h$, for $i\in I$.
Then it is not hard to check that 
$(\F_i,g_i)_{i\in I}>\!>_Z(\G,h,w)$.
\end{proof}


\smallskip
A relation $R$ is called \emph{pseudo-transitive} if
\[
\forall x,y,z\,\bigl(R(x,y)\land R(y,z)\to R(x,z)\lor (x=z)\bigr);
\]
$R$ is  \emph{pseudo-equivalence} if it is symmetric and pseudo-transitive.
Pseudo-equiva\-lence relations are the frames for the modal logic $\mf{Diff}$, also characterised
by the $\ne$ relation on nonempty sets.

\begin{proposition}\label{p:diff}
Neither the class of all pseudo-transitive nor the class of all pseudo-equivalence frames
is \SP-definable.
\end{proposition}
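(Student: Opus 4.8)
The plan is to invoke the necessary condition for \SP-definability supplied by Theorem~\ref{t:sim}: to show that a class $\Cc$ of frames is not \SP-definable it suffices to produce frames $\F_i$ ($i\in I$) lying in $\Cc$ and a frame $\G\notin\Cc$ with $(\F_i)_{i\in I} >\!>\!\> \G$. Indeed, were $\Cc=\CA_{\Ec}$ for some set $\Ec$ of \spequations, then every $\F_i\models\Ec$, so Theorem~\ref{t:sim} would give $\G\models\e$ for each $\e\in\Ec$, whence $\G\in\CA_{\Ec}=\Cc$, a contradiction. The key observation is that a single pair $\bigl((\F_i)_{i\in I},\G\bigr)$ can serve for \emph{both} assertions at once: it is enough that every $\F_i$ be simultaneously pseudo-transitive and a pseudo-equivalence frame while $\G$ is neither.

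Accordingly, I would take $\G$ to be the symmetric three-point path on $\{0,1,2\}$ with $R^\G=\{(0,1),(1,0),(1,2),(2,1)\}$. It is symmetric and has no dead ends, but it fails pseudo-transitivity, since $0\,R^\G\,1\,R^\G\,2$ while $0\ne 2$ and $(0,2)\notin R^\G$; in particular it is not a pseudo-equivalence frame, and (being symmetric) its defect is genuinely one of transitivity. For the $\F_i$ I would take copies of the irreflexive complete graph $K_3=(\{0,1,2\},\{(x,y)\mid x\ne y\})$, one copy $\F_f$ with $g_f=f$ for each of the finitely many homomorphisms $f\colon\G\to K_3$ (these include the identity and the ones collapsing $0$ and $2$). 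Note that $K_3$ is pseudo-transitive --- for $x,y,z$ with $x\ne y$ and $y\ne z$ one has $x=z$ or else $(x,z)\in R^{K_3}$ --- and symmetric, hence also a pseudo-equivalence frame (equivalently, $R^{K_3}$ together with the diagonal is an equivalence relation on $\{0,1,2\}$).

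Since $\G$ has no dead ends, Proposition~\ref{p:simple} reduces the task to exhibiting a single relation $Z$ with $(\F_f,f)_f >\!>_Z (\G,\idmap,v)$ for every $v$ in $\G$. I would let $Z$ be the largest relation satisfying clause {\bf (s3)}, declaring $(\avec{x},y)\in Z$ precisely when $y\in A$ for every $A\subseteq W_\G$ with $x_f\in f[A]$ for all $f$. Then {\bf (s3)} holds by fiat, and {\bf (s1)} becomes the assertion that the diagonal tuple $(f(v))_f$ is paired in $Z$ only with $v$; this holds because the homomorphisms $\G\to K_3$ jointly separate the points of $\G$ (an injective homomorphism separates $0$ from $2$, and $f(1)\notin\{f(0),f(2)\}$ for every homomorphism $f$ disposes of the remaining pairs). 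The substantive step is {\bf (s2)}: for $(\avec{x},y)\in Z$ and any tuple $\avec{x}'$ with $x_f\,R^{K_3}\,x'_f$ for all $f$, one must find a $\G$-successor $y'$ of $y$ with $(\avec{x}',y')\in Z$. Since $\G$ is symmetric, $y$ always has a successor; and because the family includes the homomorphism(s) collapsing $0$ and $2$ --- forcing $x'_f\ne f(0)=f(2)$ in those copies --- the sets $A\subseteq W_\G$ that might cover $\avec{x}'$ while omitting the relevant neighbour of $y$ are excluded, so that neighbour is a legitimate $y'$.

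I expect the crux to be exactly this interplay of {\bf (s2)} and {\bf (s3)}: {\bf (s3)} forces $Z$ to be ``small'' (essentially dictated by how the $f$'s separate the points of $\G$), whereas {\bf (s2)} forces $Z$ to be closed under a backward-successor step, and one must check that the maximal {\bf (s3)}-relation is still {\bf (s2)}-closed. This is also the reason a single pseudo-transitive $\F$ cannot work: any pseudo-transitive $\F$ receiving a homomorphism from $\G$ has ``shortcut'' edges absent from $\G$, and a simulation over one frame cannot absorb them; it is precisely the collapsing homomorphisms, available only because a whole family of $K_3$'s is used, that repair this. What remains is a finite --- if somewhat tedious --- case analysis over the eight subsets of $W_\G$ and the successor tuples in the finite product $\prod_f W_{K_3}$, after which Theorem~\ref{t:sim} yields that neither the class of pseudo-transitive frames nor the class of pseudo-equivalence frames is \SP-definable.
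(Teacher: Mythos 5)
Your high-level plan is the same as the paper's (reduce \SP-definability to the simulation criterion of Theorem~\ref{t:sim} via Proposition~\ref{p:simple}, using one pair of configurations that handles both classes at once), but your concrete choice of frames makes the key claim false, so the proof cannot be completed. The \spequation{} $\eden=(\d p\imp\d\d p)$ is valid in the irreflexive triangle $K_3$ (for any $x\ne y$ the third vertex $z$ gives $R(x,z)\land R(z,y)$) but fails in your path $\G$ on $\{0,1,2\}$ (at $0$, with $p$ true only at $1$, we have $\d p$ but not $\d\d p$, since $1$'s successors are $0$ and $2$). Hence, by Theorem~\ref{t:sim} itself, no family consisting of copies of $K_3$ can satisfy $(\F_f)_{f} >\!>\!\> \G$, whatever $Z$ and whatever homomorphisms you pick; your ``tedious but finite case analysis'' is doomed. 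The failure shows up exactly where you located the crux: with $Z$ the largest relation satisfying {\bf (s3)}, one checks that $(\avec{x},y)\in Z$ iff some $f$ has $x_f\notin f[W_\G\setminus\{y\}]$. Now take $x_f$ to be the vertex outside the image of $f$ when $f(0)=f(2)$, and $x_f=f(1)$ when $f$ is injective; then $(\avec{x},0)\in Z$ (witnessed by any collapsing $f$). Put $x'_f=f(0)$ for every $f$; each $x'_f\ne x_f$, so $\avec{x}'$ is a legitimate successor tuple, yet for every $f$ we have $x'_f\in\{f(0),f(2)\}$, so $1$ --- the only $R^\G$-successor of $0$ --- is not $Z$-related to $\avec{x}'$, and {\bf (s2)} fails. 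Your heuristic that the collapsing homomorphisms force $x'_f\ne f(0)=f(2)$ is incorrect: they only force $x'_f\ne x_f$, and $x_f$ can be the third vertex.

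The paper's proof avoids this obstruction by building a reflexive point into the picture: its $\G$ has a looped common neighbour $v_0$ of the two non-adjacent points $v_1,v_2$, and the simulating frames $\F_1$ (a looped point joined symmetrically to another) and $\F_2$ (a symmetric triangle with one loop) also have loops, so density-type \spequations{} no longer separate the $\F_i$ from $\G$; it then verifies {\bf (s1)}--{\bf (s3)} for an explicitly listed six-element $Z$. If you want to salvage your write-up, you must replace your irreflexive triangle/path pair by frames of this kind (or otherwise certify that every \spequation{} valid in all your $\F_i$ is valid in $\G$ before attempting the simulation).
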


\begin{proof}
Take the frames $\F_1$, $\F_2$ and $\G$ in Fig.~\ref{f:defpic1}.
We show that the conditions of Proposition~\ref{p:simple} hold, and so $(\F_1, \F_2) >\!> \G$.
\begin{figure}[ht]
\centering
\begin{tikzpicture}[scale=.67]
\tikzset{every loop/.style={thick,min distance=6mm,in=-45,out=-135,looseness=6}}
\tikzset{place/.style={circle,thin,draw=white,fill=white,scale=1.5}}

\node[place] (foo) at (1,1.5) {}; 
\draw [fill=gray] (1,1.5) circle [radius=.3];
\node at (1,1.5)  {\textcolor{white}{$\boldmath{x_0}$}}; 
\draw [fill=gray] (1,4) circle [radius=.3];
\node at (1,4) {\textcolor{white}{$\boldmath{x_1}$}}; 
\draw [thick,->] (.8,1.8) to [out=100, in=-100] (.8,3.7);
\draw [thick,->] (1.2,3.7) to [out=-80, in=80] (1.2,1.8);
\path[->] (foo) edge [loop] node {} ();

\node at (1,0) {$\F_1$};
\end{tikzpicture}
\hspace*{2cm}
\begin{tikzpicture}[scale=.67]
\tikzset{every loop/.style={thick,min distance=6mm,in=-45,out=-135,looseness=6}}
\tikzset{place/.style={circle,thin,draw=white,fill=white,scale=1.5}}

\node[place] (foo) at (1.5,1.5) {}; 
\draw [fill=gray] (1.5,1.5) circle [radius=.3];
\node at (1.5,1.5)  {\textcolor{white}{$\boldmath{y_0}$}}; 
\draw [fill=gray] (0,4) circle [radius=.3];
\node at (0,4)  {\textcolor{white}{$\boldmath{y_1}$}}; 
\draw [fill=gray] (3,4) circle [radius=.3];
\node at (3,4)  {\textcolor{white}{$\boldmath{y_2}$}}; 
\draw [thick,->] (1.1,1.5) to [out=145, in=-80] (-.2,3.7);
\draw [thick,->] (.2,3.7) to [out=-45, in=100] (1.3,1.8);
\draw [thick,->] (1.9,1.5) to [out=35, in=-100] (3.2,3.7);
\draw [thick,->] (2.8,3.7) to [out=-135, in=80] (1.7,1.8);
\draw [thick,->] (.4,4.1) to [out=30, in=150] (2.6,4.1);
\draw [thick,->] (2.6,3.9) to [out=-150, in=-30] (.4,3.9);
\path[->] (foo) edge [loop] node {} ();

\node at (1.5,0) {$\F_2$};
\end{tikzpicture}
\hspace*{2cm}
\begin{tikzpicture}[scale=.67]
\tikzset{every loop/.style={thick,min distance=6mm,in=-45,out=-135,looseness=6}}
\tikzset{place/.style={circle,thin,draw=white,fill=white,scale=1.5}}

\node[place] (foo) at (1.5,1.5) {}; 
\draw [fill=gray] (1.5,1.5) circle [radius=.3];
\node at (1.5,1.5)  {\textcolor{white}{$\boldmath{v_0}$}}; 
\draw [fill=gray] (0,4) circle [radius=.3];
\node at (0,4)  {\textcolor{white}{$\boldmath{v_1}$}}; 
\draw [fill=gray] (3,4) circle [radius=.3];
\node at (3,4)  {\textcolor{white}{$\boldmath{v_2}$}}; 
\draw [thick,->] (1.1,1.5) to [out=145, in=-80] (-.2,3.7);
\draw [thick,->] (.2,3.7) to [out=-45, in=100] (1.3,1.8);
\draw [thick,->] (1.9,1.5) to [out=35, in=-100] (3.2,3.7);
\draw [thick,->] (2.8,3.7) to [out=-135, in=80] (1.7,1.8);
\path[->] (foo) edge [loop] node {} ();

\node at (1.5,0) {$\G$};
\end{tikzpicture}
\caption{Frames showing \SP-undefinability of pseudo-transitivity.}\label{f:defpic1}
\end{figure}
Consider the homomorphism $f_1:\G\to \F_1$ where $f_1(v_0)=x_0$, $f_1(v_1)=f_1(v_2)=x_1$,
and the homomorphism $f_2:\G\to \F_2$ where $f_2(v_i)=y_i$ for $i\le 2$, and let
%
\[
Z=\{(x_0,y_0,v_0),(x_1,y_1,v_1),(x_1,y_2,v_2),
(x_0,y_1,v_0),(x_0,y_2,v_0),(x_1,y_0,v_0)\}.
\]
We claim that, for all $i\le 2$, we have $\bigl((\F_1,f_1), (\F_2,f_2)\bigr) >\!>_Z (\G,\idmap,v_i)$.
Indeed, {\bf (s1)} clearly holds. It is easy to check that {\bf (s2)} holds, because
\begin{itemize}
\item[--] for all $(x,y)\in \F_1\times \F_2$, there is $v\in\G$ with $(x,y,v)\in Z$,
\item[--] for all $v$ in $\G$, we have $(v,v_0)\in R^\G$ and $(v_0,v)\in R^\G$.
\end{itemize}
Finally, we leave it to the reader to consider all 7 possible cases for  the non-empty  set $A\subseteq\{v_0,v_1,v_2\}$ and show {\bf (s3)}.
%
\end{proof}


\smallskip
Recall that a relation $R$ is called \emph{weakly connected} if
\[
\forall x,y,z\,\bigl(R(x,y)\land R(x,z)\to R(y,z)\lor R(z,y)\lor (y=z)\bigr).
\]
Transitive and weakly connected relations are the frames for the modal logic $\mf{K4.3}$. 
Note that the class of reflexive, transitive and weakly connected relations---linear
quasiorders, the frames for  
$\mf{S4.3}$---is
\SP-definable; see $\Axlin$ in \eqref{ewconn}. 

\begin{proposition}\label{p:weakly}
Neither the class of all weakly connected nor the class of all transitive and weakly connected frames is \SP-definable.
\end{proposition}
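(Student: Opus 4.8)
The plan is to argue exactly as in the proof of Proposition~\ref{p:diff}, using the \SP-undefinability criterion of Theorem~\ref{t:sim} together with its simplification Proposition~\ref{p:simple}. It suffices to exhibit two frames $\F_1,\F_2$ that are at the same time weakly connected \emph{and} transitive, together with a frame $\G$ that is transitive but \emph{not} weakly connected, such that $(\F_1,\F_2)>\!>\!\>\G$. Then neither the class of all weakly connected frames nor the class of all transitive and weakly connected frames can be of the form $\CA_{\Ec}$ for a set $\Ec$ of \spequations: from $\F_1,\F_2\models\Ec$ we would get $\G\models\Ec$ by Theorem~\ref{t:sim}, forcing $\G$ to be weakly connected, a contradiction. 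Thus a single pair of witness frames handles both assertions at once.

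For the concrete construction I would take $\G$ to be the three-point frame with domain $\{v_0,v_1,v_2\}$ and $R^\G=\{(v_0,v_1),(v_0,v_2)\}$: it is vacuously transitive, and weak connectedness fails at $v_0,v_1,v_2$ since $v_1\ne v_2$ and neither $(v_1,v_2)$ nor $(v_2,v_1)$ lies in $R^\G$. For the witnesses I would let $\F_1$ be the two-point frame $\{x_0,x_1\}$ with $R_1=\{(x_0,x_1)\}$ (collapsing $v_1$ and $v_2$) and $\F_2$ the three-point frame $\{y_0,y_1,y_2\}$ with $R_2=\{(y_0,y_1),(y_0,y_2),(y_1,y_2)\}$ (adding the arrow needed for connectedness); both are transitive and weakly connected, the only point with two distinct successors in each being the root. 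Then I would take the homomorphisms $f_1\colon\G\to\F_1$ given by $f_1(v_0)=x_0$, $f_1(v_1)=f_1(v_2)=x_1$, and $f_2\colon\G\to\F_2$ given by $f_2(v_i)=y_i$, and set $Z=\{(x_0,y_0,v_0),(x_1,y_1,v_1),(x_1,y_2,v_2)\}$.

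The core of the argument is to check that $(\F_1,f_1),(\F_2,f_2)>\!>_Z(\G,\idmap,v)$ for every $v$ in $\G$, which by Proposition~\ref{p:simple} yields $(\F_1,\F_2)>\!>\!\>\G$. Condition {\bf (s1)} is immediate from the choice of $Z$. For {\bf (s2)}, the only triple whose first component has a successor in $\F_1$ is $(x_0,y_0,v_0)$, from which one matches the pair $(x_1,y_1)$ by $v_1$ and the pair $(x_1,y_2)$ by $v_2$; for the other two triples $x_1$ has no $\F_1$-successor, so the condition is vacuous. For {\bf (s3)} one uses that $f_2$ is injective while $f_1^{-1}(x_1)=\{v_1,v_2\}$, so that for any $A$ the hypothesis ``$x\in f_1[A]$ and $y\in f_2[A]$'' on a triple $(x,y,v)\in Z$ reduces exactly to $v\in A$, which is the conclusion. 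I do not expect a serious obstacle here: once the three frames are chosen the verification of {\bf (s1)}--{\bf (s3)} is a short finite check and the reduction via Theorem~\ref{t:sim} is routine. The only point needing a little care is to design $\F_1$ and $\F_2$ so that they are \emph{simultaneously} transitive and weakly connected — so that the same pair witnesses both halves of the statement — while still ensuring that $Z$ genuinely closes up under {\bf (s2)}; the two-frame construction above, modelled on the one used for $\mf{Diff}$, achieves exactly this.
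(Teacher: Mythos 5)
Your proposal is correct and is essentially identical to the paper's own proof: it uses the very same frames $\F_1$, $\F_2$, $\G$ (those of Fig.~\ref{f:defpic2}), the same homomorphisms $f_1,f_2$ and the same relation $Z$, and reduces the claim to checking {\bf (s1)}--{\bf (s3)} via Proposition~\ref{p:simple} and Theorem~\ref{t:sim}. The finite verification you sketch (only $(x_0,y_0,v_0)$ contributes to {\bf (s2)}, and {\bf (s3)} follows from injectivity of $f_2$ together with $f_1^{-1}(x_0)=\{v_0\}$) is sound, so no changes are needed.
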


\begin{proof}
Take the frames $\F_1$, $\F_2$ and $\G$ in Fig.~\ref{f:defpic2}.
We show that the conditions of Proposition~\ref{p:simple} hold, and so $(\F_1, \F_2) >\!> \G$.
\begin{figure}[ht]
\centering
\begin{tikzpicture}[scale=.67]

\draw [fill=gray] (1,1) circle [radius=.3];
\node at (1,1)  {\textcolor{white}{$\boldmath{x_0}$}}; 
\draw [fill=gray] (1,3.5) circle [radius=.3];
\node at (1,3.5) {\textcolor{white}{$\boldmath{x_1}$}}; 
\draw [thick,->] (1,1.4) to [out=90, in=-90] (1,3.1);

\node at (1,0) {$\F_1$};
\end{tikzpicture}
\hspace*{2cm}
\begin{tikzpicture}[scale=.67]

\draw [fill=gray] (1.5,1) circle [radius=.3];
\node at (1.5,1)  {\textcolor{white}{$\boldmath{y_0}$}}; 
\draw [fill=gray] (0,3.5) circle [radius=.3];
\node at (0,3.5)  {\textcolor{white}{$\boldmath{y_1}$}}; 
\draw [fill=gray] (3,3.5) circle [radius=.3];
\node at (3,3.5)  {\textcolor{white}{$\boldmath{y_2}$}}; 
\draw [thick,->] (1.3,1.4) to [out=120, in=-60] (.2,3.1);
\draw [thick,->] (1.7,1.4) to [out=60, in=-120] (2.8,3.1);
\draw [thick,->] (.4,3.5) to [out=0, in=180] (2.6,3.5);

\node at (1.5,0) {$\F_2$};
\end{tikzpicture}
\hspace*{2cm}
\begin{tikzpicture}[scale=.67]
\draw [fill=gray] (1.5,1) circle [radius=.3];
\node at (1.5,1)  {\textcolor{white}{$\boldmath{v_0}$}}; 
\draw [fill=gray] (0,3.5) circle [radius=.3];
\node at (0,3.5)  {\textcolor{white}{$\boldmath{v_1}$}}; 
\draw [fill=gray] (3,3.5) circle [radius=.3];
\node at (3,3.5)  {\textcolor{white}{$\boldmath{v_2}$}}; 
\draw [thick,->] (1.3,1.4) to [out=120, in=-60] (.2,3.1);
\draw [thick,->] (1.7,1.4) to [out=60, in=-120] (2.8,3.1);

\node at (1.5,0) {$\G$};
\end{tikzpicture}
\caption{Frames showing \SP-undefinability of weak connectedness.}\label{f:defpic2}
\end{figure}
Consider the homomorphism $f_1\colon\G\to \F_1$, where $f_1(v_0)=x_0$, $f_1(v_1)=f_1(v_2)=x_1$,
and the homomorphism $f_2\colon\G\to \F_2$, where $f_2(v_i)=y_i$ for $i\le 2$, and let
\[
Z=\{ (x_0,y_0,v_0),(x_1,y_1,v_1),(x_1,y_2,v_2) \}.
\]
Then it is easy to check that
$\bigl((\F_1,f_1), (\F_2,f_2)\bigr) >\!>_Z$ $(\G,\idmap,v_i)$, for $i\le 2$.
\end{proof}


A relation $R$ is called \emph{confluent} if
\[
\forall x,y,z\,\bigl(R(x,y)\land R(x,z)\to \exists u\,\bigl(R(y,u)\land R(z,u)\bigr)\bigr).
\]
Transitive and confluent relations are the frames for the modal logic $\mf{K4.2}$. 

\begin{proposition}\label{p:chr}
Neither the class of all confluent nor the class of all transitive and confluent frames
is \SP-definable.
\end{proposition}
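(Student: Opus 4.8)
The plan is to mimic the proofs of Propositions~\ref{p:diff} and \ref{p:weakly}, exhibiting a triple of finite frames $\F_1$, $\F_2$, $\G$ such that $\F_1$ and $\F_2$ are (transitive and) confluent but $\G$ is not confluent, and such that the simulation condition of Proposition~\ref{p:simple} is satisfied, giving $(\F_1,\F_2)>\!>\!\>\G$. Then Theorem~\ref{t:sim} finishes the job: any \spequation{} $\e$ valid in $\F_1$ and $\F_2$ is valid in $\G$, so no set of \spequations{} can cut out the class of (transitive and) confluent frames, which contains $\F_1,\F_2$ but not $\G$.

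First I would choose $\G$ to be the simplest non-confluent frame: a root $v_0$ with two successors $v_1$, $v_2$ and nothing else (so $v_1$ and $v_2$ have no common successor). For the confluent witnesses I would take $\F_1$ to be a two-point reflexive cluster $\{x_0,x_1\}$ (with all four arrows, hence transitive and confluent), or even a single reflexive point, and $\F_2$ to be the frame obtained from $\G$ by adding a fresh point $y_3$ below $y_1$ and $y_2$ — i.e.\ a root $y_0$ with successors $y_1,y_2$ both seeing a common point $y_3$ — closed under transitivity if one wants the transitive-and-confluent version. One must check $\F_2$ is genuinely confluent (and, in the transitive case, transitive): the only critical pair of successors of $y_0$ is $(y_1,y_2)$, and they share $y_3$; all other pairs are handled trivially. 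The homomorphisms are $f_1\colon\G\to\F_1$ sending $v_0\mapsto x_0$, $v_1,v_2\mapsto x_1$, and $f_2\colon\G\to\F_2$ the obvious inclusion $v_i\mapsto y_i$.

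The core computation is to produce the relation $Z\subseteq W_{\F_1}\times W_{\F_2}\times W_\G$ witnessing $(\F_1,f_1),(\F_2,f_2)>\!>_Z(\G,\idmap,v)$ for every $v$ in $\G$, and to verify {\bf (s1)}--{\bf (s3)}. As in Proposition~\ref{p:diff}, I expect $Z$ to consist of the `diagonal' triples $(f_1(v_i),f_2(v_i),v_i)$ for $i\le 2$, together with enough extra triples of the form $(\cdot,\cdot,v_0)$ so that {\bf (s2)} can always fall back on the root $v_0$ of $\G$ (using that $v_0$ is $R^\G$-below all three points, or at least that from any point of $\G$ one can $R^\G$-reach a point projecting correctly). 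Condition {\bf (s3)} is then a short finite case check over the non-empty subsets $A$ of $\{v_0,v_1,v_2\}$, exactly parallel to the ending of Proposition~\ref{p:diff}'s proof, which I would leave to the reader. The main obstacle, and the only place where real thought is needed, is getting $Z$ to satisfy {\bf (s2)} simultaneously for the confluent frame $\F_2$ and the non-confluent $\G$ while still keeping {\bf (s3)}: one has to make sure that whenever the $\F_1$- and $\F_2$-coordinates move along $R_i$, the $\G$-coordinate can move along $R^\G$ to a point still related by $Z$, and since $\G$ has no deep structure this forces careful bookkeeping of which triples land on $v_0$. Once a suitable $Z$ is found, both the confluent and the transitive-confluent cases follow at once (for the latter, replace $\F_1,\F_2,\G$ by their transitive closures and note the simulation data transfer), completing the proof of Proposition~\ref{p:chr}.
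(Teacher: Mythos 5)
Your overall strategy (build a simulation via Proposition~\ref{p:simple} and then invoke Theorem~\ref{t:sim}) is exactly the paper's, but your choice of witnessing frames breaks down, and not merely in bookkeeping. First, your $\F_2$ is not confluent: instantiating confluence with $y=z$ shows that in any confluent frame every point with a predecessor must itself have a successor, and your $y_3$ is a reachable dead end (taking the transitive closure does not cure this). Second, and fatally, your $\G$ --- a root $v_0$ with two irreflexive dead-end successors $v_1,v_2$ --- cannot work with \emph{any} confluent frames $\F_i$ and any relation $Z$: the \spequation{} $\d\top\imp\d\d\top$ is valid in every confluent frame (again by the $y=z$ instance of confluence), but it fails at $v_0$ in your $\G$, so Theorem~\ref{t:sim} rules out $(\F_i)_{i\in I} >\!>\!\> \G$ outright. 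At the level of the conditions, \textbf{(s2)} collapses at a triple such as $(x_1,y_1,v_1)$: the $\F_1$- and $\F_2$-coordinates can move forward (reflexive loop, resp.\ $y_1\to y_3$), but $v_1$ has no $R^\G$-successor, and no amount of extra triples landing on $v_0$ helps, because \textbf{(s2)} requires the $\G$-coordinate to advance along $R^\G$ and there is nowhere for it to go.

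The repair is precisely what the paper does: put reflexive loops on $v_1$ and $v_2$, so that $\G$ is still not confluent (the successors $v_1,v_2$ of $v_0$ have no common successor) but has no reachable dead ends; and take a \emph{single} transitive confluent frame $\F$ with points $x_0,x_1,x_2,x_3$, where $x_0$ sees $x_1,x_2,x_3$, both $x_1$ and $x_2$ see $x_3$, and $x_1,x_2,x_3$ are reflexive. With the homomorphism $f(v_i)=x_i$ for $i\le 2$ and $Z=\{(x_0,v_0),(x_1,v_1),(x_2,v_2),(x_3,v_1),(x_3,v_2)\}$ one checks $(\F,f)>\!>_Z(\G,\idmap,v_i)$ for $i\le 2$; the top point $x_3$, being matched with both $v_1$ and $v_2$, is what lets the $\G$-coordinate keep moving via its loop whenever the $\F$-coordinate climbs to $x_3$. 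The same single pair $(\F,\G)$ then settles both the confluent and the transitive-and-confluent case at once, with no separate transitive-closure step needed.
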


\begin{proof}
Take the frames $\F$ and $\G$ in Fig.~\ref{f:defpic3}.
We show that the conditions of Proposition~\ref{p:simple} hold, and so $(\F) >\!> \G$.
\begin{figure}[ht]
\begin{tikzpicture}[scale=.67]
\tikzset{every loop/.style={thick,min distance=6mm,in=45,out=135,looseness=6}}
\tikzset{place/.style={circle,thin,draw=white,fill=white,scale=1.5}}

\node[place] (foo1) at (0,3.5) {}; 
\node[place] (foo2) at (3,3.5) {}; 
\node[place] (foo3) at (1.5,6) {}; 
\draw [fill=gray] (1.5,1) circle [radius=.3];
\node at (1.5,1)  {\textcolor{white}{$\boldmath{x_0}$}}; 
\draw [fill=gray] (0,3.5) circle [radius=.3];
\node at (0,3.5)  {\textcolor{white}{$\boldmath{x_1}$}}; 
\draw [fill=gray] (3,3.5) circle [radius=.3];
\node at (3,3.5)  {\textcolor{white}{$\boldmath{x_2}$}}; 
\draw [fill=gray] (1.5,6) circle [radius=.3];
\node at (1.5,6)  {\textcolor{white}{$\boldmath{x_3}$}}; 
\draw [thick,->] (1.3,1.4) to [out=120, in=-60] (.2,3.1);
\draw [thick,->] (1.7,1.4) to [out=60, in=-120] (2.8,3.1);

\draw [thick,->] (1.5,1.4) to (1.5,5.6);
\draw [thick,->] (0.4,3.5) to [out=45, in=-100] (1.35,5.6);
\draw [thick,->] (2.6,3.5) to [out=135, in=-80] (1.65,5.6);
\path[->] (foo1) edge [loop] node {} ();
\path[->] (foo2) edge [loop] node {} ();
\path[->] (foo3) edge [loop] node {} ();

\node at (1.5,0) {$\F$};
\end{tikzpicture}
\hspace*{1.5cm}
\begin{tikzpicture}[scale=.67]
\tikzset{every loop/.style={thick,min distance=6mm,in=45,out=135,looseness=6}}
\tikzset{place/.style={circle,thin,draw=white,fill=white,scale=1.5}}

\node[place] (foo1) at (0,4.5) {}; 
\node[place] (foo2) at (3,4.5) {}; 
\draw [fill=gray] (1.5,2) circle [radius=.3];
\node at (1.5,2)  {\textcolor{white}{$\boldmath{v_0}$}}; 
\draw [fill=gray] (0,4.5) circle [radius=.3];
\node at (0,4.5)  {\textcolor{white}{$\boldmath{v_1}$}}; 
\draw [fill=gray] (3,4.5) circle [radius=.3];
\node at (3,4.5)  {\textcolor{white}{$\boldmath{v_2}$}}; 
\draw [thick,->] (1.3,2.4) to [out=120, in=-60] (.2,4.1);
\draw [thick,->] (1.7,2.4) to [out=60, in=-120] (2.8,4.1);
\path[->] (foo1) edge [loop] node {} ();
\path[->] (foo2) edge [loop] node {} ();

\node at (1.5,1) {$\G$};
\node at (1.5,0) {\ };
\end{tikzpicture}
\caption{Frames showing \SP-undefinability of confluence.}\label{f:defpic3}
\end{figure}
Consider the homomorphism $f\colon\G\to \F$, where $f(v_i)=x_i$ for $i\leq 2$,
and let
\[
Z=\{ (x_0,v_0),(x_1,v_1),(x_2,v_2),(x_3,v_1),(x_3,v_2) \}.
\]
Then it is easy to check that
$(\F,f) >\!>_Z$ $(\G,\idmap,v_i)$,  for $i\le 2$.
\end{proof}


We say that a relation $R$ \emph{has the McKinsey property} if
\[
\forall x \exists y \, \bigl(R(x,y)\land \forall z\,\bigl( R(y,z)\to (y=z)\bigr)\bigr).
\]
Transitive relations with this property are the frames for the modal logic $\mf{K4.1}$. 

\begin{proposition}\label{p:chr}
The class of all transitive frames with the McKinsey property 
is not \SP-definable.
\end{proposition}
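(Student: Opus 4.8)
The plan is to deduce non-\SP-definability from the model-theoretic criterion of Theorem~\ref{t:sim}, in the convenient form provided by Proposition~\ref{p:simple}. I would exhibit a single frame $\F$ that is transitive and has the McKinsey property, together with a frame $\G$ that is transitive but fails the McKinsey property, and show $(\F)>\!>\!\>\G$. Then Theorem~\ref{t:sim} gives $\G\models\e$ for every \spequation{} $\e$ with $\F\models\e$, so if the class of transitive McKinsey frames were $\CA_{\Ec}$ for some $\Ec$, then $\F\models\Ec$ would force $\G\models\Ec$, i.e.\ $\G$ in the class---a contradiction, since $\G$ is not McKinsey.

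For the frames I would take $\G=\bigl(\{v_0,v_1\},\{v_0,v_1\}\times\{v_0,v_1\}\bigr)$, the two-element cluster: it is transitive, but neither $v_0$ nor $v_1$ has a final successor, so McKinsey fails. For $\F$ I would take the three-point frame obtained from a two-element cluster $\{y_0,y_1\}$ by adding a reflexive point $y_2$ together with the arrows $(y_0,y_2)$ and $(y_1,y_2)$; a direct check shows $\F$ is transitive and that $y_2$ is a final point witnessing McKinsey at each of $y_0,y_1,y_2$. Let $f\colon\G\to\F$ be given by $f(v_0)=y_0$, $f(v_1)=y_1$ (a homomorphism, as $\{y_0,y_1\}$ is a subcluster of $\F$), and put $Z=\{(y_0,v_0),(y_1,v_1),(y_2,v_0),(y_2,v_1)\}$. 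I would then verify $(\F,f)>\!>_Z(\G,\idmap,v)$ for $v=v_0$ and $v=v_1$: {\bf (s1)} is immediate from the first two pairs of $Z$; {\bf (s2)} holds because every $R^\F$-successor of a point of $\{y_0,y_1,y_2\}$ again lies in $\{y_0,y_1,y_2\}$, each of which occurs as a first coordinate of some pair in $Z$, while $R^\G$ is the all-relation so the required $\G$-successor is unconstrained; and {\bf (s3)} holds because $f[A]\subseteq\{y_0,y_1\}$ for every $A\subseteq\{v_0,v_1\}$, so the pairs $(y_2,v_0),(y_2,v_1)$ are never relevant, leaving only the cases $A\in\bigl\{\{v_0\},\{v_1\},\{v_0,v_1\}\bigr\}$, which are routine. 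By Proposition~\ref{p:simple} this yields $(\F)>\!>\!\>\G$, and the theorem follows.

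The main obstacle is condition {\bf (s3)}, which encodes the disjunctive (propositional) part of the Sahlqvist correspondent. For the obvious first guesses---e.g.\ $\F$ a single reflexive point---conditions {\bf (s1)} and {\bf (s2)} are trivial but {\bf (s3)} fails: taking $A=\{v_0\}$ forces the $Z$-partner of $f(v_0)$ to be exactly $v_0$, which is incompatible with the successor pairs demanded by {\bf (s2)}. The role of the extra reflexive node $y_2$ is precisely that it lies outside the image $f[\G]=\{y_0,y_1\}$, so {\bf (s3)} imposes no constraint on the $Z$-pairs through $y_2$, while $y_2$ remains available as a common successor to feed {\bf (s2)}; and keeping $y_2$ reflexive is what keeps $\F$ itself inside the transitive McKinsey class. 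With the frames, $f$, and $Z$ fixed as above, all remaining verifications are finite case distinctions.
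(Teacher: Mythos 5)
Your proof is correct and is essentially the paper's own argument: the same two-element cluster $\G$, the same three-point frame $\F$ (the paper's Fig.~\ref{f:defpic4}, with your $y_2$ in the role of $x_2$), the same homomorphism $f$ and relation $Z$, verified via Proposition~\ref{p:simple} and Theorem~\ref{t:sim}. Your explicit insistence that $y_2$ be reflexive is exactly what is needed for $\F$ to lie in the transitive McKinsey class, a detail the paper's figure leaves implicit.
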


\begin{proof}
Take the frames $\F$ and $\G$ in Fig.~\ref{f:defpic4}.
We show that the conditions of Proposition~\ref{p:simple} hold, and so $(\F) >\!> \G$.
\begin{figure}[ht]
\begin{tikzpicture}[scale=.67]
\tikzset{every loop/.style={thick,min distance=6mm,in=-45,out=-135,looseness=6}}
\tikzset{place/.style={circle,thin,draw=white,fill=white,scale=1.5}}

\node[place] (foo1) at (0,1.5) {}; 
\node[place] (foo2) at (3,1.5) {}; 
\draw [fill=gray] (1.5,4) circle [radius=.3];
\node at (1.5,4)  {\textcolor{white}{$\boldmath{x_2}$}}; 
\draw [fill=gray] (0,1.5) circle [radius=.3];
\node at (0,1.5)  {\textcolor{white}{$\boldmath{x_0}$}}; 
\draw [fill=gray] (3,1.5) circle [radius=.3];
\node at (3,1.5)  {\textcolor{white}{$\boldmath{x_1}$}}; 
\draw [thick,->] (.4,1.6) to [out=30, in=150] (2.6,1.6);
\draw [thick,->] (2.6,1.4) to [out=-150, in=-30] (.4,1.4);
\draw [thick,->] (0.2,1.9) to [out=60, in=-120] (1.3,3.6);
\draw [thick,->] (2.8,1.9) to [out=120, in=-60] (1.7,3.6);
\path[->] (foo1) edge [loop] node {} ();
\path[->] (foo2) edge [loop] node {} ();

\node at (1.5,0) {$\F$};
\end{tikzpicture}
\hspace*{1.5cm}
\begin{tikzpicture}[scale=.67]
\tikzset{every loop/.style={thick,min distance=6mm,in=-45,out=-135,looseness=6}}
\tikzset{place/.style={circle,thin,draw=white,fill=white,scale=1.5}}

\node[place] (foo1) at (0,2.5) {}; 
\node[place] (foo2) at (3,2.5) {}; 
\draw [fill=gray] (0,2.5) circle [radius=.3];
\node at (0,2.5)  {\textcolor{white}{$\boldmath{v_0}$}}; 
\draw [fill=gray] (3,2.5) circle [radius=.3];
\node at (3,2.5)  {\textcolor{white}{$\boldmath{v_1}$}}; 
\draw [thick,->] (.4,2.6) to [out=30, in=150] (2.6,2.6);
\draw [thick,->] (2.6,2.4) to [out=-150, in=-30] (.4,2.4);
\path[->] (foo1) edge [loop] node {} ();
\path[->] (foo2) edge [loop] node {} ();

\node at (1.5,1) {$\G$};
\node at (1.5,0) {\ };
\end{tikzpicture}

\caption{Frames showing \SP-undefinability of transitive frames with the McKinsey property.}\label{f:defpic4}
\end{figure}
Consider the homomorphism $f\colon\G\to \F$, where $f(v_i)=x_i$ for $i\leq 1$,
and let
\[
Z=\{ (x_0,v_0),(x_1,v_1),(x_2,v_0),(x_2,v_1) \}.
\]
Then it is easy to check that
$(\F,f) >\!>_Z$ $(\G,\idmap,v_i)$,  for $i\le 1$. 
\end{proof}


As mentioned above, the class of linear quasiorders is \SP-definable. However, confluent quasiorders (the frames for the modal logic $\mf{S4.2}$) and quasiorders with the McKinsey property (the frames for the modal logic $\mf{S4.1}$) are not \SP-definable, which is a consequence of the following:

\begin{proposition}
Every unimodal \sptheory{} $\SPL \supseteq \Esfour$ is a subframe \sptheory. 
\end{proposition}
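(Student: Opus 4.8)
The plan is to show that if $\SPL\supseteq\Esfour$ is a unimodal spi-logic, then $\CA_{\SPL}$ is closed under taking subframes, so that $\SPL$ is a subframe spi-logic in the sense defined just before Theorem~\ref{t:subframedec}. Since $\SPL=\SPi+\Ec$ for some set $\Ec$ of \spequations, and $\Esfour\subseteq\SPL$, every frame for $\SPL$ is in particular reflexive and transitive, i.e.\ a quasiorder. The key observation is that in the unimodal case, over quasiorders, every \spequation{} has a correspondent that is equivalent to a \emph{universal} first-order sentence. Indeed, recall the correspondent $\Psi_{\e}$ from \eqref{eq:corr1}--\eqref{globalcorr}: its left-hand side is the diagram of the finite tree $\T_\sigma$, and its right-hand side asserts the existence of a homomorphic copy of $\T_\tau$ rooted at the same point whose $p$-points land among the $p$-points of $\sigma$. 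Over a reflexive transitive frame, however, any such homomorphism can be ``folded into'' the image of the homomorphism $h_\sigma\colon\T_\sigma\to\F$ already witnessing the left-hand side: using reflexivity and transitivity of $R^\F$, the $\tau$-tree can be mapped entirely into $h_\sigma[W_\sigma]$, collapsing modal steps as needed. Thus, relative to $\Axsfour$, the existential quantifiers on the right-hand side of $\Psi_{\e}$ can be eliminated in favour of a disjunction of equalities among the (universally quantified) variables $\varel v_i$, giving a universal sentence.

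First I would make the above precise via Corollary~\ref{c:krchar}~$(i)$: for a frame $\F$ and point $w$, $\F,w\models\e$ iff for every homomorphism $h_\sigma\colon\M_\sigma\to\M$ with $h_\sigma(r_\sigma)=w$ there is a homomorphism $h_\tau\colon\M_\tau\to\M$ with $h_\tau(r_\tau)=w$. When $\F$ is a quasiorder, I claim such an $h_\tau$ exists iff one exists with range inside $h_\sigma[W_\sigma]$ — because any successor demanded by a $\d$ in $\tau$ can be replaced, using transitivity, by an already-present successor in the finite image, and a ``stay put'' move is available by reflexivity whenever the required $p$ is true at the current point. Consequently $\F,w\models\e$ is equivalent to a condition quantifying only over the finitely many points $h_\sigma(v_0),\dots,h_\sigma(v_{n_\sigma})$ and asserting a finite disjunction of equalities among them together with $R$-atoms between them — all of which are preserved under passing to substructures. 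Hence the global correspondent of $\e$ relative to $\Axsfour$ is (equivalent to) a universal sentence, and as remarked in the text (just before the definition of subframe logic), this means every subframe of a frame validating $\e$ also validates $\e$.

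Putting this together: let $\G$ be a subframe of some $\F\in\CA_{\SPL}$. Since $\F$ is a quasiorder and subframes of quasiorders are quasiorders, $\G$ is a quasiorder, so $\G\models\Axsfour$. For each axiom $\e\in\Ec$, $\F\models\e$ and, by the universality just established, $\G\models\e$; hence $\G\models\Ec$, i.e.\ $\G\in\CA_{\Ec}=\CA_{\SPL}$. Therefore $\CA_{\SPL}$ is closed under subframes and $\SPL$ is a subframe spi-logic.

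The main obstacle is the ``folding'' argument — showing carefully that over a quasiorder the homomorphism $h_\tau$ witnessing $\tau$ can always be taken with range inside the finite set $h_\sigma[W_\sigma]$, and then extracting from this a genuinely \emph{universal} first-order sentence (rather than just a semantic closure property). The delicate point is that $\tau$ may have nested diamonds and branching, so the folding must be done by induction on the structure of $\tau$, at each modal step choosing among the $R$-successors present in the finite image; reflexivity handles the variable-matching demands and transitivity guarantees the chosen successors are actually $R$-related to the parent. Everything else — that quasiorders are closed under subframes, and that validity of a universal sentence descends to substructures — is routine.
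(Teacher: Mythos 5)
Your proposal is correct, but it takes a genuinely different route from the paper. The paper argues directly and model-theoretically: given a subframe $\F=(W,R)$ of a quasiorder $\F'=(W',R')$ and a valuation $\V$ on $\F$ refuting $\sigma\imp\tau$ at some $w\in W$, it passes to the model $\M'=(\F',\V)$ (so variables are true only inside $W$) and proves, by induction on the structure of \spterms{} $\varrho$, that $\{u\mid (\F,\V),u\kmodels\varrho\}=\{u\mid \M',u\kmodels\varrho\}\cap W$, handling the diamond case by a small case analysis that uses reflexivity and transitivity of the quasiorders; the refutation then transfers from the subframe to $\F'$, with no correspondence machinery at all. You instead work at the level of first-order correspondents: you show that, modulo $\Axsfour$, the existential part of $\Psi_{\e}$ can be eliminated by folding the witnessing homomorphism $h_\tau\colon\T_\tau\to\F$ into $h_\sigma[W_\sigma]$, so that relative to quasiorders every \spequation{} has a universal correspondent, and then invoke preservation of universal sentences under substructures. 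The folding lemma you flag as the main obstacle is indeed true and provable exactly along the lines you sketch: working with the minimal valuation $\V(p)=h_\sigma[\V_\sigma(p)]$, map each node of $\T_\tau$ carrying no variable to the $h_\tau$-image of its nearest labelled-or-root ancestor; reflexivity supplies the resulting loops, transitivity the skipped edges, and labelled nodes stay put inside the minimal valuation sets, so the folded map has range in $h_\sigma[W_\sigma]$. What your route buys is a sharper syntactic by-product: over quasiorders every \spequation{} is equivalent to a universal first-order sentence, which is essentially the claim announced in \S\ref{rproblem} that \SP-definable properties of quasiorders must be universal, and it localises the role of reflexivity and transitivity in the correspondent. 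The cost is the minimal-valuation apparatus and the delicate folding construction; the paper's induction on terms is shorter, more elementary, and transfers the refuting valuation directly.
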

\begin{proof}
We show that, for every \spequation{} $\e = (\sigma \imp \tau)$, if $\F \not\models \e$ and $\F = (W,R)$ is a subframe of some quasiorder $\F' = (W',R')$, then $\F' \not\models \e$. Let $\M = (\F,\V)$ be such that 
$\M,w \not\models \e$, for some $w\in W$, and let $\M'= (\F',\V)$.
By induction on the construction of an \spterm{} $\varrho$, we show that 
$\{u\mid\M,u\kmodels\varrho\} =\{ u\mid\M',u\kmodels\varrho\}\cap W$, and so $\M',w \not\models \e$.
The basis of induction follows from the definition, and the cases of $\top$ and $\land$ are trivial. 
Let $\varrho = \d\varrho'$. By IH,  
$\{u\mid\M,u\kmodels\varrho\} \subseteq\{ u\mid\M',u\kmodels\varrho\}\cap W$.
For the converse inclusion, there are four cases. The case $\varrho' = \top$ is trivial as $R$ is reflexive.
Now, let 
$u\in W$ be such that $\M',u\kmodels\varrho$.
Then $\M',v\kmodels\varrho'$,  for some $v\in W'$ with $(u,v)\in R'$.  If $\varrho'$ is a variable, then $v \in W$ and $\M,v \kmodels\varrho'$ by the definition of $\M'$, and 
so $\M',u \kmodels\varrho$.
If $\varrho' = \d\pi$ then, by transitivity of $R'$, $\M',u \kmodels\d\pi$, and so, by IH, 
$\M,u \kmodels\varrho'$, from which, in view of reflexivity of $R$, we obtain $\M,u \kmodels\varrho$. 
Finally, let $\varrho' = \pi_1 \land\dots\land \pi_n$, where none of the $\pi_i$ is a conjunction or 
$\top$. If one of them is a variable, then $v \in W$ and we are done by IH. And if $\pi_i = \d\pi'_i$ for all $i$
then, by transitivity of $R'$, $\M',u \kmodels\d\pi'_i$ for all $i$, and we obtain $\M,u \kmodels\varrho$ by IH and reflexivity of $R$.
\end{proof}


\subsection{\SPcap-logics with $\bot$}\label{bot}

One can introduce a limited form of negation to the language of \spterms{}
by adding the `falsehood' constant $\bot$ (such that   
$\M,w\not\kmodels\bot$ for any point $w$ in any Kripke model $\M$). 
We call the \spterms{} of this extended language \emph{\SPb-formulas\/}, and define \emph{\SPb-implications\/} accordingly.
A class $\Cc$ of frames is \emph{\SPbi-definable} if
$\Cc=\CA_{\Ec}$, for some set $\Ec$ of \SPb-implications.
 
\begin{proposition}\label{p:defsame}
A class of frames is \SP-definable iff it is \SPbi-definable.
\end{proposition}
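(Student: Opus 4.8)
The plan is to prove the two directions separately. The implication from \SP-definability to \SPbi-definability is immediate: every \spequation{} is in particular an \SPb-implication (an \spterm{} is just an \SPb-formula in which $\bot$ happens not to occur), and the class $\CA_{\Ec}$ does not depend on whether we read $\Ec$ as a set of \spequations{} or of \SPb-implications; so a class defined by a set of \spequations{} is defined by that same set. For the converse I would start from a set $\Ec$ of \SPb-implications with $\Cc=\CA_{\Ec}$ and manufacture an equivalent set $\Ec'$ of genuine \spequations, in two steps.

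First I would normalise the \SPb-formulas. The key observation is that $\varrho\land\bot$, $\d_R\bot$ and $\bot$ itself are false at every point of every Kripke model, so a straightforward induction on the construction of an \SPb-formula $\varrho$ shows that $\varrho$ is true at exactly the same points of every Kripke model as the formula $\varrho^{*}$ defined by: $\varrho^{*}=\bot$ if $\bot$ occurs in $\varrho$, and $\varrho^{*}=\varrho$ otherwise. (Intuitively, an occurrence of $\bot$ forces the whole formula to collapse, because $\bot$ always sits inside a context built only from $\land$ and the $\d_R$, and both of these absorb it outwards.) Consequently, for every \SPb-implication $\e=(\sigma\imp\tau)$ and every frame $\F$ we have $\F\kmodels\e$ iff $\F\kmodels\sigma^{*}\imp\tau^{*}$, where each of $\sigma^{*},\tau^{*}$ is either $\bot$ or a $\bot$-free \spterm.

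Second, I would do a three-way case split on $\sigma^{*}\imp\tau^{*}$. If $\sigma^{*}=\bot$, this implication is valid in every frame and I would simply discard it. If both $\sigma^{*}$ and $\tau^{*}$ are $\bot$-free, it is already a \spequation{} and I would keep it unchanged. The only interesting case is $\sigma^{*}$ a $\bot$-free \spterm{} and $\tau^{*}=\bot$: here I would replace $\sigma^{*}\imp\bot$ by $\sigma^{*}\imp p$, where $p\in\varset$ is a propositional variable not occurring in $\sigma^{*}$, using a different fresh variable for each implication treated this way (legitimate because $\varset$ is countably infinite while each $\sigma^{*}$ is finite). The verification that $\F\kmodels\sigma^{*}\imp\bot$ iff $\F\kmodels\sigma^{*}\imp p$ is the heart of the argument: $\F\kmodels\sigma^{*}\imp\bot$ says precisely that $\sigma^{*}$ is refuted at every point of $\F$ under every valuation, and if this fails, one takes a model $\M=(\F,\V)$ and a point $w$ with $\M,w\kmodels\sigma^{*}$ and may assume $\V(p)=\emptyset$ (which does not affect the truth value of $\sigma^{*}$ since $p$ does not occur in it), obtaining $\M,w\not\kmodels p$ and hence $\F\not\kmodels\sigma^{*}\imp p$; the converse direction is trivial. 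Letting $\Ec'$ be the resulting set of \spequations, the two steps give $\CA_{\Ec'}=\CA_{\Ec}=\Cc$, so $\Cc$ is \SP-definable.

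I do not expect a genuine obstacle. The only two points that need a little care are that the elimination of $\bot$ must be justified at the level of frame validity and not merely pointwise truth (which is fine, since $\F\kmodels\e$ is defined by quantifying over all models on $\F$ and all points), and that a fresh variable is always available, which holds for the cardinality reason just mentioned.
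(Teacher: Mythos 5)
Your proof is correct and follows essentially the same route as the paper: discard implications whose left-hand side contains $\bot$, keep $\bot$-free ones, and for the remaining ones use the observation that an occurrence of $\bot$ forces the right-hand side to be false everywhere, so the implication is frame-equivalent to $\sigma\imp p$ for a fresh variable $p$. Your explicit normalisation step ($\varrho^{*}$) and the fresh-variable verification just spell out what the paper leaves as "easy to see".
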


\begin{proof}
Suppose $\Cc=\CA_{\Ec}$, for some set $\Ec$ of \SPb-implications. As \SPb-implications $\sigma\imp\tau$
hold in all frames whenever $\sigma$ contains $\bot$, we may assume that $\bot$ only occurs in $\tau$.
Then it is easy to see that, for every frame $\F$, we have
$\F\models\sigma\imp\tau$ iff $\F\models\sigma\imp p$, where $p$ is a fresh variable not occurring in
$\sigma$.
\end{proof}

All the notions introduced above can be extended to \SPb. Thus, 
a  structure $\A =(A,\land,\bot,\top,\dR)_{R\in \R}$ is called an \emph{\SPb-type algebra} (\emph{of signature\/} $\R$). 
Given \SPb-type algebras $\Aa$ and $\Bb$ of the same signature, a function $\eta \colon \Aa\to\Bb$ is 
an \SPb-\emph{embedding} if it is an \SPa-embedding and $\eta(\bot)=\bot$.
We call $\A$ a \emph{bounded meet-semilattice with normal monotone operators\/} (or $\SLOba$) 
if $(A,\land,\top,\dR)_{R\in \R}$ is a \SLOa{} with $\le$-smallest element $\bot$, and $\dR\bot=\bot$ for
$R\in\R$. The set of \SPb-implications that are valid in all $\SLOba$s is denoted by $\SPi^\bot$.
For a set $\Ec$ of \SPb-implications, $\SLO^\bot_{\Ec}$ denotes the class of $\SLOba$s 
validating $\Ec$. We set 
\[
\Ec\models_{\SLOb}\e \quad\mbox{iff}\quad\mbox{$\A\models\e$\ \ for every  $\A\in\SLO^\bot_{\Ec}$.}
\]
(Note that $\Ec\models_{\SLOb}$ can be captured syntactically by adding the axioms 
$\bot\imp p$ and $\dR\bot\imp\bot$, for $R\in\R$,  to the calculus in \eqref{axioms}--\eqref{rules}.)
For any set $\Ec$ of \SPb-implications,
we define the \emph{\SPbi-logic 
\mbox{$\SPi^\bot + \Ec$}}
\emph{axiomatised by $\Ec$} as
%
\[
\SPi^\bot + \Ec =\{\e\mid \mbox{$\e$ is an \SPb-implication and }\Ec\models_{\SLOb}\e\}.
\]
Now one can
define the notions of completeness, complexity, finite frame property in the same way as in the \SPa-case. 
We give examples of incomplete \sptheories{} $\SPi+\Ec$ such that 
 $\SPi^\bot+\Ec$ is a \eqcomp{}
or even \complex{}  \SPbi-logic.

\begin{example}\em
By Theorem~\ref{t:simple2}, $\SPi+\Ec$ for $\Ec=\{p\imp\d p,\d p\imp \d q\}$ is an incomplete \sptheory. However, 
only the one-element $\SLOba$ can validate the \SPbi-logic $\SPi^\bot+\Ec$, and so $\Ec\models_{\SLOb}\e$ for every \SPb-implication $\e$. Thus, $\SPi^\bot+\Ec$ is a complete \SPbi-logic.
By Theorem~\ref{t:multiincomp}, $\SPi+\{\dR\dS p\imp q\}$ is an incomplete \sptheory. 
However, using a proof similar to that of Theorem~\ref{t:unicomp},
one can readily show that $\SPi^\bot+\{\dR\dS p\imp q\}$ is a \complex{} \SPbi-logic.
\end{example}

On the other hand, completeness and complexity do transfer from \SPa{} to \SPb: 

\begin{proposition}
Let $\Ec$ be a set of  \spequations.
\begin{itemize}
\item[$(i)$]
If the \sptheory{} $\SPi+\Ec$ is \eqcomp, then the \SPbi-logic $\SPi^\bot+\Ec$ is \eqcomp.

\item[$(ii)$]
If the \sptheory{} $\SPi+\Ec$ is \complex, then the \SPbi-logic $\SPi^\bot+\Ec$ is \complex.
\end{itemize}
\end{proposition}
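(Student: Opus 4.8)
The plan is to deduce both statements from their already-assumed $\bot$-free counterparts by systematically eliminating $\bot$ from the picture. Two observations drive this. First, in any $\SLOba$ every \SPb-term in which $\bot$ occurs evaluates to $\bot$ under every valuation; this is a routine induction using that $\bot$ is the $\le$-smallest element and that $\dR\bot=\bot$. Second, every frame $\F=(W,R^\F)_{R\in\R}$ yields a $\SLOba$ $\Fslob=(2^{W},\cap,\emptyset,W,\dRca)_{R\in\R}$, and $\Fslob\models\e$ iff $\F\models\e$ for every \SPb-implication $\e$ (same proof as in the \SPa-case). From the second observation the soundness half of $(i)$, namely $\Ec\models_{\SLOb}\e\Rightarrow\Ec\models_{\CA}\e$, is immediate: if $\F\models\Ec$ then $\Fslob$ is a $\SLOba$ validating $\Ec$, hence validates $\e$.

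For the converse half of $(i)$ I would argue as follows. Let $\e=(\sigma\imp\tau)$ be a \SPb-implication with $\Ec\models_{\CA}\e$. If $\bot$ occurs in $\sigma$, then by the first observation $\sigma$ is identically $\bot$ over $\SLOba$s, so $\Ec\models_{\SLOb}\e$ holds trivially. Otherwise, I pass --- exactly as in the proof of Proposition~\ref{p:defsame} --- to a $\bot$-free \spequation{} $\e'$, namely $\e'=(\sigma\imp p)$ for a fresh variable $p$ if $\bot$ occurs in $\tau$, and $\e'=\e$ if not. Since $\tau$ is then identically false in every Kripke model and identically $\bot$ in every $\SLOba$, one checks that $\e$ and $\e'$ are validated by exactly the same frames and by exactly the same $\SLOba$s. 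Hence $\Ec\models_{\CA}\e'$, and since $\SPi+\Ec$ is \eqcomp, $\Ec\models_{\SLO}\e'$. Finally, for any $\Bb\in\SLO^\bot_{\Ec}$ its \SPa-reduct is a $\SLOa$ which validates $\Ec$ (as $\Ec$ contains no $\bot$), hence validates $\e'$, so $\Bb\models\e'$ and therefore $\Bb\models\e$; thus $\Ec\models_{\SLOb}\e$. The only slightly delicate point is the equivalence of $\e$ and $\e'$ over $\SLOba$s, which uses that $\bot$ is the genuine least element.

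For $(ii)$, let $\Bb\in\SLO^\bot_{\Ec}$. Its \SPa-reduct is a $\SLOa$ validating $\Ec$ (as $\Ec$ is $\bot$-free), so complexity of $\SPi+\Ec$ gives an \SPa-embedding $\eta\colon\Bb\to\Fslo$ with $\F=(W,R^\F)_{R\in\R}\models\Ec$. Such an $\eta$ need not preserve $\bot$; put $D=\eta(\bot_\Bb)$ and note $D\subseteq\eta(a)$ for all $a$ since $\eta$ is order-preserving. The key point is that $D$ behaves like a set of ``false worlds'': since $\eta$ preserves $\dR$ and $\Bb$ satisfies the normality equation $\dR\bot\approx\bot$, we get $\dRca D=\eta(\dR\bot_\Bb)=\eta(\bot_\Bb)=D$ for every $R$, whence no point of $W\setminus D$ has an $R^\F$-successor in $D$. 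Thus $W\setminus D$ is closed under $R^\F$-successors for every $R$, so the restriction $\F'$ of $\F$ to $W\setminus D$ is a generated subframe of $\F$; since validity of \spequations{} is preserved under generated subframes (standard; cf.~\cite{Blackburnetal01}), $\F'\models\Ec$. It then remains to verify that $\eta'\colon a\mapsto\eta(a)\setminus D$ is a \SPb-embedding of $\Bb$ into $\Fslob'$: preservation of $\land$ and $\top$ is immediate, $\eta'(\bot_\Bb)=\emptyset$, injectivity follows from $D\subseteq\eta(a)\cap\eta(b)$ together with injectivity of $\eta$, and preservation of each $\dR$ uses precisely that every $R^\F$-successor of a point in $W\setminus D$ already lies in $W\setminus D$. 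I expect the bookkeeping around $D$ --- establishing $\dRca D=D$ and converting it into the generated-subframe property and into $\dR$-preservation of $\eta'$ --- to be the main obstacle; everything else is a direct transfer from the $\bot$-free setting.
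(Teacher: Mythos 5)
Your proposal is correct and takes essentially the same route as the paper: in part $(i)$ you reduce a $\bot$-containing implication to a $\bot$-free one with a fresh variable and transfer it, via completeness of $\SPi+\Ec$, to the \sptype{} reducts of algebras in $\SLO^\bot_{\Ec}$, and in part $(ii)$ your construction (delete $\eta(\bot)$, note the remaining points form a generated subframe since $\dRca\eta(\bot)=\eta(\bot)$, and shift the embedding to $a\mapsto\eta(a)\setminus\eta(\bot)$) is exactly the paper's. The only cosmetic difference is that the paper also records $\Ec\models_{\SLO}\dR\sigma\imp p$ (i.e.\ that the least element is fixed by the $\dR$), a step your argument shows is not actually needed for $(i)$.
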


\begin{proof}
$(i)$
Suppose $\Ec\models_{\CA}\e$ for some \SPb-implication $\e$ containing $\bot$. 
Then we may assume that $\e$ is of the form $\sigma\imp\bot$, in which case 
$\Ec\models_{\CA}\sigma\imp p$, for a fresh variable $p$. 
Also, $\Ec\models_{\CA}\dR\sigma\imp p$ for every $\dR$ occurring in $\Ec$, whence  
$\Ec\models_{\SLO}\sigma\imp p$ and $\Ec\models_{\SLO}\dR\sigma\imp p$.
So, in every $\A\in\SLO_{\Ec}$, there is a $\le$-smallest element $\bot$, 
for which $\dR\bot=\bot$ for every $\dR$ occurring in $\Ec$. This shows that $\Ec\models_{\SLOb}\sigma\imp\bot$.

$(ii)$ 
Suppose $\A\in\SLO^\bot_{\Ec}$. Then the \sptype{} reduct $\A^\downarrow$ of $\A$ is in
$\SLO_{\Ec}$, and so there is an \SPa-embedding $f\colon\A^\downarrow\to\Fslo$ for some 
$\F=(W,R^\F)_{R\in\R}$ with $\F\models\Ec$. Let $V=W\setminus f(\bot)$ and $R_V^\F=R^\F\cap(V\times V)$, for $R\in\R$. Then it is easy to see that the frame $\G=(V,R_V^\F)_{R\in\R}$ is a generated subframe
of $\F$ (and so $\G\models\Ec$), and the map $g\colon\A\to\G^{\star\bot}$ defined by
$g(a)=f(a) \setminus f(\bot)$ is an \SPb-embedding.
\end{proof}

A \eqcomp\ (\complex) \SPbi-logic can always be turned into a \eqcomp\ (\complex) \sptheory,
using a fresh diamond operator:

\begin{theorem}\label{thm:redbot}
Let $\Ec$ be a set of \SPbi-implications not using $\dR$. 
Let $\Ec'$ be obtained from $\Ec$ by
replacing each occurrence of $\bot$ by $\dR\top$ and adding $\dR \top\imp p$ and $\dS\dR\top\imp \dR\top$,
for each $\dS$ occurring in $\Ec$.
Then $\SPi^\bot+\Ec$ has property $P$ iff $\SPi+\Ec'$ has property $P$, where $P$ stands for any of the following: `is complete'\!, `is complex'\!, `has the finite frame property'\!, `is decidable'\!.
\end{theorem}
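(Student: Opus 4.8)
The plan is to establish a tight semantic correspondence between $\SPi^\bot+\Ec$ and $\SPi+\Ec'$ at the level of both algebras and frames, and then read off all four properties mechanically. First I would set up the frame-side dictionary. Given a frame $\F'=(W',R'^\F,(S'^\F)_{S})$ for $\Ec'$, the axioms $\dR\top\imp p$ and $\dS\dR\top\imp\dR\top$ force the set $B=\{w\mid\F',w\kmodels\dR\top\}$ to be a generated subset on which every $\dS$ `loops back', and $\dR\top\imp p$ says $B$ behaves exactly like the constant $\bot$. Removing $B$ yields a $\bot$-frame $\F=(W'\setminus B,\dots)$ validating $\Ec$; conversely, from a $\bot$-frame $\F\models\Ec$ one builds $\F'\models\Ec'$ by adjoining a fresh point $g$ with $R'^\F=\{(g,g)\}\cup\{(w,g)\mid w\in W'\setminus B\}$-style loops so that $\dR\top$ is true exactly at the old points, i.e. $\dR^+$ picks out everything but $g$. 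I would check that under these two transformations, for any \SPb-formula $\sigma$ and its $\Ec'$-translation $\sigma'$ (replace $\bot$ by $\dR\top$), the truth sets correspond, so $\F\models\sigma\imp\tau$ iff $\F'\models\sigma'\imp\tau'$. This immediately gives $\Ec\models_{\CA^\bot}\e$ iff $\Ec'\models_{\CA}\e'$, the finite frame property (the translations change the size only by one point), and decidability (the translation is a polynomial-time computable bijection on the relevant syntactic objects, so a decision procedure for one yields one for the other).

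Next I would do the algebraic side in parallel. Given $\A\in\SLOb_\Ec$, form $\A'$ by taking the same underlying \SLOa{} and defining $\dR a=\bot$ for all $a$; then $\A'$ validates $\dR\top\imp p$ (since $\dR\top=\bot$ is $\le$-smallest), $\dS\dR\top\imp\dR\top$, and the $\Ec'$-translations of the members of $\Ec$, so $\A'\in\SLO_{\Ec'}$. Conversely, given $\A'\in\SLO_{\Ec'}$, the element $\dR\top$ is $\le$-smallest (by $\dR\top\imp p$) and is annihilated by every operator occurring in $\Ec$ (by $\dS\dR\top\imp\dR\top$ together with monotonicity), so the \SPa-reduct equipped with $\bot:=\dR\top$ is a $\SLOba$ validating $\Ec$. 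This establishes $\Ec\models_{\SLOb}\e$ iff $\Ec'\models_{\SLO}\e'$, which combined with the frame dictionary gives the `is complete' transfer: $\SPi^\bot+\Ec$ is complete iff for all $\e$, $\Ec\models_{\CA^\bot}\e\Leftrightarrow\Ec\models_{\SLOb}\e$, iff for all $\e'$ of the right form $\Ec'\models_{\CA}\e'\Leftrightarrow\Ec'\models_{\SLO}\e'$; the remaining \spequations{} not of translation-form are handled by noting that the fresh diamond $\dR$ only occurs in $\Ec'$ in the harmless variable-free shape, so Proposition~\ref{p:varfree}-style reasoning (or a direct generated-subframe/quotient argument) reduces them.

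For complexity I would argue: if $\SPi^\bot+\Ec$ is complex, take $\A'\in\SLO_{\Ec'}$, pass to the $\SLOba$ $\A$ as above, embed $\A$ via an \SPb-embedding into $\F^{\star\bot}$ for some $\bot$-frame $\F\models\Ec$ (using the \SPb-version of complexity), then apply the frame transformation $\F\mapsto\F'$ and verify that the \SPb-embedding $\A\hookrightarrow\F^{\star\bot}$ lifts to an \SPa-embedding $\A'\hookrightarrow(\F')^\star$ sending $\bot$ to the complement-of-$\{g\}$ element $\dR^+\top$ --- this works because $(\F')^\star$ restricted to subsets avoiding $g$ is isomorphic to $\F^{\star\bot}$. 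The converse direction is the mirror image, restricting an embedding into $(\F')^\star$ to the generated subframe obtained by deleting $B$. The main obstacle I anticipate is getting the two $\star/\star\bot$ complex-algebra correspondences exactly right: one must be careful that $(\F')^\star$ genuinely has the $\SLOba$ $\F^{\star\bot}$ as the `image' of the $\bot$-to-$\dR^+\top$ map and that no extra subsets sneak in, and symmetrically that deleting the $\dR\top$-points of a frame for $\Ec'$ really produces a \emph{generated} subframe for $\Ec$ (needed so that validity of \spequations{} is preserved in both directions). Once that bookkeeping is nailed down, each of the four properties transfers by a short paragraph.
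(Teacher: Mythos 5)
Your algebraic dictionary is essentially the paper's: $\A\in\SLO^\bot_{\Ec}$ expands to an algebra in $\SLO_{\Ec'}$ by setting $\dR a=\bot$ for all $a$, and conversely in any $\A'\in\SLO_{\Ec'}$ the element $\dR\top$ is $\le$-smallest and fixed by all the $\dS$, so the reduct with $\bot:=\dR\top$ lands in $\SLO^\bot_{\Ec}$. But your frame-side dictionary contains a genuine error, and it is the load-bearing part of your argument for completeness, complexity, the finite frame property and decidability. A frame validates $\dR\top\imp p$ (with $p$ a variable) iff \emph{no} point has an $R$-successor, i.e.\ iff $R=\emptyset$: just evaluate $p$ as $\emptyset$. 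So your construction of $\F'$ from a frame $\F\models\Ec$ by adjoining a fresh point $g$ and drawing $R$-arrows into it makes $\dR\top$ true at the old points, hence $\F'$ \emph{refutes} $\dR\top\imp p$ and is not a frame for $\Ec'$ at all. Consequently the asserted equivalence $\F\models\sigma\imp\tau$ iff $\F'\models\sigma'\imp\tau'$ is comparing against the wrong frame class, and your complexity transfer, which embeds $\A'$ into $(\F')^\star$, does not witness complexity of $\SPi+\Ec'$ because $\F'\not\models\Ec'$. The correct correspondence, which the paper uses, is much simpler: $(W,S^\F)_{S}\models\Ec$ iff $(W,S^\F,\emptyset)_{S}\models\Ec'$, and conversely every frame for $\Ec'$ has empty $R$; with $R=\emptyset$ the term $\dR\top$ is false everywhere and so mimics $\bot$ exactly, and an \SPb-embedding of $\A$ into $\F^{\star\bot}$ (which sends $\bot$ to $\emptyset$) is automatically an \SPa-embedding of the expansion $\A^\uparrow$ into $(W,S^\F,\emptyset)^\star$, since $\dR^+X=\emptyset$ for all $X$.

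A second, smaller gap: for the completeness transfer you must handle \emph{arbitrary} \spequations{} over the enlarged signature, in which $\dR$ may occur anywhere, not only as $\dR\top$ in axiom position. Your appeal to Proposition~\ref{p:varfree}-style reasoning does not do this (that proposition concerns variable-free axioms and complexity). The paper's device is a downward formula translation: replace every maximal subformula $\dR\varrho$ by $\bot$; this is faithful because in any algebra validating $\Ec'$ monotonicity gives $\dR\varrho\le\dR\top$, whence $\dR\varrho$ equals the bottom element, and on frames for $\Ec'$ (where $R=\emptyset$) every $\dR\varrho$ is false. With the empty-relation frame correspondence and the two formula translations in place, all four properties transfer by short routine arguments, as in the paper.
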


\begin{proof}
Let $\R_{\Ec}=\{S\mid \dS\mbox{ occurs in }\Ec\}$.
Given an \SPb-implication $\e$ using only $\dS$, for $S\in\R_{\Ec}$, denote by $\e^{\uparrow}$ the 
\spequation{} obtained by replacing each occurrence of $\bot$ in $\Ec$ by $\dR \top$.
Similarly, for any $\A\in\SLO^\bot_{\Ec}$, denote by $\A^{\uparrow}$ the \sptype{} reduct of $\A$ with
an additional operator $\dR$ for which $\dR a=\bot$ for all $a\in A$. 
Then $\A^\uparrow\in\SLO_{\Ec'}$, and $\A\models\e$ iff $\A^\uparrow\models\e^\uparrow$.
Conversely, given an \spequation{} $\e$ using only $\dS$, for $S\in\R_{\Ec}\cup\{R\}$, 
denote by $\e^{\downarrow}$ the \SPb-implication obtained by replacing each maximal subformula
 of the form $\dR\varrho$ in $\e$ with $\bot$.
Observe that in any $\A\in\SLO_{\Ec'}$, $\dR\top$ is the $\le$-smallest element with
$\dS\dR\top=\dR\top$ for all $S\in\R_{\Ec}$. Denote by $\A^{\downarrow}$ the result of removing $\dR$ from $\A$ and setting $\bot=\dR\top$. 
Then $\A^\downarrow\in\SLO^\bot_{\Ec}$, and $\A\models\e$ iff $\A^\downarrow\models\e^\downarrow$.
It remains to observe that, for any frame $\F=(W,S^\F)_{S\in\R_{\Ec}}$, we have 
$\F\models\Ec$ iff $(W,S^\F,\emptyset)_{S\in\R_{\Ec}}\models\Ec'$,
and $R^\F=\emptyset$ follows whenever $(W,S^\F,R^\F)_{S\in\R_{\Ec}}\models\Ec'$.
With these observations, all the statements of the theorem are straightforward.
\end{proof}


\subsection{\spquasiequationscap}\label{qeqcomp}

An \emph{\spquasiequation\/}, $\q$, takes the form $\frac{\e_{1},\ldots,\e_{n}}{\e}$, where $\e_{1},\ldots,\e_{n},\e$ are
\spequations. 
We identify the rule $\frac{\emptyset}{\e}$ with $\e$.
We say that an \spquasiequation{} 
$\q=\frac{\e_{1},\ldots,\e_{n}}{\e}$
\emph{holds} in a Kripke model $\M$ and write $\M \models \q$ if $\M \models \e$ whenever $\M \models \e_i$ for $1 \le i \le n$.
We say that $\q$ is \emph{valid} in a frame $\F$ and write $\F \models \q$ if $\q$ holds in every Kripke model based on $\F$. 
Given a set $\varTheta$ of \spquasiequations, we write
$\F\models\varTheta$ whenever $\F\models\q$ for every $\q\in\varTheta$ and set ${\sf Kr}_{\varTheta}= \{ \F \mid \F\models \varTheta\}$.

We say that $\q$ is \emph{valid} in an algebra $\A$ having an \sptype{} reduct and write  $\A\models \q$ if $\A$
validates the \sptype{} \emph{quasiequation} 
\[
(\e_{1}^{\ast}\mathop{\&} \dots \mathop{\&} \e_{n}^{\ast})\Rightarrow \e^{\ast},
\]
 where $(\sigma\imp \tau)^{\ast}= (\sigma \land \tau\approx\sigma)$:
for any valuation $\mathfrak{a}$ in $\A$, whenever $\A\models \e_{i}[\mathfrak{a}]$ for all $i$ $(1\leq i \leq n)$, then $\mathfrak{A}\models \e[\mathfrak{a}]$.
A set $\Qc$ of \spquasiequations{} is called an \emph{\qsptheory} if
$\Qc=\{\q\mid \Aa\models\q\mbox{ for every }\Aa\in\Cc\}$ for some class $\Cc$ of \SLOa{}s.
Given an \qsptheory{} $\Qc$, we write
$\A\models\Qc$ if $\A\models\q$ for any $\q\in\Qc$.
For a class $\Cc$ of algebras with \sptype{} reducts, let $\Cc_{\Qc}=\{ \Aa\in \Cc \mid \Aa \models \Qc\}$. We say 
that an \spquasiequation{} $\q$ \emph{follows from $\Qc$ over} $\Cc$ and write $\Qc \models_\Cc \q$ if 
$\A \models \q$, for any $\A\in \Cc_{\Qc}$.
We call $\Qc$ 
\begin{itemize}
\item[--]
$\Cc$-\emph{\embed} if every $\Aa\in\SLO_{\Qc}$ is embeddable into the \sptype{} reduct of some $\Bb\in\Cc_{\Qc}$;

\item[--]
$\Cc$-\emph{\qeqcons} if $\Qc \models_\Cc \q$ implies $\Qc \models_{\SLO} \q$, for every  
\spquasiequation{} $\q$;

\item[--]
$\Cc$-\emph{\eqcons} 
if $\Qc \models_\Cc \e$ implies $\Qc \models_{\SLO} \e$, for every \spequation{} $\e$.
\end{itemize}
In particular, let
$$
\CAa  =\{\Fslo\mid \mbox{$\F$ is a frame}\},\qquad
\BAO  = \{\A\mid\mbox{$\A$ is a \BAOa}\}.
$$
Extending the corresponding notions for \sptheories, we call 
an \qsptheory{} $\Qc$ 
\begin{itemize}
\item[--]
\emph{\complex} if it is $\CAa$-\embed;

\item[--]
\emph{\qeqcomp} if it is $\CAa$-\qeqcons;

\item[--]
\emph{\eqcomp} if it is $\CAa$-\eqcons.
\end{itemize}
As quasiequations are preserved under taking subalgebras,  we always have:
\begin{equation}\label{cproperties}
\Cc\mbox{-\embed}\ \Rightarrow\ \Cc\mbox{-\qeqcons}\ \Rightarrow\ \Cc\mbox{-\eqcons}.
\end{equation}
Also, 
%
since $\Fslo$ is the \sptype{} reduct of some \BAOa, we have:
\[
\begin{array}{llll}
\mbox{\complex}& \Rightarrow \ \mbox{\qeqcomp}\ &  \Rightarrow & \mbox{\eqcomp}\\
 \qquad\Downarrow  & \qquad\qquad\Downarrow  && \qquad\Downarrow\\
 \mbox{\BAO-\embed} & \Rightarrow  \ \mbox{\BAO-\qeqcons} &\Rightarrow & \mbox{\BAO-\eqcons}.
\end{array}
\]

\begin{lemma}\label{l:qeqconstoembed}
For any \qsptheory{} $\Qc$, if $\Qc$ is \BAO-\qeqcons, then
$\Qc$  is \BAO-\embed.
\end{lemma}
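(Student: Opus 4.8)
The plan is to prove that every \SLOa{} $\Aa$ with $\Aa\models\Qc$ embeds into the \sptype{} reduct of some \BAOa{} validating $\Qc$, using the assumption that $\Qc$ is $\BAO$-\qeqcons{} to rule out the only obstruction. First I would reduce the statement to a separation property about pairs of elements: it suffices to show that for every $\Aa\in\SLO_{\Qc}$ and all distinct $a,b\in A$ there are a \BAOa{} $\Bb_{a,b}$ with $\Bb_{a,b}\models\Qc$ and an \SPa-homomorphism $h_{a,b}\colon\Aa\to\Bb_{a,b}^\downarrow$ into its \sptype{} reduct with $h_{a,b}(a)\ne h_{a,b}(b)$. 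Granting this, note that \sptype{} quasi-equations and the equational axioms of \BAOa s are preserved under direct products, so $\Bb=\prod_{a\ne b}\Bb_{a,b}$ is again a \BAOa{} validating $\Qc$; since \sptype{} operations in $\Bb^\downarrow$ are computed coordinatewise, $h=\langle h_{a,b}\rangle_{a\ne b}$ is an \SPa-homomorphism $\Aa\to\Bb^\downarrow$, and it is injective because for $a\ne b$ its $(a,b)$-th coordinate separates $a$ and $b$. An injective \SPa-homomorphism is an \SPa-embedding, so $\Aa$ embeds into the \sptype{} reduct of a member of $\BAO_{\Qc}$, as required.

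To establish the separation property, fix $\Aa\in\SLO_{\Qc}$ and distinct $a,b\in A$, and work in the first-order signature consisting of the \BAOa{} operation symbols together with a fresh individual constant $c_x$ for each $x\in A$. Let $T$ be the theory with the following axioms: the equational axioms of \BAOa s; the \spquasiequations{} of $\Qc$ (read as \sptype{} quasi-equations); the \emph{positive diagram} $\Delta^+$ of $\Aa$, which consists of $c_t\approx\top$ (where $t$ is the unit of $\Aa$) together with all sentences $c_x\land c_y\approx c_z$ (for $x,y,z\in A$ with $z$ the meet of $x$ and $y$ in $\Aa$) and all sentences $\dR c_x\approx c_z$ (for $R\in\R$ and $x,z\in A$ with $z=\dR x$ computed in $\Aa$); and the single inequality $c_a\ne c_b$. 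Any model $\M$ of $T$ has, as its reduct to the \BAOa{} operation symbols, a \BAOa{} validating $\Qc$, hence a member of $\BAO_{\Qc}$; moreover, by $\Delta^+$, the assignment $x\mapsto c_x^{\M}$ is an \SPa-homomorphism from $\Aa$ into its \sptype{} reduct, and by $c_a\ne c_b$ it separates $a$ and $b$. Thus a model of $T$ yields exactly the $\Bb_{a,b}$ and $h_{a,b}$ sought, and it remains to prove that $T$ is consistent.

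Suppose, towards a contradiction, that $T$ is inconsistent; by compactness some finite $S\subseteq T$ is inconsistent. The one-element \BAOa{}, which validates $\Qc$, with every constant interpreted as its unique element is a model of $T\setminus\{c_a\ne c_b\}$, so $c_a\ne c_b\in S$, and $S$ contains only finitely many members of $\Delta^+$, say those in a finite set $\Delta_0^+\subseteq\Delta^+$. Reading the finitely many constants occurring in $\Delta_0^+$ (and $c_a,c_b$) as variables, the inconsistency of $S$ together with the fact that every member of $\BAO_{\Qc}$ models all \BAOa{} axioms and all of $\Qc$ means precisely that the \sptype{} quasi-equation $q:=\bigl(\bigwedge\Delta_0^+\Rightarrow c_a\approx c_b\bigr)$ is valid in every member of $\BAO_{\Qc}$. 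Now each \sptype{} equation $\sigma\approx\tau$ is equivalent to the pair of \spequations{} $\sigma\imp\tau$, $\tau\imp\sigma$, so $q$ decomposes into finitely many \spquasiequations, each valid in every member of $\BAO_{\Qc}$; applying the hypothesis that $\Qc$ is $\BAO$-\qeqcons{} to each of them shows that $q$ is valid in every member of $\SLO_{\Qc}$, in particular in $\Aa$. But in $\Aa$ the valuation $c_x\mapsto x$ satisfies every equation of $\Delta_0^+$, hence also the conclusion $c_a\approx c_b$, i.e.\ $a=b$---contradicting $a\ne b$. Hence $T$ is consistent, which completes the proof.

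I expect the main obstacle to be the bookkeeping around the theory $T$: one must set the positive diagram $\Delta^+$ up so that a model of $T$ really does induce an \SPa-homomorphism (hence the clauses for $\land$, $\dR$ and $\top$), and, more importantly, one must run the compactness argument one pair $a\ne b$ at a time rather than against the entire negative diagram at once---doing the latter would extract a sentence whose conclusion is a \emph{disjunction} of equations, which is not an \spquasiequation{} and to which the conservativity hypothesis does not apply. The other ingredients---closure of $\BAO_{\Qc}$ under products, the equivalence of \sptype{} equations with pairs of \spequations, and the fact that the \sptype{} reduct of a \BAOa{} is a \SLOa---are routine.
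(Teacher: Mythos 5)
Your proof is correct, and it reaches the lemma by a genuinely different (though related) route from the paper. The paper works with the full diagram of $\Aa$, positive \emph{and} negated equations at once: it first shows that every finite chunk $X$ of the diagram is realisable in some member of $\BAO_{\Qc}$, handling the finitely many negated equations in $X$ by applying rule-conservativity once per negated equation (with the positive equations of $X$ as premises) and taking a finite product of the resulting algebras; it then extracts the embedding in a single step via an ultraproduct of these $\Bb^X$ over an ultrafilter on the finite chunks, using \L os' theorem. You instead prove a per-pair separation statement---for each $a\ne b$ a homomorphism into some member of $\BAO_{\Qc}$ separating $a$ from $b$---by first-order compactness applied to the positive diagram plus a single disequation, with rule-conservativity entering in contrapositive form to refute inconsistency, and then glue the separating homomorphisms by a plain direct product, using that $\BAO_{\Qc}$ is a quasivariety closed under arbitrary products. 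The two arguments have the same skeleton (diagram plus conservativity plus a compactness-type limit construction), but yours trades \L os' theorem for the compactness theorem and a subdirect-separation step; your remark that the compactness extraction must be run one disequation at a time, so that the extracted quasiequation has a single equation as conclusion and hence decomposes into \spquasiequations{}, is precisely the point where the paper instead invokes closure of $\BAO_{\Qc}$ under finite products to absorb several negated equations simultaneously. Two small points worth making explicit if you write this up: the decomposition of \sptype{} equations into pairs of \spequations{} (in both premises and conclusion) is legitimate on both sides of the conservativity transfer because the \sptype{} reduct of any \BAOa\ is itself a \SLOa; and in the degenerate case $|A|=1$ the index set of your product is empty, but then $\Aa$ trivially embeds into the one-element \BAOa, which validates $\Qc$.
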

\begin{proof}
Suppose $\Qc$ is \BAO-\qeqcons{} and $\Aa\in\mbox{\SLO}_{\Qc}$.
To embed $\A$ into the \sptype{} reduct of some $\Bb\in\BAO_{\Qc}$, take 
the diagram $D_\Aa$ of $\Aa$, that is, the set all literals---equations and negated equations---that hold in $\Aa$ and are built  from the elements of $\Aa$ as constants using the \sptype{} operations. For any finite set $X$ of literals of this extended type, 
we write $X(a_1,\dots,a_n)$ to indicate that
the $\Aa$-type constants occurring in the literals in $X$ are among $a_1,\dots,a_n$.
If $X=\{\varphi\}$, we write $\varphi(a_1,\dots,a_n)$.
We write $\varphi(p_1/a_1,\dots,p_n/a_n)$ for the \SP-type literal where the constants $a_i$ in $\varphi$ are
simultaneously replaced by variables $p_i$.  

\begin{lclaim}\label{c:exists}
For any finite subset $X(a_1,\dots,a_n)$ of $D_\Aa$, there exist $\Bb^X\in\BAO$ and elements $a_1^X,\dots,a_n^X$ in $\Bb^X$ such that $\Bb^X\models\Qc$ and 
\begin{equation}\label{allx}
\Bb^X\models\bigwedge_{\varphi\in X}\varphi(p_1/a_1,\dots,p_n/a_n)[a_1^X,\dots,a_n^X].
\end{equation}
\end{lclaim}
\begin{proof}
If all literals in $X$ are equations, then we can take $\Bb^X$ to be the one-element \BAOa\ (for
which $\Bb^X\models\Qc$ for any $\Qc$) and set $a_i^X$ to be its only element, for $i=1,\dots,n$.
It is easy to see that~\eqref{allx} holds.

Now suppose $\e_1,\dots,\e_k$ are the equations in $X$ and
$\neg\e_{1}',\dots,\neg\e_{m}'$ are the negated equations in $X$, for $m\geq 1$ (we can always assume that $k\geq 1$).
For each $j$, $1\leq j\leq m$, take the \sptype{} quasiequation 
\[
\q_j ~=~ (\e_1\mathop{\&}\dots\mathop{\&}\e_k\Rightarrow \e_{j}')(p_1/a_1,\dots,p_n/a_n).
\] 
Then $\Aa\not\models\q_j$, and so, since $\Qc$ is \BAO-\qeqcons, there is some $\Bb_j\in\BAO$ with $\Bb_j\models\Qc$ and $\Bb_j\not\models\q_j$. 
Then there are
$b_1^j,\dots,b_n^j$ in $\Bb_j$ such that
\[
\Bb_j\models\Bigl(\bigwedge_{i=1}^k\e_i\land\neg\e_{j}'\Bigr)(p_1/a_1,\dots,p_n/a_n)[b_1^j,\dots,b_n^j].
\]
Now let $\Bb^X=\prod_{j=1}^m\Bb_j$ and $a_i^X = (b_i^1,\dots,b_i^m)$, for $i=1,\dots,n$. Then clearly we have~\eqref{allx}.
As the class $\BAO_{\Qc}$ 
is a quasivariety, it is closed under direct products, and so $\Bb^X\in\BAO_{\Qc}$ as required.
\end{proof}

Let $T_\Aa$ be the set of all finite subsets of $D_\Aa$.  For every $X\in T_\Aa$,
let 
\[
\J_X = \{Y\in T_\Aa \mid X\subseteq Y\}.
\]
As $X_1\cup\dots\cup X_m\in\J_{X_1}\cap\dots\cap\J_{X_m}$, the
collection $\{\J_x \mid X\in T_\Aa\}$ has the finite intersection property, and so there is an ultrafilter 
 $U$ over $T_\Aa$ extending  $\{\J_x \mid X\in T_\Aa\}$.
For $X\in T_\Aa$, take the \BAOa\ $\Bb^X$ given by Claim~\ref{c:exists}, and
let
\[
\Bb=\prod_{X\in T_\Aa}\Bb^X/U.
\]
As the class $\BAO_{\Qc}$  is a quasivariety, it is closed under ultraproducts, and so
$\Bb\in\BAO_{\Qc}$.
Define an $\eta\colon\Aa\to\Bb$ map by taking 
$\eta(a)= [(\hat{a}^X)_{X\in T_\Aa}]_U$, where for all $a$ in $\Aa$ and $X\in T_\Aa$,
\[
\hat{a}^X =\left\{
\begin{array}{ll}
a^X, & \mbox{if $a$ occurs in some literal in $X$},\\
\mbox{arbitrary element of $\Bb^X$}, & \mbox{otherwise}.
\end{array}
\right.
\]
By Claim~\ref{c:exists} and \L os' Theorem~\cite{Chang&Keisler73}, 
for every $\varphi(a_1,\dots,a_n)\in D_\Aa$, we have
\[
\Bb\models\varphi(p_1/a_1,\dots,p_n/a_n)[\eta(a_1),\dots,\eta(a_n)].
\]
Thus, $\eta$ is an \SPa-embedding from $\Aa$ into the \sptype{} reduct of $\Bb$.
\end{proof}

We call an \qsptheory{} $\Qc$ \emph{$\BAO$-complex} if the \sptype{} reduct of every 
$\Aa\in\mbox{\BAO}_{\Qc}$ is embeddable into some $\Fslo$ with $\F\in \CA_{\Qc}$.
Note that,
as \spequations{} correspond to Sahlqvist formulas in modal logic, any
\sptheory{} $\SPL$ is $\BAO$-complex. 
As a consequence of Lemma~\ref{l:qeqconstoembed} we obtain:

\begin{theorem}\label{t:qeq}
For every $\BAO$-complex \qsptheory{} $\Qc$, the following are equivalent:
\begin{enumerate}
\item[$(i)$]
$\Qc$ is \complex;
\item[$(ii)$]
$\Qc$ is \qeqcomp;
\item[$(iii)$]
$\Qc$  is \BAO-\qeqcons;
\item[$(iv)$]
$\Qc$ is \BAO-\embed.
\end{enumerate}
\end{theorem}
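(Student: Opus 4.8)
The plan is to prove the cycle of implications $(i)\Rightarrow(ii)\Rightarrow(iii)\Rightarrow(iv)\Rightarrow(i)$, using that $\Qc$ is $\BAO$-complex throughout. Most of the arrows are either immediate from the definitions or already recorded in the diagram preceding the statement; the one genuinely new ingredient is Lemma~\ref{l:qeqconstoembed}, which gives $(iii)\Rightarrow(iv)$, and so the real content is bridging from the $\BAO$-side back to $\CAa$ using $\BAO$-completeness.

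First, the easy arrows. The implication $(i)\Rightarrow(ii)$ is exactly the top-left horizontal arrow of the displayed diagram: $\CAa$-\embed ness implies $\CAa$-\qeqcons ness because \spquasiequations{} are preserved under subalgebras (this is \eqref{cproperties} with $\Cc=\CAa$). The implication $(ii)\Rightarrow(iii)$ is the left vertical arrow of the diagram: since every $\Fslo$ is the \sptype{} reduct of a \BAOa{} — and in fact $\Fslo\in\CAa$ means the frame $\F$ validates $\Qc$, so the corresponding full complex \BAOa{} lies in $\BAO_\Qc$ — any \spquasiequation{} refuted in some member of $\BAO_\Qc$ is refuted in its ultrafilter-frame complex algebra, which validates $\Qc$ by $\BAO$-completeness of $\Qc$ (here I use that $\Qc$ is $\BAO$-complex precisely to get that the ultrafilter frame is in $\CA_\Qc$, hence its complex algebra is a witness in $\CAa_\Qc$). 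Wait — more carefully: $\Qc$ \BAO-\qeqcons{} means $\Qc\models_{\BAO}\q$ implies $\Qc\models_{\SLO}\q$; to derive this from $(ii)$ I take a \SLOa{} witness $\Aa\in\SLO_\Qc$ with $\Aa\not\models\q$, and $(ii)$ hands me a frame $\F\in\CA_\Qc$ whose complex algebra $\Fslo$ — a member of $\BAO_\Qc$ after taking the full Boolean complex algebra — also refutes $\q$, since the \SLOa{} containing the offending valuation embeds into $\Fslo$. This shows $\Qc\not\models_{\BAO}\q$, which is the contrapositive of $(ii)\Rightarrow(iii)$. Then $(iii)\Rightarrow(iv)$ is Lemma~\ref{l:qeqconstoembed} verbatim.

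The interesting arrow is $(iv)\Rightarrow(i)$: from \BAO-\embed ness I must recover $\CAa$-\embed ness, i.e.\ complexity. Given $\Aa\in\SLO_\Qc$, \BAO-\embed ness provides $\Bb\in\BAO_\Qc$ with an \SPa-embedding $\eta\colon\Aa\to\Bb^\downarrow$ (the \sptype{} reduct). Now $\Qc$ being $\BAO$-complex means exactly that $\Bb^\downarrow$ embeds (via an \SPa-embedding) into $\Fslo$ for some $\F\in\CA_\Qc$ — this is precisely the definition of $\BAO$-complex given just before Theorem~\ref{t:qeq}, applied to $\Bb\in\BAO_\Qc$. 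Composing the two \SPa-embeddings $\Aa\to\Bb^\downarrow\to\Fslo$ yields an \SPa-embedding of $\Aa$ into $\Fslo$ with $\F$ a frame for $\Qc$, which is exactly \complex ity. This closes the cycle, and together with the already-established implications gives the equivalence of $(i)$--$(iv)$.

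The main obstacle, and the place to be careful, is the bookkeeping around the \BAOa{}/\SLOa{} reducts and the distinction between the full complex algebra of a frame (a \BAOa) and its \sptype{} reduct $\Fslo$: one must make sure that "$\F\in\CA_\Qc$" is equivalent to "the full complex \BAOa{} of $\F$ validates $\Qc$" (true because $\Qc$ consists of \spquasiequations{}, whose validity in a complex algebra depends only on the \sptype{} operations and agrees with validity in $\F$), and that $\BAO_\Qc$ is closed under the constructions used — but this is guaranteed since $\BAO_\Qc$ is a quasivariety, as already exploited in the proof of Lemma~\ref{l:qeqconstoembed}. Beyond that the argument is a short diagram chase; no delicate model theory is needed here, all of it having been packaged into Lemma~\ref{l:qeqconstoembed} already.
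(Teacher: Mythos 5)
Your proposal is correct and follows essentially the same route as the paper: the cycle $(i)\Rightarrow(ii)\Rightarrow(iii)\Rightarrow(iv)\Rightarrow(i)$, with $(i)\Rightarrow(ii)$ by preservation of \spquasiequations{} under subalgebras (that is, \eqref{cproperties}), $(ii)\Rightarrow(iii)$ by the observation that the full complex algebra of any frame for $\Qc$ is a \BAOa{} validating $\Qc$, $(iii)\Rightarrow(iv)$ by Lemma~\ref{l:qeqconstoembed}, and $(iv)\Rightarrow(i)$ by composing the \SPa-embedding supplied by \BAO-embeddability with the one supplied by $\BAO$-complexity. One minor remark: in your argument for $(ii)\Rightarrow(iii)$ the clause `since the \SLOa{} containing the offending valuation embeds into $\Fslo$' is neither needed nor warranted (an embedding of $\Aa$ into some $\Fslo$ would be complexity, i.e.\ item $(i)$); the frame in $\CA_{\Qc}$ refuting $\q$ is handed to you directly by the contrapositive of $(ii)$ applied to $\Qc\not\models_{\SLO}\q$, and its full complex algebra is then the required member of $\BAO_{\Qc}$ refuting $\q$, so the slip does not affect correctness.
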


\begin{proof}
(i)~$\Rightarrow$~(ii) follows from~\eqref{cproperties}; (ii)~$\Rightarrow$~(iii) is trivial;
(iii)~$\Rightarrow$~(iv) follows from Lemma~\ref{l:qeqconstoembed}; and 
(iv)~$\Rightarrow$~(i) follows from the fact that $\Qc$ is $\BAO$-complex.
\end{proof}


\section{Conclusion}\label{concl}

In this article, we have started developing the completeness theory of \sptheories. Of course, many interesting and challenging problems remain to be explored.  A few concrete open questions have already been mentioned above, and there is a more or less standard list of problems regarding properties of modal logics and their lattices; see, e.g., \cite{Chagrov&Z97,Blackburnetal01,hand2001,mlhb2007}. 
Here, we briefly discuss few possible directions of follow-up research.

(1) In Boolean modal logic, the \emph{degree of Kripke incompleteness} of a normal modal logic $\Ll$---that is, the cardinality of the set of normal modal logics whose Kripke frames coincide with the Kripke frames of $\Ll$ \cite{Fine74}---has been used to analyse the position of Kripke incomplete logics within 
the lattice of all normal modal logics. Wim Blok~\cite{Blok78} established the following dichotomy: the degree of Kripke incompleteness of a consistent 
normal unimodal logic $\Ll$ is either $2^{\aleph_{0}}$ or $1$, in which case $\Ll$ is a  union of co-splitting logics; see also \cite{Litak08:au,hand2001,kracht07}. Given this complete classification, the question arises as to whether one can also characterise the degree of Kripke incompleteness of \sptheories{} and whether this is again linked to co-splittings (now in the lattice of \sptheories) and the existence of some analogue of Jankov-Fine formulas~\cite{Jankov63b,Fine74c}. 


(2) To prove undefinability of frame classes by 
\spequations, we developed a necessary condition for frame definability. In Boolean
modal logic, the Goldblatt--Thomason theorem~\cite{Goldblatt&Thomason74} provides necessary and sufficient conditions for frame definability
in terms of p-morphisms, generated subframes, disjoint unions, and ultrafilter extensions. Can one give 
natural necessary and sufficient conditions for frame definability by \spequations? 

(3) It is readily seen that \spquasiequations{} can define non-elementrary frame conditions and thus behave  differently from \spequations{}~\cite{islands10}.
We have also seen that \complex{} \qsptheories{} are exactly those that are \qeqcomp.
 Thus, it would be interesting to extend the completeness theory of \sptheories{} developed in this paper to \qsptheories.

(4) The embeddability of \SLOa{}s into full complex algebras of Kripke frames is shown by Sofronie-Stokkermans~\cite{Sofronie-Stokkermans01,stokkermans2008} using a method that is different from those in \S\S\ref{Jembed}--\ref{Tembed} and  involves \emph{distributive lattices with normal and $\lor$-additive 
operators} (\DLOa{}s). A given \SLOa{} $\Aa$ is first embedded into the DLO $\A^\lor$ of its downsets, which is then embedded into the full complex algebra of some frame $\F$ over the prime filters of $\A^\lor$ using 
Goldblatt's~\cite{Goldblatt89} extension of Priestley duality~\cite{Priestley70} to operators.
She also shows that validity of \spequations{} of the form 
\mbox{$\d_1\dots\d_n p\imp\d_0 p$} transfers from $\Aa$ to $\F$.
It would be interesting to study the boundaries of this method and its connections to  \S\S\ref{Jembed}--\ref{Tembed}. More generally, one can ask which
\spequations{} are \SLOa--to--\DLOa- and/or \DLOa--to--\BAOa-conservative?
The latter question can also be investigated for \SP$^\lor$-implications, that is, implications
between \spterms{} with disjunction. 

\smallskip
\noindent
{\bf Acknowledgements.} 
This work was supported by the U.K.~EPSRC grants EP/M012646 and EP/M012670 `iTract: Islands of Tractability in Ontology-Based Data Access'\!. 
We are grateful to the anonymous reviewer for encouraging us to improve the presentation and terminology, and to redesign the notation.

\end{document}